\newcommand*{\id}{{\mathrm{id}}}
\newcommand*{\la}{{\langle}}                                                  
\newcommand*{\ra}{{\rangle}}                                                  
\newcommand*{\R}{{\mathbb R}}
\newcommand*{\CC}{{\mathbb{C}}}
\newcommand*{\Vb}{{\mathbb V}}
\newcommand*{\Jf}{{\mathfrak J}}
\newcommand*{\pa}{{\partial}}
\newcommand*{\Cb}{{\mathcal C}}
\newcommand*{\rd}{{\mathrm d}}
\newcommand*{\mfd}{{\mathfrak d}}
\newcommand*{\bs}{\boldsymbol}
\newcommand*{\lb}{[\![}
\newcommand*{\rb}{]\!]}
\newcommand{\smlsym}[1]{\mathsmaller{\mathsmaller{#1}}}
\newcommand*{\ot}{\smlsym\otimes}
\newcommand*{\ob}{\hat{\smlsym\otimes}}
\journalname{}
\begin{document}
\title{The algebraic structure of the non-commutative nonlinear Schr\"odinger and modified Korteweg--de Vries hierarchy}
\titlerunning{Nonlinear Schr\"odinger hierarchies}
\author{Gordon Blower$^1$ \and Simon~J.A.~Malham$^2$}    
\authorrunning{Blower and Malham}

\institute{$^1$ Department of Mathematics and Statistics, Lancaster University, Lancaster LA1 4YF, UK\\
$^2$ Maxwell Institute for Mathematical Sciences,        
and School of Mathematical and Computer Sciences,   
Heriot-Watt University, Edinburgh EH14 4AS, UK\\
\email{G.Blower@lancaster.ac.uk, S.J.A.Malham@hw.ac.uk}}
\date{11th March 2023}           
\maketitle

\begin{abstract}
  We prove that each member of the non-commutative nonlinear Schr\"odinger and modified Korteweg--de Vries hierarchy is a Fredholm Grassmannian flow,
  and for the given linear dispersion relation and corresponding equivalencing group of Fredholm transformations,
  is unique in the class of odd-polynomial partial differential fields.
  Thus each member is linearisable and integrable in the sense that time-evolving solutions can be generated by solving a linear Fredholm Marchenko equation,
  with the scattering data solving the corresponding linear dispersion equation.
  At each order, each member matches the corresponding non-commutative Lax hierarchy field which thus represent odd-polynomial partial differential fields.
  We also show that the cubic form for the non-commutative sine--Gordon equation corresponds
  to the first negative order case in the hierarchy, and establish the rest of the negative order non-commutative hierarchy. 
  To achieve this, we construct an abstract combinatorial algebra, the P\"oppe skew-algebra, that underlies the hierarchy.
  This algebra is the non-commutative polynomial algebra over the real line generated by compositions,
  endowed with the P\"oppe product---the product rule for Hankel operators pioneered by Ch. P\"oppe for classical integrable systems.
  Establishing the hierarchy members at non-negative orders, involves proving the existence of a `P\"oppe polynomial' expansion for basic compositions
  in terms of `linear signature expansions' representing the derivatives of the underlying non-commutative field.
  The problem boils down to solving a linear algebraic equation for the polynomial expansion coefficients, at each order.
\end{abstract}
\keywords{Non-commutative nonlinear Schr\"odinger and modified Korteweg--de Vries hierarchies \and sine--Gordon equation}

\section{Introduction}\label{sec:intro}
For the non-commutative nonlinear Schr\"odinger and modified Korteweg--de Vries hierarchy, we prove that each member is both,
a Fredholm Grassmannian flow, and for the given linear dispersion relation and corresponding equivalencing group of Fredholm transformations,
is unique in the class of odd-polynomial partial differential fields.
That each member represents a Fredholm Grassmannian flow means they are linearisable in the sense that solutions
can be generated by solving a linear Fredholm Marchenko equation whose scattering data is the solution to
the corresponding linear dispersion relation. We also show that each member of the Lax hierarchy generates an odd-polynomial partial differential field, 
and so by uniqueness, at each non-negative order, the Fredholm Grassmannian flow and Lax hierarchy member are one and the same.
We also show that the cubic form for the non-commutative sine--Gordon equation corresponds
to the first negative order case in the Lax hierarchy, and establish the rest of the negative order non-commutative hierarchy. 
Our approach is inspired by the pioneering work of Ch.\/ P\"oppe. In a sequence of papers, P\"oppe~\cite{P-SG,P-KdV,P-KP},
P\"oppe and Sattinger~\cite{PS-KP}, and Bauhardt and P\"oppe~\cite{BP-ZS} developed the fundamental product rule
for additive Hankel operators and semi-additive operators, in order to establish the integrability and specific solution forms
for classical integrable systems. These included for example, the scalar sine-Gordon equation and Kadomtsev--Petviashvili hierarchy. 
P\"oppe's approach has recently been substantially developed and extended. In particular, Doikou, Malham and Stylianidis~\cite{DMS}
streamlined and extended P\"oppe's approach to the non-commutative Korteweg--de Vries and nonlinear Schr\"odinger equations,
and Malham~\cite{Malham:quinticNLS} extended the approach to the quartic-order, quintic-degree non-commutative nonlinear Schr\"odinger equation. 
Also see Doikou, Malham, Stylianidis and Wiese~\cite{DMSW:AGFintegrable,DMSW:AGFcoagulation}.
Further, Malham~\cite{Malham:KdVhierarchy} developed a simpler form of the P\"oppe algebra constructed herein to
prove that each member of the non-commutative potential Korteweg--de Vries hierarchy is unique in the class of
polynomial partial differential fields and represents a Fredholm Grassmannian flow. 
Blower and Newsham~\cite{BN} developed a systems perspective to P\"oppe's approach, constructing
tau-functions and families of solutions to the Kadomtsev--Petviashvili equation, while
Blower and Doust~\cite{BD} extend this approach to the sinh-Gordon equation.

Let us now briefly outline our approach herein. Consider a non-commutative nonlinear dispersive partial differential equation
for $g=g(x;t)$ of order $n$ of the form,
\begin{equation*}
\pa_tg=-(\mathrm{i}\mathcal I)^{n-1}\pi_n\big(g,\pa g,\pa^2g,\ldots,\pa^{n-2}g,\pa^ng\big),
\end{equation*}
where $\pa=\pa_x$ and $x\in\R$. Here $\pi_n=\pi_n(\cdot)$ is a polynomial of its arguments---a polynomial partial differential field.
It is linear in $\pa^ng$. The diagonal matrix $\mathcal I$ simply has top left block `$-\id$' and bottom right block is `$\id$'.
As we see presently, $g=\lb G\rb$ is the square-integrable kernel of a Hilbert--Schmidt operator $G$. Herein we use the notation $\lb G\rb$
to denote the kernel of an operator $G$.
Suppose $P$ is a Hilbert--Schmidt Hankel operator on the negative real axis satisfying the linear dispersive equation,
\begin{equation*}
\pa_tP=-(\mathrm{i}\mathcal I)^{n-1}\pa^nP. 
\end{equation*}
Note if there are any linear terms in $\pi_n$, we should augment this equation for $P$ with such linear terms on the right. 
In particular we suppose the square-integrable kernel of $P$ has the form $p=p(y+z+x;t)$ for $y,z\in(-\infty,0]$, while $x\in\R$ represents an additive parameter.
The matrix-valued kernel $p$ satisfies the same linear dispersive partial differential equation as that above for $P$; it represents the scattering data. 
The Marchenko equation at the operator level, here has the Fredholm form,
\begin{equation*}
2\mathrm{i}P=G\,(\id+P^2),
\end{equation*}
for the unknown operator $G$. Provided $U\coloneqq(\id+P^2)^{-1}$ exists as a Fredholm operator,
then the solution Hilbert--Schmidt operator $G=2\mathrm{i}PU$ to this Marchenko equation,
parametrises a Fredholm Grassmannian flow of subspaces spanned by linear dispersive solutions $p$.
See, for example, Doikou \textit{et al.}~\cite{DMSW:AGFintegrable}. P\"oppe's insight was to recognise
the crucial role the Hankel properties of $P$ played in making the connection between the solution $G=\mathrm{i}PU$ to the Marchenko equation,
and that its kernel $\lb G\rb$ satisfies a specific nonlinear dispersive partial differential equation of the form shown above.
The connection is made via the P\"oppe kernel product rule,
\begin{equation*}
\lb F\pa_x(HH^\prime)F^\prime\rb(y,z;x,t)=\lb FH\rb(y,0;x,t)\lb H^\prime F^\prime\rb(0,y;x,t),
\end{equation*}
where $H$ and $H^\prime$ are Hankel operators as described above, and $F$ and $F^\prime$ are any two Hilbert--Schmidt operators.
This rule indicates at the fundamental level, that there is a connection between the matrix products of kernels of operators
of the form $G=2\mathrm{i}PU$ and/or their derivatives (on the right), and kernels of monomials involving operator compositions
of similar objects, but with one order higher derivative (on the left).
Using that $\id+P^2=(\id-\mathrm{i}P)(\id+\mathrm{i}P)$ and setting $V\coloneqq(\id-\mathrm{i}P)^{-1}$,
we see that,
\begin{equation*}
G=2V(\mathrm{i}P)V^\dag\equiv V-V^\dag.
\end{equation*}
Further, we observe that $\pa V=V\pa(\mathrm{i}P)V$, and if we use sub-indicies to denote partial derivatives `$\pa$', we find,
$G_1=V(\mathrm{i}P)_1V-V^\dag(\mathrm{i}P)_1^\dag V^\dag$. In particular, if for any
Hilbert--Schmidt operator $F$ we set $[F]\coloneqq\lb F-F^\dag\rb$, then we observe that the kernels in both these cases are given by, 
\begin{equation*}
\lb G\rb=[V]\qquad\text{and}\qquad\lb G\rb_1=[V(\mathrm{i}P)_1V].
\end{equation*}
It is now easy to imagine that the $n$th partial derivative of $\lb G\rb$ has the form,
\begin{equation*}
  \lb G\rb_n=\sum\chi\bigl(a_1\cdots a_n\bigr)\,\bigl[V(\mathrm{i}P)_{a_1}V\cdots V(\mathrm{i}P)_{a_k}V\bigr],
\end{equation*}
where the sum is over all compositions $a_1a_2\cdots a_k$ of $n$. Naturally $\pa_t\lb G\rb$
is given by $[V\pa_t(\mathrm{i}P)V]$ where $\pa_t(\mathrm{i}P)$ can be expressed in terms $\pa^n(\mathrm{i}P)$ 
using the linear dispersion equation for $P$. Hence our goal is to express $\pa_t\lb G\rb$ in terms
of a polynomial $\pi_n=\pi_n\big(\lb G\rb,\lb G\rb_1,\ldots,\lb G\rb_{n-2},\lb G\rb_n\big)$, linear in $\lb G\rb_n$.
The monomials in the polynomial $\pi_n$ consist of factors of the form $\lb G\rb$, $\lb G\rb_1$, \ldots, $\lb G\rb_{n-2}$,
each expressible as a linear combination of basis elements $[V(\mathrm{i}P)_{a_1}V\cdots V(\mathrm{i}P)_{a_k}V]$
parameterised by compositions as shown above, with the product involved being the P\"oppe kernel product.
If we extend the basis elements to include basis elements of the form $[V(\mathrm{i}P)_{a_1}V\cdots V(\mathrm{i}P)_{a_k}V]$
where any of the $V$ factors shown my be replaced by $V^\dag$, then the operator partial fractions formulae $V=\id+(\mathrm{i}P)V=\id+(\mathrm{i}P)V$
imply that the P\"oppe product generates a closed algebra on such basis elements (see Lemma~\ref{lemma:Poppeproductidentities}). 
The playing field is thus set. It is the algebra of such basis elements equipped with the P\"oppe product.
In fact we use an abstract version of this algebra by stripping the basis elements of their $P$ and $V$ labels 
and focusing on the compositions $a_1a_2\cdots a_k$ and a binary encoding, $\bs0$ and $\bs0^\dag$, of the intervening $V$ or $V^\dag$ factors. 
The P\"oppe product essentially only acts on these components and thus transport our \emph{playing field} to the closed algebra of
basis elements $[\bs0 a_1\bs0 a_2\bs 0\cdots\bs 0 a_k\bs0]$, where any of the $\bs0$'s may be replaced by $\bs0^\dag$.
The $\lb G\rb_n=[V]_n$ are linear combinations of such abstract basis elements (as shown shown above) and we label such
specific linear combinations by $[\bs n]$. The quantity $\pa_t\lb G\rb=\pa_t[V]$ can ultimately be expressed in terms of
$[\bs0 n\bs0]$ or $[\bs0 n\bs 0^\dag]$, respectively depending on whether $n$ is odd or even.
The \emph{game} is to determine if $[\bs0 n\bs0]$ or $[\bs0 n\bs 0^\dag]$ can be expressed
in terms of a polynomial $\pi_n=\pi_n([\bs0],[\bs 1],\ldots,[\bs{n-2}],[\bs n])$, linear in $[\bs n]$.
We express $\pi_n$ as a linear combination of monomials with factors chosen from  
$[\bs0]$, $[\bs 1]$, \ldots, $[\bs{n-2}]$ and a linear term $[\bs n]$. The monomials present in $\pi_n$
are restricted to those that can generate $[\bs0 n\bs0]$ or $[\bs0 n\bs 0^\dag]$ or basis elements involving
compositions $a_1a_2\cdots a_k$ of $n$ via the P\"oppe product. An unknown complex coefficient is associated with 
each monomial in the linear combination. We use the linear expansions for 
$[\bs0]$, $[\bs 1]$, \ldots, $[\bs{n-2}]$ and linear term $[\bs n]$ in terms of the basis elements,
and compute the P\"oppe products of all the expansion basis elements in each of the factors of the monomial.
The result is a large linear combination of basis elements, each with a factor which is a linear combination
of the unknown coefficients. We equate this to $[\bs0 n\bs0]$ or $[\bs0 n\bs 0^\dag]$, depending on whether $n$ is
odd or even, and equate all the coefficients of the basis elements present.
This generates a large linear algebraic system of equations for the unknown coefficients.
Though over-determined, it can be solved for a unique set of coefficients, see Theorem~\ref{thm:main}.

Our Main result thus establishes that for each non-negative integer $n$, there exists a unique
polynomial $\pi_n=\pi_n([\bs0],[\bs 1],\ldots,[\bs{n-2}],[\bs n])$ such that $[\bs0 n\bs0]=\pi_n$,
when $n$ is odd, or $[\bs0 n\bs0^\dag]$, when $n$ is even. The non-commutative Lax hierarchy in our context,
can be iteratively generated, to obtain the next equation in the hierarchy from the previous one by applying
a specific operator to $\pi_n$. Given the existence of a match up to $\pi_n$, we show that applying this specific iterative operator
to $\pi_n$ generates $[\bs0(n+1)\bs0^\dag]$ when $n$ is odd, and generates $[\bs0(n+1)\bs0]$ when $n$ is even.
Since, from our main Theorem~\ref{thm:main}, we know there is a unique polynomial expansion for
$[\bs0 n\bs0]$ when $n$ is odd or $[\bs0 n\bs 0^\dag]$ when $n$ is even for any order, there is a unique
polynomial expansion for $[\bs0(n+1)\bs0^\dag]$ or $[\bs0(n+1)\bs0]$. The uniqueness property means that
the polynomials $\pi_n$ we establish at each order, must match the Lax hierarchy members.
We also investigate pursuing the specific operator in the opposite direction to generate non-commutative Lax hierarchy
members at all negative orders---see Tracy and Widom~\cite{TW} for the scalar case.
In particular we show that the first negative order case $n=-1$, corresponds to the cubic form
of the non-commutative sine-Gordon equation.

The solution to any of the non-commutative hierarchy equations is generated by solving the
linear dispersion equation for the matrix kernel $p$ or equivalently the Hilbert--Schmidt Hankel operator $P$, and then
solving the linear Fredholm Marchenko equation $2\mathrm{i}P=G\,(\id+P^2)$ for $G$. The hierarchy member solution is $\lb G\rb$.
Each member of the hierarchy is thus linearisable as the solution is generated via solving a linear dispersion equation
and a linear Fredholm equation. However this procedure also identifies the graph of $G$ as a Fredholm Grassmannian flow,
represented in a specific coordinate patch parametrised by Hilbert--Schmidt operators.
Such flows are explored in detail in Doikou \textit{et al.} \cite{DMSW:AGFintegrable}.
We can think of the Fredholm Grassmannian as all collections of graphs of compatible linear Hilbert--Schmidt maps.
Indeed briefly, suppose we set $\mathbb V\coloneqq L^2((-\infty,0];\mathbb C^m)$ for some $m\in\mathbb N$, for example.
Consider the pair of operators,
\begin{equation*}
  \begin{pmatrix} \id-Q\\ 2\mathrm{i}P\end{pmatrix},
\end{equation*}
both on $\mathbb V$. Here we suppose $\id-Q$ is a Fredholm operator on $\Vb$, with $Q$ a Hilbert--Schmidt operator,
and $P$ is a Hilbert--Schmidt operator on $\Vb$. Assuming that the regularised determinant $\mathrm{det}_2(\id-Q)\neq0$,
this pair of operators defines a Fredholm Grassmannian flow in a given coordinate patch as follows. We can think of
this pair of operators as spanning a subspace of $\mathbb H\coloneqq\mathbb V\times\mathbb V$ that is isomorphic to $\Vb$.
The transformation $(\id-Q)^{-1}$ of this subspace generates,
\begin{equation*}
  \begin{pmatrix} \id\\ G\end{pmatrix},
\end{equation*}
where $G=2\mathrm{i}P(\id-Q)^{-1}$. We can think of the Hilbert--Schmidt operator $G$ as parametrising all such subspaces of $\mathbb H$,
that can be projected onto the canonical subspace represented by the pair of operators $(\id,O)$.
This is one coordinate patch of the Fredholm Grassmannian of such subspaces of $\mathbb H$.
Note that if we set $Q=-P^2$, then $G$ represents the solution to the Marchenko equation.
Recall, in our application we assume $P$ satisfies the dispersion equation $\pa_tP=-(\mathrm{i}\mathcal I)^{n-1}\pa^nP$. 
Further we suppose that $P$ is a Hankel operator. This property is a natural far-field symmetry for the dispersive field in the sense
that it is a natural symmetry arising as the result of the scattering, of an incident wave from one far field, into the opposite far field.
See for example the construction of the Marchenko equation in Drazin and Johnson~\cite{DJ} or Appendix~B in Doikou \textit{et al.} \cite{DMSW:AGFintegrable}.
That the Marchenko equation solution $G$ parameterises a class of subspaces of $\mathbb H$ characterised by solutions of a dispersive field $P$, 
generates the following perspective. We can think of the Fredholm Grassmannian, in the coordinate patch represented by the particular pair $(\id, G)$,
as parametrising the time-evolving \emph{envelope} of dispersive field solutions, i.e.\/ the time-evolving subspace represented by the pair $(\id,G)$.
In principle we could consider $\lb G\rb=\lb G\rb(y,z;x,t)$ or in particular $\lb G\rb(0,0;x,t)$ as an observable.  

The Marchenko equation, and its role in inverse scattering and linearisation, has been fundamental in classical integrable systems
from the very early stages. See for example Dyson~\cite{Dyson}, Miura~\cite{Miura}, Zakharov and Shabat~\cite{ZS,ZS2},
Ablowitz \textit{et al.} \cite{ARSII}, Fokas and Ablowitz~\cite{FA}, Mumford~\cite{Mumford},
P\"oppe~\cite{P-SG,P-KdV,P-KP}, P\"oppe and Sattinger~\cite{PS-KP}, Bauhardt and P\"oppe~\cite{BP-ZS} and Nijhoff \textit{et al.} \cite{NQVDLCII,NQVDLCI}.
There has also been a resurgence of interest in such linearisation approaches, see for example 
Fokas and Pelloni~\cite{FP}, McKean~\cite{McKean}, Fu~\cite{Fu} and Fu and Nijhoff~\cite{FNI}. 
It was Sato~\cite{SatoI,SatoII} and Segal and Wilson~\cite{SW} who pioneered the connection between Fredholm Grassmannians and
integrable systems. Recently there has also been a resurgence in this direction as well, see for example, Mulase~\cite{Mulase},
Dupr\'e \textit{et al.} \cite{DGP2006,DGP2007,DGP2013}, Kasman~\cite{Kasman1995,Kasman1998}, Hamanaka and Toda~\cite{HT}, Cafasso~\cite{Cafasso},
Cafasso and Wu~\cite{CW} and Arthamonov \textit{et al.} \cite{AHH} (also see Beck \textit{et al.} \cite{BM,BDMS1,BDMS2}
and Doikoi \textit{et al.} \cite{DMSW:AGFintegrable,DMSW:AGFcoagulation}).
Related to this is the well-studied connection between the Korteweg--de Vries hierarchy, the intersection theory of Deligne--Mumford moduli space
and the a string equation in two-dimensional gravity; see for example Witten~\cite{Witten1990,Witten1991} and Cafasso and Wu~\cite{CW}.
Some of the earliest work on non-commutative integrable systems includes
Fordy and Kulish~\cite{FK}, Nijhoff \textit{et al.} \cite{NQVDLCI}, Ablowitz \textit{et al.} \cite{APT}, 
Ercolani and McKean~\cite{EM} and Aden and Carl~\cite{AC1995}. Again there has been more recent interest in such systems and their solutions, such as 
Treves~\cite{TI,TII}, Hamanaka and Toda~\cite{HT}, Degasperis and Lombardo~\cite{DL2}, Dimakis and M\"uller--Hoissen~\cite{DM-H2010},
Carillo and Schiebold~\cite{CSI,CSIa,CSII} whose results are particularly relevant to those herein, Sooman~\cite{Sooman},
Pelinovsky and Stepanyants~\cite{Pelinovsky}, Buryak and Rossi~\cite{BuryakRossi}, Doikou \textit{et al.} \cite{DMS},
Stylianidis~\cite{Stylianidis}, Adamopoulou and Papamikos~\cite{AP}, Malham~\cite{Malham:quinticNLS}, G\"urses and Pekcan~\cite{GP2022} and Ma~\cite{Ma}.
The role of Hankel operators in integrable systems first explored by P\"oppe, has recently re-emerged as an active and fruitful research direction.
In particular, relevant to our results herein are Blower and Newsham~\cite{BN}, Blower and Doust~\cite{BD}, Grudsky and Rybkin~\cite{GRI,GRII,GRIII},
Grellier and Gerard~\cite{Gerard} and Gerard and Pushnitski~\cite{GP}.
The combinatorial algebraic approach we consider herein was introduced in Malham~\cite{Malham:KdVhierarchy} for the simpler
non-commutative potential Korteweg--de Vries equation; also see Doikoi \textit{et al.} \cite{DMSW:AGFintegrable}.
Dimakis and M\"uller--Hoissen~\cite{DM-H2008,DM-H2009} consider integrable systems in the context of bidifferential graded algebras,
while in Dimakis and M\"uller--Hoissen~\cite{DM-H2005}, they consider connections to shuffle and Rota--Baxter algebras. 
See Reutenauer~\cite{Reutenauer}, Malham and Wiese~\cite{MW} and Ebrahimi--Fard \textit{et al.} \cite{EFMKLMW}
for more details on shuffle algebras and references for Rota--Baxter algebras.

To summarise, our achievements herein are as follows. In terms of algebras, we: 
\begin{enumerate}
\item[(i)] Introduce and develop new abstract non-commutative algebras. These are the algebra of
  non-negative integer monomial forms $\mathbb C\la\mathbb Z_{\bs0}\ra$ described above, equipped with a quasi-Leibniz type product
  based on the P\"oppe product, and its skew-form subalgebra $\mathbb C[\mathbb Z_{\bs0}]$. They are instrumental
  to the results (ii)--(iv) just below.  
\end{enumerate}
For the non-commutative nonlinear Schr\"odinger and modified Korteweg--de Vries hierarchy, we:
\begin{enumerate}
\item[(ii)] Provide a constructive proof that at each non-negative order, there exists a unique hierarchy member
  in the class of odd-polynomial partial differential fields. The proof simultaneously establishes that the
  solution flow of each member is a Fredholm Grassmannian, and therefore linearisable in the sense outlined in detail above;
\item[(iii)] Give a simple proof of the non-commutative Lax hierarchy in this context. In addition, 
  we immediately establish that at each non-negative order, the unique hierarchy member in (ii), and the corresponding Lax hierarchy member, coincide; 
\item[(iv)] Establish that the first negative order non-commutative Lax hierarchy member is the cubic form
  of the non-commutative sine-Gordon equation, and further, demonstrate how to generate the rest of the negative order non-commutative hierarchy.
\end{enumerate}

Our paper is organised as follows. In Section~\ref{sec:Hankelandlinear} we introduce the P\"oppe product for Hankel operators and motivate the solution form we propose
for the non-commutative nonlinear Schr\"odinger and modified Korteweg--de Vries hierarchy, based on the associated Marchenko equation. 
We introduce the P\"oppe kernel monomial algebra in Section~\ref{sec:Poppealg} with the P\"oppe product,
and in particular its isomorphic abstract form as well as the skew-P\"oppe subalgebra we use for the
proofs of our main results. We present a sequence of simple examples in Section~\ref{sec:ncNLS} 
illustrating the use of the abstract P\"oppe algebra to generate the order zero through to order four members of the non-commutative hierarchy.
In Section~\ref{sec:Laxhierarchy} we establish the non-commutative Lax hierarchy using the P\"oppe algebra. 
Herein, we also generate the cubic form of the non-commutative sine-Gordon equation as the first negative order case,
and indicate how to generate the rest of the negative order cases. We begin Section~\ref{sec:hierarchycoding} with
the illuminating example of the quintic non-commutative modified Korteweg--de Vries equation, before stating, and then proving, our main results. 
Finally, in Section~\ref{sec:conclu} we present some further conclusions and applications.

\section{Hankel operators, the P\"oppe product and the Marchenko equation}\label{sec:Hankelandlinear}
In this section we introduce the concepts and results that underlie our formulation.
Herein, we: introduce the necessary Hilbert--Schmidt and Hankel operators we use and define the P\"oppe product; 
motivate the solution ansatz we use throughout; formulate the base linear dispersion equation;
elucidate the well-posedness results for the Marchenko equation we require and establish the connection
between the P\"oppe product and finite rank operators.

\subsection{Hankel operators and the P\"oppe product}\label{subsec:Poppeprod}
To begin, let us fix some notation. Let $\Vb$ be the Hilbert space of square-integrable, complex matrix-valued functions on $(-\infty,0]$,
i.e.\/ $\Vb\coloneqq L^2((-\infty,0];\CC^{m})$ for some $m\in\mathbb N$.
Further, we denote by $\mathfrak J_2(\Vb)$ the space of Hilbert--Schmidt operators on $\Vb$,
i.e.\/ bounded operators whose sum of the squares of their singular values is finite.  
For any given operator $F=F(x,t)\in\mathfrak J_2(\Vb)$
there exists a unique square-integrable kernel $f=f(y,z;x,t)$ with
$f\in L^2((-\infty,0]^{\times 2};\CC^{m\times m})$
such that for any $\phi\in\Vb$ we have
\begin{equation*}
(F\phi)(y;x,t)=\int_{-\infty}^0f(y,z;x,t)\phi(z)\,\rd z.
\end{equation*}
Conversely, any such function $f\in L^2((-\infty,0]^{\times 2};\CC^{m\times m})$
defines an operator $F=F(x,t)$ in $\mathfrak J_2(\Vb)$ with (for each $x,t$):
\begin{equation*}
\|F\|_{\mathfrak J_2(\Vb)}=\|f\|_{L^2((-\infty,0]^{\times 2};\CC^{m\times m})}. 
\end{equation*}
See for example Simon~\cite[p.~23]{Simon:Traces}.
\begin{definition}[Kernel bracket]\label{def:Kernelbracket}
For any Hilbert--Schmidt operator $F=F(x,t)$, which depends on the parameters $x\in\R$ and $t\geqslant0$,
we use the \emph{kernel bracket} notation $\lb F\rb$ to refer to the kernel $f=f(y,z;x,t)$ of $F$:
\begin{equation*}
\lb F\rb(y,z;x,t)\coloneqq f(y,z;x,t). 
\end{equation*}
In general, since $f$ is square-integrable, it only exists almost everywhere on $(-\infty,0]^{\times 2}$.
However below, the operators we consider have continuous kernels and so $f$ makes sense pointwise.
In such cases, we can in particular set $y=z=0$, for which we use the notation $\lb F\rb_{0,0}(x,t)\coloneqq f(0,0;x,t)$.
\end{definition}
Recall that the \emph{trace} of any trace-class operator $F$ on $(-\infty,0]$ is given by,
\begin{equation*}
\mathrm{tr}\, F\coloneqq\int_{-\infty}^0 f(z,z)\,\rd z.
\end{equation*}

By a Hankel operator, which may depend on a parameter $x$, we mean the following.
\begin{definition}[Hankel operator with parameters]\label{def:Hankel}
We say a Hilbert--Schmidt operator $H\in\mathfrak J_2(\Vb)$ 
with corresponding square-integrable kernel $h$ is \emph{Hankel} or \emph{additive} with parameter $x\in\R$
if its action, for any square-integrable function $\phi\in\Vb$, 
is given by (here $y\in(-\infty,0]$),
\begin{equation*}
(H\phi)(y;x)\coloneqq\int_{-\infty}^0h(y+z+x)\phi(z)\,\rd z.
\end{equation*}
\end{definition}
P\"oppe~\cite{P-SG,P-KdV} recognised the fundamental role played by such Hankel operators
in classical integrable systems. The kernel of the derivative with respect to the additive parameter $x$
of the operator product of an arbitrary pair of Hankel operators can be expressed as the matrix product
of their respective kernels as follows. See P\"oppe~\cite{P-SG,P-KdV}, as well as Doikou \textit{et al.} \cite{DMS}
and Malham~\cite{Malham:KdVhierarchy}. We include a proof for completeness.
\begin{lemma}[P\"oppe product]\label{lemma:origPoppeproduct}
  Assume $H$ and $H^\prime$ are Hankel Hilbert--Schmidt operators with parameter $x$ and $F$ and $F^\prime$
  are Hilbert--Schmidt operators. Further assume the kernels of $F$ and $F^\prime$ are continuous, whilst
  the kernels of $H$ and $H^\prime$ are continuously differentiable. Then the following P\"oppe product
  rule holds,
  \begin{equation*}
    \bigl[\hspace{-0.1cm}\bigl[F\pa_x(HH^\prime)F^\prime\bigr]\hspace{-0.1cm}\bigr](y,z;x)
    =\lb FH\rb(y,0;x)\lb H^\prime F^\prime\rb(0,z;x).
  \end{equation*}
\end{lemma}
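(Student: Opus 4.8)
The plan is to compute both sides directly from the integral definitions and match them. First I would write out the operator product $HH'$ acting on a test function: its kernel is $\int_{-\infty}^0 h(y+s+x)\,h'(s+z+x)\,\rd s$, using the Hankel/additive form of both kernels from Definition~\ref{def:Hankel}. The crucial observation is that this kernel depends on $y$, $z$ and $x$ only through the combinations $y+x$ and $z+x$ inside the two factors, so differentiating with respect to the additive parameter $x$ can be traded, via the chain rule and a shift of the integration variable $s$, for a derivative that telescopes. Concretely, $\pa_x\big(h(y+s+x)h'(s+z+x)\big) = \pa_s\big(h(y+s+x)h'(s+z+x)\big)$, so the $s$-integral of the $x$-derivative becomes a boundary term: the integrand evaluated at $s=0$ minus its value at $s=-\infty$. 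Under the stated hypotheses (Hilbert--Schmidt, hence $L^2$ kernels, with the continuity/differentiability assumptions) the contribution at $s=-\infty$ vanishes, leaving $\lb \pa_x(HH')\rb(y,z;x) = h(y+x)\,h'(z+x) = \lb H\rb(y,0;x)\,\lb H'\rb(0,z;x)$, where in the last step I use that the Hankel kernel evaluated with one argument set to $0$ recovers $h$ of the remaining shifted argument.

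Next I would conjugate by the arbitrary Hilbert--Schmidt operators $F$ and $F'$. Since $F$ and $F'$ do not depend on $x$, the derivative $\pa_x$ passes through them, so $\lb F\pa_x(HH')F'\rb$ is obtained from $\lb \pa_x(HH')\rb$ by the usual kernel-composition formula: $\lb FGF'\rb(y,z;x) = \int\!\!\int f(y,s)\,g(s,u;x)\,f'(u,z)\,\rd s\,\rd u$. Substituting the rank-one-in-$(y,z)$ expression $g(s,u;x) = h(s+x)\,h'(u+x)$ from the previous step, the double integral factors as $\big(\int f(y,s)\,h(s+x)\,\rd s\big)\big(\int h'(u+x)\,f'(u,z)\,\rd u\big)$, which is exactly $\lb FH\rb(y,0;x)\,\lb H'F'\rb(0,z;x)$. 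This gives the claimed identity. Matrix-valued kernels require only that one keeps the order of the factors, which the computation does automatically; no commutativity is used.

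I do not expect a genuine obstacle here; this is a direct verification. The one point needing a little care is the justification that the boundary term at $s=-\infty$ vanishes and that the differentiation under the integral sign is legitimate — this is where the continuity and continuous-differentiability hypotheses on the kernels, together with Hilbert--Schmidt (i.e.\ $L^2$) integrability, are used. I would dispatch this with a standard dominated-convergence argument, or simply remark that for the operators actually used in the paper (built from $P$ and $V$ with smooth, decaying kernels) all manipulations are manifestly valid, which is presumably the level of rigour intended by the phrase ``we include a proof for completeness.''
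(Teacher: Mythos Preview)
Your proposal is correct and follows essentially the same argument as the paper: both convert $\pa_x$ on the Hankel product to a derivative in the internal integration variable and then invoke the fundamental theorem of calculus to produce the boundary term at $0$. The only cosmetic difference is that you first establish the identity for $\pa_x(HH')$ alone and then conjugate by $F,F'$, whereas the paper writes the full triple integral at once; the content is identical.
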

\begin{proof}
We use the fundamental theorem of calculus and Hankel properties of $H$ and $H'$.
Let $f$, $h$, $h'$ and $f'$ denote the integral kernels of $F$, $H$, $H'$ and $F'$ respectively.
By direct computation $\lb F\pa_x(HH^\prime)F^\prime\rb(y,z;x)$ equals
\begin{align*}
&\int_{\R_-^3}
f(y,\xi_1;x)\pa_x\bigl(h(\xi_1+\xi_2+x)h^\prime(\xi_2+\xi_3+x)\bigr)
f^\prime(\xi_3,z;x)\,\rd \xi_3\,\rd \xi_2\,\rd \xi_1\\
&=\int_{\R_-^3}
f(y,\xi_1;x)\pa_{\xi_2}\bigl(h(\xi_1+\xi_2+x)h^\prime(\xi_2+\xi_3+x)\bigr)
f^\prime(\xi_3,z;x)\,\rd \xi_3\,\rd \xi_2\,\rd \xi_1\\
&=\int_{\R_-^2}
f(y,\xi_1;x)h(\xi_1+x)h^\prime(\xi_3+x)f^\prime(\xi_3,z;x)\,\rd \xi_3\,\rd \xi_1\\
&=\int_{\R_-}f(y,\xi_1;x)h(\xi_1+x)\,
\rd \xi_1\cdot\int_{\R_-}h^\prime(\xi_3+x)f^\prime(\xi_3,z;x)\,\rd \xi_3\\
&=\bigl(\lb FH\rb(y,0;x)\bigr)\bigl(\lb H'F'\rb(0,z;x)\bigr),
\end{align*}
which corresponds to the result stated. 
\qed
\end{proof}
\begin{remark}\label{rmk:implicitnotation}
We implicitly interpret kernel products written in the form $\lb\,\cdot\,\rb\lb\,\cdot\,\rb\cdots\lb\,\cdot\,\rb\lb\,\cdot\,\rb$
as $\lb\,\cdot\,\rb(y,0;x)\lb\,\cdot\,\rb(0,0;x)\cdots\lb\,\cdot\,\rb(0,0;x)\lb\,\cdot\,\rb(0,z;x)$.
\end{remark}

\subsection{Solution ansatz motivation} 
Let us formally motivate the solution form we use for the non-commmutative nonlinear Schr\"odinger and modified Korteweg--de Vries hierarchy we study herein.
This comes from the sine-Gordon equation.  With $x\in\R$ and $t\geqslant0$, we assume the sine--Gordon equation has the form,
$\pa_t\pa u=\sin u$, where $u=u(x,t)$ and $\pa\coloneqq\pa_x$. In the scalar case, when $u\in\R$, it is well-known
that there exists a solution of the form,
\begin{equation*}
u=2\mathrm{i}\,\mathrm{tr}\,\log\biggl(\frac{\id-\mathrm{i}P}{\id+\mathrm{i}P}\biggr),
\end{equation*}
or the equivalent form $u=4\,\mathrm{arctan}\, P$.
Here $P=P(x,t)$ is a Hankel Hilbert--Schmidt operator with an integral kernel $p=p(x,t)$ which satisfies the
linearised form of the sine--Gordon equation (see for example P\"oppe~\cite[Cor.~3.2]{P-SG}), $\pa_t\pa p=p$.
For scalar valued kernels, we have the following.
Suppose that $H=H(x)$ is a Hankel operator dependent on the parameter $x\in\R$. Then for any $n\in\mathbb N$, we have,
\begin{equation*}
\pa\,\mathrm{tr}\, H^n=\frac{n}{2}\lb H^n\rb_{0,0}. 
\end{equation*}
Further, suppose that $\Theta=\Theta(H)$ is a power series function of the Hankel operator $H$, with
scalar-valued coefficients $c_n$, of the form $\Theta(H)=\sum_{n\geqslant1}c_n\,H^n$. Then we have,
\begin{equation*}
\pa\,\mathrm{tr}\,\Theta(H)=\tfrac12\lb H\,D\Theta(H)\rb_{0,0},
\end{equation*}
where $D\Theta=D\Theta(H)$ is the series $D\Theta(H)=\sum_{n\geqslant1}nc_n\,H^{n-1}$.
\begin{remark}
  This is equivalent to the result embodied in equation (3.26) in P\"oppe~\cite{P-SG}.
  We give a proof in Proposition~\ref{prop:bracketidentities} in Section~\ref{sec:tracefiniterank} below.
  Also see Blower and Doust~\cite{BD}.
\end{remark}
\begin{example}[Logarithm of the Cayley transform]\label{ex:logCayley}
The solution ansatz for the scalar sine-Gordon equation above involves the logarithm of
Cayley transform, i.e.\/ the form,
\begin{equation*}
\Theta(P)=\log\biggl(\frac{\id-\mathrm{i}P}{\id+\mathrm{i}P}\biggr).
\end{equation*}
By direct computation we observe that,
\begin{equation*}
D\Theta(P)=-\frac{\mathrm{i}\cdot\id}{\id-\mathrm{i}P}-\frac{\mathrm{i}\cdot\id}{\id+\mathrm{i}P}
\quad\Rightarrow\quad P\,D\Theta(P)=-\frac{2\mathrm{i}P}{(\id-\mathrm{i}P)(\id+\mathrm{i}P)}.
\end{equation*}
Then, using the trace and kernel bracket result above, we have,
$\pa\, \mathrm{tr}\, \Theta(P)=-\lb(\id-\mathrm{i}P)^{-1}(\mathrm{i}P)(\id+\mathrm{i}P)^{-1}\rb_{0,0}$,
though the order of the factors shown on the right us not important.
\end{example}
Recall the solution form to the scalar sine--Gordon equation given above, $u=2\mathrm{i}\,\mathrm{tr}\,\Theta(P)$, where $\Theta=\Theta(P)$ is
the logarithm of the Cayley transform given in Example~\ref{ex:logCayley}. Let $\pa^{-1}\coloneqq\pa^{-1}_x$ denote the primative operator, 
$\bigl(\pa^{-1}\phi\bigr)(x)\coloneqq\int_{-\infty}^x\phi(\xi)\,\rd\xi$.
Then, given the final result in Example~\ref{ex:logCayley}, we can express the solution to the scalar sine--Gordon equation in the form,
\begin{equation*}
u=-2\mathrm{i}\,\pa^{-1}\bigl[\hspace{-0.1cm}\bigl[(\id-\mathrm{i}P)^{-1}(\mathrm{i}P)(\id+\mathrm{i}P)^{-1}\bigr]\hspace{-0.1cm}\bigr]_{0,0}.
\end{equation*}
This form of the solution for the scalar sine--Gordon equation motivates
the solution form we seek for the non-commutative nonlinear Schr\"odinger and modified Korteweg--de Vries hierarchy, which we utilise
in the following sections. The sine--Gordon equation is just a special case, the order `$-1$' case, in that hierarchy. 
\begin{example}[Non-commutative sine-Gordon cubic-form equation]\label{ex:SGform}
  If $u$ satisfies the scalar sine--Gordon equation $\pa_t\pa u=\sin u$, and $u=-2\mathrm{i}\pa^{-1}g$,
  then $g=g(x,t)$ satisfies the following sine--Gordon cubic-form equation,
  \begin{equation*}
    \pa_t\pa g=g+g\,\pa^{-1}(\pa_t g^2)+\pa^{-1}(\pa_t g^2)\,g.
  \end{equation*}
  To see this, define the operator $\Gamma$ by, $\bigl(\Gamma\phi\bigr)(x)\coloneqq\int_{-\infty}^x\gamma(\xi)\phi(\xi)\,\rd\xi$,
  where $\gamma=-2\mathrm{i}g$.
  Using that for any $n\in\mathbb N$ we have, $(\Gamma\circ1)^n\equiv n!\,(\Gamma^n\circ1)$, then we observe that in fact,
  $\sin u=(\id+\Gamma^2)^{-1}\circ\Gamma\circ1$.
  In other words $\gamma$ satisfies the integral equation $(\id+\Gamma^2)\circ\pa_t\gamma=\Gamma\circ1$ or equivalently 
  satisfies $\pa_t\gamma+\pa^{-1}\bigl(\gamma\pa^{-1}(\gamma\pa_t\gamma)\bigr)=\pa^{-1}\gamma$. Noting that $g=\gamma/(-2\mathrm{i})$ 
  and symmetrically splitting the nonlinear term, generates the sine--Gordon cubic-form equation above.
  The cubic-form equation above is often interpreted to be the non-commutative sine--Gordon equation in,
  for example, Schiebold~\cite[Prop.~6.2]{SchieboldncSG}.
\end{example}

\subsection{The linear dispersion equation}\label{subsec:lineardispersion}
Consider the following coupled linear system of equations for the Hilbert--Schmidt operators $P_\alpha$, $P_\beta$, $G_\alpha$ and $G_\beta$,
\begin{equation*}
\begin{aligned}
\pa_tP_\alpha&=\mu_n\pa^n P_\alpha,\\
\mathrm{i}P_\alpha&=G_\alpha(\id+P_\beta P_\alpha),
\end{aligned}
~\qquad\text{and}~\qquad
\begin{aligned}
\pa_tP_\beta&=(-1)^{n-1}\mu_n\pa^n P_\beta,\\
\mathrm{i}P_\beta&=G_\beta(\id+P_\alpha P_\beta).
\end{aligned}
\end{equation*}
for some order $n\in\mathbb Z$, where the parameter $\mu_n\in\mathbb C$.
In order for the partial differential equations for $P_\alpha$ and $P_\beta$ shown
to be dispersive, we necessarily require that $\mu_n$ is pure imaginary when $n$ is even
and real when $n$ is odd. We suppose that the matrix-valued kernel of $P_\beta$ has the same shape
as the transpose of the matrix-valued kernel of $P_\alpha$. 
The matrix-valued kernels of $G_\alpha$ and $G_\beta$ naturally match those of $P_\alpha$ and $P_\beta$, respectively.
If we set,
\begin{equation*}
  P\coloneqq\begin{pmatrix} O & P_\beta \\ P_\alpha & O \end{pmatrix},\qquad
  G\coloneqq\begin{pmatrix} O & G_\beta \\ G_\alpha & O \end{pmatrix},
  \qquad\text{and}\qquad 
  \mathcal I\coloneqq\begin{pmatrix} -\id & O \\ O & \id \end{pmatrix},
\end{equation*}
then the system of linear equations above can be expressed in the form,
\begin{align*}
\pa_t P&=-\mu_n(\mathrm{i}\mathcal I)^{n-1}\pa^n P,\\
\mathrm{i}P&=G(\id+P^2),
\end{align*}
where now the parameter $\mu_n\in\mathbb R$. This form is given, eg., in Schiebold~\cite[p.679--80]{SchieboldncSG}.
Now consider the following second order cubic nonlinear equation for the kernel $\lb G\rb$,
\begin{equation*}
\pa_t\lb G\rb(y,z;x,t)=-\mu_2\mathrm{i}\mathcal I\bigl(\pa_x^2\lb G\rb(y,z;x,t)-2\,\lb G\rb(y,0;x,t)\lb G\rb(0,0;x,t)\lb G\rb(0,z;x,t)\bigr).
\end{equation*}
Written in terms of the kernels $\lb G_\alpha\rb$ and $\lb G_\beta\rb$ with $y=z=0$, we observe,
\begin{align*}
  \mathrm{i}\pa_t\lb G_\alpha\rb&=\mu_2\pa_x^2\lb G_\alpha\rb-2\mu_2\,\lb G_\alpha\rb\,\lb G_\beta\rb\,\lb G_\alpha\rb,\\
  \mathrm{i}\pa_t\lb G_\beta\rb&=-\mu_2\pa_x^2\lb G_\beta\rb+2\mu_2\,\lb G_\beta\rb\,\lb G_\alpha\rb\,\lb G_\beta\rb.
\end{align*}
There are several different consistent choices we can make for $P_\alpha$ and $P_\beta$, as follows.
For example, suppose we set $P_\beta=P_\alpha^\dag$, the adjoint operator to $P_\alpha$ with respect to the $L^2$ inner product.
Then if $G=\mathrm{i}PU$ with $U\coloneqq(\id+P^2)^{-1}$, as defined above, at the block level it transpires $G_\beta=G_\alpha^\dag$. 
In this case the kernel $\lb G_\beta\rb(0,0;x,t)$ is the complex conjugate transpose of the kernel $\lb G_\alpha\rb(0,0;x,t)$.
And thus, assuming the kernel $\lb G\rb$ generated from $G=PU$ satisfies the equation above,
the equation for the block $\lb G_\alpha\rb(0,0;x,t)$ collapses to the non-commutative nonlinear Schr\"odinger equation.
Further note, for the choice $P_\beta=P_\alpha^\dag$, the operator $P$ is Hermitian with respect to the  $L^2$ inner product,
i.e.\/ $P^\dag=P$. 
\begin{remark}[Reverse and shifted space-time nonlocal equations]
The system of linear equations for $P$ above allows us to incorporate, and thus deduce corresponding results,
for the nonlocal versions of the non-commutative nonlinear Schr\"odinger hierarchy. These include the
reverse time, reverse space-time and space-time shifted nonlocal versions of these equations outlined in 
Ablowitz and Musslimani~\cite{AMusslimani,AMshift}, Fokas~\cite{Fokas}, Grahovski, Mohammed and Susanto~\cite{GMS}
and G\"urses and Pekcan~\cite{GP2018,GP2019b,GP2020,GP2022}.
This fact is outlined in detail in Example~4 and Remark~17 in Doikou \textit{et al.} \cite{DMSW:AGFintegrable}.
\end{remark}

\subsection{The Marchenko equation}
Consider an operator $P\in\mathfrak J_N(\mathbb V)$, where $N=1$ or $N=2$. For the moment $P$ is not necessarily a Hankel operator,
and $\mathbb V$ is an arbitrary seperable Hilbert space. The space $\mathfrak J_1(\Vb)$ denotes the set of trace-class (nuclear) operators.
Crucial to the P\"oppe algebra we introduce in Section~\ref{sec:Poppealg} are both, the Marchenko equation,
\begin{equation*}
P=G\,(\id-Q),
\end{equation*}
for the operator $G$, and the P\"oppe product in Lemma~\ref{lemma:origPoppeproduct}. In our application, we set $Q\coloneqq-P^2$.
The following abstract result is proved in Doikou \textit{et al.\/} \cite[Lemma~1]{DMSW:AGFintegrable}.
\begin{lemma}[Existence and Uniqueness; Doikou \textit{et al.\/} \cite{DMSW:AGFintegrable}]\label{lemma:EU}
  Assume $Q_0\in\mathfrak J_2$ and for some $T>0$ we know that $Q\in C^\infty\bigl([0,T];\Jf_2\bigr)$ with $Q(0)=Q_0$ and
  $P\in C^\infty\bigl([0,T];\Jf_N\bigr)$, where $N$ is $1$ or $2$. Further assume, $\mathrm{det}_2(\id-Q_0)\neq0$.
  Then there exists a $T^\prime>0$ with $T^\prime\leqslant T$ such that for $t\in[0,T^\prime]$ we have $\mathrm{det}_2(\id-Q(t))\neq0$ and 
  there exists a unique solution $G\in C^\infty\bigl([0,T^\prime];\Jf_N\bigr)$ to the linear equation $P=G(\id-Q)$.
\end{lemma}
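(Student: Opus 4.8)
The plan is to solve the linear equation $P = G(\id-Q)$ by the explicit formula $G = P(\id-Q)^{-1}$ on a subinterval of $[0,T]$ on which $\id-Q(t)$ is boundedly invertible, and to extract such a subinterval from the continuity in $t$ of the regularised determinant. No use of the particular evolution equation is needed beyond the $C^\infty$ time-regularity of $Q$ and $P$.

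First I would invoke the two standard analytic properties of $\mathrm{det}_2$ on $\Jf_2$ (see Simon~\cite{Simon:Traces}): (i) on every bounded subset of $\Jf_2$ the map $A\mapsto\mathrm{det}_2(\id-A)$ is Lipschitz, and in particular continuous for the $\Jf_2$-norm; (ii) for $A\in\Jf_2$ one has $\mathrm{det}_2(\id-A)\neq0$ if and only if $\id-A$ is invertible in the bounded operators on $\Vb$. Since $t\mapsto Q(t)$ is continuous from the compact interval $[0,T]$ into $\Jf_2$, its range is bounded, so by (i) the scalar function $t\mapsto\mathrm{det}_2(\id-Q(t))$ is continuous on $[0,T]$; as it is nonzero at $t=0$, there is $T'\in(0,T]$ with $\mathrm{det}_2(\id-Q(t))\neq0$ for all $t\in[0,T']$. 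By (ii), $(\id-Q(t))^{-1}$ then exists as a bounded operator for each such $t$, and since $(\id-Q(t))^{-1}-\id=(\id-Q(t))^{-1}Q(t)$ is the product of a bounded operator with a Hilbert--Schmidt one, $(\id-Q(t))^{-1}\in\id+\Jf_2$.

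Next I would set $G(t)\coloneqq P(t)(\id-Q(t))^{-1}$ for $t\in[0,T']$ and verify the two claims. Existence: $G(t)$ is the product of $P(t)\in\Jf_N$ with a bounded operator, hence lies in $\Jf_N$, and manifestly $G(t)(\id-Q(t))=P(t)$. Uniqueness: if $\tilde G(t)(\id-Q(t))=P(t)=G(t)(\id-Q(t))$, then $(\tilde G(t)-G(t))(\id-Q(t))=O$, and right-multiplication by $(\id-Q(t))^{-1}$ gives $\tilde G(t)=G(t)$.

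Finally, for the regularity of $G$ I would use that the set of invertible bounded operators is open and that inversion $A\mapsto A^{-1}$ is $C^\infty$ on it, with derivative $H\mapsto-A^{-1}HA^{-1}$. Composing with the $C^\infty$ curve $t\mapsto\id-Q(t)$, which takes values in the invertibles on $[0,T']$, shows $t\mapsto(\id-Q(t))^{-1}$ is $C^\infty$ into the bounded operators, and differentiating the identity $(\id-Q)^{-1}=\id+(\id-Q)^{-1}Q$ upgrades this to $C^\infty$ into $\id+\Jf_2$. Then $G=P\cdot(\id-Q)^{-1}$ is the image of the pair of $C^\infty$ curves $\bigl(P,(\id-Q)^{-1}\bigr)$ under the bounded bilinear composition map $\Jf_N\times\mathcal B(\Vb)\to\Jf_N$, so by the Leibniz rule $G\in C^\infty\bigl([0,T'];\Jf_N\bigr)$. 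The only real obstacle is the functional-analytic input of steps (i)--(ii) on the regularised determinant and the smoothness of operator inversion in the $\Jf_2$-topology; granting these, the remaining argument is purely algebraic.
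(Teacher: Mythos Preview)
Your argument is correct and is the natural one: continuity of $\mathrm{det}_2$ on bounded sets of $\Jf_2$ gives the subinterval, invertibility of $\id-Q$ yields $G=P(\id-Q)^{-1}$, and smoothness of inversion plus the ideal property of $\Jf_N$ give the regularity. The paper itself does not supply a proof of this lemma but simply cites Doikou \textit{et al.}~\cite[Lemma~1]{DMSW:AGFintegrable}, so there is nothing to compare against here; your write-up is exactly the kind of self-contained verification one would expect.
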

Now suppose $\Vb\coloneqq L^2((-\infty,0];\CC^{m})$ for some $m\in\mathbb N$.
For any function $w\geqslant0$, we denote the weighted $L^2$-norm of any complex matrix-valued function $f$ on $(-\infty,0]$ by,
\begin{equation*}
   \|f\|_{L^2_w}^2\coloneqq\int_{-\infty}^0\mathrm{tr}\,\bigl(f^\dag(x)f(x)\bigr)\,w(x)\,\rd x.
\end{equation*}
Doikou \textit{et al.\/} \cite[Lemma~3]{DMSW:AGFintegrable} also establish, if $p(\cdot;t)\in L^2_w((-\infty,0])$
with $w\colon y\mapsto(1-y)^2$, the Hankel operator $P=P(t)$ generated by $p$ is such that $P(t)\in\mathfrak J_2(\Vb)$.
Hence we \emph{assume} the solutions $p=p(y;t)$ to the linear dispersive system $\pa_t p=-\mu_n(\mathrm{i}\mathcal I)^{n-1}\pa_y^n p$
lie in $L^2_w((-\infty,0])$. We then take $P=P(x,t)$ to be the Hankel operator with kernel $p=p(y+z+x;t)$,
where $y,x\in(-\infty,0]^{\times2}$, with parameter $x\in\R$. Statements for $p=p(\cdot;t)$ on $(-\infty,0]$
translate, for each $x\in\R$, to statements for $p=p(\cdot+x;t)$ on $(-\infty,x]$. This is important, as we wish
to include natural solutions $p=p(y;t)$ to the linear dispersion equation that are unbounded as $y\to\infty$.
Examples of such solutions are exponential-form solutions that generate soliton solutions to the corresponding non-commutative
integrable nonlinear partial differential equation. Explicitly, the Marchenko equation we consider herein takes the form,
\begin{equation*}
p(y+z+x;t)=g(y,z;x,t)-\int_{-\infty}^0 g(y,\xi;x,t)q(\xi,z;x,t)\,\rd\xi,
\end{equation*}
where $q$ is the kernel of $Q\coloneqq-P^2$. With this in hand, we have the following result, adapted from 
Doikou \textit{et al.\/} \cite[Lemma~6]{DMSW:AGFintegrable}.
\begin{lemma}[Existence and Uniqueness: Marchenko equation]\label{lemma:EUlinearintegralsystem}
  Assume the smooth initial data $p_0=p_0(\cdot)$ for $p=p(\cdot;t)$ is such that $\mathrm{det}_2(\id-Q_0)\neq0$,
  where $Q_0\coloneqq-P_0^2$ and $P_0$ is the Hankel operator generated by $p_0$. Further assume there exists a $T>0$
  such that there is a solution,
  \begin{equation*}
    p\in C^{\infty}\bigl([0,T];L^2_w((-\infty,0];\CC^{m\times m})\cap C^\infty((-\infty,0];\CC^{m\times m})\bigr),
  \end{equation*}
  to the linear dispersion equation $\pa_t p=-\mu_n(\mathrm{i}\mathcal I)^{n-1}\pa_y^n p$, where $w\colon y\mapsto(1-y)^2$.
  Then there exists a $T^\prime>0$ with $T^\prime\leqslant T$ such that for $t\in[0,T^\prime]$ we know:
  (i) The Hankel operator $P=P(x,t)$ with parameter $x\in\R$ generated by $p$ is Hilbert--Schmidt valued on $\Vb$;
  (ii) The determinant $\mathrm{det}_2(\id-Q(x,t))\neq0$ where $Q(x,t)\coloneqq-P^2(x,t)$, and hence 
  (iii) There is a unique Hilbert--Schmidt valued solution $G=G(x,t)$ with $G\in C^\infty([0,T^\prime];\mathfrak J_2(\Vb))$
  to the linear Fredholm equation $P=G(\id-Q)$.
\end{lemma}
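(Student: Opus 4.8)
The plan is to reduce everything to the abstract existence--uniqueness result Lemma~\ref{lemma:EU}, which already packages (ii) and (iii) together, and to spend the real work on (i). Concretely, once we show that the Hankel operator $P=P(x,t)$ generated by $p$ is Hilbert--Schmidt valued and depends smoothly on $t$, we set $Q(x,t)\coloneqq-P^2(x,t)$, check that $Q$ is a smooth $\mathfrak J_2$-valued path with $Q(\cdot,0)=Q_0$ and $\mathrm{det}_2(\id-Q_0)\neq0$, and then invoke Lemma~\ref{lemma:EU} with $N=2$ to produce $T'\leqslant T$ with $\mathrm{det}_2(\id-Q(x,t))\neq0$ on $[0,T']$ and a unique $G\in C^\infty([0,T'];\mathfrak J_2(\Vb))$ solving $P=G(\id-Q)$. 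The displayed integral equation relating $p$, $g$ and $q$ is then just this operator equation read at the level of kernels.

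For (i), the key is the elementary change of variables for Hankel kernels. For fixed $x$ and $t$, writing the $\mathfrak J_2$-norm as a double integral over $(-\infty,0]^{\times2}$ and collapsing the antidiagonal gives $\|P(x,t)\|_{\mathfrak J_2(\Vb)}^2=\int_{-\infty}^{0}|s|\,|p(s+x;t)|^2\,\rd s$, and since $|s|\leqslant(1-s)^2$ for $s\leqslant0$ this is dominated by $\|p(\cdot+x;t)\|_{L^2_w}^2$ with $w\colon y\mapsto(1-y)^2$; this is precisely the estimate of Doikou \textit{et al.}~\cite[Lemma~3]{DMSW:AGFintegrable} transported by the additive parameter $x$. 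Since $p(\cdot;t)\in L^2_w$ for every $t\in[0,T]$ by hypothesis, $P(x,t)\in\mathfrak J_2(\Vb)$, with the bound locally uniform in $x$. As the map $p\mapsto P(x,\cdot)$ is linear and bounded from $L^2_w$ into $\mathfrak J_2(\Vb)$ and commutes with $\pa_t$, the assumed regularity $p\in C^\infty([0,T];L^2_w)$ upgrades to $P(x,\cdot)\in C^\infty([0,T];\mathfrak J_2(\Vb))$, which is (i); moreover the assumed spatial smoothness $p(\cdot;t)\in C^\infty((-\infty,0])$ makes the kernel $g$ of $G$ continuous (indeed smooth) in $(y,z)$, so the evaluation $\lb G\rb_{0,0}$ used later is meaningful.

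For (ii) and (iii), note that operator multiplication is bounded bilinear from $\mathfrak J_2(\Vb)\times\mathfrak J_2(\Vb)$ into $\mathfrak J_1(\Vb)\hookrightarrow\mathfrak J_2(\Vb)$ (Simon~\cite{Simon:Traces}), so $Q(x,\cdot)=-P^2(x,\cdot)\in C^\infty([0,T];\mathfrak J_2(\Vb))$ with $Q(x,0)=-P_0^2=Q_0$. The regularised determinant $\mathrm{det}_2$ is continuous (locally Lipschitz) on $\mathfrak J_2(\Vb)$, so $t\mapsto\mathrm{det}_2(\id-Q(x,t))$ is continuous and equals $\mathrm{det}_2(\id-Q_0)\neq0$ at $t=0$; Lemma~\ref{lemma:EU} then applies for this fixed $x$ and yields $T'$, the nonvanishing of $\mathrm{det}_2(\id-Q(x,t))$ on $[0,T']$, and the unique Hilbert--Schmidt solution $G\in C^\infty([0,T'];\mathfrak J_2(\Vb))$ of $P=G(\id-Q)$.

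I expect the genuine obstacle to be (i): carrying out the $\mathfrak J_2$-estimate with the additive parameter $x$ while allowing the scattering kernel $p$ to grow as its argument tends to $+\infty$, which is exactly why the restriction to $(-\infty,x]$ and the quadratic weight $w$ are essential --- one must verify that the translated weight still dominates the factor $|s-x|$ produced by the antidiagonal collapse, and that all estimates are uniform for $x$ in a coordinate patch (so that a single $T'$ works, and so that the determinant hypothesis at $t=0$ is imposed at the relevant value of $x$). The remaining ingredients --- bilinear boundedness of multiplication on Hilbert--Schmidt operators, continuity of $\mathrm{det}_2$, and the transfer of $C^\infty([0,T];\,\cdot\,)$ regularity through bounded (bi)linear maps --- are standard, and are precisely what allows the abstract Lemma~\ref{lemma:EU} to take over.
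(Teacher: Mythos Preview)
Your proposal is correct and follows essentially the same approach the paper takes: the paper does not give an explicit proof of this lemma but instead refers to Doikou \textit{et al.}~\cite[Lemma~6]{DMSW:AGFintegrable}, having already noted that \cite[Lemma~3]{DMSW:AGFintegrable} supplies the weighted $L^2_w$ estimate for (i), and the abstract Lemma~\ref{lemma:EU} is precisely what is invoked for (ii) and (iii). Your reconstruction---the antidiagonal collapse giving $\|P\|_{\mathfrak J_2}^2=\int|s|\,|p(s+x)|^2\,\rd s\leqslant\|p(\cdot+x)\|_{L^2_w}^2$, then $Q=-P^2\in C^\infty([0,T];\mathfrak J_2)$ via bilinear boundedness, then Lemma~\ref{lemma:EU}---is exactly the intended argument, and your identification of the additive-parameter translation as the one place requiring care matches the paper's own remark about restricting to $(-\infty,x]$.
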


\subsection{Trace formulae and finite-rank operators}\label{sec:tracefiniterank}
We now establish that at the core of the P\"oppe product is in fact a finite rank operator. 
For this section only, for convenience, we assume the domain of support of the functions under consideration is $[0,\infty)$
as opposed to $(-\infty,0]$. A reflection transformation translates between the two. 
For a bounded integral operator $K\colon L^2(0,\infty)\to L^2(0,\infty)$ with a continuous kernel $k=k(y,z)$,
we write $\lb K\rb_{0,0}=k(0,0)$ for the kernel bracket.
\begin{definition}[Shift operator]
  We define the shift operator $S_\eta\colon L^2(0,\infty)\to L^2(0,\infty)$ by $S_\eta f(x)=f(x-\eta)\,\mathrm{ind}_{(0,\infty)}(x-\eta)$,
  where $\mathrm{ind}_{(0,\infty)}$ is the indicator function on $(0,\infty)$.
\end{definition}
It is well-known that $S_\eta$ is a linear isometry and $(S_\eta)_{\eta>0}$ is a strongly continuous contraction semigroup.
Further, the adjoint $S_\eta^\dag$ is a linear contraction, and $(S_\eta^\dag)_{\eta>0}$ is a strongly continuous contraction semigroup on $L^2$.
\begin{proposition}
  Set $\sigma_\eta(K)\coloneqq S_\eta^\dag KS_\eta$ for $K\in\mathfrak J(L^2(0,\infty))$. Then we have:
  \begin{enumerate}
  \item[(i)] $\sigma_\eta(K)\in\mathfrak J(L^2(0,\infty))$ for all $K\in\mathfrak J(L^2(0,\infty))$ with
    $\|\sigma_\eta(K)\|\leqslant\|K\|$, $K\mapsto\sigma_\eta(K)$ is linear and $\sigma_{\eta+\xi}=\sigma_\eta(\sigma_\xi(K))$;
  \item[(ii)] $K=K^\dag$ implies $(\sigma_\eta(K))^\dag=\sigma_\eta(K)$ and $K\geqslant0$ implies $\sigma_\eta(K)\geqslant0$;
  \item[(iii)] $(S_\eta)_{\eta>0}$ gives a strongly continuous contraction semigroup on the von Neumann--Schatten ideal $\mathfrak J_p$
  for $1\leqslant p<\infty$ and on the space of compact operators. Also $\sigma_\eta(K)\to K$ as $\eta\to0$ for such $K$;
\item[(iv)] Let $\delta$ be the generator of the semigroup in (iii), so $\sigma_\eta=\exp(t\,\delta)$.
  Then for continuously differentiable kernels $k=k(y,z)$ we have, $\delta k(y,z)=(\pa_y+\pa_z)k(y,z)$;
  \item[(v)] Suppose that $K$ has a continuous kernel $k=k(y,z)$, and that $K$ is self-adjoint, non-negative and trace class.
    Then, $\mathrm{det}(\id+\sigma_\eta(K))=\mathrm{det}(\id+K\mathrm{Pr}_{(\eta,\infty)})$ and,
  \begin{equation*}
       \lb\sigma_\eta(K)\rb_{0,0}=k(\eta,\eta)=-\frac{\rd}{\rd\eta}\mathrm{tr}\,\sigma_\eta(K);
  \end{equation*}    
  \item[(vi)] P\"oppe's bracket operation satisfies, $\mathrm{tr}\,\delta K=-\lb K\rb_{0,0}$.
\end{enumerate}
\end{proposition}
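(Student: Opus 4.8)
The plan is to run through parts (i)--(vi) in order, since each rests on the previous ones, reducing everything to elementary semigroup and integral-kernel bookkeeping. For (i), because $S_\eta$ is a linear isometry and $S_\eta^\dagger$ a contraction, the conjugation $K\mapsto S_\eta^\dagger KS_\eta$ maps the two-sided ideal $\mathfrak J$ into itself with $\|\sigma_\eta(K)\|\leqslant\|S_\eta^\dagger\|\,\|K\|\,\|S_\eta\|=\|K\|$; linearity is immediate, and the semigroup law follows from $S_\xi S_\eta=S_{\xi+\eta}$ together with $\sigma_\eta(\sigma_\xi(K))=S_\eta^\dagger S_\xi^\dagger KS_\xi S_\eta=(S_\xi S_\eta)^\dagger K(S_\xi S_\eta)$. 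Part (ii) is the standard fact that for any bounded $A$ one has $(A^\dagger KA)^\dagger=A^\dagger K^\dagger A$ and $\langle A^\dagger KA\phi,\phi\rangle=\langle K(A\phi),A\phi\rangle$, applied with $A=S_\eta$.

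For (iii) I would first check strong continuity at $\eta=0$ on finite-rank operators: for a rank-one $K=f\langle\,\cdot\,,g\rangle$ one has $\sigma_\eta(K)=(S_\eta f)\langle\,\cdot\,,S_\eta g\rangle$, and since $\eta\mapsto S_\eta$ is strongly continuous on $L^2$, $S_\eta f\to f$ and $S_\eta g\to g$ in $L^2$, so $\sigma_\eta(K)\to K$ in every Schatten norm $\mathfrak J_p$ and in operator norm. One then bootstraps to arbitrary $K\in\mathfrak J_p$ (resp. compact $K$) using the uniform bound $\|\sigma_\eta\|_{\mathfrak J_p\to\mathfrak J_p}\leqslant1$ from (i) and the density of finite-rank operators, by a three-$\varepsilon$ argument; the semigroup property then promotes continuity at $0$ to strong continuity everywhere. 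For (iv) the crucial computation is that if $K$ has kernel $k(y,z)$ then $KS_\eta$ has kernel $k(y,z+\eta)$ and left-composing with $S_\eta^\dagger$ shifts the first argument, so $\sigma_\eta(K)$ has kernel $k(y+\eta,z+\eta)$; for continuously differentiable $k$ one differentiates under the kernel to obtain $\delta k(y,z)=\tfrac{\rd}{\rd\eta}\big|_{\eta=0}k(y+\eta,z+\eta)=(\pa_y+\pa_z)k(y,z)$ (with $\delta$ understood on the dense domain of differentiable kernels).

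Part (v) combines two ingredients. For the determinant identity I would use $S_\eta^\dagger S_\eta=\id$ and $S_\eta S_\eta^\dagger=\mathrm{Pr}_{(\eta,\infty)}$ together with the cyclic identity $\mathrm{det}(\id+AB)=\mathrm{det}(\id+BA)$, valid here since $B=KS_\eta$ is trace class: with $A=S_\eta^\dagger$ this gives $\mathrm{det}(\id+S_\eta^\dagger KS_\eta)=\mathrm{det}(\id+KS_\eta S_\eta^\dagger)=\mathrm{det}(\id+K\,\mathrm{Pr}_{(\eta,\infty)})$, legitimate as an ordinary Fredholm determinant because $K\in\mathfrak J_1$. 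For the trace formula, the kernel of $\sigma_\eta(K)$ being $k(y+\eta,z+\eta)$ gives $\lb\sigma_\eta(K)\rb_{0,0}=k(\eta,\eta)$ directly, and $\mathrm{tr}\,\sigma_\eta(K)=\int_0^\infty k(y+\eta,y+\eta)\,\rd y=\int_\eta^\infty k(w,w)\,\rd w$, whose $\eta$-derivative is $-k(\eta,\eta)$ by the fundamental theorem of calculus.

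Part (vi) is then immediate from (iv) and (v): differentiating $\mathrm{tr}\,\sigma_\eta(K)=\exp(\eta\delta)$ applied to $K$ at $\eta=0$ gives $\mathrm{tr}\,\delta K=\tfrac{\rd}{\rd\eta}\big|_{\eta=0}\mathrm{tr}\,\sigma_\eta(K)=-\lb K\rb_{0,0}$; equivalently, one integrates $(\pa_y+\pa_z)k(y,z)$ along the diagonal, $\mathrm{tr}\,\delta K=\int_0^\infty\tfrac{\rd}{\rd y}k(y,y)\,\rd y=\lim_{y\to\infty}k(y,y)-k(0,0)=-k(0,0)$, the limit vanishing because $K$ is trace class with continuous kernel (as also follows from $\int_\eta^\infty k(w,w)\,\rd w\to0$). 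I expect the genuine functional-analytic content to sit in (iii) --- the upgrade from strong operator convergence of $(S_\eta)$ to norm convergence of $(\sigma_\eta)$ on the Schatten ideals --- and the one point needing care in (v) is checking that all operator products appearing are trace class so that $\mathrm{det}$ and the cyclic identity genuinely apply; the remaining steps are routine kernel manipulation.
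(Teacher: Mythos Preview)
Your proposal is correct in substance and follows the paper's overall line, with two points worth flagging.

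First, a small slip in (iii): for a rank-one $K=f\langle\,\cdot\,,g\rangle$ one has $\sigma_\eta(K)=(S_\eta^\dagger f)\langle\,\cdot\,,S_\eta^\dagger g\rangle$, not $(S_\eta f)\langle\,\cdot\,,S_\eta g\rangle$. This is harmless since $(S_\eta^\dagger)_{\eta>0}$ is also strongly continuous on $L^2$, so your density argument goes through unchanged once corrected.

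Second, your treatment of (iii) and (vi) differs from the paper's in organisation. For (iii) the paper treats $\mathfrak J_2$ directly via the Hilbert--Schmidt kernel norm, treats $\mathfrak J_1$ via a nuclear expansion $k(y,z)=\sum_j f_j(y)g_j(z)$ and dominated convergence, and only then invokes density of finite-rank operators for $1<p<\infty$ and the compacts; your uniform finite-rank-plus-density argument is more economical and covers all cases at once. For (vi) the paper again works through a nuclear expansion on the domain $\mathcal D^1\otimes\mathcal D^1$ (functions with $L^2$ derivative), computing $\mathrm{tr}\,\delta K=\sum_j\int_0^\infty(\phi_j\psi_j)'\,\rd y=-\sum_j\phi_j(0)\psi_j(0)$ term by term; your route via (v), setting $\eta=0$ in $\tfrac{\rd}{\rd\eta}\mathrm{tr}\,\sigma_\eta(K)=-k(\eta,\eta)$, is quicker but implicitly inherits the self-adjoint non-negative hypothesis from (v), whereas the paper's nuclear-expansion argument makes the required regularity on $K$ explicit without positivity. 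Your fallback integration-along-the-diagonal argument needs $k(y,y)\to0$ at infinity, which is not automatic for general trace-class $K$ with continuous kernel; the paper sidesteps this by working on the dense domain where the boundary term is controlled.
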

\begin{proof}
  (i) Follows since $(S_\eta)$ is a contraction semigroup, while (ii) is straightforward. 
  (iii) The Schatten class gives an operator ideal, so we have $\| S_\eta^\dag KS_\eta\|_{\mathfrak J_p}\leqslant\| K\|_{\mathfrak J_p}$
  since $\|S_\eta\|_{\mathfrak J}=1$. In view of this, we only need to check continuity in the relevant norm. 
  For the Hilbert--Schmidt operators $\mathfrak J_2$, we let $\sigma_\eta(K)(y,z)$ be the kernel of $\sigma_\eta(K)$ as an integral operator.
  Then we have $\sigma_\eta(K)(y,z)=K(y+\eta,z+\eta)$, so by the Hilbert--Schmidt theorem,
  \begin{equation*}
    \|\sigma_\eta(K)-K\|^2_{\mathfrak J_2}=\int_0^\infty\int_0^\infty\| k(y+\eta,z+\eta)-k(y,z)\|^2\, \rd y\rd z,
  \end{equation*}
  which converges to $0$ as $\eta\to 0^+$. For the trace class operators $\mathfrak J_1$,
  we observe that the space of trace class operators may be identified with the projective tensor product
  $\mathfrak J_1=L^2\hat\otimes L^2$, so we have a nuclear expansion,
  \begin{equation*}
  k(y,z)=\sum_{j=1}^\infty f_j(y)g_j(z),
  \end{equation*}
  where $\sum_{j=1}^\infty \|f_j\|_{L^2}\|g_j\|_{L^2}=\|K\|_{\mathfrak J_1}$. Then we have
  \begin{equation*}
  \sigma_\eta(K)(y,z)-k(y,z)=\sum_{j=1}^\infty\bigl(f_j(y+\eta)-f_j(y)\bigr)g_j(z+\eta)+\sum_{j=1}^\infty f_j(y)\bigl(g_j(z+\eta)-g_j(y)\bigr),
  \end{equation*}
  and so,
  \begin{equation*}
  \|\sigma_\eta(K)-K\|_{\mathfrak J_1}\leqslant
  \sum_{j=1}^\infty \|S_\eta^\dag(f_j)-f_j\|_{L^2}\|g_j\|_{L^2}+\sum_{j=1}^\infty\|f_j\|_{L^2}\|S_\eta(g_j)-g_j\|_{L^2},
  \end{equation*}
  where the right-hand side converges to $0$ as $\eta\to 0^+$ by dominated convergence. 

  For $1<p<\infty $, we observe that the finite-rank operators give a dense linear subspace of $\mathfrak J_p$,
  so we can argue as with the trace class operators. Likewise, the finite-rank operators give a dense linear subspace of the space of compact operators.
  Hence $(\sigma_\eta)_{\eta>0}$ gives as strongly continuous contraction semigroup on these spaces. 
  Now $S_\eta^\dag f\to 0$ as $\eta\to\infty$ for all $f\in L^2(0,\infty)$,
  so for all finite rank operators $F$, we have $\sigma_\eta(F)\to 0$ as $\eta\to\infty$.
  Then for $K\in\mathfrak J_p$ and $\varepsilon>0$ there exists a finite rank $F$ such that $\|K-F\|_{\mathfrak J_p}<\varepsilon$,
  so $\|\sigma_\eta(K)\|_{\mathfrak J_p}\leqslant\|K-F\|_{\mathfrak J_p}+\|\sigma_\eta(F)\|_{\mathfrak J_p}$ is less than $2\varepsilon$
  for all sufficiently large $\eta$. (Note, we do not assert that $(\sigma_\eta)_{\eta>0}$ is strongly continuous on $\mathfrak J$ itself.)
  
  (iv) From the definition of generator, we have,
  \begin{equation*}
  \delta k(y,z)=\left.\frac{\rd}{\rd\eta}\right|_{\eta=0}\sigma_\eta(K)(y,z)=\left.\frac{\rd}{\rd\eta}\right|_{\eta=0}k(y+\eta,z+\eta)
  =\bigl(\pa_y+\pa_z\bigr)k(y,z).
  \end{equation*}
    
  (v) We have, $\det(\id+\sigma_\eta(K))=\det(\id+S_\eta^\dag KS_\eta) =\det(\id+KS_\eta S_\eta^\dag)=\det(\id+K\mathrm{Pr}_{(\eta, \infty )})$.
  By Mercer's formula we have, $\mathrm{tr}\,\sigma_\eta(K)=\int_0^\infty K(y+\eta,y+\eta)\,\rd y$,
  and we can differentiate this formula using the fundamental theorem of calculus.

  (vi) The result (v) may be formulated in terms of the generator without explicit mention of the semigroup.
  Let ${\mathcal D}^1\coloneqq\{\phi\in L^2((0,\infty);\mathbb C)\colon \phi'\in L^2((0,\infty);\mathbb C)\}$, and recall from
  Hille and Phillips~\cite[p.~535]{HP} that ${\mathcal D}^1$ is the domain of the generator of $(S_\eta^\dag)_{\eta>0}$.
  By Plancherel's theorem,
  we have ${\mathcal D}^1\subset L^\infty$, so ${\mathcal D}^1$ is an algebra under pointwise multiplication of functions;
  hence there is a map $\mu\colon {\mathcal D}^1\otimes{\mathcal D}^1\to {\mathcal D}^1$ given by $\phi(y)\psi(z)\mapsto \phi(y)\psi(y)$.
  There is also a natural inclusion ${\mathcal D}^1\otimes {\mathcal D}^1\to L^2\otimes L^2=\mathfrak J_1(L^2)$, and the trace satisfies
  $\mathrm{tr}\,(K)=\int_0^\infty \mu (K)(x)\,\rd x$.
  We have $\delta\colon{\mathcal D}^1\otimes{\mathcal D}^1\to L^2\otimes L^2\colon\phi\otimes\psi\mapsto\phi'\otimes\psi+\phi\otimes\psi'$;
  hence for $k=\sum_{j=1}^\infty \phi_j(y)\psi_j(z)$ we have,
  \begin{equation*}
   \mathrm{tr}\,\delta K=\sum_{j=1}^\infty \int_0^\infty \bigl(\phi_j(y)\psi_j(y)+\phi_j(y)\psi_j'(y)\bigr)\,\rd y
    =-\sum_{j=1}^\infty \phi_j(0)\psi_j(0),
  \end{equation*}
  so we have the required expression for P\"oppe's bracket operation.\qed
\end{proof}

Suppose that $\phi\in L^2((0,\infty );\mathbb M_{m\times m}(\mathbb C))$.
Then we introduce $\phi_{(x)}(\eta)\coloneqq\phi(2x+\eta)$ and the Hankel operator
$\Gamma_{\phi_{(x)}}:L^2((0,\infty);\mathbb C^{m\times 1})\to L^2((0,\infty);\mathbb C^{m\times 1})$ by,
\begin{equation*}
 \Gamma_{\phi_{(x)}} h(y)=\int_0^\infty \phi (y+z+2x)f(z)\, \rd z
\end{equation*}
for $f\in L^2((0, \infty );\mathbb C^{m\times 1})$.
Suppose, $\int_0^\infty t\|\phi(y)\|^2\,\rd y<\infty$ and $\int_0^\infty t\|\psi(y)\|^2\,\rd y<\infty$.
Then $\Gamma_\phi$ is a Hilbert--Schmidt operator, and $\Gamma_\phi\Gamma_\psi$ is trace class with,
\begin{equation*}
\mathrm{tr}\,\bigl(\Gamma_\phi\Gamma_\psi\bigr)=\int_0^\infty \int_0^\infty \phi(y+z)\psi(y+z)\,\rd dy\rd z=\int_0^\infty y\phi(y)\psi(y)\,\rd y.
\end{equation*}
A bounded linear operator $\Gamma$ on $L^2(0, \infty )$ is Hankel if and only if $S_\eta^\dag\Gamma=\Gamma S_\eta$ for all $\eta>0$.
This may be interpreted as $\pa_x\Gamma=-\Gamma\pa_y$ when we consider operators on $C_c^\infty (0, \infty )$.
In the context of Hankel products, this leads to the following. 
\begin{proposition}[Bracket identities for Hankel operators]\label{prop:bracketidentities}
  We have the following:
  \begin{enumerate}
  \item[(i)] Let $\Gamma_\phi$ be the Hankel operator with kernel $\phi(y+z)$.
   Then $\sigma_\eta(\Gamma_\phi)$ has kernel $\phi (y+z+2\eta)$, so $\sigma_\eta(\Gamma_\phi)=\Gamma_{\phi_{(\eta)}}$;
 \item[(ii)] Let $\phi,\psi\in\mathbb M_{m\times m}(C_c^\infty (0,\infty))$ be functions as above.
   Then $\sigma_{2\eta}(\Gamma_\phi\Gamma_\psi)=\Gamma_{\phi_{(\eta)}}\Gamma_{\psi_{(\eta)}}$, and,
   \begin{equation*}
    \pa_\eta\bigl(\Gamma_{\phi_{(\eta)}}\Gamma_{\psi_{(\eta)}}\bigr),
   \end{equation*}
   is a bounded linear operator of finite rank with rank less than or equal to $m^2$;
 \item[(iii)] Suppose $m=1$, and $\phi,\psi\in C_c^\infty(0,\infty)$. Then $\delta (\Gamma_\phi\Gamma_\psi )$ has rank one;
 \item[(iv)] Conversely, suppose that $K$ is in the domain of $\delta$ and $\delta(K)$ has finite rank. Then $K=\Gamma_\Phi^\top\Gamma_\Psi$;
 \item[(v)] Let $\Gamma_x=\Gamma_{\phi_{(x)}}$ and $\Gamma_x'=\pa_x\Gamma_x$.
   Let $\Theta$ be holomorphic on an open neighbourhood of the spectrum of $\Gamma_x$ for all real $x$. Then we have,
   \begin{equation*}
   \lb \Gamma_x\Theta'(\Gamma_x)\rb_{0,0}=-\pa_x\mathrm{tr}\,\bigl(\Theta(\Gamma_x)\bigr).
  \end{equation*}
  \end{enumerate}
\end{proposition}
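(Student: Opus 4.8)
The plan is: parts (i)--(iii) are direct kernel computations; part (iv) is obtained by integrating the generator relation along the semigroup; and part (v)---the substantive statement---is proved by reducing to monomials $w^n$ and comparing how $\partial_x$ and $\delta$ act under the trace. For (i), the proof of the preceding proposition gives $\sigma_\eta(K)$ the kernel $k(y+\eta,z+\eta)$, so $\sigma_\eta(\Gamma_\phi)$ has kernel $\phi(y+z+2\eta)=\phi_{(\eta)}(y+z)$, i.e.\ $\sigma_\eta(\Gamma_\phi)=\Gamma_{\phi_{(\eta)}}$; the same substitution in the composed kernel $\int_0^\infty\phi(y+w)\psi(w+z)\,\rd w$ of $\Gamma_\phi\Gamma_\psi$ yields $\int_0^\infty\phi_{(\eta)}(y+w)\psi_{(\eta)}(w+z)\,\rd w$, the first assertion of (ii). For the rank statement I would differentiate $\Gamma_{\phi_{(\eta)}}\Gamma_{\psi_{(\eta)}}=\sigma_{2\eta}(\Gamma_\phi\Gamma_\psi)$ in $\eta$: since $\partial_\eta\sigma_{2\eta}=2\delta\sigma_{2\eta}$ and $\delta$ acts on kernels as $\partial_y+\partial_z$ (part (iv) of the preceding proposition), the resulting kernel is $2\int_0^\infty(\partial_y+\partial_z)\bigl(\phi(2\eta+y+w)\psi(2\eta+w+z)\bigr)\,\rd w$; rewriting $(\partial_y+\partial_z)$ on the integrand as $\partial_w$ and integrating by parts---the boundary term at $w=\infty$ vanishing since $\phi,\psi$ are compactly supported---leaves the degenerate kernel $-2\phi(2\eta+y)\psi(2\eta+z)$, which is exactly the $[0,\infty)$-form of the P\"oppe product of Lemma~\ref{lemma:origPoppeproduct} with $F=F'=\id$. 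The operator with matrix kernel $\phi(2\eta+y)\psi(2\eta+z)$ sends $h$ to $y\mapsto\phi(2\eta+y)\int_0^\infty\psi(2\eta+z)h(z)\,\rd z$, so its range lies in $\{y\mapsto\phi(2\eta+y)c:c\in\CC^m\}$ and its rank is at most $m$ (a fortiori at most $m^2$). Part (iii) is the case $m=1$, $\eta=0$, where the kernel $-\phi(y)\psi(z)$ is manifestly rank one.

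For part (iv), since $K\in\mathrm{dom}(\delta)$ the curve $\eta\mapsto\sigma_\eta(K)$ is differentiable with $\tfrac{\rd}{\rd\eta}\sigma_\eta(K)=\sigma_\eta(\delta K)$, and $\sigma_\eta(K)\to0$ as $\eta\to\infty$ because $K$ is compact (part (iii) of the preceding proposition). Integrating gives $K=-\int_0^\infty\sigma_\eta(\delta K)\,\rd\eta$. Writing the finite-rank kernel of $\delta K$ as $\sum_{j=1}^N u_j(y)v_j(z)^\top$, the kernel of $\sigma_\eta(\delta K)$ is $\sum_j u_j(y+\eta)v_j(z+\eta)^\top$, so $K$ has kernel $-\sum_j\int_0^\infty u_j(y+w)v_j(w+z)^\top\,\rd w$ (substituting $\eta=w$ and using that the $v_j$ are functions). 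Bundling $u_1,\dots,u_N$ as the rows of an $N\times m$ matrix symbol $\Phi$ and $-v_1,\dots,-v_N$ as the rows of an $N\times m$ symbol $\Psi$, this is the kernel $\int_0^\infty\Phi(y+w)^\top\Psi(w+z)\,\rd w$ of $\Gamma_\Phi^\top\Gamma_\Psi$; the transpose is precisely what makes the composition of the vector-bundled Hankel operators type-check. The remaining point is to verify the decay needed for $\Gamma_\Phi,\Gamma_\Psi$ to be Hilbert--Schmidt, which follows from the integrability hypotheses on $\delta K$.

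The substantive statement is (v). By the holomorphic functional calculus it suffices to prove the identity for $\Theta(w)=w^n$, $n\geqslant1$ (we assume $\Theta(0)=0$, so that $\Theta(\Gamma_x)$ is trace class), since $\Theta(\Gamma_x)$ expands in trace norm as $\sum_n c_n\Gamma_x^n$ (equivalently via the Cauchy integral with a contour encircling all the spectra $\mathrm{spec}(\Gamma_x)$), and $\mathrm{tr}$, $\partial_x$ and $\lb\,\cdot\,\rb_{0,0}$ all commute with that sum. Now write $\mathrm{tr}(\Gamma_x^n)=\int_{(0,\infty)^n}\prod_{i=1}^n\phi(2x+w_i+w_{i+1})\,\rd w_1\cdots\rd w_n$ with the cyclic convention $w_{n+1}\coloneqq w_1$. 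Applying $\partial_x$ differentiates every factor, each derivative carrying a chain-rule factor $2$; by cyclic symmetry of the integrand all $n$ resulting integrals equal one common value $M$, so $\partial_x\,\mathrm{tr}(\Gamma_x^n)=2nM$. On the other hand $\delta$ acts on the kernel $k_n$ of $\Gamma_x^n$ by $\partial_y+\partial_z$, which differentiates only the first and last of the $n$ Hankel factors of $k_n$; restricting to the diagonal and integrating, cyclic symmetry identifies each of those two terms with the same $M$, so $\mathrm{tr}(\delta\Gamma_x^n)=2M$. Hence $\partial_x\,\mathrm{tr}(\Gamma_x^n)=n\,\mathrm{tr}(\delta\Gamma_x^n)=-n\,\lb\Gamma_x^n\rb_{0,0}$ by part (vi) of the preceding proposition, and summing against the coefficients of $\Theta$ gives $-\partial_x\,\mathrm{tr}\,\Theta(\Gamma_x)=\lb\Gamma_x\,\Theta'(\Gamma_x)\rb_{0,0}$, which is the claim.

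The main obstacle I anticipate is in (v): keeping the numerical constants straight---the factor $2$ from $\phi_{(x)}(\cdot)=\phi(2x+\cdot)$ and the sign coming from the reflection between $[0,\infty)$ and $(-\infty,0]$---and making the ``count the differentiated Hankel factors and invoke cyclic symmetry'' comparison of $\partial_x\,\mathrm{tr}(\Gamma_x^n)$ with $\mathrm{tr}(\delta\Gamma_x^n)$ fully rigorous, including the integration-by-parts boundary terms that are packaged into part (vi). The reduction to monomials also needs mild care about which operator ideal and contour one works with. For (iv) the only genuine subtlety is the bundling into matrix symbols and the Hilbert--Schmidt/decay check for the reconstructed Hankel operators; (i)--(iii) are routine once the kernel formulae of the preceding proposition are in hand.
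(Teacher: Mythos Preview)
Your proposal is correct and the kernel computations in (i)--(iii) match the paper's. For (iv) and (v) you take slightly different, though closely related, routes.

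For (iv), the paper does not integrate the semigroup as you do. Instead it constructs $\Gamma_\Phi^\top\Gamma_\Psi$ from the finite-rank expansion of $\delta K$, sets $W\coloneqq K-\Gamma_\Phi^\top\Gamma_\Psi$, observes $\delta W=0$ so $\sigma_\eta(W)=W$, and concludes $W=0$ from $\sigma_\eta(W)\to0$. Your direct integration $K=-\int_0^\infty\sigma_\eta(\delta K)\,\rd\eta$ is more constructive and makes the Hankel-product kernel emerge explicitly; the paper's comparison argument is shorter but less transparent about where $\Gamma_\Phi^\top\Gamma_\Psi$ comes from.

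For (v), the paper first treats even and odd powers separately at the operator level, using trace cyclicity to rewrite $2\partial_x\mathrm{tr}(\Gamma_x^j)=j\,\mathrm{tr}(\Gamma_x^{j-2}\partial_x\Gamma_x^2)$, then evaluates that finite-rank trace directly as $-2\lb\Gamma_x^j\rb_{0,0}$, sums to obtain the resolvent identity $-\partial_x\mathrm{tr}((s-\Gamma_x)^{-1})=\lb\Gamma_x(s-\Gamma_x)^{-2}\rb_{0,0}$, and finally applies the Cauchy integral. Your argument bypasses the even/odd split and the resolvent step: you work with the cyclic integral for $\mathrm{tr}(\Gamma_x^n)$, compare the $n$ terms from $\partial_x$ with the two terms from $\delta=\partial_y+\partial_z$ via cyclic symmetry, and then invoke part (vi) of the preceding proposition to pass from $\mathrm{tr}\,\delta$ to $-\lb\cdot\rb_{0,0}$. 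This is a cleaner packaging of the same combinatorial identity; the paper's route has the advantage of isolating the resolvent formula, which is of independent interest.
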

\begin{proof}
  Item by item we observe the following.
  (i) This is straightforward, and explains the notation.
  (ii) The Hankel operators $\Gamma_{\phi_{(\eta)}}$ and $\Gamma_{\psi_{(\eta)}}$ are Hilbert--Schmidt, so their product is trace class.
  Then we differentiate the kernel and obtain,
  \begin{equation*}
  \pa_\eta\int_0^\infty \phi(y+\xi+2\eta)\psi(\xi+z+2\eta)\,\rd\xi=-2\phi(y+2\eta)\psi(z+2\eta),
  \end{equation*}
  which gives an element of the vector space $\mathbb M_{m\times m}(\mathbb C)$ of dimension $m^2$.
  (iii) We have for $m=1$, $\delta(\Gamma_\phi\Gamma_\psi)(y,z)=-\phi(y)\psi(z)$.
  (iv) By hypothesis, there exist $\phi_j,\psi_j\in L^2(0,\infty)$ for $j=1,\ldots,n$ such that $\delta K(y,z)=-\sum_{j=1}^n \phi_j(y)\psi_j(z)$.
  Then we introduce the vector functions $\Phi=(\phi_1,\ldots, \phi_n)^{\mathrm{T}}$ and $\Psi=(\psi_1,\ldots, \psi_n)^{\mathrm{T}}$
  so that $\Gamma_\Phi^\top\Gamma_\Psi =\sum_{j=1}^n \Gamma_{\phi_j}\Gamma_{\psi_j}$; and also $\delta(\Gamma_\Phi^\top\Gamma_\Psi)=-\sum_{j=1}^n \phi_j(y)\psi_j(z)$.
  We consider $W=K-\Gamma_\Phi^\top\Gamma_\Psi$ which belongs to the domain of $\delta$ with $\delta(W)=0$, hence, 
  $\sigma_\eta(W)=W+\int_0^\eta\sigma_\xi(\delta W)\,\rd\xi=W$,
  where $\sigma_\eta(W)\to 0$ as $t\to\infty$; thus $W=0$ and $K=\Gamma_\Phi^\top\Gamma_\Psi $ is a Hankel product.
  (v) For even powers we have,
  \begin{equation*}
  2\pa_x\mathrm{tr}\,\Gamma_x^{2k}=2\sum_{j=0}^{k-1}\mathrm{tr}\,\bigl(\Gamma_x^{2j}\pa_x(\Gamma_x^2)\Gamma_x^{2(k-j-1)}\bigr)
  =2k\,\mathrm{tr}\,\bigl(\Gamma_x^{2k-2}\pa_x\Gamma_x^2\bigr),
  \end{equation*}
  where the final operator has finite rank. For odd powers, since,
  \begin{align*}
    2\pa_x\mathrm{tr}\,\bigl(\Gamma_x^{2k+1}\bigr)
  =&\;\mathrm{tr}\,\bigl(\Gamma_x'\Gamma_2^{2k}+\Gamma_x\Gamma_x'\Gamma_x^{2k-1}+\cdots+\Gamma_x^{2k}\Gamma_x'\bigr)\\
  &\;+\mathrm{tr}\,\bigl(\Gamma_x'\Gamma_2^{2k}+\Gamma_x\Gamma_x'\Gamma_x^{2k-1}+\cdots+\Gamma_x^{2k}\Gamma_x'\bigr),
  \end{align*}
  and then we move the terms in the second list one step to the left, except for the first, which we move to the end. Thus we obtain,
  \begin{multline*}
  \mathrm{tr}\,\bigl((\Gamma_x'\Gamma_x+\Gamma_x\Gamma_x')\Gamma_x^{2k-1}
  +\Gamma_x (\Gamma_x'\Gamma_x+\Gamma_x\Gamma_x')\Gamma_x^{2k-2}+\cdots+\Gamma_x^{2k-1}(\Gamma_x'\Gamma_x+\Gamma_x\Gamma_x')\bigr)\\
  =(2k+1)\,\mathrm{tr}\,\bigl(\Gamma_x^{2k-1}\pa_x\Gamma_x^2\bigr),
  \end{multline*}
  where again the final operator has finite rank.

  Suppose that the spectrum of $\Gamma_x$ is contained in $D(0,r)$ for some $r>0$. Then for all $s$ such that $\vert s\vert >r$,
  we have a convergent power series $(s-\gamma )^{-1}=\sum_{j=0}^\infty \gamma^j/s^{j+1}$ for all $\gamma$ in the spectrum of $\Gamma_x$ and,
  \begin{equation*}
  2\,\pa_x\mathrm{tr}\,\bigl((s\cdot\id-\Gamma_x)^{-1}\bigr)=\sum_{j=1}^\infty \frac{2}{s^{j+1}}\pa_x\mathrm{tr}\,\bigl(\Gamma^j_x\bigr).
  \end{equation*}
  Then for the first term in the series, we have,
  \begin{equation*}
  2\,\mathrm{tr}\,(\Gamma_x')=2\,\pa_x\int_0^\infty \phi(2y+2x)\,\rd y=-2\phi(2x)=-2\lb\Gamma_x\rb_{0,0},
  \end{equation*}
  and for the remaining terms in the series, $\mathrm{tr}\,(\Gamma_x^{j-2}\pa_x\Gamma_x^2)$ equals,
  \begin{equation*}
    -2\int_0^\infty\cdots\int_0^\infty \phi_{(x)}(y_0+y_1)\phi_{(x)}(y_1+y_2)\dots \phi_{(x)}(y_{j-1}+y_0)dy_0\dots dy_{j-1},
  \end{equation*}
  which equals $-2\lb\Gamma_x^j\rb_{0,0}$.
  The result $-\pa_x\mathrm{tr}\,\bigl((s\cdot\id-\Gamma_x)^{-1}\bigr)=\lb\Gamma_x(s\cdot\id-\Gamma_x)^{-2}\rb_{0,0}$
  holds for all $s$ such that $|s|>r$ follows when we multiply through by $-1/2$ and sum over $j$.
  By analytic continuation, we have the same identity for all $s$ in the unbounded component of the complement of the spectrum of $\Gamma_x$ in the complex plane.

  Now let $\Theta$ be holomorphic on an open neighbourhood of the spectrum of $\Gamma_x$ for all $x>0$.
  Note that $\|\Gamma_x\|\to 0$ as $x\to\infty$, so this uniformity is a mild restriction.
  Then there exists a contour $C$ that winds round the spectrum of $\Gamma_x$ once in the positive sense, so by Cauchy's integral formula
  $\Theta (\Gamma_x)=(2\pi i)^{-1}\int_C (s\cdot\id-\Gamma_x)^{-1}\Theta(\zeta)\,\rd\zeta$, hence, 
  \begin{align*}
    -\pa_x\mathrm{tr}\,\bigl(\Theta(\Gamma_x)\bigr)&=-\frac{1}{2\pi i}\int_C \pa_x\mathrm{tr}\,\bigl((s\cdot\id-\Gamma_x)^{-1}\bigr)\Theta(\zeta)\,\rd\zeta\\
    &=\frac{1}{2\pi i}\int_C\lb\Gamma_x (s\cdot\id-\Gamma_x)^{-2}\rb_{0,0}\Theta(\zeta)\,\rd\zeta,
  \end{align*}
  which integrates to $\lb\Gamma_x\Theta'(\Gamma_x)\rb_{0,0}$. The proof is complete. \qed
\end{proof}
\begin{example}
  For real-valued $\phi$, and for $\Theta(\zeta)=\zeta^2$, the basic formulae are,
  \begin{equation*} 
    \|\Gamma_x\|^2_{\mathfrak J_2}=\mathrm{tr}\,(\Gamma_x^2)=\int_0^\infty \!\!y\phi_{(x)}(y)^2\,\rd y
    \quad\text{and}\quad
    \lb\Gamma_x^2\rb_{0,0}=-\tfrac12\pa_x\mathrm{tr}\,(\Gamma_x^2)=\!\int_0^\infty\!\!\phi_{(x)}(y)^2\,\rd y.
  \end{equation*}    
\end{example}

\section{P\"oppe algebra}\label{sec:Poppealg}
We prescribe the kernel algebra generated by the quantities $\lb V\rb$ and $\lb V^\dag\rb$ and their derivatives,
based on the P\"oppe product in Lemma~\ref{lemma:origPoppeproduct}, as well as a subalgebra generated by the quantity $\lb V-V^\dag\rb$ and its derivatives.
We also outline abstract versions of these algebras to aid computations.
We nominate the abstract algebra as the \emph{P\"oppe algebra} and the corresponding subalgebra the \emph{skew-P\"oppe algebra}.
The P\"oppe algebra outlined herein represents a generalisation of the P\"oppe algebra used in Malham~\cite{Malham:KdVhierarchy}
to derive the non-commutative Korteweg--de Vries hierarchy.
We begin with some preliminary identities. Given a Hankel Hilbert--Schmidt operator $P=P(x,t)$ on $\Vb$,
depending on the parameters $x\in\R$ and $t\geqslant0$, we set, 
\begin{equation*}
V\coloneqq (\id-\mathrm{i}P)^{-1}.
\end{equation*}
Recall that $P$ is self adjoint, so that $P^\dag=P$ and thus $(\mathrm{i}P)^\dag=-\mathrm{i}P$.
\begin{lemma}[Operator identities]\label{lemma:alg}
Given a Hankel operator $P=P(x,t)$ which is Hilbert--Schmidt valued, and the definition $V\coloneqq(\id-\mathrm{i}P)^{-1}$, with
$V^\dag=(\id+\mathrm{i}P)^{-1}$ the adjoint operator to $V$, we observe that,
\begin{equation*}
V\equiv\id+(\mathrm{i}P)V\equiv\id+V(\mathrm{i}P)
\quad\text{and}\quad V^\dag\equiv\id+(\mathrm{i}P)^\dag V^\dag\equiv\id+V^\dag(\mathrm{i}P)^\dag,
\end{equation*}
and further that, $V-V^\dag\equiv2\,V(\mathrm{i}P)V^\dag$.
\end{lemma}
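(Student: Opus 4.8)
The plan is to verify each identity by elementary algebraic manipulation of the defining relation $V=(\id-\mathrm{i}P)^{-1}$, i.e.\ by checking that $V(\id-\mathrm{i}P)=\id=(\id-\mathrm{i}P)V$, together with the self-adjointness of $P$. First I would establish the two ``partial fractions'' formulae. From $V(\id-\mathrm{i}P)=\id$ we get $V-V(\mathrm{i}P)=\id$, which rearranges to $V=\id+V(\mathrm{i}P)$; similarly $(\id-\mathrm{i}P)V=\id$ gives $V=\id+(\mathrm{i}P)V$. The corresponding statements for $V^\dag$ follow by taking adjoints: since $(\id-\mathrm{i}P)^\dag=\id+\mathrm{i}P$ because $P^\dag=P$, we have $V^\dag=(\id+\mathrm{i}P)^{-1}$, and the same two lines of algebra with $-\mathrm{i}P$ replaced by $+\mathrm{i}P=(\mathrm{i}P)^\dag$ yield $V^\dag=\id+(\mathrm{i}P)^\dag V^\dag=\id+V^\dag(\mathrm{i}P)^\dag$. (Alternatively one simply takes the adjoint of each of the first two identities directly, using $(AB)^\dag=B^\dag A^\dag$ and that $\id$ is self-adjoint.)

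For the skew identity $V-V^\dag\equiv 2V(\mathrm{i}P)V^\dag$, the cleanest route is to write $V-V^\dag=V\bigl((\id+\mathrm{i}P)-(\id-\mathrm{i}P)\bigr)V^\dag$, using that $V$ is a left inverse of $\id-\mathrm{i}P$ and $V^\dag$ a right inverse of $\id+\mathrm{i}P$, so that $V(\id-\mathrm{i}P)V^\dag=V^\dag$ and $V(\id+\mathrm{i}P)V^\dag=V$. The middle bracket collapses to $2\,\mathrm{i}P$, giving $V-V^\dag=V(2\,\mathrm{i}P)V^\dag=2V(\mathrm{i}P)V^\dag$ as claimed; here one uses that $\id-\mathrm{i}P$ and $\id+\mathrm{i}P$ commute, so that $(\id+\mathrm{i}P)(\id-\mathrm{i}P)=(\id-\mathrm{i}P)(\id+\mathrm{i}P)=\id+P^2$ and the two factorisations $V=(\id+\mathrm{i}P)(\id+P^2)^{-1}$, $V^\dag=(\id-\mathrm{i}P)(\id+P^2)^{-1}$ are available if one prefers to compute $V-V^\dag$ directly over the common ``denominator'' $(\id+P^2)^{-1}$.

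There is essentially no obstacle here: every step is a one-line identity among bounded operators, and invertibility of $\id\mp\mathrm{i}P$ (hence existence of $V$, $V^\dag$ as bounded operators) is exactly the standing hypothesis, guaranteed in the application by the Marchenko well-posedness results (Lemma~\ref{lemma:EUlinearintegralsystem}); note $\id+P^2=(\id-\mathrm{i}P)(\id+\mathrm{i}P)$ is invertible precisely when both factors are. The only point requiring a word of care is that all manipulations take place at the operator level before any kernel bracket is applied, so no continuity or Hankel hypotheses on $P$ are needed for this lemma; those enter only later when the P\"oppe product of Lemma~\ref{lemma:origPoppeproduct} is invoked on these expressions.
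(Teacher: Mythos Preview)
Your proposal is correct and follows essentially the same approach as the paper, which simply states that all the identities ``follow directly from the definitions of $V$ and $V^\dag$ and partial fraction identities'' without giving further detail. Your expanded argument---rearranging $V(\id-\mathrm{i}P)=\id$ for the first pair of identities and sandwiching $(\id+\mathrm{i}P)-(\id-\mathrm{i}P)=2\mathrm{i}P$ between $V$ and $V^\dag$ for the skew identity---is exactly the kind of one-line verification the paper has in mind.
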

\begin{proof}
All these identities follow directly from the definitions of $V$ and $V^\dag$ and partial fraction identities.
\qed
\end{proof}
\begin{definition}[Fredholm Grassmannian flow]\label{def:FGflow}
Given a Hankel Hilbert--Schmidt operator $P=P(x,t)$ on $\Vb$, depending on the parameters
$x\in\R$ and $t\geqslant0$, we define the operator $G$ by,
\begin{equation*}
G\coloneqq V-V^\dag. 
\end{equation*}
\end{definition}
\begin{remark}\label{rmk:order}
Using Lemma~\ref{lemma:alg}, we can write $V-V^\dag=2\,V(\mathrm{i}P)V^\dag=2\,(\mathrm{i}P)U$,
where $U\coloneqq\bigl(\id+P^2\bigr)^{-1}$. This is possible because $V$ and $V^\dag$
can be expressed as power series in $P$ with scalar coefficients.
Hence the order of the operators in $V(\mathrm{i}P)V^\dag$ does not matter.
Thus, as outlined in detail in Doikou \textit{et al.\/} \cite[Sec.~2]{DMSW:AGFintegrable},
the flow of $G$ represents a Fredholm Grassmannian flow. 
\end{remark}
It is now helpful to define the \emph{signature character}, given previously
in Malham~\cite{Malham:KdVhierarchy} and Doikou \textit{et al.\/} \cite{DMSW:AGFintegrable}.
Let $\mathbb N^\ast$ denote the free monoid of words on $\mathbb N$, i.e.\/ the set
of all possible words of the form $a_1a_2\cdots a_k$ we can construct from letters $a_1, a_2, \ldots, a_k\in\mathbb N$.
\begin{definition}[Signature character]\label{def:signaturecharacter}
  Suppose $a_1a_2\cdots a_n\in\mathbb N^\ast$. The \emph{signature character} $\chi\colon\mathbb N^\ast\to\mathbb Q$
  of any such word is given by the product of Leibniz coefficients,
  \begin{equation*}
    \chi\colon a_1a_2\cdots a_n\mapsto\prod_{k=1}^n\begin{pmatrix} a_k+\cdots+a_n\\a_k\end{pmatrix}.
  \end{equation*}
\end{definition}
Let $\mathcal C(n)$ denote the set of all compositions of $n\in\mathbb N$.
The following result is equivalent to that in Malham~\cite[Lemma~2]{Malham:KdVhierarchy}
and Doikou \textit{et al.\/} \cite[Lemma~8]{DMSW:AGFintegrable}, where detailed proofs can be found.
For any integer $k$, we set $(\mathrm{i}P)_k\coloneqq\pa^k(\mathrm{i}P)$, $V_k\coloneqq\pa^kV$ and $V^\dag_k\coloneqq\pa^kV^\dag$.
For example, if $k=2$, then $V_2=\pa^2V$, while if $k=-1$, then $V_{-1}=\pa^{-1}V$. 
\begin{lemma}[Kernel signature expansion]\label{lemma:calc}
Given a Hankel operator $P=P(x,t)$ and that $V\coloneqq(\id-\mathrm{i}P)^{-1}$ with
$V^\dag=(\id+\mathrm{i}P)^{-1}$, we observe that $\pa V\equiv V(\mathrm{i}P)_1V$ and
$\pa V^\dag\equiv V^\dag(\mathrm{i}P)_1^\dag V^\dag$. With the sum over all compositions $a_1\cdots a_k\in\mathcal C(n)$, we have,
\begin{equation*}
  V_n=\sum\chi\bigl(a_1\cdots a_n\bigr)\,V(\mathrm{i}P)_{a_1}V\cdots V(\mathrm{i}P)_{a_k}V,
\end{equation*}
with the corresponding generalisation for $V_n^\dag$. In particular we have,
\begin{equation*}
  V_n-V_n^\dag=\sum\chi\bigl(a_1\cdots a_n\bigr)\,
  \bigl(V(\mathrm{i}P)_{a_1}V\cdots V(\mathrm{i}P)_{a_k}V-V^\dag(\mathrm{i}P)^\dag_{a_1}V^\dag\cdots V^\dag(\mathrm{i}P)_{a_k}^\dag V^\dag\bigr).
\end{equation*}
\end{lemma}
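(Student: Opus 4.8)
The plan is to first establish the single-derivative identities and then bootstrap to general order $n$ by induction, with a purely combinatorial recursion for the signature character doing the essential work; the skew-form statement follows at the end by subtraction. Since $P$ is self-adjoint, $\id-\mathrm{i}P$ and $\id+\mathrm{i}P$ have spectra avoiding $0$, so $V$ and $V^\dag$ are well defined; differentiating $(\id-\mathrm{i}P)V=\id$ with respect to $x$ gives $-(\mathrm{i}P)_1V+(\id-\mathrm{i}P)\pa V=O$, hence $\pa V=V(\mathrm{i}P)_1V$, and the same computation on $(\id+\mathrm{i}P)V^\dag=\id$ yields $\pa V^\dag=V^\dag(\mathrm{i}P)_1^\dag V^\dag$. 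This is the base case $n=1$ of the expansion, since the only composition of $1$ is the word $1$ and $\chi(1)=1$.

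Next I would induct on $n$. Assuming $V_n=\sum\chi(a_1\cdots a_k)\,V(\mathrm{i}P)_{a_1}V\cdots V(\mathrm{i}P)_{a_k}V$ with the sum over $a_1\cdots a_k\in\mathcal C(n)$, I apply $\pa$ and the Leibniz rule to each monomial. Differentiating a factor $(\mathrm{i}P)_{a_j}$ turns it into $(\mathrm{i}P)_{a_j+1}$, i.e.\ it increments the $j$-th part of the composition; differentiating one of the $k+1$ factors $V$ replaces it by $V(\mathrm{i}P)_1V$ via $\pa V=V(\mathrm{i}P)_1V$, i.e.\ it inserts a new part equal to $1$ at that slot. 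Collecting terms, $V_{n+1}$ becomes a linear combination of monomials $V(\mathrm{i}P)_{b_1}V\cdots V(\mathrm{i}P)_{b_l}V$ indexed by compositions $b_1\cdots b_l\in\mathcal C(n+1)$, and the coefficient of such a monomial is
\begin{equation*}
\sum_{i:\,b_i=1}\chi\bigl(b_1\cdots\widehat{b_i}\cdots b_l\bigr)+\sum_{i:\,b_i\geqslant2}\chi\bigl(b_1\cdots(b_i-1)\cdots b_l\bigr),
\end{equation*}
where $\widehat{\phantom{b_i}}$ denotes omission of that entry. The induction then closes provided this sum equals $\chi(b_1\cdots b_l)$. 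This is precisely the derivation-type recursion satisfied by the signature character; it is established, in equivalent form, in Malham~\cite[Lemma~2]{Malham:KdVhierarchy} and Doikou \textit{et al.}~\cite[Lemma~8]{DMSW:AGFintegrable}, and I would also give a short self-contained check: writing $\chi(a_1\cdots a_k)=\prod_{j=1}^k\binom{a_j+\cdots+a_k}{a_j}$, both operations "decrement $b_i$" and "omit a part $b_i=1$" alter only the trailing partial sums indexed by $j\leqslant i$ in a controlled way, and Pascal's rule $\binom{N}{r}=\binom{N-1}{r}+\binom{N-1}{r-1}$ applied factor-by-factor reassembles $\chi(b_1\cdots b_l)$; since all $\chi$-values are positive there is no cancellation.

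Finally, the same induction applied verbatim to $V^\dag$, with $\mathrm{i}P$ replaced by $(\mathrm{i}P)^\dag=-\mathrm{i}P$ and $V$ by $V^\dag$ throughout, gives the stated expansion for $V_n^\dag$, and subtracting the two expansions yields the displayed formula for $V_n-V_n^\dag$. I expect the main obstacle to be the bookkeeping in the combinatorial step: making precise that every composition of $n+1$ is produced exactly once by each admissible "insert a $1$" slot and each admissible "increment a part", that the insertion and increment contributions to a given monomial are disjoint (an inserted part is always $1$, an incremented part is always $\geqslant2$), and that the $\chi$-recursion indeed holds termwise. Once that recursion is in hand the remainder is routine operator calculus.
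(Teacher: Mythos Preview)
Your proposal is correct and is exactly the argument the paper has in mind: the paper does not give its own proof but simply points to Malham~\cite[Lemma~2]{Malham:KdVhierarchy} and Doikou \textit{et al.}~\cite[Lemma~8]{DMSW:AGFintegrable}, and the induction-plus-$\chi$-recursion you outline is precisely what appears there. Your bookkeeping remark that inserted parts are always $1$ and incremented parts are always $\geqslant2$, so the two contributions to any target composition are disjoint, is the right way to organise the collection of terms, and the Pascal-rule verification of the $\chi$-recursion is the expected combinatorial step.
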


We now construct the algebra generated by $\lb V-V^\dag\rb$ and its derivatives. For any Hilbert--Schmidt operator $W$, we set,
\begin{equation*}
     [W]\coloneqq\lb W-W^\dag\rb\qquad\text{and}\qquad \{W\}\coloneqq\lb W+W^\dag\rb.
\end{equation*}
In other words the bracket `$[\,\cdot\,]$' generates the kernel of the difference, between its operator argument
and corresponding adjoint. It is the kernel of the skew-symmetric part of its operator argument. It is not a commutator.
Thus if $v=v(y,z;x,t)$ is the matrix-valued kernel corresponding to $V$ which depends on the parameters $x$ and $t$,
then $[V]=v(y,z;x,t)-v^\dag(z,y;x,t)$, where now $v^\dag$ is the complex conjugate transpose of the matrix $v$.
Analogously, $\{\,\cdot\,\}$ generates the kernel of the symmetric part of its operator argument.
From Lemma~\ref{lemma:calc}, we observe that,
\begin{equation*}
  [V_n]=\sum\chi\bigl(a_1\cdots a_n\bigr)\,\bigl[V(\mathrm{i}P)_{a_1}V\cdots V(\mathrm{i}P)_{a_k}V\bigr].
\end{equation*}
The kernel monomial algebra is generated by the monomials $\bigl[V(\mathrm{i}P)_{a_1}V\cdots V(\mathrm{i}P)_{a_k}V\bigr]$,
including monomials of this form with one or more of the $V$'s shown being replaced by $V^\dag$.
The P\"oppe product from Lemma~\ref{lemma:origPoppeproduct} generates closed-form identities for products of such monomials. 
In particular, we have the following.
\begin{lemma}[P\"oppe kernel product identities]\label{lemma:Poppeproductidentities}
  For arbitrary Hilbert--Schmidt operators $F$ and $F^\prime$ and a Hankel Hilbert--Schmidt operator $P$ with
  parameter $x$ and a smooth kernel, we have the following,
\begin{align*}
  \lb F(\mathrm{i}P)_aV\rb\,\lb V(\mathrm{i}P)_bF^\prime\rb
  =&\;\lb F(\mathrm{i}P)_{a+1}V(\mathrm{i}P)_bF^\prime\rb+\lb F(\mathrm{i}P)_aV(\mathrm{i}P)_{b+1}F^\prime\rb\\
  &\;+2\,\lb F(\mathrm{i}P)_aV(\mathrm{i}P)_1V(\mathrm{i}P)_bF^\prime\rb,\\
  \lb F(\mathrm{i}P_a)V^\dag\rb\,\lb V^\dag(\mathrm{i}P_b)F^\prime\rb
  =&\;\lb F(\mathrm{i}P)_{a+1}V^\dag(\mathrm{i}P)_bF^\prime\rb+\lb F(\mathrm{i}P)_aV^\dag(\mathrm{i}P)_{b+1}F^\prime\rb\\
  &\;+2\,\lb F(\mathrm{i}P)_aV^\dag(\mathrm{i}P)_1V^\dag(\mathrm{i}P)_bF^\prime\rb,\\
  \lb F(\mathrm{i}P)_aV^\dag\rb\,\lb V(\mathrm{i}P)_bF^\prime\rb
  =&\;\lb F(\mathrm{i}P)_{a+1}V(\mathrm{i}P)_bF^\prime\rb+\lb F(\mathrm{i}P)_aV^\dag(\mathrm{i}P)_{b+1}F^\prime\rb,\\
  \lb F(\mathrm{i}P)_aV\rb\,\lb V^\dag(\mathrm{i}P)_bF^\prime\rb
  =&\;\lb F(\mathrm{i}P)_{a+1}V^\dag(\mathrm{i}P)_bF^\prime\rb+\lb F(\mathrm{i}P)_aV(\mathrm{i}P)_{b+1}F^\prime\rb.
\end{align*}
\end{lemma}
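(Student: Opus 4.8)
The plan is to reduce everything to the original P\"oppe product of Lemma~\ref{lemma:origPoppeproduct} together with the partial-fraction identities of Lemma~\ref{lemma:alg}. The obstacle is that the operators meeting at the junction of the left-hand kernel products are $V$ or $V^\dag$, which are not Hankel, so Lemma~\ref{lemma:origPoppeproduct} does not apply directly; by contrast each factor $(\mathrm{i}P)_c=\pa^c(\mathrm{i}P)$ \emph{is} Hankel, since $\lb P\rb$ has the form $p(y+z+x)$. The first step is therefore to peel off the innermost $V$ on each side exactly once, using $V\equiv\id+V(\mathrm{i}P)$ on a trailing $V$ and $V\equiv\id+(\mathrm{i}P)V$ on a leading $V$ (and the daggered versions, noting $(\mathrm{i}P)^\dag=-\mathrm{i}P$). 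For the first identity this gives
\[
\lb F(\mathrm{i}P)_aV\rb=\lb F(\mathrm{i}P)_a\rb+\lb\bigl(F(\mathrm{i}P)_aV\bigr)(\mathrm{i}P)\rb,\qquad
\lb V(\mathrm{i}P)_bF'\rb=\lb(\mathrm{i}P)_bF'\rb+\lb(\mathrm{i}P)\bigl(V(\mathrm{i}P)_bF'\bigr)\rb,
\]
so the product of the two sides becomes a sum of four kernel products in each of which the operators meeting at the junction are genuine Hankel operators ($(\mathrm{i}P)_a$, $(\mathrm{i}P)_b$ or $(\mathrm{i}P)=(\mathrm{i}P)_0$), while the peeled-off $V$'s have been absorbed into the arbitrary Hilbert--Schmidt slots. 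This is precisely why a single peel on each side suffices and the recursion terminates.

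Second, I would apply Lemma~\ref{lemma:origPoppeproduct} to each of the four summands, using the operator Leibniz rule $\pa_x(HH')=(\pa_xH)H'+H(\pa_xH')$ and $\pa_x(\mathrm{i}P)_c=(\mathrm{i}P)_{c+1}$; this produces eight kernel brackets, two from each $\pa_x$ of a Hankel pair. Third, I would reassemble: the two brackets carrying $(\mathrm{i}P)_{a+1}$ combine, via $\id+(\mathrm{i}P)V\equiv V$, into $\lb F(\mathrm{i}P)_{a+1}V(\mathrm{i}P)_bF'\rb$; symmetrically the two carrying $(\mathrm{i}P)_{b+1}$ combine, via $\id+V(\mathrm{i}P)\equiv V$, into $\lb F(\mathrm{i}P)_aV(\mathrm{i}P)_{b+1}F'\rb$; and the four remaining brackets pair off, each pair completing to $\lb F(\mathrm{i}P)_aV(\mathrm{i}P)_1V(\mathrm{i}P)_bF'\rb$ by inserting $\id+(\mathrm{i}P)V\equiv V$ or $\id+V(\mathrm{i}P)\equiv V$ at the appropriate junction. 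Since the only source of two such remaining pairs is $\pa_x\bigl((\mathrm{i}P)_0(\mathrm{i}P)_0\bigr)$, which itself has two terms, the coefficient of that last monomial is exactly $2$, yielding the first identity.

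The other three identities follow by the same three-step recipe (peel, apply Lemma~\ref{lemma:origPoppeproduct}, reassemble); the only extra care needed is the sign $(\mathrm{i}P)^\dag=-\mathrm{i}P$, hence $(\mathrm{i}P)_c^\dag=-(\mathrm{i}P)_c$, which propagates through the daggered peeling identities $V^\dag\equiv\id-V^\dag(\mathrm{i}P)\equiv\id-(\mathrm{i}P)V^\dag$ and through the reassembly. In the two mixed cases the sign pattern makes the two ``quadratic'' brackets cancel rather than add, which is exactly why no $V(\mathrm{i}P)_1V$-type (or $V^\dag(\mathrm{i}P)_1V^\dag$-type) term appears there. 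I expect the only real difficulty to be the bookkeeping: keeping the eight intermediate brackets and their signs organised, and checking in each case that the peeled-off $V$'s land in Hilbert--Schmidt slots with continuous kernels so that Lemma~\ref{lemma:origPoppeproduct} genuinely applies at every step. There is no conceptual obstacle once the ``peel / P\"oppe / reassemble'' pattern is in place.
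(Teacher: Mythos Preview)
Your proposal is correct and follows essentially the same route as the paper: peel one $V$ (or $V^\dag$) off each inner end via the partial-fraction identities of Lemma~\ref{lemma:alg}, apply the basic P\"oppe product of Lemma~\ref{lemma:origPoppeproduct} to each of the four resulting Hankel-junction pairs, and then reassemble the eight terms using Lemma~\ref{lemma:alg} again. The paper carries out exactly this computation explicitly for the first identity and then states that the remaining three follow analogously, just as you do.
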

\begin{proof}
The results stated are established straightforwardly. 
For example, using the identities in Lemma~\ref{lemma:alg}
and the basic P\"oppe product rule, we observe, 
\begin{align*}
\lb F(\mathrm{i}P)_aV\rb\,\lb V(\mathrm{i}P)_b F^\prime\rb
=&\;\lb F(\mathrm{i}P)_a+F(\mathrm{i}P)_aV(\mathrm{i}P)\rb
\lb(\mathrm{i}P)_bF^\prime+(\mathrm{i}P)V(\mathrm{i}P)_b F^\prime\rb\\
=&\;\lb F(\mathrm{i}P)_{a+1}(\mathrm{i}P)_b F^\prime\rb
+\lb F(\mathrm{i}P)_a(\mathrm{i}P)_{b+1} F^\prime\rb\\
&\;+\lb F(\mathrm{i}P)_{a+1}(\mathrm{i}P)V(\mathrm{i}P)_b F^\prime\rb
+\lb F(\mathrm{i}P)_{a}(\mathrm{i}P)_1V(\mathrm{i}P)_b F^\prime\rb\\
&\;+\lb F(\mathrm{i}P)_{a}V(\mathrm{i}P)_1(\mathrm{i}P)_b F^\prime\rb
+\lb F(\mathrm{i}P)_{a}V(\mathrm{i}P)(\mathrm{i}P)_{b+1} F^\prime\rb\\
&\;+\lb F(\mathrm{i}P)_aV(\mathrm{i}P)_1(\mathrm{i}P)V(\mathrm{i}P)_b F^\prime\rb\\
&\;+\lb F(\mathrm{i}P)_aV(\mathrm{i}P)(\mathrm{i}P)_1V(\mathrm{i}P)_b F^\prime\rb.
\end{align*}
Combining terms using the identities in Lemma~\ref{lemma:alg} generates the first result claimed. And so forth.
\qed
\end{proof}
\begin{remark}[Algebra of kernel monomials: abstract encoding]\label{rmk:monomialalgebra}
  As mentioned, the set of all kernel monomials of the form $\lb V(\mathrm{i}P)_{a_1}V(\mathrm{i}P)_{a_2}V\cdots V(\mathrm{i}P)_{a_k}V\rb$,
  where any of the $V$'s shown may be replaced by $V^\dag$, with the P\"oppe kernel product defined in Lemma~\ref{lemma:Poppeproductidentities},
  form a closed algebra of such monomials. We assume here that all the derivatives of the $P$ operator exist and are Hilbert--Schmidt valued.
  At this stage it is useful to consider an abstract encoding of this kernel monomial algebra, equipped with the P\"oppe kernel product.
  The abstract algebra is constructed by simply stripping the `$\mathrm{i}P$' and `$V$' labels from the kernel monomials, and respectively,
  replacing them by the composition components $a_1a_2\cdots a_k$, together with a binary encoding of whether an intervening operator is a $V$
  or $V^\dag$, i.e.\/ we replace,
  \begin{equation*}
  \lb V(\mathrm{i}P)_{a_1}V(\mathrm{i}P)_{a_2}V\cdots V(\mathrm{i}P)_{a_k}V\rb\to\bs0 a_1\bs0 a_2\bs0\cdots\bs0 a_k\bs0,
  \end{equation*}
  where any of the $\bs0$'s shown, corresponding to the $V$ operator, may be replaced by $\bs0^\dag$ in the corresponding position
  that a $V^\dag$ operator is present in the monomial on the left.
  In essence, the P\"oppe kernel product defined in Lemma~\ref{lemma:Poppeproductidentities} involves operations on these stripped down
  components only, i.e.\/ operations on the forms $\bs0 a_1\bs0 a_2\bs0\cdots\bs0 a_k\bs0$, where again some $\bs0$'s may be replaced by $\bs0^\dag$. 
  Since we mirror the P\"oppe product in the abstract setting in Definition~\ref{def:abstractPoppeproduct},
  we know that the kernel monomial algebra and our abstract algebra encoding just below, are isomorphic.
\end{remark}
Let us now introduce our abstract encoding for the algebra of operator kernel monomials
equipped with the P\"oppe product in Lemma~\ref{lemma:origPoppeproduct}, just mentioned.
Given a word $w=a_1a_2\cdots a_k$ generated using letters $a_1$, $a_2$, $\ldots$, $a_k$ from $\mathbb Z$,
and a word $\bs\varphi=\bs\theta_1\bs\theta_2\cdots\bs\theta_{k+1}$ generated using the letters $\bs\theta_1$, $\bs\theta_2$, $\ldots$, $\bs\theta_{k+1}$
chosen from the binary set $\{\bs 0,\bs 0^\dag\}$, let $w\times\bs\vartheta$ denote the corresponding word,
\begin{equation*}
w\times\bs\varphi=\bs\theta_1 a_1\bs\theta_2 a_2\bs\theta_3\cdots\bs\theta_k a_k\bs\theta_{k+1},
\end{equation*}
in the free monoid $(\mathbb Z_{\bs 0})^\ast$ where $\mathbb Z_{\bs 0}\coloneqq\mathbb Z\cup\{\bs 0,\bs 0^\dag\}$.
For such words, there is a single letter from the binary set $\{\bs 0,\bs 0^\dag\}$ sandwiched between each of the letters from $\mathbb Z$,
as well as one at each end. Let $\mathbb C\langle \mathbb Z_{\bs 0}\rangle$ denote the non-commutative polynomial algebra over $\mathbb C$
generated by words from $(\mathbb Z_{\bs 0})^\ast$, endowed with the following P\"oppe product.
\begin{definition}[P\"oppe product]\label{def:abstractPoppeproduct}
  Consider four words from $(\mathbb Z_{\bs 0})^\ast$ of the form $ua\bs 0$, $ua\bs 0^\dag$, $\bs 0bv$ and $\bs0^\dag bv$,
  where $u$ and $v$ are any subwords from $(\mathbb Z_{\bs 0})^\ast$ and $a,b\in\mathbb Z$. We define the \emph{P\"oppe product}
  from $\mathbb C\langle \mathbb Z_{\bs 0}\rangle\times\mathbb C\langle \mathbb Z_{\bs 0}\rangle$ to
  $\mathbb C\langle \mathbb Z_{\bs 0}\rangle$ of these words to be,
  \begin{align*}
    (ua\bs 0)(\bs 0 bv)&=u(a+1)\bs0 bv+ua\bs 0(b+1)v+2\cdot ua\bs 01\bs 0bv,\\
    (ua\bs 0^\dag)(\bs 0^\dag bv)&=u(a+1)\bs0^\dag bv+ua\bs 0^\dag(b+1)v+2\cdot ua\bs 0^\dag1\bs 0^\dag bv,\\
    (ua\bs 0^\dag)(\bs 0 bv)&=u(a+1)\bs0 bv+ua\bs 0^\dag(b+1)v\\
    (ua\bs 0)(\bs 0^\dag bv)&=u(a+1)\bs0^\dag bv+ua\bs 0(b+1)v.   
  \end{align*}
\end{definition}
Let $\nu$ denote the empty word in $\mathbb C\langle \mathbb Z_{\bs 0}\rangle$.
Then for any word $w\times\bs\varphi\in\mathbb C\langle \mathbb Z_{\bs 0}\rangle$
we have $\nu\,(w\times\bs\varphi)=(w\times\bs\varphi)\,\nu=w\times\bs\varphi$.
Let $\mathcal C\coloneqq\cup_{n\geqslant0}\mathcal C(n)$ denote the set of all compositions.
\begin{definition}[Signature expansion]\label{def:signatureexpansion}
For any $n\in\mathbb N\cup\{0\}$, we define the following linear \emph{signature expansions} $\bs n\in\mathbb C\langle\mathbb Z_{\bs 0}\rangle$,
\begin{equation*}
\bs n\coloneqq\sum_{a_1a_2\cdots a_k\in\mathcal C(n)}\chi(a_1a_2\cdots a_k)\,\cdot\bs 0a_1\bs 0a_2\bs 0\cdots\bs 0a_k\bs 0,
\end{equation*}
where the sum is over all possible compositions $a_1a_2\cdots a_k$, with $k\geqslant1$ parts, of $n$.
\end{definition}
For example, we note that $\bs 1=\chi(1)\cdot\bs 01\bs 0$ and $\bs 2=\chi(2)\cdot \bs 02\bs 0+\chi(11)\cdot\bs 01\bs 01\bs 0$.
Further note, for the case $n=0$, the signature expansion simply corresponds to the letter $\bs 0$ from the binary set $\{\bs 0,\bs 0^\dag\}$.
Equivalently we can write the relation in Definition~\ref{def:signatureexpansion} for the case $n=0$ as $\bs 0=\chi(0)\cdot\bs 0$.
Naturally by convention, we take $\chi(0)=1$.
Let us also remark on the following basic identities in $\mathbb C\langle \mathbb Z_{\bs 0}\rangle$, which follow from Lemma~\ref{lemma:alg}.
\begin{lemma}[Algebraic identities]\label{lemma:basicidentities}
  We have the following basic relations in $\mathbb C\langle \mathbb Z_{\bs 0}\rangle$,
  \begin{equation*}
    \bs 0\equiv\nu+\bs 00\equiv\nu+0\bs 0, \quad \bs 0^\dag\equiv\nu+\bs 0^\dag0^\dag\equiv\nu+0^\dag\bs 0^\dag
    \quad\text{and}\quad \bs 0-\bs 0^\dag=2\cdot\bs 00\bs 0^\dag.
  \end{equation*}
\end{lemma}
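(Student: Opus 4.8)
The plan is to translate each abstract identity back into the concrete operator-kernel setting via the isomorphism of Remark~\ref{rmk:monomialalgebra}, invoke the operator identities of Lemma~\ref{lemma:alg}, and then re-encode. Since the abstract algebra $\mathbb C\langle\mathbb Z_{\bs0}\rangle$ was constructed precisely to mirror the kernel monomial algebra, and since Definition~\ref{def:abstractPoppeproduct} mirrors Lemma~\ref{lemma:Poppeproductidentities}, it suffices to establish the three claimed relations as identities among kernels of Hilbert--Schmidt operators built from $V$, $V^\dag$ and $\mathrm{i}P$, and then strip the labels. Concretely, under the encoding, $\nu\mapsto$ (the kernel of) the identity-like placeholder, $\bs0\mapsto\lb V\rb$, $\bs0^\dag\mapsto\lb V^\dag\rb$, $\bs00\mapsto\lb V(\mathrm{i}P)_0V\rb = \lb V(\mathrm{i}P)V\rb$, and $0\bs0\mapsto\lb(\mathrm{i}P)V\rb$-type monomials; more carefully, the words here have letters $0\in\mathbb Z$ carrying no derivative, so $\bs00$ encodes the monomial with factor $(\mathrm{i}P)_0=\mathrm{i}P$ sandwiched appropriately.

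First I would handle $\bs0\equiv\nu+\bs00\equiv\nu+0\bs0$. This is the direct image of the first identity in Lemma~\ref{lemma:alg}, namely $V\equiv\id+(\mathrm{i}P)V\equiv\id+V(\mathrm{i}P)$. Taking kernels of both sides and re-encoding: $\lb V\rb$ corresponds to $\bs0$; $\lb\id\rb$ corresponds to the empty word $\nu$ (the neutral element, as recorded just before Lemma~\ref{lemma:basicidentities}); $\lb V(\mathrm{i}P)V\rb$ — wait, one must be careful that the right encoding of $(\mathrm{i}P)V$ has the form $\bs0\,0\,\bs0$ or $0\,\bs0$. Here the point is that in the monomial-algebra encoding every $\mathbb Z$-letter is flanked by binary letters, and an "exposed" $(\mathrm{i}P)V$ sitting at the left end of a longer word is encoded by reading off the word that results; the identity $V=\id+(\mathrm{i}P)V$ at the operator level, when multiplied on the left/right by the surrounding $F(\mathrm{i}P)_a$ / $(\mathrm{i}P)_bF'$ context that always appears in a P\"oppe product, becomes exactly $\bs0\equiv\nu+\bs00$ in the abstract algebra. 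The second relation $\bs0^\dag\equiv\nu+\bs0^\dag0^\dag\equiv\nu+0^\dag\bs0^\dag$ is the identical argument applied to $V^\dag\equiv\id+(\mathrm{i}P)^\dag V^\dag\equiv\id+V^\dag(\mathrm{i}P)^\dag$, the adjoint statement in Lemma~\ref{lemma:alg}, noting that under the encoding $(\mathrm{i}P)^\dag$ with no derivative is still written as the letter $0^\dag$ (the dagger on the surrounding $\bs0^\dag$'s records that the flanking $V$-factors are daggered).

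For the last relation $\bs0-\bs0^\dag=2\cdot\bs00\bs0^\dag$, I would start from the final identity of Lemma~\ref{lemma:alg}, $V-V^\dag\equiv2\,V(\mathrm{i}P)V^\dag$. Taking kernels gives $\lb V\rb-\lb V^\dag\rb=2\,\lb V(\mathrm{i}P)V^\dag\rb$, i.e. $[V]=2\,\lb V(\mathrm{i}P)V^\dag\rb$ in the $[\cdot]$-notation; re-encoding, $\lb V\rb\mapsto\bs0$, $\lb V^\dag\rb\mapsto\bs0^\dag$, and $\lb V(\mathrm{i}P)V^\dag\rb\mapsto\bs0\,0\,\bs0^\dag$ (a left $V$, a zero-derivative $\mathrm{i}P$, a right $V^\dag$), giving $\bs0-\bs0^\dag=2\cdot\bs00\bs0^\dag$ exactly. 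I expect the main obstacle to be purely bookkeeping: making the correspondence between "operator identity multiplied into an arbitrary P\"oppe-product context" and "abstract word identity" fully precise, in particular checking that a zero-derivative factor $(\mathrm{i}P)_0=\mathrm{i}P$ is consistently encoded by the $\mathbb Z$-letter $0$ and that the flanking binary letters are assigned consistently — but this is exactly the content of the already-asserted isomorphism in Remark~\ref{rmk:monomialalgebra}, so once that is invoked the proof reduces to the one-line citations of Lemma~\ref{lemma:alg} given above, which is presumably why the authors state it as an immediate consequence.
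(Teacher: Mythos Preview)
Your proposal is correct and matches the paper's approach exactly: the paper does not give a separate proof for this lemma but simply states that the identities ``follow from Lemma~\ref{lemma:alg}'', which is precisely the translation-via-the-isomorphism argument you outline. The bookkeeping hesitations you flag are non-issues given Remark~\ref{rmk:monomialalgebra}, as you yourself conclude.
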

\begin{definition}\label{rmk:convention}
  Given any word $w\times\bs\varphi\in\mathbb C\langle \mathbb Z_{\bs 0}\rangle$, say 
  $w\times\bs\varphi=\bs\theta_1 a_1\bs\theta_2 a_2\bs\theta_3\cdots\bs\theta_k a_k\bs\theta_{k+1}$,
  the letters $a_i^\dag$ denote the letters `$-a_i$' from $\mathbb Z$, i.e.\/ $a_i^\dag=-a_i$. Further we set,
  \begin{equation*}
  \bigl(\bs\theta_1 a_1\bs\theta_2 a_2\bs\theta_3\cdots\bs\theta_k a_k\bs\theta_{k+1}\bigr)^\dag\coloneqq
  \bs\theta_1^\dag a_1^\dag \bs\theta_2^\dag  a_2^\dag \bs\theta_3^\dag \cdots\bs\theta_k^\dag  a_k^\dag \bs\theta_{k+1}^\dag ,
  \end{equation*}
  i.e.\/ we replace all the letters $a_i$ in $w$ by their counterparts $a_i^\dag=-a_i$ and all the letters in $\bs\theta$ by their counterparts.
  In the latter instance this means we change all the $\bs 0$'s to $\bs 0^\dag$, and vice-versa.
  Note we \emph{do not} reverse the order of the terms in $w\times\bs\varphi$. This means, for example, that
  we can interpret $(w\times\bs\varphi)^\dag=w^\dag\times\bs\varphi^\dag$, and since $w^\dag=(-1)^{|w|}w$,
  where $|w|$ is the length of $w$, then $(w\times\bs\varphi)^\dag\in\mathbb C\langle \mathbb Z_{\bs 0}\rangle$.
\end{definition}
Consider the following skew-symmetric and symmetric forms on $\mathbb C\langle \mathbb Z_{\bs 0}\rangle$.
\begin{definition}[Skew-symmetric and symmetric forms]\label{def:skewsymmparts}
Given any word $w\times\bs\varphi\in\mathbb C\langle \mathbb Z_{\bs 0}\rangle$,
we define its skew-symmetric and symmetric forms in $\mathbb C\langle \mathbb Z_{\bs 0}\rangle$,
respectively, by 
\begin{equation*}
  [w\times\bs\varphi]\coloneqq w\times\bs\varphi-(w\times\bs\varphi)^\dag
  \qquad\text{and}\qquad \{w\times\bs\varphi\}\coloneqq w\times\bs\varphi+(w\times\bs\varphi)^\dag .
\end{equation*}
Naturally we have $\bigl[(w\times\bs\varphi)^\dag\bigr]=-[w\times\bs\varphi]$ and $\bigl\{(w\times\bs\varphi)^\dag\bigr\}=\{w\times\bs\varphi\}$.
\end{definition}
The following product rules based on the P\"oppe product in Definition~\ref{def:abstractPoppeproduct},
are useful for our computations in all subsequent sections.
\begin{lemma}[Skew and symmetric P\"oppe products]\label{lemma:skewandsymmPoppeproducts}
  Consider the elements $[ua\bs 0]$, $\{ua\bs 0\}$, $[\bs 0bv]$ and $\{\bs0 bv\}$ from $\mathbb C\langle \mathbb Z_{\bs 0}\rangle$,
  where $u$ and $v$ are any subwords from $\mathbb C\langle \mathbb Z_{\bs 0}\rangle$ and $a,b\in\mathbb Z$.
  We have the following \emph{P\"oppe products} in $\mathbb C\langle \mathbb Z_{\bs 0}\rangle$ between these elements,
  \begin{align*}
    \{ua\bs 0\}\,[\bs 0 bv]&=\bigl[u(a+1)[\bs 0 bv]\bigr]+\bigl[ua\bs 0[(b+1)v]\bigr]+2\cdot [ua\bs 01\bs 0bv],\\
    [ua\bs 0]\,\{\bs 0 bv\}&=\bigl[u(a+1)\{\bs 0 bv\}\bigr]+\bigl[ua\bs 0\{(b+1)v\}\bigr]+2\cdot [ua\bs 01\bs 0 bv],\\
    [ua\bs 0]\,[\bs 0 bv]&=\bigl\{u(a+1)[\bs 0 bv]\bigr\}+\bigl\{ua\bs 0[(b+1)v]\bigr\}+2\cdot \{ua\bs 01\bs 0bv\},\\
    \{ua\bs 0\}\,\{\bs 0 bv\}&=\bigl\{u(a+1)\{\bs 0 bv\}\bigr\}+\bigl\{ua\bs 0\{(b+1)v\}\bigr\}+2\cdot \{ua\bs 01\bs 0bv\}.    
  \end{align*}
  These products also hold when $[ua\bs0]=[\bs0]$, which case term on the right involving `$(a+1)$' is absent.
  Likewise, these products also hold when $[\bs0bv]=[\bs0]$, in which case the term on the right involving `$(b+1)$' is absent. 
\end{lemma}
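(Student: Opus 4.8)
The plan is to prove all four identities by the same elementary route: expand the skew-symmetric and symmetric forms on the left via Definition~\ref{def:skewsymmparts}, use the bilinearity of the P\"oppe product of Definition~\ref{def:abstractPoppeproduct} to reduce the left-hand side to a signed sum of four products of plain words, evaluate each of those four products using the appropriate one of the four cases of Definition~\ref{def:abstractPoppeproduct}, and then regroup the resulting terms back into skew-symmetric and symmetric forms. This is exactly the abstract shadow of the operator-level computation behind Lemma~\ref{lemma:Poppeproductidentities}.

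Take the first identity as the model. Writing $\{ua\bs 0\}=ua\bs 0+(ua\bs 0)^\dag$ and $[\bs 0 bv]=\bs 0 bv-(\bs 0 bv)^\dag$, the product $\{ua\bs 0\}\,[\bs 0 bv]$ becomes a signed sum of the four products $(ua\bs 0)(\bs 0 bv)$, $(ua\bs 0)\big((\bs 0 bv)^\dag\big)$, $\big((ua\bs 0)^\dag\big)(\bs 0 bv)$ and $\big((ua\bs 0)^\dag\big)\big((\bs 0 bv)^\dag\big)$. Since taking the dagger, as prescribed by Definition~\ref{rmk:convention}, only flips each $\bs 0\leftrightarrow\bs 0^\dag$ and sends the integer letters to their negatives without reversing the word, the middle letters of these four products match, in turn, exactly the four cases of Definition~\ref{def:abstractPoppeproduct}: a $(\bs 0)(\bs 0)$ product, a $(\bs 0)(\bs 0^\dag)$ product, a $(\bs 0^\dag)(\bs 0)$ product, and a $(\bs 0^\dag)(\bs 0^\dag)$ product. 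Expanding each and collecting: the ``left-bumped'' contributions $u(a+1)(\cdots)$ coming from the four cases assemble (two carrying a dagger, two not) into $\big[u(a+1)[\bs 0 bv]\big]$; the ``right-bumped'' contributions $ua\bs 0(\cdots)$ assemble into $\big[ua\bs 0[(b+1)v]\big]$; and the doubled contributions, which occur only in the $(\bs 0)(\bs 0)$ and $(\bs 0^\dag)(\bs 0^\dag)$ cases, assemble into $2\cdot[ua\bs 01\bs 0bv]$.

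The remaining three identities follow by the same bookkeeping, the only difference being the sign pattern dictated by whether the left-hand side is $[\,\cdot\,]\{\,\cdot\,\}$, $[\,\cdot\,][\,\cdot\,]$ or $\{\,\cdot\,\}\{\,\cdot\,\}$; this is also what determines whether the right-hand side is enclosed in $[\,\cdot\,]$ or $\{\,\cdot\,\}$, consistently with a product of two skew or two symmetric forms being symmetric and a mixed product being skew. For the degenerate cases $[ua\bs 0]=[\bs 0]$ or $[\bs 0 bv]=[\bs 0]$, where the single-letter word $\bs 0$ is not of the form required by Definition~\ref{def:abstractPoppeproduct}, I would first rewrite $\bs 0\equiv\nu+0\bs 0$ (and $\bs 0^\dag\equiv\nu+0^\dag\bs 0^\dag$) using Lemma~\ref{lemma:basicidentities}, apply the product to the $0\bs 0$ summand, and check that the contribution of the $\nu$ summand is precisely what turns the general formula into the stated reduced one with the ``$(a+1)$'' (respectively ``$(b+1)$'') term absent.

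The main obstacle is entirely the dagger bookkeeping: one must verify, case by case, that the dagger of each term produced by one of the four P\"oppe-product cases lands exactly on the term produced by the dagger-conjugate case (cases one and two being conjugate, as are cases three and four), so that the skew-symmetric/symmetric recombination in the final step genuinely closes up. Concretely this reduces to checking the compatibility of Definition~\ref{rmk:convention} with Definition~\ref{def:abstractPoppeproduct} on the inserted letters $a+1$ and $b+1$ and on the inserted block $\bs 01\bs 0$. Once that compatibility is recorded, each of the four identities, and each degenerate case, is a short finite computation.
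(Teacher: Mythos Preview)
Your approach to the four main identities is exactly the paper's: expand the skew/symmetric forms via Definition~\ref{def:skewsymmparts}, apply the four cases of Definition~\ref{def:abstractPoppeproduct} to the resulting four word-products, and regroup. The paper carries out the first identity in detail and declares the rest analogous, just as you do.

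Where you diverge is in the degenerate case. You propose to use the relations $\bs0\equiv\nu+0\bs0$ and $\bs0^\dag\equiv\nu+0^\dag\bs0^\dag$ from Lemma~\ref{lemma:basicidentities} and feed the $0\bs0$ summand into Definition~\ref{def:abstractPoppeproduct} with $u=\nu$, $a=0$. That is workable in principle, but it produces intermediate words such as $1\bs0bv$ and $0\bs0(b+1)v$ that begin with an integer letter rather than with $\bs0$ or $\bs0^\dag$, so they are not in the standard form $w\times\bs\varphi$ and you must invoke Lemma~\ref{lemma:basicidentities} again to bring them back; the bookkeeping becomes circular-looking and fiddly. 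The paper's device is cleaner: it observes that $[\bs0]=[\bs0^\dag0\bs0]$ (since $\bs0-\bs0^\dag=2\cdot\bs00\bs0^\dag$ and $\bs00\bs0^\dag=\bs0^\dag0\bs0$), which already has the shape $[ua\bs0]$ with $u=\bs0^\dag$ and $a=0$, and then applies the \emph{already established} product formula directly to $[\bs0^\dag0\bs0]\{\bs0bv\}$. After expanding, the combination $\bs0^\dag+\bs0+2\cdot\bs0^\dag0\bs0$ collapses to $2\cdot\bs0$, yielding the stated reduced formula with the ``$(a+1)$'' term absent. This avoids ever leaving the class of standard words.
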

\begin{proof}
  The results are established straightforwardly using Definition~\ref{def:abstractPoppeproduct} for the abstract P\"oppe product.
  Consider for example the first product shown, we observe that,
  \begin{align*}
    \{ua\bs 0\}\,[\bs 0 bv]=&\;\bigl(ua\bs 0+(ua\bs 0)^\dag\bigr)\bigl(\bs 0 bv-(\bs 0 bv)^\dag\bigr)\\
    =&\;u(a+1)\bs0 bv+ua\bs 0(b+1)v+2\cdot ua\bs 01\bs 0bv\\
    &\;-u(a+1)(\bs 0bv)^\dag-ua\bs 0\bigl((b+1)v\bigr)^\dag\\
     &\;+\bigl(u(a+1)\bigr)^\dag\bs0 bv+(ua\bs 0)^\dag(b+1)v\\
    &\;-\bigl(u(a+1)\bs 0bv\bigr)^\dag-\bigl(ua\bs 0(b+1)v\bigr)^\dag-2\cdot\bigl(ua\bs 01\bs 0 bv\bigr)^\dag,
  \end{align*}
  which gives the first product result. The other three cases follow completely analogously.
  For the case, for example, when $[ua\bs0]=[\bs0]$ in the second product, we use that,
  since $[\bs0]=2\cdot\bs00\bs0^\dag$ and $\bs00\bs0^\dag=\bs0^\dag0\bs0$,
  we have $[\bs00\bs0^\dag]=[\bs0^\dag0\bs0]=\bs00\bs0^\dag+\bs0^\dag0\bs0=[\bs0]$.
  Hence we observe, since $[\bs0]=[\bs0^\dag0\bs0]$, we can use the latter form in
  the corresponding product already established, so using the properties of the skew and symmetric forms in
  Definition~\ref{def:skewsymmparts} we have,
  \begin{align*}
    [\bs0]\,\{\bs0bv\}=&\;[\bs0^\dag0\bs0]\,\{\bs0bv\}\\
    =&\;\bigl[\bs0^\dag1\{\bs 0 bv\}\bigr]+\bigl[\bs0^\dag0\bs 0\{(b+1)v\}\bigr]+2\cdot [\bs0^\dag0\bs 01\bs 0 bv]\\
    =&\;[\bs0^\dag1\bs 0 bv]+[\bs01\bs 0 bv]+[\bs0^\dag0\bs 0(b+1)v]+[\bs00\bs 0^\dag(b+1)v]+2\cdot [\bs0^\dag0\bs 01\bs 0 bv]\\
    =&\;[\bs0(b+1)v]-[\bs 0^\dag(b+1)v]+\bigl[(\bs0^\dag+\bs0+2\cdot\bs0^\dag0\bs 0)1\bs 0 bv\bigr]\\
    =&\;\bigl[\bs0\{(b+1)v\}\bigr]+2\cdot[\bs 01\bs 0 bv].
  \end{align*}
  The remaining cases follow completely analogously.
  \qed
\end{proof}
\begin{remark}\label{rmk:prodrulesufficiency}
  We observe that, using the properties of the skew-symmetric and symmetric forms `$[\,\cdot\,]$' and `$\{\,\cdot\,\}$'
  recorded in Definition~\ref{def:skewsymmparts}, the skew and symmetric P\"oppe products quoted in Lemma~\ref{lemma:skewandsymmPoppeproducts}
  are sufficient to resolve the P\"oppe products of all possible skew-symmetric or symmetric forms we might encounter.
  For example if the left factor is of the form $[ua\bs 0^\dag]$ or $\{ua\bs 0^\dag\}$, then
  we can use that $[ua\bs 0^\dag]=-[(ua)^\dag\bs 0]$ or $\{ua\bs 0^\dag\}=\{(ua)^\dag\bs 0\}$ and then apply
  the product rules shown to the latter forms. Similarly we can use that $[\bs 0^\dag bv]=-[\bs 0(bv)^\dag]$
  and $\{\bs 0^\dag bv\}=\{\bs 0(bv)^\dag\}$.  
\end{remark}
\begin{remark}[Minimal product set]\label{rmk:minimalproductset}
  We observe, to compute the P\"oppe product of any monomials of the form
  $[w_1\times\bs\varphi_1]\,[w_2\times\bs\varphi_2]\,\,\cdots\,[w_k\times\bs\varphi_k]$,
  we really only need the rule for $[\,\cdot\,]\,[\,\cdot\,]$ and say the rule
  for $[\,\cdot\,]\,\{\,\cdot\,\}$ in Lemma~\ref{lemma:skewandsymmPoppeproducts}.
  This is because we can work from right to left through the products in such a monomial.
  We can alternatively use $\{\,\cdot\,\}\,[\,\cdot\,]$ and work from left to right.
\end{remark}
\begin{remark}[Basic skew-form properties]\label{rmk:zeroword}
  The skew-form $[\bs0]$ corresponds to the general skew-form $[w\times\bs\varphi]$ in which the
  composition component/word $w=\nu$, the empty word, and $\bs\varphi=\bs0$, i.e.\/ we have $[\nu\times\bs0]=[\bs0]$.
  Note if $\bs\varphi=\bs0^\dag$, this simply corresponds to `$-[\bs0]$' or equivalently `$-[\nu\times\bs0]$'. In P\"oppe products,
  the skew-form $[\bs0]$ has some rather special properties, as highlighted in Lemma~\ref{lemma:skewandsymmPoppeproducts}.
  By Remark~\ref{rmk:minimalproductset} and the second result in proof of Lemma~\ref{lemma:skewandsymmPoppeproducts},
  we have, for example,
  \begin{align*}
    [\bs 0]\,[\bs0 bv]&=\bigl\{\bs0[(b+1)v]\bigr\}+2\cdot\{\bs 01\bs 0bv\},\\
    [ua\bs 0]\,[\bs 0]&=\bigl\{u(a+1)[\bs0]\bigr\}+2\cdot\{ua\bs 01\bs 0\},\\
    [\bs0]\,\{\bs0bv\}&=\bigl[\bs0\{(b+1)v\}\bigr]+2\cdot[\bs 01\bs 0bv].
  \end{align*}
  In particular, setting $bv$ to be the empty word $\nu$, we have $[\bs 0]^{2}=2\cdot\{\bs 01\bs0\}$.
\end{remark}
\begin{remark}[Homomorphic signature character]\label{rmk:homomorphicsigchar}
  Consider a multi-factor product of signature expansions of the form,
  \begin{equation*}
    [\bs n_1]\,[\bs n_2]\,\cdots\,[\bs n_k]
    =\sum\bigl(\chi(w_1)\chi(w_2)\cdots\chi(w_k)\bigr)\cdot[w_1\times\bs\varphi_1]\,[w_2\times\bs\varphi_2]\,\cdots\,[w_k\times\bs\varphi_k],
  \end{equation*}
  where the sum is over all words $w_1\times\bs\varphi_1$ with $w_1\in\mathcal C(n_1)$, $w_2\times\bs\varphi_2$ with $w_2\in\mathcal C(n_2)$, and so forth.
  Note, the form $[w_1\times\bs\varphi_1]\,[w_2\times\bs\varphi_2]\,\cdots\,[w_k\times\bs\varphi_k]$ generates many different
  words in $\mathbb C\langle \mathbb Z_{\bs 0}\rangle$.
  We observe that it would be convenient to encode $\chi(w_1)\chi(w_2)\cdots\chi(w_k)$ as $\chi(w_1\ot w_2\ot\cdots\ot w_k)$.
  Indeed, hereafter, we assume that $\chi$ acts \emph{homomorphically} on any such tensor product of compositions so that indeed we have,
  \begin{equation*}
    \chi(w_1\ot w_2\ot\cdots\ot w_k)\equiv\chi(w_1)\chi(w_2)\cdots\chi(w_k).
  \end{equation*}
\end{remark}
Let us now outline some simple examples. 
\begin{example}\label{ex:tensornotation}
  By definition $[\bs 0]\coloneqq\bs 0-\bs 0^\dag$.
  Using the notation $[\bs 0]^{2}=\bigl(\chi(0)\cdot[\bs 0]\bigr)\,\bigl(\chi(0)\cdot[\bs 0]\bigr)$ and so forth,
  then using the product rules in Lemma~\ref{lemma:skewandsymmPoppeproducts} we observe (also see Remark~\ref{rmk:zeroword}),
  \begin{align*}
    [\bs 0]^{2}=&\;\chi(0\ob0)\cdot\{\bs 01\bs0\},\\
    [\bs 0]^{3}=&\;[\bs 0]\,[\bs 0]^{2}\\
    =&\;\bigl(\chi(0)\cdot[\bs 0]\bigr)\,\bigl(\chi(0\ob0)\{\bs 01\bs0\}\bigr)\\
    =&\;\chi(0\ot0\ob0)\cdot\bigl[\bs 0\{2\bs 0\}\bigr]+\chi(0\ob0\ob0)\cdot[\bs 01\bs 0 1\bs 0],
  \end{align*}
  where the tensor notation `$\ob$' in the argument of $\chi=\chi(\cdot)$ indicates a tensor product `$\ot$'
  together with the fact that an extra real factor of `$2$' should be included
  with the $\chi=\chi(\cdot)$ factor shown. See Remark~\ref{rmk:explainotimeshat} just below.
\end{example}
\begin{remark}\label{rmk:explainotimeshat}
  Hereafter, we also use the tensor notation `$\ob$' in the argument of $\chi=\chi(\cdot)$ to indicate
  when the skew or symmetric form were generated by the `quasi' term $2\cdot[ua\bs01\bs0 bv]$ in one of
  the P\"oppe products in Lemma~\ref{lemma:skewandsymmPoppeproducts}.
  We illustrated this in Example~\ref{ex:tensornotation} just above. We observe therein that
  the result of the product $[\bs0]^2$ is `$2$' times the symmetric form $\{\bs 01\bs0\}$.
  This symmetric form emerges from the `quasi' term in the P\"oppe product of $\chi(0)\cdot[\bs 0]$
  with $\chi(0)\cdot[\bs 0]$ and a natural way to record this is the form $\chi(0\ob0)\cdot\{\bs 01\bs0\}$.
  The tensor product of the zeros in the argument of $\chi=\chi(\cdot)$ indicates that the symmetric
  form is the result of the product of $[\bs0]$ with $[\bs0]$, while the fact that the tensor product
  is `$\ob$' indicates it was the result of the `quasi' term in the P\"oppe product, and an extra factor 
  of $2$ is implied. In this case if we evaluate the signature character we include an extra factor of `$2$'
  in its evaluation. Also consider the product $[\bs0]^3$ in Example~\ref{ex:tensornotation}. 
  When we compute the P\"oppe product
  $\bigl(\chi(0)\cdot[\bs 0]\bigr)\,\bigl(\chi(0\ob0)\{\bs 01\bs0\}\bigr)$, the first skew form generated,
  i.e.\/ $\bigl[\bs 0\{2\bs 0\}\bigr]$, has the coefficient $\chi(0\ot0\ob0)$ as we might expect, using the
  homomorphic properties of $\chi$. However the second term generated by the product
  $\bigl(\chi(0)\cdot[\bs 0]\bigr)\,\bigl(\chi(0\ob0)\{\bs 01\bs0\}\bigr)$, which is $[\bs 01\bs 0 1\bs 0]$,
  has the coefficient $\chi(0\ob0\ob0)$. This is because this second term is the result of the `quasi'
  term $2\cdot[ua\bs01\bs0 bv]$ in the P\"oppe product; here $ua=\nu$ and $bv=1\bs0$. The factor `$2$'
  is absorbed/encoded by the fact that a `$\ob$' tensor (instead of just `$\ot$') is used between the first $0$ and the $0\ob0$,
  the respective $\chi$-arguments for $[\bs0]$ and $\{\bs 01\bs0\}$,
  in the coefficient $\chi(0\ob0\ob0)$ for $[\bs 01\bs 0 1\bs 0]$ in Example~\ref{ex:tensornotation}. 
  See Example~\ref{ex:tensornotation2} for further illustrations of this notation.
\end{remark}
\begin{example}\label{ex:tensornotation2}
 Using the P\"oppe products in Lemma~\ref{lemma:skewandsymmPoppeproducts} and that $[\bs 1]=\chi(1)\cdot[\bs 01\bs0]$, we have,
 \begin{align*}
    [\bs 1]\,[\bs 0]^{2}=&\;\bigl(\chi(1)\cdot[\bs 01\bs0]\bigr)\,\bigl(\chi(0\ob0)\cdot\{\bs 0 1\bs 0\}\bigr)\\
    =&\;\chi(1\ot0\ob0)\cdot\bigl(\bigl[\bs 02\{\bs01\bs0\}\bigr]+\bigl[\bs01\bs0\{2\bs0\}\bigr]\bigr)
    +\chi(1\ob0\ob0)\cdot[\bs01\bs01\bs01\bs0],\\
    [\bs 0]^{2}\,[\bs 1]=&\;\bigl(\chi(0\ob0)\cdot\{\bs 0 1\bs 0\}\bigr)\,\bigl(\chi(1)\cdot[\bs 01\bs0]\bigr)\\
    =&\;\chi(0\ob0\ot 1)\cdot\bigl(\bigl[\bs 02[\bs01\bs0]\bigr]+\bigl[\bs01\bs0[2\bs0]\bigr]\bigr)
    +\chi(0\ob0\ob1)\cdot[\bs01\bs01\bs01\bs0].
  \end{align*}
\end{example}
\begin{definition}[Derivation endomorphism]\label{def:derivation}
  Given any word $w\times\bs\varphi\in\mathbb C\langle \mathbb Z_{\bs 0}\rangle$ with
  $w\times\bs\varphi=\bs\theta_1a_1\bs\theta_2\cdots\bs\theta_ka_k\bs\theta_{k+1}$, we define the 
  \emph{derivation endomorphism} $\mfd$ on $\mathbb C\langle \mathbb Z_{\bs 0}\rangle$ 
  to be the linear expansion,
  \begin{align*}
    \mfd(w\times\bs\varphi)\coloneqq&\;\sum_{\ell=1}^k\bs\theta_1a_1\bs\theta_2\cdots\bs\theta_\ell(a_\ell+1)\bs\theta_{\ell+1}\cdots\bs\theta_ka_k\bs\theta_{k+1}\\
    &\;+\sum_{\ell=1}^{k+1}\bs\theta_1a_1\bs\theta_2\cdots\bs\theta_{\ell-1}a_{\ell-1}(\mfd\bs\theta_\ell)a_\ell\bs\theta_{\ell+1}\cdots\bs\theta_ka_k\bs\theta_{k+1},
  \end{align*}
  where $\mfd\bs\theta_\ell$ equals $\bs 01\bs 0$ or $\bs 0^\dag 1^\dag\bs 0^\dag$,
  depending respectively on whether $\bs\theta_\ell$ is $\bs 0$ or $\bs 0^\dag$. 
\end{definition}
\begin{remark}
  The action of the derivation endomorphism on $\bs 0$ and $\bs 0^\dag$ shown in the definition
  reflects the signature expansions, either at the kernel or abstract level. In this case here, we
  know $\pa V=V(\mathrm{i}P)_1V$ and $\pa V^\dag=V^\dag(\mathrm{i}P)_1^\dag V^\dag$ or equivalently $\bs 1=\chi(1)\cdot\bs 01\bs 0$
  and $\bs 1^\dag=\chi(1)\cdot\bs 0^\dag 1^\dag\bs 0^\dag=-\bs 0^\dag 1\bs 0^\dag$. Similarly, 
  the action of the derivation endomorphism on any signature expansion, say $\bs n$, is given by,
  $\mfd\colon\bs n\mapsto(\bs{n+1})$, and similarly for $\bs n^\dag$.
\end{remark}
Now suppose, within $\mathbb C\langle \mathbb Z_{\bs 0}\rangle$, we restrict ourselves to the set of skew-symmetric forms $[w\times\bs\varphi]$.
Naturally, as a vector space, $\mathbb C\langle \mathbb Z_{\bs 0}\rangle$ can be decomposed into the direct sum of the 
vector subspaces $\mathbb C[\mathbb Z_{\bs 0}]$ of skew-symmetric forms, and $\mathbb C\{\mathbb Z_{\bs 0}\}$ of symmetric forms:
\begin{equation*}
\mathbb C\langle \mathbb Z_{\bs 0}\rangle=\mathbb C[\mathbb Z_{\bs 0}]\,{\scriptsize \bigoplus}\,\mathbb C\{\mathbb Z_{\bs 0}\}.
\end{equation*}
We observe from the P\"oppe products in Lemma~\ref{lemma:skewandsymmPoppeproducts},
the product $[w_1\times\bs\varphi_1]\,[w_2\times\bs\varphi_2]$ does not generate a skew-symmetric form but a symmetric one.
However any triple product $[w_1\times\bs\varphi_1]\,[w_2\times\bs\varphi_2]\,[w_3\times\bs\varphi_3]$
does generate a skew-symmetric form. This is true for any P\"oppe products involving an odd number of skew-symmetric forms. 
Hence we can define a subalgebra of the P\"oppe algebra $\mathbb C\langle \mathbb Z_{\bs 0}\rangle$ which
we denote by $\mathbb C[\mathbb Z_{\bs 0}]\subseteq\mathbb C\langle \mathbb Z_{\bs 0}\rangle$, which is generated
by skew-symmetric forms and triple products of such forms.
\begin{definition}[Skew-P\"oppe algebra]\label{def:skewPoppealg}
We call $\mathbb C[\mathbb Z_{\bs 0}]$ the \emph{skew-P\"oppe algebra}. 
\end{definition}
\begin{remark}[Practical P\"oppe algebra computations]
  In practice, in particular in the next two sections, we perform calculations in the ``enveloping'' algebra $\mathbb C\langle \mathbb Z_{\bs 0}\rangle$,
  and at the end, show that the result remains closed within the skew-P\"oppe algebra $\mathbb C[\mathbb Z_{\bs 0}]$.
  However, the skew-P\"oppe algebra and its triple product structure is crucial to the proof of our main result in Section~\ref{sec:hierarchycoding}.
\end{remark}
%

%
%

\section{Hierarchy examples}\label{sec:ncNLS} 
We use the skew-P\"oppe algebra $\mathbb C[\mathbb Z_{\bs 0}]$ to establish integrability for examples from the
non-commutative nonlinear Schr\"odinger and modified Korteweg--de Vries hierarchy. 
This was first considered in an analogous context in Doikou \textit{et al.\/} \cite[Sec.~6]{DMSW:AGFintegrable}.
Recall the linear dispersion equation, the `base' equation, we introduced in Section~\ref{subsec:lineardispersion}.
Hereafter we assume Hilbert--Schmidt operators $P$ and $G$ satisfy, respectively, the linear dispersive
partial differential equation $\pa_t P=-\mu_n(\mathrm{i}\mathcal I)^{n-1}\pa^n P$ and the linear Fredholm equation
$\mathrm{i}P=G(\id+P^2)$. We also assume $P^\dag=P$. We observe that with $V\coloneqq(\id-\mathrm{i}P)^{-1}$,
then assuming it exists, we have,
\begin{equation*}
  G=V(\mathrm{i}P)V^\dag.
\end{equation*}
Note we scale this by a factor `$2$' presently so that it matches the expression in Definition~\ref{def:FGflow}.
We record the following identities that prove useful below; also see Doikou \textit{et al.} ~\cite[Sec.~6]{DMSW:AGFintegrable}.
Also recall the identities in Lemma~\ref{lemma:alg} and Remark~\ref{rmk:order}.
\begin{lemma}\label{lemma:blockidentities}
The block operators $P$ and $V$ satisfy the following idenitites,
\begin{equation*}
P\mathcal I=-\mathcal I P,\qquad \mathcal IV=V^\dag\mathcal I
\qquad\text{and}\qquad V\mathcal I=\mathcal IV^\dag.
\end{equation*}
\end{lemma}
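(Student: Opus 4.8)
The plan is to prove Lemma~\ref{lemma:blockidentities} by unpacking the explicit block structure of $P$, $\mathcal I$ and $V$ and reducing each identity to a routine block-matrix computation combined with the partial-fraction manipulations already recorded in Lemma~\ref{lemma:alg}. Recall from Section~\ref{subsec:lineardispersion} that
\begin{equation*}
  P=\begin{pmatrix} O & P_\beta \\ P_\alpha & O \end{pmatrix},\qquad
  \mathcal I=\begin{pmatrix} -\id & O \\ O & \id \end{pmatrix},
\end{equation*}
so the first identity $P\mathcal I=-\mathcal IP$ is immediate: multiplying $P$ by $\mathcal I$ on the right negates the first block-column, multiplying on the left negates the first block-row, and on the off-diagonal pattern of $P$ these two operations differ exactly by an overall sign. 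Equivalently, $\mathcal I$ anticommutes with any purely block-off-diagonal operator, and $P$ is such an operator.

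Next I would deduce the second and third identities from the first. Since $\mathcal I^2=\id$, conjugation by $\mathcal I$ is an involution, and $\mathcal I(\mathrm iP)\mathcal I=-\mathrm iP=(\mathrm iP)^\dag$ because $P$ is self-adjoint (as assumed throughout this section, $P^\dag=P$, hence $(\mathrm iP)^\dag=-\mathrm iP$). Therefore
\begin{equation*}
  \mathcal I(\id-\mathrm iP)\mathcal I=\id-\mathcal I(\mathrm iP)\mathcal I=\id+\mathrm iP=(\id-\mathrm iP)^\dag .
\end{equation*}
Taking inverses of both sides and using $V\coloneqq(\id-\mathrm iP)^{-1}$, $V^\dag=(\id+\mathrm iP)^{-1}$, we get $\mathcal I V\mathcal I=V^\dag$, i.e. $\mathcal IV=V^\dag\mathcal I$; multiplying this on the left and right by $\mathcal I$ and again using $\mathcal I^2=\id$ gives the equivalent form $V\mathcal I=\mathcal IV^\dag$. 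One could alternatively verify $\mathcal IV=V^\dag\mathcal I$ at the block level by writing $V$ in the $2\times2$ block form obtained from the Neumann/partial-fraction expansion $V=\id+(\mathrm iP)V$ of Lemma~\ref{lemma:alg}: the even powers of $P$ in $V$ are block-diagonal and commute with $\mathcal I$, while the odd powers are block-off-diagonal and anticommute with $\mathcal I$, and this sign is precisely what converts $V$ into $V^\dag$.

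There is essentially no hard step here; the only thing to be careful about is the interplay between the \emph{block} adjoint structure (coming from the decomposition $\mathbb H=\mathbb V\times\mathbb V$ and the choice $P_\beta=P_\alpha^\dag$, which makes $P^\dag=P$) and the scalar-coefficient power-series representation of $V$, so that ``$\dag$'' may be moved past $\mathcal I$ consistently. I would therefore present the argument in the order: (i) observe $\mathcal I$ anticommutes with block-off-diagonal operators, giving $P\mathcal I=-\mathcal IP$; (ii) conjugate $\id-\mathrm iP$ by $\mathcal I$ and identify the result with $(\id-\mathrm iP)^\dag$ using $P^\dag=P$; (iii) invert to obtain $\mathcal IV=V^\dag\mathcal I$; (iv) conjugate once more to obtain $V\mathcal I=\mathcal IV^\dag$. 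This keeps the proof to a few lines and makes clear that all three identities are consequences of the single structural fact that $P$ is block-off-diagonal and self-adjoint.
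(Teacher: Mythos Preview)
Your proof is correct and essentially matches the paper's approach: the first identity comes from the block structure, and the latter two from the anticommutation $P\mathcal I=-\mathcal IP$ together with $P^\dag=P$. The paper phrases the derivation of $\mathcal IV=V^\dag\mathcal I$ via the power-series expansion of $V$ (which you mention as an alternative), whereas your primary route conjugates $\id-\mathrm iP$ by $\mathcal I$ and inverts; these are equivalent one-line arguments.
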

\begin{proof}
The first identity follows from the block structures assumed for $P$ and $\mathcal I$. The latter
two identities follow using the power series expansion for $V\coloneqq(\id-\mathrm{i}P)^{-1}$. 
\qed
\end{proof}
We now rescale our definition for $G$ above by a factor `$2$', and set,
\begin{equation*}
  G\coloneqq V-V^\dag.
\end{equation*}
Hereafter, we are thus concerned with the quantity $[V]\coloneqq\lb V-V^\dag\rb$. 
Using that $\pa_tV=V\pa_t(\mathrm{i}P)V$ and $\pa_tV^\dag=V^\dag\pa_t(\mathrm{i}P)^\dag V^\dag$,
and that $\pa_t(\mathrm{i}P)=-\mu_n(\mathrm{i}\mathcal I)^{n-1}\pa^n(\mathrm{i}P)$
and $\pa_t(\mathrm{i}P)^\dag=-(-1)^{n-1}\mu_n\pa^n(\mathrm{i}P)^\dag(\mathrm{i}\mathcal I)^{n-1}$,
we observe that for any $n\in\mathbb Z$, we have
\begin{align*}
  \pa_t[V]=&\;V\pa_t(\mathrm{i}P)V-V^\dag\pa_t(\mathrm{i}P)^\dag V^\dag\\
  =&\; -\mu_n\Bigl(V(\mathrm{i}\mathcal I)^{n-1}\pa^n(\mathrm{i}P)V-(-1)^{n-1}V^\dag\pa^n(\mathrm{i}P)^\dag(\mathrm{i}\mathcal I)^{n-1}V^\dag\Bigr).
\end{align*}
For convenience set $\mathcal M_n\coloneqq-\mu_n(\mathrm{i}\mathcal I)^{n-1}$. Using the identities in Lemma~\ref{lemma:blockidentities}, we have,
\begin{equation*}
  \mathcal M_n^{-1}\pa_t[V]=\begin{cases}
                             \bigl[V(\mathrm{i}P)_nV\bigr],  & \text{when}~n~\text{is odd},\\
                             \bigl[V^\dag(\mathrm{i}P)_nV\bigr], & \text{when}~n~\text{is even}.
                           \end{cases}
\end{equation*}

We now establish integrability for some examples from the non-commutative nonlinear Schr\"odinger hierarchy.
We express $\mathcal M_n^{-1}\pa_t[V]$ in the skew-P\"oppe algebra as follows.
\begin{definition}[Time-derivation endomorphism]\label{def:timederiv}
Given $n\in\mathbb Z$, we define the \emph{time-derivation endomorphism}
$\mathfrak e_n\colon\mathbb C[\mathbb Z_{\bs 0}]\to\mathbb C[\mathbb Z_{\bs 0}]$ by,
\begin{equation*}
\mathfrak e_n\colon[\bs 0]\mapsto\begin{cases}
                                      [\bs 0n\bs 0],  & \text{when}~n~\text{is odd},\\
                                      [\bs 0n\bs 0^\dag], & \text{when}~n~\text{is even}.
                                    \end{cases}
\end{equation*}
\end{definition}
The nonlinear fields we seek are expressed in the skew-P\"oppe algebra as follows.
\begin{definition}[P\"oppe polynomials]\label{def:Poppepoly}
  For $n\in\mathbb N\cup\{0\}$, let $\pi_n=\pi_n\bigl([\bs 0],[\bs 1],\ldots,[\bs n]\bigr)$ denote a
  polynomial consisting of a linear combination of odd-degree monomials of signature expansions
  in the skew-P\"oppe algebra $\mathbb C[\mathbb Z_{\bs 0}]$ of the form,
  \begin{equation*}
    \pi_n\coloneqq\sum_{k=1(\text{odd})}^{n}\sum_{a_1a_2\cdots a_k} c_{a_1a_2\cdots a_k}\cdot [\bs a_1]\,[\bs a_2]\,\cdots\,[\bs a_k].
  \end{equation*}
  The first sum is over odd values of $k$. The second sum is over all words $a_1a_2\cdots a_k$ we can construct from
  the alphabet $\{0,1,2,\ldots,n\}$ such that $a_1+a_2+\cdots+a_k=n-(k-1)$. This ensures $\pi_n$ is an odd polynomial.
  The coefficients $c_{a_1a_2\cdots a_k}$ are scalar constants.
\end{definition}
Our goal is to show $\mathfrak e_n\bigl([\bs 0]\bigr)$ can be expressed in terms of a P\"oppe polynomial
$\pi_n=\pi_n\bigl([\bs 0],[\bs 1],\ldots,[\bs n]\bigr)$ in the skew-P\"oppe algebra $\mathbb C[\mathbb Z_{\bs 0}]$.
Thus for each $n\in\mathbb N\cup\{0\}$, our goal is to determine the coefficients $c_{a_1a_2\cdots a_k}$ such that,
\begin{equation*}
\mathfrak e_n=\pi_n.
\end{equation*}
The examples we explore here correspond to the simple cases $n=0,1,2,3,4$, as follows.
\begin{example}[Linear ordinary differential equation: $n=0$]\label{ex:ODE}
We observe $\mathfrak e_0\bigl([\bs 0]\bigr)=[\bs 0 0\bs 0^\dag]$.
Recall that $a^\dag=-a$ for letters from $\mathbb Z$ in $(\mathbb Z_{\bs 0})^\ast$,
including $a=0$. Hence we observe,
\begin{equation*}
[\bs 0 0\bs 0^\dag]=\bs 0 0\bs 0^\dag-\bs 0^\dag0^\dag\bs 0=\bs 0 0\bs 0^\dag+\bs 0^\dag0\bs 0=\bs 0-\bs 0^\dag=[\bs 0].
\end{equation*}
In other words $\mathfrak e_0\bigl([\bs 0]\bigr)=[\bs 0]$ which translates to the following linear ordinary differential
equation for $g=\lb G\rb$, with $\mathcal M_0=\mu_0\mathrm{i}\mathcal I$,
\begin{equation*}
\pa_tg=\mathcal M_0\,g.
\end{equation*}
\end{example}
\begin{example}[Linear wave equation: $n=1$]
  We observe $\mathfrak e_1\bigl([\bs 0]\bigr)=[\bs 01\bs 0]$.
  From Definition~\ref{def:signatureexpansion}, we have the signature expansion $[\bs 1]=[\bs 01\bs 0]$,
  since $\chi(1)=1$. From Definition~\ref{def:derivation} for the derivation endomorphism, we know $\mfd[\bs 0]=[\bs 1]$. 
  Hence we have, $\mathfrak e_1\bigl([\bs 0]\bigr)=\mfd[\bs 0]$ in $\mathbb C[\mathbb Z_{\bs 0}]$.
  This translates to the linear wave equation for $g=\lb G\rb$, with $\mathcal M_1=-\mu_1\,\id$,
  \begin{equation*}
    \pa_tg=\mathcal M_1\,\pa g.
  \end{equation*}
\end{example}
\begin{example}[Nonlinear Schr\"odinger equation: $n=2$]\label{ex:secondorder}
  We observe $\mathfrak e_2\bigl([\bs 0]\bigr)=[\bs 02\bs 0^\dag]$.
  Using the homomorphic properties of $\chi$, the values for the signature coefficients given in Definition~\ref{def:signaturecharacter}
  and that each tensor product `$\ob$' under $\chi$ generates a real factor of $2$, we have $\chi(0\ot0\ob0)=2$ and $\chi(0\ob0\ob0)=4$.
  Then from Example~\ref{ex:tensornotation}, we see that we have,
  $[\bs 0]^{3}=2\cdot\bigl[\bs 0\{2\bs 0\}\bigr]+4\cdot[\bs 01\bs 0 1\bs 0]=2\cdot\bigl[\bs 02\bs 0\bigr]-2\cdot[\bs 02\bs 0^\dag]+4\cdot[\bs 01\bs 0 1\bs 0]$,
  using that $\bigl[\bs 0\{2\bs 0\}\bigr]=[\bs 02\bs 0]-[\bs 02\bs 0^\dag]$,
  and that $2^\dag=-2$. The signature expansion for $[\bs 2]=\mfd^2[\bs 0]$ is given by,
  $[\bs 2]=\chi(2)\cdot[\bs 02\bs 0]+\chi(11)\cdot[\bs 01\bs 0 1\bs 0]=[\bs 02\bs 0]+2\cdot[\bs 01\bs 0 1\bs 0]$.
  Hence we observe, $\mathfrak e_2\bigl([\bs 0]\bigr)=[\bs 2]-\frac12\cdot[\bs 0]^{3}$ in $\mathbb C[\mathbb Z_{\bs 0}]$.
  This translates to the non-commutative nonlinear Schr\"odinger equation 
  for $g=\lb G\rb$, with $\mathcal M_2=-\mu_2\,(\mathrm{i}\mathcal I)$,
  \begin{equation*}
    \mathcal M_2^{-1}\pa_tg=\pa^2g-\tfrac12g^3.
  \end{equation*}
\end{example}
\begin{remark}[Rescaling]\label{rmk:rescaling}
  In all examples, rescaling the solution $g$ to $2\,g$ recovers the usual corresponding
  equations in the non-commutative nonlinear Schr\"odinger hierarchy. This is because we assumed 
  $G\coloneqq V-V^\dag$ rather than $G=V(\mathrm{i}P)V^\dag\equiv\frac12(V-V^\dag)$.
\end{remark}
\begin{example}[Modified Korteweg--de Vries equation: $n=3$]\label{ex:KdVexpansion}
  We observe $\mathfrak e_3\bigl([\bs 0]\bigr)=[\bs 03\bs 0]$. Recall from Example~\ref{ex:tensornotation} that,
  $[\bs 0]^{2}=\chi([0\ot 0])\cdot\{\bs 0 1\bs 0\}$.
  Note this lies in $\mathbb C\langle\mathbb Z_{\bs 0}\rangle$ as opposed to the skew-P\"oppe algebra $\mathbb C[\mathbb Z_{\bs 0}]$.
  From the results of Example~\ref{ex:tensornotation2}, evaluating the signature characteristers, we know
  $[\bs 1]\,[\bs 0]^{2}=2\cdot\bigl(\bigl[\bs 02\{\bs01\bs0\}\bigr]+\bigl[\bs01\bs0\{2\bs0\}\bigr]\bigr)+4\cdot[\bs01\bs01\bs01\bs0]$
  and $[\bs 0]^{2}\,[\bs 1]=2\cdot\bigl(\bigl[\bs 02[\bs01\bs0]\bigr]+\bigl[\bs01\bs0[2\bs0]\bigr]\bigr)+4\cdot[\bs01\bs01\bs01\bs0]$.
  Then using the properties of the skew form from Definition~\ref{def:skewsymmparts}, we see that,
  \begin{equation*}
    \bigl[\bs 02\{\bs01\bs0\}\bigr]+\bigl[\bs 02[\bs01\bs0]\bigr]=2\cdot\bigl[\bs 02\bs01\bs0\bigr]
    ~~\text{and}~~
    \bigl[\bs 01\bs0\{2\bs0\}\bigr]+\bigl[\bs 01\bs0[2\bs0]\bigr]=2\cdot\bigl[\bs01\bs02\bs0\bigr].    
  \end{equation*}
  The signature expansion for $[\bs 3]=\mfd^3[\bs 0]$ is given by,
  \begin{align*}
    [\bs 3]&=\chi(3)\cdot[\bs 03\bs 0]+\chi(21)\cdot[\bs02\bs01\bs0]+\chi(12)\cdot[\bs01\bs02\bs0]
    +\chi(111)\cdot[\bs 01\bs 01\bs 0 1\bs 0]\\
    &=[\bs 03\bs 0]+3\cdot[\bs02\bs01\bs0]+3\cdot[\bs01\bs02\bs0]+6\cdot[\bs 01\bs 01\bs 0 1\bs 0].
  \end{align*}
  Hence we see that, $\mathfrak e_3\bigl([\bs 0]\bigr)=[\bs 3]-\tfrac34\cdot\bigl([\bs 1]\,[\bs 0]^{2}+[\bs 0]^{2}\,[\bs 1]\bigr)$,
  in $\mathbb C[\mathbb Z_{\bs 0}]$. This translates to the non-commutative modified Korteweg--de Vries equation 
  for $g=\lb G\rb$, with $\mathcal M_3=\mu_3\,\id$,
  \begin{equation*}
    \mathcal M_3^{-1}\pa_tg=\pa^3g-\tfrac34\bigl((\pa g)g^2+g^2(\pa g)\bigr).
  \end{equation*}
\end{example}

\begin{table}
\caption{Non-zero signature coefficients appearing
in the expansion of the \emph{P\"oppe polynomial} $\pi_3$ in Example~\ref{ex:KdVexpansion}.
The coefficients are the $\chi$-images of the signature entries shown.
Each column shows the factor contributions to the real coefficients of the basis elements
shown in the very left column, for each of the monomials in $\pi_3$  
shown across the top row. The final column represents the coefficient on the right-hand side of the equation 
$\pi_3=[\bs0 3\bs0]$.}
\label{table:NLS3}
\begin{center}
\begin{tabular}{|l|cccc|c|}
\hline
$\phantom{\biggl|}$ basis &$[\bs 3]$&$[\bs0]\,[\bs1]\,[\bs0]$ &$[\bs1]\,[\bs0]^{2}$&$[\bs0]^2\,[\bs1]$&$B$\\
\hline
$\phantom{\Bigl|}$ $[\bs03\bs0]$     &$3$& $2\cdot(0\ot1\ot0)$  &&& $1$\\
$\phantom{\Bigl|}$ $[\bs03\bs0^\dag]$ &   & $-2\cdot(0\ot1\ot0)$ &&&\\
\hline
$\phantom{\Bigl|}$ $[\bs02\bs01\bs0]$          & $21$ & $2\cdot(0\ot1\ot0)$ & $2\cdot(1\ot0\ot0)$  & $2\cdot(0\ot0\ot1)$ &\\
$\phantom{\Bigl|}$ $[\bs02\bs0^\dag1\bs0^\dag]$ &      & $2\cdot(0\ot1\ot0)$ & $-2\cdot(1\ot0\ot0)$  & $2\cdot(0\ot0\ot1)$ &\\
\hline
$\phantom{\Bigl|}$ $[\bs01\bs02\bs0]$     &$12$  & $2\cdot(0\ot1\ot0)$ & $2\cdot(1\ot0\ot0)$  & $2\cdot(0\ot0\ot1)$ &\\
$\phantom{\Bigl|}$ $[\bs01\bs02\bs0^\dag]$ &      & $-2\cdot(0\ot1\ot0)$ & $-2\cdot(1\ot0\ot0)$  & $2\cdot(0\ot0\ot1)$ &\\
\hline
$\phantom{\Bigl|}$ $[\bs01\bs01\bs01\bs0]$ & $111$ & $4\cdot(0\ot1\ot0)$ & $4\cdot(1\ot0\ot0)$  & $4\cdot(0\ot0\ot1)$ &\\
\hline
\end{tabular}
\end{center}
\end{table}

\begin{example}[Fourth order quintic nonlinear Schr\"odinger equation: $n=4$]\label{ex:fourthorder}
In this case we know $\mathfrak e_4\bigl([\bs 0]\bigr)=[\bs 04\bs 0^\dag]$.
For this and higher orders, our procedure needs to be systematic. The P\"oppe polynomial in this
case has the form,
\begin{align*}
\pi_4\coloneqq &\;c_4\cdot[\bs 4]+c_{200}\cdot[\bs2]\,[\bs0]^{2}+c_{020}[\bs0]\,[\bs2]\,[\bs0]
+c_{002}\cdot[\bs0]^{2}\,[\bs2]\\
&\;+c_{110}\cdot[\bs1]^{2}\,[\bs0]+c_{101}\cdot[\bs1]\,[\bs0]\,[\bs1]
+c_{011}\cdot[\bs0]\,[\bs1]^{2}+c_{00000}\cdot[\bs0]^{5}.
\end{align*}
The signature expansion for $[\bs4]$ has the form,
\begin{align*}
[\bs4]=&\;\chi(4)\cdot[\bs04\bs0]+\chi(31)\cdot[\bs03\bs01\bs0]+\chi(22)\cdot[\bs02\bs02\bs0]+\chi(13)\cdot[\bs01\bs03\bs0]\\
&\;+\chi(211)\cdot[\bs02\bs01\bs01\bs0]+\chi(121)\cdot[\bs01\bs02\bs01\bs0]+\chi(112)\cdot[\bs01\bs01\bs02\bs0]\\
&\;+\chi(1111)\cdot[\bs01\bs01\bs01\bs01\bs0].
\end{align*}
Using the skew and symmetric P\"oppe products in Lemma~\ref{lemma:skewandsymmPoppeproducts} we find, for example, that,
\begin{align*}
  [\bs 2]\,[\bs 0]^{2}=&\;\bigl(\chi(2)\cdot[\bs 02\bs0]+\chi(11)\cdot[\bs01\bs01\bs0]\bigr)\,\bigl(\chi(0\ob0)\cdot\{\bs 0 1\bs 0\}\bigr)\\
  =&\;\chi(2\ot0\ob0)\cdot\bigl(\bigl[\bs 03\{\bs01\bs0\}\bigr]+\bigl[\bs02\bs0\{2\bs0\}\bigr]\bigr)
  +\chi(2\ob0\ob0)\cdot[\bs02\bs01\bs01\bs0]\\
  &\;+\chi(11\ot0\ob0)\cdot\bigl(\bigl[\bs 01\bs02\{\bs01\bs0\}\bigr]+\bigl[\bs01\bs01\bs0\{2\bs0\}\bigr]\bigr)\\
  &\;+\chi(11\ob0\ob0)\cdot[\bs01\bs01\bs01\bs01\bs0].    
\end{align*}
The other products shown in $\pi_4$ can be similarly expanded. In Table~\ref{table:NLS4} we list all the basis elements
and corresponding coefficients generated by all the P\"oppe products present in $\pi_4$. The values of the coefficients 
are the $\chi$-images of the tensored terms shown. Each row generates a linear algebraic
equation for the expansion coefficients $c_4$, $c_{200}$, $c_{110}$, \ldots $c_{00000}$. Note that in Table~\ref{table:NLS4}
rows are ordered according to descent order, with a sub-order for the positions of the $\bs0^\dag$ letters
as indicated. The ordering of the columns is self-evident from the structure present in the table. We discuss
this ordering in more explicitly in Section~\ref{sec:hierarchycoding}.
Using all the rows shown, we generate an over-determined linear system of algebraic equations, $AC=B$,
where $B$ is the column vector shown in the right-hand column in Table~\ref{table:NLS4}, $C$ is the vector of coefficients $c_4$,
$c_{020}$, and so forth in the order shown across the top row. The matrix $A$ consists of the $\chi$-images of the entries shown in the table
(neglecting the final column).
From the augmented matrix $[A\, B]$, we observe the first two rows generate a closed system of equations,
namely $c_4+2c_{020}=0$ and $-2c_{020}=1$. This system of equations corresponds to the following smaller 
augmented matrix subsystem $[A_0\,B_0]$ for $c_4$ and $c_{020}$, where
\begin{equation*}
  A_0=\begin{pmatrix} 1 & 2\\ 0 & -2 \end{pmatrix}\qquad\text{and}\qquad B_0=\begin{pmatrix} 0\\1 \end{pmatrix}.
\end{equation*}
Hence we deduce $c_4=1$ and $c_{020}=-\frac12$. With these values in hand, we then observe that the next two
rows also generate a closed system of equations for $c_{200}$ and $c_{011}$ given by $4c_4+2c_{020}+2c_{200}+2c_{011}=0$ and $2c_{020}-2c_{200}+2c_{011}=0$.
This linear system of equations for the two unknowns $c_{200}$ and $c_{011}$ is represented by the smaller augmented matrix subsystem $[A_1\,B_1]$, where,
\begin{equation*}
  A_1=\begin{pmatrix} 2 & 2\\ -2 & 2 \end{pmatrix}\qquad\text{and}\qquad B_1=\begin{pmatrix} -3\\1 \end{pmatrix}.
\end{equation*}
Solving the linear system of equations, we deduce that $c_{200}=-1$ and $c_{011}=-1/2$. The next four rows in the augmented matrix $[A\, B]$,
given the coefficients we have already solved for, generate a closed system of equations for $c_{110}$, $c_{101}$, $c_{002}$ and $c_{00000}$,
generated by the smaller augment matrix $[A_2\, B_2]$, where,
\begin{equation*}
A_2=\begin{pmatrix}
1 & 1 & 2 & 4 \\ 
-1& 1 & 0 & -4\\ 
1 &-1 & 0 & -4\\ 
-1&-1 & 2 & 4 
\end{pmatrix}
\qquad\text{and}\qquad B_2=\begin{pmatrix}
-5/2\\ 
-5/2\\ 
-1/2\\ 
-3/2 
\end{pmatrix}.
\end{equation*}
The solution to this system is given by $c_{110}=-1/2$, $c_{101}=-3/2$, $c_{002}=-1$ and $c_{00000}=3/8$.
It is easy to check the equations represented by the remaining rows in the big augmented matrix $[A\, B]$
above, are consistent. Thus, we deduce that $\mathfrak e_4\bigl([\bs 0]\bigr)=\pi_4$, where the coefficients $c_4$, $c_{020}$ and so forth,
are given by the unique values outlined above.
The fourth order non-commutative nonlinear Schr\"odinger equation for $g=\lb G\rb$, with $\mathcal M_4=\mu_4\mathrm{i}\mathcal I$, is given by,
 \begin{align*}
   \mathcal M_4^{-1}\pa_tg=&\;\pa^4g-(\pa^2g)g^2-\tfrac12g(\pa^2g)g-g^2(\pa^2g)\\
     &\;-\tfrac12(\pa g)^2g-\tfrac32(\pa g)g(\pa g)-\tfrac12g(\pa g)^2+\tfrac38g^5. 
 \end{align*}
This matches the form given in Malham~\cite{Malham:quinticNLS} and Nijhoff \textit{et al.} \cite[eq.~B.4a]{NQVDLCI}.
\end{example}
\begin{remark}[Basis elements]\label{rmk:basiselements}
  In Tables~\ref{table:NLS3} and \ref{table:NLS4} we record the basis elements of the form
  $[w\times\bs\varphi]$ in the far left column. The composition components $w$ are compositions of $n$,
  i.e. compositions of $3$ and $4$, in the respective tables. The $\mathbb Z_2^\ast$-component $\bs\varphi$ in the basis element
  is in principle any possible $(|w|+1)$-tuples that can be constructed from
  $\{\bs0,\bs0^\dag\}\cong\mathbb Z_2$. However, recall $\bigl[(w\times\bs\varphi)^\dag\bigr]=-[w\times\bs\varphi]$
  and the Definition~\ref{rmk:convention}. 
  Using this property for $\bigl[(w\times\bs\varphi)^\dag\bigr]$, for any basis element $[w\times\bs\varphi]$, we can thus always
  arrange for the first component of $\bs\varphi$ to be `$\bs0$'; as can be observed in the tables. 
  The basis elements are ordered according to descent order for the compositions $w$ and natural binary ordering for the
  $\mathbb Z_2^\ast$-components $\bs\varphi$. For more details see Definition~\ref{def:naturalordering}
  in Section~\ref{sec:hierarchycoding}, and the subsequent discussion therein.
  Note that though the first component of $\bs\varphi$ can always be arranged to be $\bs0$,
  in our computations involving P\"oppe products, we often utilise the symmetry $\bigl[(w\times\bs\varphi)^\dag\bigr]=-[w\times\bs\varphi]$
  in order to use the P\"oppe products listed in Lemma~\ref{lemma:skewandsymmPoppeproducts}.
  Thus temporarily, the first component in some factors is sometimes $\bs0^\dag$.
  However, we always use the same symmetry again to convert the final answer to the form with $\bs0$ as the
  first component in the basis element.
\end{remark}

\begin{landscape}
\begin{table}
\caption{Non-zero signature coefficients appearing
in the expansion of the \emph{P\"oppe polynomial} $\pi_4$ in Example~\ref{ex:fourthorder}.
The coefficients are the $\chi$-images of the signature entries shown.
Each column shows the factor contributions to the real coefficients of the basis elements
shown in the very left column, for each of the monomials in $\pi_4$ shown
across the top row. The final column represents the coefficient on the right-hand side of the equation $\pi_4=[\bs0 4\bs0^\dag]$.}
\label{table:NLS4}
\begin{center}
\begin{tabular}{|l|cccccccc|c|}
\hline
$\phantom{\biggl|}$ basis &$[\bs 4]$&$[\bs0]\,[\bs2]\,[\bs0]$ &$[\bs2]\,[\bs0]^{2}$&$[\bs0]\,[\bs1]^{2}$&$[\bs1]^{2}\,[\bs0]$&$[\bs1]\,[\bs0]\,[\bs1]$
&$[\bs0]^{2}\,[\bs2]$&$[\bs0]^{5}$&$B$\\
\hline
$\phantom{\Bigl|}$ $[\bs04\bs0]$     &$4$& $2\cdot(0\ot2\ot0)$  &&&&&&&\\
$\phantom{\Bigl|}$ $[\bs04\bs0^\dag]$ &   & $-2\cdot(0\ot2\ot0)$ &&&&&&&$1$\\
\hline
$\phantom{\Bigl|}$ $[\bs03\bs01\bs0]$          & $31$ & $0\ot2\ob0$ & $2\cdot(2\ot0\ot0)$  & $2\cdot(0\ot1\ot1)$ &&&&&\\
$\phantom{\Bigl|}$ $[\bs03\bs0^\dag1\bs0^\dag]$ &      & $0\ot2\ob0$ & $-2\cdot(2\ot0\ot0)$ & $2\cdot(0\ot1\ot1)$ &&&&&\\
\hline
$\phantom{\Bigl|}$ $[\bs02\bs02\bs0]$     &$22$& $0\ot11\ot0$ &$2\cdot(2\ot0\ot0)$ & $0\ot1\ot1$ & $1\ot1\ot0$ & $1\ot0\ot1$ & $2\cdot(0\ot0\ot2)$& $4\cdot(0\ot0\ot0\ot0\ot0)$ &\\
$\phantom{\Bigl|}$ $[\bs02\bs02\bs0^\dag]$ &  & $-0\ot11\ot0$& $-2\cdot(2\ot0\ot0)$ & $0\ot1\ot1$ & $-1\ot1\ot0$ & $1\ot0\ot1$ & & $-4\cdot(0\ot0\ot0\ot0\ot0)$ &\\
$\phantom{\Bigl|}$ $[\bs02\bs0^\dag2\bs0]$ &  & $-0\ot11\ot0$& & $0\ot1\ot1$ & $1\ot1\ot0$ & $-1\ot0\ot1$ & & $-4\cdot(0\ot0\ot0\ot0\ot0)$ &\\
$\phantom{\Bigl|}$ $[\bs02\bs0^\dag2\bs0^\dag]$ &  & $0\ot11\ot0$& & $0\ot1\ot1$ & $-1\ot1\ot0$ & $-1\ot0\ot1$  & $2\cdot(0\ot0\ot2)$& $4\cdot(0\ot0\ot0\ot0\ot0)$ &\\
\hline
$\phantom{\Bigl|}$ $[\bs01\bs03\bs0]$     &$13$  & $0\ob2\ot0$ & & & $2\cdot(1\ot1\ot0)$  &  & $2\cdot(0\ot0\ot2)$ &&\\
$\phantom{\Bigl|}$ $[\bs01\bs03\bs0^\dag]$ &      & $-0\ob2\ot0$ & &  & $-2\cdot(1\ot1\ot0)$ &   & $2\cdot(0\ot0\ot2)$ &&\\
\hline
$\phantom{\Bigl|}$ $[\bs02\bs01\bs01\bs0]$ &$211$& $0\ot11\ob0$& $2\cdot(2\ob0\ot0)$& $0\ot1\ob1$ & $1\ot1\ob0$ & $1\ot0\ob1$ & $2\cdot(0\ot0\ot11)$& $8\cdot(0\ot0\ot0\ot0\ot0)$ &\\
$\phantom{\Bigl|}$ $[\bs02\bs0^\dag1\bs0^\dag1\bs0^\dag]$ & & $-0\ot11\ob0$ & & $-0\ot1\ob1$ & $1\ot1\ob0$ & $1\ot0\ob1$ & $-2\cdot(0\ot0\ot11)$& $-8(0\ot0\ot0\ot0\ot0)$ &\\
$\phantom{\Bigl|}$ $[\bs01\bs02\bs01\bs0]$ &$121$& $0\ob2\ob0$ & $2\cdot(11\ot0\ot0)$ & $0\ob1\ot1$ &  $1\ot1\ob0$ & $1\ot0\ob1$ & $2\cdot(0\ot0\ot11)$& $8\cdot(0\ot0\ot0\ot0\ot0)$ &\\
$\phantom{\Bigl|}$ $[\bs01\bs02\bs0^\dag1\bs0^\dag]$ & & & $-2\cdot(11\ot0\ot0)$ & $0\ob1\ot1$ &  $1\ot1\ob0$ & $1\ot0\ob1$  & $-2\cdot(0\ot0\ot11)$& $-8\cdot(0\ot0\ot0\ot0\ot0)$ &\\
$\phantom{\Bigl|}$ $[\bs01\bs01\bs02\bs0]$ & $112$ & $0\ob11\ot0$ & $2\cdot(11\ot0\ot0)$ & $0\ob1\ot1$ &  $1\ob1\ot0$ & $1\ob0\ot1$  & $2\cdot(0\ot0\ob2)$& $8\cdot(0\ot0\ot0\ot0\ot0)$ &\\
$\phantom{\Bigl|}$ $[\bs01\bs01\bs02\bs0^\dag]$ & & $-0\ob11\ot0$ & $-2\cdot(11\ot0\ot0)$ & $0\ob1\ot1$ &  $-1\ob1\ot0$ & $1\ob0\ot1$ & & $-8\cdot(0\ot0\ot0\ot0\ot0)$ &\\
$\phantom{\Bigl|}$ $[\bs01\bs01\bs01\bs01\bs0]$ & $1111$ & $0\ob11\ob0$ & $2\cdot(11\ob0\ot0)$ & $0\ob1\ob1$ & $1\ob1\ob0$ & $1\ob0\ob1$  & $2\cdot(0\ot0\ob11)$& $16\cdot(0\ot0\ot0\ot0\ot0)$ &\\
\hline
\end{tabular}
\end{center}
\end{table}
\end{landscape}

\section{Non-commutative Lax hierarchy and the sine-Gordon equation}\label{sec:Laxhierarchy}
Herein we establish the non-commutative nonlinear Schr\"odinger and modified Korteweg--de Vries Lax hierarchy iteratively,
order by order. The non-commutative modified Korteweg--de Vries hierarchy can be found for example in Carillo and Schiebold~\cite[eq.~(9)]{CSI}.
Importantly, this iterative hierarchy extends to all negative orders. The first member of negative order, i.e.\/ for which $n=-1$, corresponds to
the non-commutative sine-Gordon cubic-form equation in Example~\ref{ex:SGform}; see, for example, Tracy and Widom~\cite{TW} for the scalar case.
Establishing the hierarchy for all orders $n\in\mathbb Z$ is particularly simple in the P\"oppe algebra $\mathbb C\la\mathbb Z_{\bs0}\ra$.
We need to define some natural actions on $\mathbb C\la\mathbb Z_{\bs0}\ra$ first.
\begin{definition}[Adjoint and symmetric algebra products and actions]
  We define the standard commutation and symmetric products, respectively,
  $\mathrm{ad}\colon\mathbb C\la\mathbb Z_{\bs0}\ra\times\mathbb C\la\mathbb Z_{\bs0}\ra\to\mathbb C\la\mathbb Z_{\bs0}\ra$ and
  $\mathrm{sd}\colon\mathbb C\la\mathbb Z_{\bs0}\ra\times\mathbb C\la\mathbb Z_{\bs0}\ra\to\mathbb C\la\mathbb Z_{\bs0}\ra$.
  For example, for $[\bs0]\in\mathbb C[\mathbb Z_{\bs0}]\subset\mathbb C\la\mathbb Z_{\bs0}\ra$
  and any word $w\times\bs\varphi\in\mathbb C\la\mathbb Z_{\bs0}\ra$ we have,
    \begin{align*}
      \mathrm{ad}_{[\bs0]}(w\times\bs\varphi)&\coloneqq[\bs0]\,(w\times\bs\varphi)-(w\times\bs\varphi)\,[\bs0], \\
      \mathrm{sd}_{[\bs0]}(w\times\bs\varphi)&\coloneqq[\bs0]\,(w\times\bs\varphi)+(w\times\bs\varphi)\,[\bs0], 
    \end{align*}
    which is the exclusive form of their action we use below.
    We also define the following two actions on the skew-P\"oppe algebra $\mathbb C[\mathbb Z_{\bs0}]$.
    For $[\bs0]\in\mathbb C[\mathbb Z_{\bs0}]$ set,
    \begin{align*}
    A\coloneqq\frac14\mathrm{ad}_{[\bs0]}\mfd^{-1}\mathrm{ad}_{[\bs0]},\\
    S\coloneqq\frac14\mathrm{sd}_{[\bs0]}\mfd^{-1}\mathrm{sd}_{[\bs0]}.
    \end{align*}
\end{definition}
That the actions of $A$ and $S$ are closed in $\mathbb C[\mathbb Z_{\bs0}]$ is established as part of
the proof of the following crucial lemma.
\begin{lemma}[Natural iteration]\label{lemma:naturaliteration}
  For any $n\in\mathbb Z$ we have:
  \begin{align*}
    (\mfd-A)[\bs0n\bs0^\dag]&=[\bs0(n+1)\bs0],\\
    (\mfd-S)[\bs0n\bs0]&=[\bs0(n+1)\bs0^\dag].
  \end{align*}
\end{lemma}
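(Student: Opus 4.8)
The plan is to compute $\mfd[\bs0n\bs0^\dag]$ and $\mfd[\bs0n\bs0]$ directly from the derivation endomorphism (Definition~\ref{def:derivation}) and then show that the ``unwanted'' terms are precisely $A[\bs0n\bs0^\dag]$ and $S[\bs0n\bs0]$ respectively. First I would apply $\mfd$ to the word $\bs0n\bs0^\dag$: by Definition~\ref{def:derivation} there are three contributions, namely $\bs0(n+1)\bs0^\dag$ from bumping the middle letter, $(\mfd\bs0)n\bs0^\dag=\bs01\bs0n\bs0^\dag$ from acting on the left $\bs0$, and $\bs0n(\mfd\bs0^\dag)=\bs0n\bs0^\dag1^\dag\bs0^\dag=-\bs0n\bs0^\dag1\bs0^\dag$ from acting on the right $\bs0^\dag$. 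Taking skew-symmetric parts, $\mfd[\bs0n\bs0^\dag]=[\bs0(n+1)\bs0^\dag]+[\bs01\bs0n\bs0^\dag]-[\bs0n\bs0^\dag1\bs0^\dag]$; using $[w\times\bs\varphi]^\dag=-[w\times\bs\varphi]$ and Definition~\ref{rmk:convention} the last bracket rewrites so that the two ``quasi''-type terms combine symmetrically. The task then reduces to identifying $[\bs01\bs0n\bs0^\dag]-[\bs0n\bs0^\dag1\bs0^\dag]$ with $A[\bs0n\bs0^\dag]=\tfrac14\,\mathrm{ad}_{[\bs0]}\mfd^{-1}\mathrm{ad}_{[\bs0]}[\bs0n\bs0^\dag]$, and I would do the analogous computation for $\mfd[\bs0n\bs0]$ versus $S[\bs0n\bs0]$.

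Next I would evaluate $\mathrm{ad}_{[\bs0]}[\bs0n\bs0^\dag]$ and $\mathrm{sd}_{[\bs0]}[\bs0n\bs0]$ using the skew and symmetric P\"oppe products of Lemma~\ref{lemma:skewandsymmPoppeproducts}, together with the special identities for $[\bs0]$ collected in Remark~\ref{rmk:zeroword}. Concretely, $[\bs0]\,[\bs0n\bs0^\dag]$ and $[\bs0n\bs0^\dag]\,[\bs0]$ each expand into a term with one letter bumped by $1$ and a quasi term $2\cdot[\bs01\bs0n\bs0^\dag]$ or $2\cdot[\bs0n\bs0^\dag1\bs0]$; subtracting, the bumped-by-$1$ terms either cancel or turn into a $\mfd$-image, so that $\mathrm{ad}_{[\bs0]}[\bs0n\bs0^\dag]$ is (up to factors of $2$) an object on which $\mfd^{-1}$ acts cleanly — this is where the choice $\mfd^{-1}$ rather than a genuine inverse matters, and I would check the relevant element is in the image of $\mfd$. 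Applying $\mfd^{-1}$ and then $\mathrm{ad}_{[\bs0]}$ once more, and matching factors ($\tfrac14$ against the two factors of $2$ from the quasi terms), should reproduce exactly the combination $[\bs01\bs0n\bs0^\dag]-[\bs0n\bs0^\dag1\bs0^\dag]$ appearing in $\mfd[\bs0n\bs0^\dag]$. Along the way this shows $A$ maps $[\bs0n\bs0^\dag]$ into $\mathbb C[\mathbb Z_{\bs0}]$, and likewise $S$ on $[\bs0n\bs0]$, which establishes the closedness claim made before the lemma; the parity bookkeeping (two skew factors give a symmetric form, three give a skew form) guarantees $A$ and $S$ land in the skew-P\"oppe subalgebra.

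The main obstacle I anticipate is the bookkeeping around $\mfd^{-1}$: since $\mfd$ is not injective (e.g. it kills nothing obvious but its image is a proper subspace, and $[\bs0]$ itself is $\mfd^{-1}$ of $[\bs1]$ only up to the constant letter content), one must argue that each element to which $\mfd^{-1}$ is applied lies unambiguously in $\mfd$'s image and that the representative chosen is consistent with the convention used to define $A$ and $S$. The cleanest route is probably to verify the identity first on the signature-expansion level or on short words ($n=0,1$, compare with Examples~\ref{ex:ODE}--\ref{ex:KdVexpansion}), extract the pattern, and then give the general word-by-word computation, carefully tracking the $\bs0$ versus $\bs0^\dag$ labels and the sign $(-1)^{|w|}$ from $(w\times\bs\varphi)^\dag$. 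The $n<0$ case needs no separate treatment since Definition~\ref{def:derivation} and the P\"oppe products are stated for all $a,b\in\mathbb Z$; I would simply remark that the manipulations are formal in the letter arithmetic and hence valid for every $n\in\mathbb Z$.
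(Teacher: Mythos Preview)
Your approach is essentially the same as the paper's, and it is correct. The one place where you express uncertainty---the well-definedness of $\mfd^{-1}$---is resolved in the paper by a clean intermediate identity that you should aim for directly rather than check case-by-case: the paper shows
\[
\mathrm{ad}_{[\bs0]}[\bs0 n\bs0^\dag]=2\cdot\mfd\{\bs0 n\bs0^\dag\}
\qquad\text{and}\qquad
\mathrm{sd}_{[\bs0]}[\bs0 n\bs0]=2\cdot\mfd\{\bs0 n\bs0\},
\]
so that $\mfd^{-1}$ applied to $\mathrm{ad}_{[\bs0]}[\bs0 n\bs0^\dag]$ is unambiguously $2\{\bs0 n\bs0^\dag\}$, and similarly in the second case. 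Once you have this, one further application of $\tfrac14\,\mathrm{ad}_{[\bs0]}$ (respectively $\tfrac14\,\mathrm{sd}_{[\bs0]}$) using the P\"oppe product rules gives $A[\bs0 n\bs0^\dag]=\mfd[\bs0 n\bs0^\dag]-[\bs0(n+1)\bs0]$ and $S[\bs0 n\bs0]=\mfd[\bs0 n\bs0]-[\bs0(n+1)\bs0^\dag]$ directly, with no need to compute $\mfd[\bs0 n\bs0^\dag]$ separately and then match. Your plan to compute $\mfd$ first and $A$ second works too; it is just slightly more bookkeeping.
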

\begin{proof}
By direct computation using Lemma~\ref{lemma:skewandsymmPoppeproducts} we have,
 \begin{align*}
   \mathrm{ad}_{[\bs0]}\,[\bs0n\bs0^\dag]&=[\bs0][\bs0n\bs0^\dag]-[\bs0n\bs0^\dag][\bs0]\\
   &=2\cdot\bigl(\{\bs0(n+1)\bs0^\dag\}+\{\bs01\bs0n\bs0^\dag\}+\{\bs0n\bs0^\dag1^\dag\bs0^\dag\}\bigr)\\
   &=2\cdot\mfd\,\{\bs0n\bs0^\dag\}.  
  \end{align*}
Hence we observe, using that $\{\bs0 n\bs0^\dag\}=-\{\bs0^\dag n\bs0\}$ we have,
\begin{align*}
  A\,[\bs0n\bs0^\dag]
  &=\tfrac12\bigl([\bs0]\,\{\bs0n\bs0^\dag\}+\{\bs0^\dag n\bs0\}\,[\bs0]\bigr)\\
  &=[\bs0(n+1)\bs0^\dag]-[\bs0(n+1)\bs0]+[\bs01\bs0n\bs0^\dag]+[\bs0^\dag n\bs01\bs0]\\
  &=\mfd\,[\bs0n\bs0^\dag]-[\bs0(n+1)\bs0].
\end{align*}
This gives the first result. Again, by direct computation, we have,
\begin{align*}
  \mathrm{sd}_{[\bs0]}\,[\bs0n\bs0]
  =&\;[\bs0]\,[\bs0n\bs0]+[\bs0n\bs0]\,[\bs0]\\
  =&\;2\,\bigl(\{\bs0(n+1)\bs0\}+\{\bs01\bs0n\bs0\}+\{\bs0n\bs01\bs0\}\bigr)\\
  =&\;2\cdot\mfd\,\{\bs0n\bs0\}.
\end{align*}
Hence we observe, 
\begin{align*}
  S\,[\bs0n\bs0]
  =&\;\tfrac12\bigl([\bs0]\,\{\bs0n\bs0\}+\{\bs0n\bs0\}\,[\bs0]\bigr)\\
  =&\;[\bs0(n+1)\bs0]+[\bs01\bs0n\bs0]+[\bs0n\bs01\bs0]-[\bs0(n+1)\bs0^\dag]\\
  =&\;\mfd\,[\bs0n\bs0]-[\bs0(n+1)\bs0^\dag].
\end{align*}
This gives the second result.\qed
\end{proof}
The following immediate corollary is established straightforwardly by induction, both when $n$ is positive as well as negative.
For the remainder of this section we refer to P\"oppe polyomials $\pi_n$ for $n\in\mathbb Z$, though in Definition~\ref{def:Poppepoly},
$n\in\mathbb N\cup\{0\}$ for which $\pi_n=\pi_n\bigl([\bs0],[\bs1],\ldots,[\bs n]\bigr)$. The form of $\pi_n$ for the
negative integer cases is given presently.
\begin{corollary}[Non-commutative Lax hierarchy iteration]\label{cor:Laxhierarchyiteration}
Let $n\in\mathbb Z$ be a given integer, and consider the equation, $\mathfrak e_n\bigl([\bs 0]\bigr)=\pi_n$,
where $\pi_n$ is a P\"oppe polynomial and $\mathfrak e_n\bigl([\bs 0]\bigr)$ equals $[\bs 0n\bs 0]$ if $n$ is odd, and equals $[\bs 0n\bs 0^\dag]$ if $n$ is even.
For $n=0,1,2,3,4$ such polynomials $\pi_n$ exist, as demonstrated in Examples~\ref{ex:ODE}---\ref{ex:fourthorder}. Then we have,
\begin{equation*}
  \mathfrak e_{n+1}\bigl([\bs 0]\bigr)=\begin{cases} (\mfd-A)\pi_n, & \text{when $n$ is even,}\\
                                                    (\mfd-S)\pi_n, & \text{when $n$ is odd.}
  \end{cases}
\end{equation*}
\end{corollary}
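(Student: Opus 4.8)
The approach is to recognise that, once Lemma~\ref{lemma:naturaliteration} is in hand, the corollary is essentially a one-line substitution. First recall Definition~\ref{def:timederiv}: by definition $\mathfrak e_n([\bs0])$ \emph{is} the skew-form $[\bs0n\bs0]$ when $n$ is odd and $[\bs0n\bs0^\dag]$ when $n$ is even. Thus the standing hypothesis $\mathfrak e_n([\bs0])=\pi_n$ asserts exactly that this single skew-form coincides with the P\"oppe polynomial $\pi_n$ in $\mathbb C[\mathbb Z_{\bs0}]$. The plan is then to apply the appropriate iteration operator to both sides of this identity and read off the left-hand side using Lemma~\ref{lemma:naturaliteration}.

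Concretely, if $n$ is even then $\pi_n=[\bs0n\bs0^\dag]$ and the first identity of Lemma~\ref{lemma:naturaliteration} gives
\begin{equation*}
(\mfd-A)\,\pi_n=(\mfd-A)\,[\bs0n\bs0^\dag]=[\bs0(n+1)\bs0]=\mathfrak e_{n+1}([\bs0]),
\end{equation*}
the last step being Definition~\ref{def:timederiv} with $n+1$ odd. Symmetrically, if $n$ is odd then $\pi_n=[\bs0n\bs0]$ and the second identity of Lemma~\ref{lemma:naturaliteration} gives $(\mfd-S)\,\pi_n=[\bs0(n+1)\bs0^\dag]=\mathfrak e_{n+1}([\bs0])$, with $n+1$ even. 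Since Lemma~\ref{lemma:naturaliteration} holds for \emph{every} $n\in\mathbb Z$, so does the displayed relation; this is the whole content of the stated equality.

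The word ``induction'' refers to using this relation to generate the hierarchy: the base data are the explicit P\"oppe polynomials $\pi_0,\dots,\pi_4$ of Examples~\ref{ex:ODE}--\ref{ex:fourthorder} (and $\pi_{-1}$, the sine--Gordon cubic form, discussed below), and from any $\pi_n$ one sets $\pi_{n+1}\coloneqq(\mfd-A)\pi_n$ or $(\mfd-S)\pi_n$ according to the parity of $n$. For this to be meaningful one should check that $\pi_{n+1}$ is again an admissible P\"oppe polynomial in the sense of Definition~\ref{def:Poppepoly}. Grading each basis element $[w\times\bs\varphi]$ by the \emph{order} $|w|_1$, the sum of the integer letters of $w$, one has: the P\"oppe product of an order-$p$ element with an order-$q$ element is a sum of order-$(p+q+1)$ elements, the derivation $\mfd$ raises order by one, and hence $\mathrm{ad}_{[\bs0]}$ and $\mathrm{sd}_{[\bs0]}$ raise order by one, $\mfd^{-1}\mathrm{ad}_{[\bs0]}$ and $\mfd^{-1}\mathrm{sd}_{[\bs0]}$ preserve order, and $A,S$ raise order by one; so $\mfd-A$ and $\mfd-S$ send an order-$n$ element of $\mathbb C[\mathbb Z_{\bs0}]$ to an order-$(n+1)$ element of $\mathbb C[\mathbb Z_{\bs0}]$, closure in the skew algebra being part of the proof of Lemma~\ref{lemma:naturaliteration}. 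This matches the shape of $\pi_{n+1}$; re-expressing the result explicitly as a linear combination of odd products of signature expansions is then the same basis-change bookkeeping carried out in Theorem~\ref{thm:main}.

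The point I expect to need the most care is the well-definedness of $\mfd^{-1}$ inside $A$ and $S$ and, for the negative orders, the invertibility of $\mfd-A$ and $\mfd-S$. The former is harmless in the forward direction: as in the proof of Lemma~\ref{lemma:naturaliteration}, $\mathrm{ad}_{[\bs0]}$ (resp.\ $\mathrm{sd}_{[\bs0]}$) applied to a skew-form always produces $2\,\mfd$ of a symmetric form, so $\mfd^{-1}$ is only ever evaluated on the image of $\mfd$. For $n<0$, however, the iteration must be run backwards---solving $(\mfd-S)\pi_{n}=\mathfrak e_{n+1}([\bs0])$ (or $(\mfd-A)\pi_n=\mathfrak e_{n+1}([\bs0])$) for $\pi_n$---and this is where the primitive $\pa^{-1}$ re-enters and where the admissible form of $\pi_n$ for negative $n$ must be specified, so that the recursion closes in the appropriate completion of $\mathbb C[\mathbb Z_{\bs0}]$.
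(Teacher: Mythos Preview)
Your proposal is correct and follows exactly the paper's approach: the corollary is an immediate substitution of $\pi_n=\mathfrak e_n([\bs0])$ into Lemma~\ref{lemma:naturaliteration}, with the parity of $n$ determining which identity to use. Your additional remarks on grading and the well-definedness of $\mfd^{-1}$ go beyond what the paper records (its proof is a single line citing the lemma and induction), but they are accurate and do not alter the argument.
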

\begin{proof}
This follows directly from Lemma~\ref{lemma:naturaliteration} for $n\in\mathbb Z$ by induction.
\end{proof} 
Further, we have the following additional immediate corollary.
\begin{corollary}[Non-commutative Lax hierarchy]\label{cor:Laxhierarchy}
  For any $n\in\mathbb Z$, the $(n+1)$th order member equation of the non-commutative Lax hierarchy is given by,
  \begin{equation*}
    \mathfrak e_{n+1}\bigl([\bs 0]\bigr)=\begin{cases} (\mfd-A)\,\bigl((\mfd-S)(\mfd-A)\bigr)^{\frac{n}{2}}\,[\bs0], & \text{when $n$ is even,}\\
                                                      \bigl((\mfd-S)(\mfd-A)\bigr)^{\frac12(n+1)}\,[\bs0], & \text{when $n$ is odd.}
    \end{cases}
    \end{equation*}
\end{corollary}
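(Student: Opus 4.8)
The plan is to deduce Corollary~\ref{cor:Laxhierarchy} from the one-step iteration in Corollary~\ref{cor:Laxhierarchyiteration} simply by composing steps and tracking parities. First I would reindex, setting $m\coloneqq n+1$, so that the $m$th order member is $\mathfrak e_m\bigl([\bs0]\bigr)$, equal to $[\bs0m\bs0]$ when $m$ is odd and $[\bs0m\bs0^\dag]$ when $m$ is even. Define composite endomorphisms $\Phi_0\coloneqq\id$ and, recursively, $\Phi_m\coloneqq(\mfd-A)\,\Phi_{m-1}$ when $m$ is odd and $\Phi_m\coloneqq(\mfd-S)\,\Phi_{m-1}$ when $m$ is even. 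A short bookkeeping check then shows $\Phi_{2k}=\bigl((\mfd-S)(\mfd-A)\bigr)^{k}$ and $\Phi_{2k+1}=(\mfd-A)\bigl((\mfd-S)(\mfd-A)\bigr)^{k}$, which are exactly the operators appearing in the two cases of the statement. Hence the corollary is equivalent to the single identity $\mathfrak e_m\bigl([\bs0]\bigr)=\Phi_m\bigl([\bs0]\bigr)$ for all $m\in\mathbb Z$.

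For $m\geqslant0$ I would prove this by induction on $m$. The base case $m=0$ is Example~\ref{ex:ODE}: $\mathfrak e_0\bigl([\bs0]\bigr)=[\bs0]=\Phi_0\bigl([\bs0]\bigr)$. For the inductive step, assume $\mathfrak e_{m-1}\bigl([\bs0]\bigr)=\Phi_{m-1}\bigl([\bs0]\bigr)$. Applying Corollary~\ref{cor:Laxhierarchyiteration} with its integer taken to be $m-1$ gives $\mathfrak e_{m}\bigl([\bs0]\bigr)=(\mfd-A)\,\pi_{m-1}$ when $m-1$ is even and $\mathfrak e_{m}\bigl([\bs0]\bigr)=(\mfd-S)\,\pi_{m-1}$ when $m-1$ is odd; since $\pi_{m-1}=\mathfrak e_{m-1}\bigl([\bs0]\bigr)$ by the defining relation of that corollary, the inductive hypothesis yields $\mathfrak e_m\bigl([\bs0]\bigr)=(\mfd-A)\,\Phi_{m-1}\bigl([\bs0]\bigr)=\Phi_m\bigl([\bs0]\bigr)$ when $m$ is odd, and the analogous computation with $\mfd-S$ when $m$ is even. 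This closes the induction and covers all $n\geqslant-1$.

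For $n\leqslant-2$, i.e. $m\leqslant-1$, I would run the same argument downwards, using that Lemma~\ref{lemma:naturaliteration} holds for every $n\in\mathbb Z$. The relations $(\mfd-A)[\bs0n\bs0^\dag]=[\bs0(n+1)\bs0]$ and $(\mfd-S)[\bs0n\bs0]=[\bs0(n+1)\bs0^\dag]$ can be read as defining $\mathfrak e_{n}\bigl([\bs0]\bigr)$ (equivalently the negative-order P\"oppe polynomial $\pi_n$) from $\mathfrak e_{n+1}\bigl([\bs0]\bigr)$ by inverting $\mfd-S$ or $\mfd-A$; note $\mfd^{-1}$ is just the primitive $\pa^{-1}$ already in use, so no new formal machinery is needed, and $A,S$ act on $\mathbb C[\mathbb Z_{\bs0}]$ by the proof of Lemma~\ref{lemma:naturaliteration}. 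Starting from $\mathfrak e_0\bigl([\bs0]\bigr)=[\bs0]=[\bs00\bs0^\dag]$ and applying these inverse relations in alternating order gives $\mathfrak e_{-1}\bigl([\bs0]\bigr)=(\mfd-S)^{-1}[\bs0]$, $\mathfrak e_{-2}\bigl([\bs0]\bigr)=(\mfd-A)^{-1}(\mfd-S)^{-1}[\bs0]$, and so on, matching the negative powers $\bigl((\mfd-S)(\mfd-A)\bigr)^{n/2}$ and $(\mfd-A)\bigl((\mfd-S)(\mfd-A)\bigr)^{\frac12(n+1)}$ in the statement; the first negative case $\mathfrak e_{-1}\bigl([\bs0]\bigr)$ is then to be identified with the sine--Gordon cubic form of Example~\ref{ex:SGform}.

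The argument is essentially pure bookkeeping, so there is no deep obstacle; the one point needing care is the parity matching between $n$ and $m=n+1$ together with the noncommutativity of $\mfd-A$ and $\mfd-S$ (so the \emph{order} of composition matters), and a clean reading of the negative powers of $\mfd-S$ and $\mfd-A$ via Lemma~\ref{lemma:naturaliteration} rather than as genuine operator inverses. I would therefore write out the nonnegative-order induction explicitly and then remark that the negative-order case is the same induction read in reverse.
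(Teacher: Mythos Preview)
Your proposal is correct and follows essentially the same approach as the paper: the paper states Corollary~\ref{cor:Laxhierarchy} as an ``additional immediate corollary'' of Corollary~\ref{cor:Laxhierarchyiteration} without further proof, and your argument simply unwinds that immediacy by induction on the order (forward for $m\geqslant0$, backward for $m<0$), tracking the alternating parity. Your bookkeeping with $\Phi_m$ and the check that $(\mfd-A)\bigl((\mfd-S)(\mfd-A)\bigr)^{-1}=(\mfd-S)^{-1}$ for the first negative step is exactly how the paper handles Example~\ref{ex:minusonecase}.
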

The Lax hierarchy stated in Corollary~\ref{cor:Laxhierarchy}, at each odd order $n$, exactly matches that quoted 
for the non-commutative modified Korteweg--de Vries hierarchy in Carillo and Schiebold~\cite[eq.~(9)]{CSI}.
Given the main existence and uniqueness result we prove in Section~\ref{sec:hierarchycoding}, this is expected.
The cases $n=0,1,2,3,4$ in Corollary~\ref{cor:Laxhierarchy} naturally match the non-commutative equation
members given in Examples~\ref{ex:ODE}---\ref{ex:fourthorder}. However, Corollary~\ref{cor:Laxhierarchy} also
applies for negative $n$. Consider the example case of order `$-1$'. 
\begin{example}[Non-commutative sine-Gordon cubic-form equation: order `$-1$']\label{ex:minusonecase}
Setting $n=-2$, a case when $n$ is even, in Corollary~\ref{cor:Laxhierarchy}, generates the equation,
\begin{equation*}
  \mathfrak e_{-1}\bigl([\bs 0]\bigr)=(\mfd-A)\bigl((\mfd-S)(\mfd-A)\bigr)^{-1}\,[\bs0]\quad\Leftrightarrow\quad
  (\mfd-S)\,\mathfrak e_{-1}\bigl([\bs 0]\bigr)=[\bs0].
\end{equation*}
We can express the equation on the right as follows,
\begin{equation*}
    \mfd\mathfrak e_{-1}\bigl([\bs 0]\bigr)
    =[\bs0]+\tfrac14\Bigl(\bigl(\mfd^{-1}\mathfrak e_{-1}([\bs 0]^{2})\bigr)\,[\bs0]
    +[\bs 0]\,\bigl(\mfd^{-1}\mathfrak e_{-1}([\bs 0]^{2})\bigr)\Bigr),
\end{equation*}
where we have used that $\mathrm{sd}_{[\bs0]}\,\mathfrak e_{-1}([\bs 0])=\mathfrak e_{-1}([\bs 0]^2)$ from Lemma~\ref{lemma:timederivsq} just below.
This relation in $\mathbb C[\mathbb Z_{\bs0}]$ translates to the non-commutative sine-Gordon cubic-form equation 
given in Example~\ref{ex:SGform} for $g=\lb G\rb$, with $\mathcal M_{-1}=\mu_{-1}\,\id$, i.e.\/
\begin{equation*}
    \pa\pa_tg-\tfrac14\bigl((\pa^{-1}\pa_tg^2)\,g+g\,(\pa^{-1}\pa_tg^2)\bigr)=\mu_{-1}g.
\end{equation*}
Invoking the factor `2' rescaling mentioned in Remark~\ref{rmk:rescaling} gives the exact match.
\end{example}
\begin{lemma}\label{lemma:timederivsq}
  For any $n\in\mathbb Z$ we have,
  \begin{equation*}
    \mathfrak e_{n}\bigl([\bs 0]^2\bigr)=\begin{cases}
              \mathrm{sd}_{[\bs0]}\,\mathfrak e_{n}\bigl([\bs 0]\bigr), & \text{when $n$ is odd},\\
              \mathrm{ad}_{[\bs0]}\,\mathfrak e_{n}\bigl([\bs 0]\bigr), & \text{when $n$ is even}.
    \end{cases}
  \end{equation*}
\end{lemma}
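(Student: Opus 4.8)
The plan is to reduce the identity to the single structural fact $[\bs 0]^2=2\,\mfd\{\bs 0\}$ together with the commutation $\mathfrak e_n\mfd=\mfd\mathfrak e_n$. First I would record that $[\bs 0]^2=2\{\bs 01\bs 0\}$ (the last display of Remark~\ref{rmk:zeroword}) and that $\{\bs 01\bs 0\}=\mfd\{\bs 0\}$ in $\mathbb C\langle\mathbb Z_{\bs 0}\rangle$; the latter is immediate from Definition~\ref{def:derivation}, since $\mfd\bs 0=\bs 01\bs 0$, $\mfd\bs 0^\dag=\bs 0^\dag 1^\dag\bs 0^\dag$ and $(\bs 01\bs 0)^\dag=\bs 0^\dag 1^\dag\bs 0^\dag$. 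Hence $[\bs 0]^2=2\,\mfd\{\bs 0\}$; at the kernel level this is the identity $[V]^2=2\,\pa\{V\}$, a consequence of the P\"oppe product of Lemma~\ref{lemma:origPoppeproduct} and the relations $\pa V=V(\mathrm{i}P)_1V$, $\pa V^\dag=V^\dag(\mathrm{i}P)_1^\dag V^\dag$ of Lemma~\ref{lemma:calc}.

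Next I would use that $\mathfrak e_n$ is the abstract image of the operation $\mathcal M_n^{-1}\pa_t$ on operator kernels, as set up at the start of Section~\ref{sec:ncNLS}. This operation extends to an endomorphism of $\mathbb C\langle\mathbb Z_{\bs 0}\rangle$ preserving the splitting $\mathbb C[\mathbb Z_{\bs 0}]\oplus\mathbb C\{\mathbb Z_{\bs 0}\}$, and on the symmetric form $\{\bs 0\}$ it obeys the symmetric counterpart of Definition~\ref{def:timederiv}, namely $\mathfrak e_n\{\bs 0\}=\{\bs 0n\bs 0\}$ when $n$ is odd and $\mathfrak e_n\{\bs 0\}=\{\bs 0n\bs 0^\dag\}$ when $n$ is even. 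This is the same computation performed for $[V]$ in Section~\ref{sec:ncNLS}, carried out for $\{V\}=\lb V+V^\dag\rb$ in place of $[V]=\lb V-V^\dag\rb$: starting from $\pa_t\{V\}=V\pa_t(\mathrm{i}P)V+V^\dag\pa_t(\mathrm{i}P)^\dag V^\dag$, substitute the dispersion relation and move $\mathcal M_n^{-1}$ past the $V$'s using the block identities of Lemma~\ref{lemma:blockidentities}; the parity-dependent swap $V\leftrightarrow V^\dag$ induced by conjugation with $\mathcal M_n\propto(\mathrm{i}\mathcal I)^{n-1}$ is exactly what produces the $\bs 0^\dag$ in the even case. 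Since $\mfd$ and $\mathfrak e_n$ are the abstract avatars of the commuting differential operators $\pa=\pa_x$ and $\mathcal M_n^{-1}\pa_t$ (with $\mathcal M_n$ a constant matrix), they commute; combining this with the first paragraph,
\[
\mathfrak e_n\bigl([\bs 0]^2\bigr)=2\,\mathfrak e_n\bigl(\mfd\{\bs 0\}\bigr)=2\,\mfd\bigl(\mathfrak e_n\{\bs 0\}\bigr)=\begin{cases}2\,\mfd\{\bs 0n\bs 0\},&n\text{ odd},\\ 2\,\mfd\{\bs 0n\bs 0^\dag\},&n\text{ even}.\end{cases}
\]

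Finally I would invoke the two identities already established inside the proof of Lemma~\ref{lemma:naturaliteration}, namely $\mathrm{sd}_{[\bs 0]}[\bs 0n\bs 0]=2\,\mfd\{\bs 0n\bs 0\}$ and $\mathrm{ad}_{[\bs 0]}[\bs 0n\bs 0^\dag]=2\,\mfd\{\bs 0n\bs 0^\dag\}$, valid for every $n\in\mathbb Z$. Since $\mathfrak e_n([\bs 0])$ equals $[\bs 0n\bs 0]$ for odd $n$ and $[\bs 0n\bs 0^\dag]$ for even $n$, comparing with the display above gives $\mathfrak e_n([\bs 0]^2)=\mathrm{sd}_{[\bs 0]}\mathfrak e_n([\bs 0])$ for $n$ odd and $\mathfrak e_n([\bs 0]^2)=\mathrm{ad}_{[\bs 0]}\mathfrak e_n([\bs 0])$ for $n$ even, which is the claim; the whole argument is uniform in $n\in\mathbb Z$ and so also covers the negative orders. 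I expect the main obstacle to be the middle step: making precise that $\mathfrak e_n$ extends to the symmetric part of $\mathbb C\langle\mathbb Z_{\bs 0}\rangle$, that this extension commutes with $\mfd$, and that $\mathfrak e_n\{\bs 0\}=\{\bs 0n\bs 0\}$ (resp.\ $\{\bs 0n\bs 0^\dag\}$); although mechanically identical to the $[V]$-computation opening Section~\ref{sec:ncNLS}, this is the one point where the block-operator bookkeeping of Lemma~\ref{lemma:blockidentities} and the $V\leftrightarrow V^\dag$ swap distinguishing the odd and even cases must be handled carefully.
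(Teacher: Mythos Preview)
Your argument is correct but takes a different route from the paper. The paper's proof is a direct two–line Leibniz computation at the operator level: it writes $\pa_t[V]^2=(\pa_t[V])\,[V]+[V]\,(\pa_t[V])$, substitutes $\pa_t[V]=\mathcal M_n\bigl[V(\mathrm iP)_nV^{(\dag)}\bigr]$, and then uses the block identities of Lemma~\ref{lemma:blockidentities} to show that $[V]\,\mathcal M_n=\pm\mathcal M_n\,[V]$ with sign $+$ for $n$ odd and $-$ for $n$ even; this is exactly what produces $\mathrm{sd}_{[\bs 0]}$ versus $\mathrm{ad}_{[\bs 0]}$. Your approach instead bundles the parity into the value of $\mathfrak e_n\{\bs 0\}$, passes through $[\bs 0]^2=2\,\mfd\{\bs 0\}$, and closes the loop with the two displayed identities inside the proof of Lemma~\ref{lemma:naturaliteration}. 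The trade-off: the paper's proof is shorter and isolates the mechanism (the $\mathcal M_n$--$[V]$ (anti)commutation) in one stroke, while yours is more structural and recycles already-proved algebraic identities, at the cost of having to extend $\mathfrak e_n$ to $\mathbb C\{\mathbb Z_{\bs 0}\}$ and justify its commutation with $\mfd$. You have correctly identified that this extension is the only delicate point; in fact the paper sidesteps it entirely by never leaving the operator level.
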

\begin{proof}
  This result is established at the operator level. Recall $\mathcal M_{n}\coloneqq-\mu_{n}(\mathrm{i}\mathcal I)^{n-1}$
  and the properties outlined in Lemma~\ref{lemma:blockidentities}. We observe when $n$ is even, we have,
  \begin{align*}
  \pa_t[V]^2=&\;\mathcal M_{n}\bigl[V(\mathrm{i}P)_{n}V^\dag\bigr]\,[V]+[V]\,\mathcal M_{n}\bigl[V(\mathrm{i}P)_{n}V^\dag\bigr]\\
  =&\;\mathcal M_{n}\bigl[V(\mathrm{i}P)_{n}V^\dag\bigr]\,[V]-\mathcal M_{n}[V]\,\bigl[V(\mathrm{i}P)_{n}V^\dag\bigr],
  \end{align*}
  where we have used that  $[V]\,\mathcal M_{n}=\lb V-V^\dag\rb\,\mathcal M_{n}=\mathcal M_{n}\,\lb V^\dag-V\rb=-\mathcal M_{n}[V]$. 
  When $n$ is odd, we follow an analogous computation with $[V(\mathrm{i}P)_{n}V]$ replacing $[V(\mathrm{i}P)_{n}V^\dag]$ just above,
  and that in this case there is no sign change in the second term on the right as 
  $[V]\,\mathcal M_{n}=\lb V-V^\dag\rb\,\mathcal M_{n}=\mathcal M_{n}\,\lb V-V^\dag\rb=\mathcal M_{n}[V]$.
\qed
\end{proof}
Naturally we can continue to consider further negative order hierarchy member equations. For example,  
the non-commutative order `$-2$' equation can be generated by setting $n=-3$, a case when $n$ is odd,
in Corollary~\ref{cor:Laxhierarchy}. With $\mathfrak e_{-2}\bigl([\bs 0]\bigr)=[\bs 0(-2)\bs 0^\dag]$,
this generates the non-commutative equation,
\begin{equation*}
  \mathfrak e_{-2}\bigl([\bs 0]\bigr)=\bigl((\mfd-S)(\mfd-A)\bigr)^{-1}\,[\bs0]\quad\Leftrightarrow\quad
  (\mfd-S)(\mfd-A)\,\mathfrak e_{-2}\bigl([\bs 0]\bigr)=[\bs0].
\end{equation*}
And so forth.
The example integrable equations in Examples~\ref{ex:ODE}--\ref{ex:fourthorder} in Section~\ref{sec:ncNLS}
were of the form $\pa_tg=\pi(g,\pa g,\pa^2g,\ldots)$ for $n=0,1,2,3,4$. Indeed for these examples, we show 
the equation is unique in this class. In other words, at each of the orders considered, given the base dispersion equation
for $P$ and the form of the Marchenko equation, the right-hand side in the non-commutative nonlinear partial differential equation,
the `nonlinear field', is of the form $\pi=\pi(g,\pa g,\pa^2g,\ldots)$, where $\pi$ is a polynomial in its arguments.
In Section~\ref{sec:hierarchycoding}, we establish this fact to all orders $n\geqslant0$.

\section{Hierarchy uniqueness}\label{sec:hierarchycoding}
Herein we prove that at each order $n\geqslant0$, the P\"oppe polynomial signature
expansion $\pi_n$ such that $\mathfrak e_n([\bs 0])=\pi_n$, exists, and is unique. This is our main result.
Before presenting this result in the general case, we present one further example, the $n=5$ case.
This case acts a useful reference for our general argument.
\begin{example}[Fifth order quintic modified Korteweg--de Vries equation: $n=5$]\label{ex:fifthorder}
In this case $\mathfrak e_5\bigl([\bs 0]\bigr)=[\bs 05\bs 0]$ and 
the P\"oppe polynomial $\pi_5$, in general, has the form,
\begin{align*}
\pi_5\coloneqq &\;c_5\cdot[\bs 5]+c_{300}\cdot[\bs3]\,[\bs0]^{2}+c_{030}\cdot[\bs0]\,[\bs3]\,[\bs0]
+c_{003}\cdot[\bs0]^{2}\,[\bs3]+c_{210}\cdot[\bs 2]\,[\bs1]\,[\bs0]\\
&\;+c_{201}\cdot[\bs 2]\,[\bs0]\,[\bs1]+c_{120}\cdot[\bs 1]\,[\bs2]\,[\bs0]+c_{102}\cdot[\bs 1]\,[\bs0]\,[\bs2]+c_{021}\cdot[\bs 0]\,[\bs2]\,[\bs1]\\
&\;+c_{012}\cdot[\bs 0]\,[\bs1]\,[\bs2]+c_{111}\cdot[\bs 1]\,[\bs1]\,[\bs1]
+c_{10000}\cdot[\bs1]\,[\bs0]^{4}+c_{01000}\cdot[\bs0]\,[\bs1]\,[\bs0]^{3}\\
&\;+c_{00100}\cdot[\bs0]^{2}\,[\bs1]\,[\bs0]^{2}+c_{00010}\cdot[\bs0]^{3}\,[\bs1]\,[\bs0]+c_{00001}\cdot[\bs1]\,[\bs0]^{4}.
\end{align*}
The signature expansion for $[\bs5]$ has the form,
\begin{align*}
  [\bs5]=&\;\chi(5)\cdot[\bs05\bs0]\\
  &\;+\chi(41)\cdot[\bs04\bs01\bs0]+\chi(32)\cdot[\bs03\bs02\bs0]+\chi(23)\cdot[\bs02\bs03\bs0]+\chi(14)\cdot[\bs01\bs04\bs0]\\
  &\;+\chi(311)\cdot[\bs03\bs01\bs01\bs0]+\chi(221)\cdot[\bs02\bs02\bs01\bs0]+\chi(212)\cdot[\bs02\bs01\bs02\bs0]\\
  &\;+\chi(131)\cdot[\bs01\bs03\bs01\bs0]+\chi(122)\cdot[\bs01\bs02\bs02\bs0]+\chi(113)\cdot[\bs01\bs01\bs03\bs0]\\
  &\;+\chi(2111)\cdot[\bs02\bs01\bs01\bs01\bs0]+\chi(1211)\cdot[\bs01\bs02\bs01\bs01\bs0]+\chi(1121)\cdot[\bs01\bs01\bs02\bs01\bs0]\\
  &\;+\chi(1112)\cdot[\bs01\bs01\bs01\bs02\bs0]+\chi(11111)\cdot[\bs01\bs01\bs01\bs01\bs01\bs0].
\end{align*}
Using the skew and symmetric P\"oppe products in Lemma~\ref{lemma:skewandsymmPoppeproducts} we find, for example, that,
\begin{align*}
  [\bs1]&\,[\bs2]\,[\bs0]\\
  =&\;\bigl(\chi(1)\cdot[\bs 01\bs0]\bigr)\,\bigl(\chi(2)\cdot[\bs02\bs0]+\chi(11)\cdot[\bs01\bs01\bs0]\bigr)
  \,\bigl(\chi(0)\cdot[\bs0]\bigr)\\
  =&\;\bigl(\chi(1)\cdot[\bs 01\bs0]\bigr)\,\bigl(\chi(2\ot0)\cdot\bigl\{\bs03[\bs0]\bigr\}+\chi(2\ob0)\cdot\{\bs02\bs01\bs0\}\\
   &\;\qquad\qquad\qquad\quad+\chi(11\ot0)\cdot\bigl\{\bs01\bs02[\bs0]\bigr\}+\chi(11\ob0)\cdot\{\bs01\bs01\bs01\bs0\}\bigr)\\  
  =&\;\chi(1\ot2\ot0)\cdot\bigl(\bigl[\bs 02\{\bs03[\bs0]\}\bigr]+\bigl[\bs01\bs0\{4[\bs0]\}\bigr]\bigr)
  +\chi(1\ob2\ot0)\cdot\bigl[\bs01\bs01\bs03[\bs0]\bigr]\\
  &\;+\chi(1\ot2\ob0)\cdot\bigl(\bigl[\bs 02\{\bs02\bs01\bs0\}\bigr]+\bigl[\bs01\bs0\{3\bs01\bs0\}\bigr]\bigr)
  +\chi(1\ob2\ob0)\cdot[\bs01\bs01\bs02\bs01\bs0]\\
&\;+\chi(1\ot11\ot0)\cdot\bigl(\bigl[\bs 02\{\bs01\bs02[\bs0]\}\bigr]+\bigl[\bs01\bs0\{1\bs02[\bs0]\}\bigr]\bigr)
  +\chi(1\ob11\ot0)\cdot\bigl[\bs01\bs01\bs01\bs02[\bs0]\bigr]\\
&\;+\chi(1\ot11\ob0)\cdot\bigl(\bigl[\bs 02\{\bs01\bs01\bs01\bs0\}\bigr]+\bigl[\bs01\bs0\{2\bs01\bs01\bs0\}\bigr]\bigr)\\
  &\;+\chi(1\ob11\ob0)\cdot\bigl[\bs01\bs01\bs01\bs01\bs01\bs0\bigr].
\end{align*}
The other products shown in $\pi_5$ can be similarly expanded. In Tables~\ref{table:NLS5a} and \ref{table:NLS5b}
we list the essential basis elements and corresponding signature coefficients generated by all the P\"oppe products present in $\pi_5$.
The values of the coefficients are the $\chi$-images of the tensored terms shown. Each row generates a linear algebraic
equation for the expansion coefficients $c_5$, $c_{300}$, $c_{210}$, \ldots $c_{00001}$.
The ordering of rows and columns in Tables~\ref{table:NLS5a} and \ref{table:NLS5b} is self-evident.
We discuss this ordering in detail presently.
Using all the rows shown, we generate an over-determined linear system of algebraic equations, $AC=B$,
where $B$ is the column vector shown in the right-hand column in Table~\ref{table:NLS5b} and $C$ is the vector of coefficients $c_5$,
$c_{030}$, $c_{300}$, $c_{021}$ and so forth in the order shown. The matrix $A$ is populated with the $\chi$-images of the signature coefficients
shown in the tables. We can in fact solve $AC=B$ for $C$ systematically, block by block, as follows.
The first two rows corresponding to $[\bs05\bs0]$ and $\bs0^\dag5\bs0$ generate the pair of equations,
$c_5+2c_{030}=1$ and $c_{030}=0$. This system of equations corresponds to the smaller augmented subsystem $[A_0\,B_0^\prime]$
where $A_0$ is the same matrix as in Example~\ref{ex:fourthorder} for the order $n=4$ case, and $B_0^\prime=(1,0)^{\mathrm{T}}$.
We deduce $c_5=1$ and $c_{030}=0$. The next two rows corresponding to $[\bs04\bs01\bs0]$ and $[\bs04\bs0^\dag1\bs0^\dag]$ generate
the pair of equations $5c_5+2c_{030}+2c_{300}+2c_{021}=0$ and $2c_{030}-2c_{300}+2c_{021}=0$ for $c_{300}$ and $c_{021}$.
This pair corresponds to the smaller augmented subsystem $[A_1\,B_1^\prime]$ where $A_1$ is the same matrix as in Example~\ref{ex:fourthorder},
and $B_1^\prime=(-5,0)^{\mathrm{T}}$. Hence we deduce $c_{021}=-5/4$ and $c_{300}=-5/4$. Then the equations corresponding to
the rows $[\bs03\bs02\bs0]$, $[\bs03\bs02\bs0^\dag]$, $[\bs03\bs0^\dag2\bs0]$ and $[\bs03\bs0^\dag2\bs0^\dag]$
generate the following smaller augmented matrix subsystem $[A_2\, B_2^\prime]$ for $c_{210}$, $c_{201}$, $c_{012}$ and $c_{01000}$, where,
$A_2$ is the same as the corresponding matrix in Example~\ref{ex:fourthorder} and $B_2^\prime=(-25/4,-5/4,5/4,5/4)^{\mathrm{T}}$.
This system of linear equations is easily solved to reveal $c_{210}=-5/4$, $c_{201}=-5/2$, $c_{012}=-5/4$ and $c_{01000}=0$.
As we shall see, it is no coincidence that the coefficent matrices $A_0$, $A_1$ and $A_2$ match those for the
analogous blocks of basis elements in Example~\ref{ex:fourthorder}.
The next set of rows corresponding to the basis elements 
$[\bs02\bs03\bs0]$, $[\bs02\bs03\bs0^\dag]$, $[\bs02\bs0^\dag3\bs0]$ and $[\bs02\bs0^\dag3\bs0^\dag]$,
generates exactly the same augmented matrix subsystem $[A_2\, B_2^\prime]$ as that discussed just above,
but now for the unknown coefficients $c_{120}$, $c_{102}$, $c_{003}$ and $c_{00010}$. 
This linear system reveals $c_{120}=-5/4$, $c_{102}=-5/2$, $c_{003}=-5/4$ and $c_{00010}=0$.
We do not deduce any new information from the equations corresponding to the rows $[\bs01\bs04\bs0]$ and $[\bs01\bs04\bs0^\dag]$,
other than that they are consistent.
We consider the next block of four rows shown in Tables~\ref{table:NLS5a} and \ref{table:NLS5b} corresponding to
the rows $[\bs02\bs02\bs01\bs0]$, $[\bs02\bs02\bs0^\dag1\bs0^\dag]$, $[\bs02\bs0^\dag2\bs01\bs0]$ and $[\bs02\bs0^\dag2\bs0^\dag1\bs0^\dag]$.
These basis elements generate the following augmented matrix subsystem $[A_3^\prime\, B_3]$ for $c_{111}$, $c_{10000}$, $c_{00100}$ and $c_{00001}$, where,
\begin{equation*}
 A_3^\prime=\begin{pmatrix}
    1 & 4 & 4 & 4  \\
    1 & -4& -4& 4  \\
    1 & -4& 4 & -4 \\
    1 & 4 & -4& -4 
 \end{pmatrix}
\qquad\text{and}\qquad 
B_3=\begin{pmatrix}
    5 \\
    -5 \\
    -5 \\
    -5 
 \end{pmatrix}.
\end{equation*}
This system of linear equations is easily solved to reveal $c_{111}=-5/2$, $c_{10000}=5/8$, $c_{00002}=5/8$ and $c_{00001}=5/8$. 
We have determined the unique set of coefficients for which $\mathfrak e_5\bigl([\bs 0]\bigr)=\pi_5$.
In principle we can check the equations generated by the rows corresponding to the remaining basis elements are consistent.
However, we explain in our proof of our main result (Step~8) why this is not necessary.
Hence the fifth order non-commutative modified Korteweg--de Vries for $g=\lb G\rb$, with $\mathcal M_5=\mu_5\,\id$,
is given by (this matches the form given in Nijhoff \textit{et al.} \cite[eq.~B.5a]{NQVDLCI}),
\begin{align*}
  \mathcal M_5^{-1}\pa_tg=&\;\pa^5g-\frac54\Bigl(\bigl(\pa^3g\bigr)g^2+g^2\bigl(\pa^3g\bigr)
  +\bigl(\pa^2g\bigr)\bigl(\pa g\bigr)g\\
  &\;+2\,\bigl(\pa^2g\bigr)g\bigl(\pa g\bigr)+\bigl(\pa g\bigr)\bigl(\pa^2g\bigr)g+2\,\bigl(\pa g\bigr)g\bigl(\pa^2g\bigr)\\
  &\;+g\bigl(\pa^2g\bigr)\bigl(\pa g\bigr)+g\bigl(\pa g\bigr)\bigl(\pa^2g\bigr)
    +2\,\bigl(\pa g\bigr)\bigl(\pa g\bigr)\bigl(\pa g\bigr)\Bigr)\\
    &\;+\frac58\Bigl(\bigl(\pa g\bigr)g^4+g^2\bigl(\pa g\bigr)g^2+g^4\bigl(\pa g\bigr)\Bigr).
\end{align*}
\end{example}

\begin{landscape}
\begin{table}
\caption{Non-zero signature coefficients appearing
  in the expansion of the \emph{P\"oppe polynomial} $\pi_5$ in Example~\ref{ex:fifthorder}.
  Not all the coefficients are shown. The remaining columns are shown in Table~\ref{table:NLS5b}.
  The coefficients are the $\chi$-images of the signature entries shown.
  Each column shows the factor contributions to the real coefficients of the basis elements
  shown in the very left column, for each of the monomials in $\pi_5$ shown
  across the top row. }
\label{table:NLS5a}
\begin{center}
\begin{tabular}{|l|cccccccc|}
\hline
$\phantom{\biggl|}$ basis &$[\bs 5]$ &$[\bs0]\,[\bs3]\,[\bs0]$&$[\bs3]\,[\bs0]^{2}$&$[\bs0]\,[\bs2]\,[\bs1]$&$[\bs2]\,[\bs1]\,[\bs0]$&$[\bs2]\,[\bs0]\,[\bs1]$
&$[\bs 0]\,[\bs1]\,[\bs2]$ &$[\bs0]\,[\bs1]\,[\bs0]^{3}$ \\
\hline
$\phantom{\Bigl|}$ $[\bs05\bs0]$     & $5$ & $2\cdot(0\ot3\ot0)$ & & & & & &  \\
$\phantom{\Bigl|}$ $[\bs05\bs0^\dag]$ &    & $-2\cdot(0\ot3\ot0)$ & & & & & &  \\
\hline
$\phantom{\Bigl|}$ $[\bs04\bs01\bs0]$     & $41$ & $0\ot3\ob0$  & $3\ot0\ob0$ & $2\cdot(0\ot2\ot1)$ & & & &  \\
$\phantom{\Bigl|}$ $[\bs04\bs0^\dag1\bs0^\dag]$ &      & $0\ot3\ob0$ & $-3\ot0\ob0$ & $2\cdot(0\ot2\ot1)$ & & & &  \\
\hline
$\phantom{\Bigl|}$ $[\bs03\bs02\bs0]$     & $32$  & $0\ot21\ot0$ & $3\ot0\ob0$  & $0\ot2\ot1$ & $2\ot1\ot0$ & $2\ot0\ot1$ & $2\cdot(0\ot1\ot2)$ & $2\cdot(0\ot1\ot0\ob0\ot0)$  \\
$\phantom{\Bigl|}$ $[\bs03\bs02\bs0^\dag]$ &      & $-0\ot21\ot0$ & $-3\ot0\ob0$ & $0\ot2\ot1$ & $-2\ot1\ot0$ & $2\ot0\ot1$ &                    & $-2\cdot(0\ot1\ot0\ob0\ot0)$ \\
$\phantom{\Bigl|}$ $[\bs03\bs0^\dag2\bs0]$ &      & $-0\ot21\ot0$ &              & $0\ot2\ot1$ & $2\ot1\ot0$ & $-2\ot0\ot1$ &                    & $-2\cdot(0\ot1\ot0\ob0\ot0)$ \\
$\phantom{\Bigl|}$ $[\bs03\bs0^\dag2\bs0^\dag]$ &  & $0\ot21\ot0$ &              & $0\ot2\ot1$& $-2\ot1\ot0$ & $-2\ot0\ot1$ & $2\cdot(0\ot1\ot2)$& $2\cdot(0\ot1\ot0\ob0\ot0)$ \\
\hline
$\phantom{\Bigl|}$ $[\bs02\bs03\bs0]$          & $23$ & $0\ot12\ot0$ & & & $2\cdot(2\ot1\ot0)$  & & $0\ot1\ot2$ &   \\
$\phantom{\Bigl|}$ $[\bs02\bs03\bs0^\dag]$     &      & $-0\ot12\ot0$ & & & $-2\cdot(2\ot1\ot0)$ & & $0\ot1\ot2$ &  \\
$\phantom{\Bigl|}$ $[\bs02\bs0^\dag3\bs0]$     &      & $-0\ot12\ot0$ & & &                     & & $0\ot1\ot2$ &  \\
$\phantom{\Bigl|}$ $[\bs02\bs0^\dag3\bs0^\dag]$ &      & $0\ot12\ot0$ & & &                     & & $0\ot1\ot2$ &  \\
\hline
$\phantom{\Bigl|}$ $[\bs01\bs04\bs0]$     & $14$ & $0\ob3\ot0$ & & & & & &  \\
$\phantom{\Bigl|}$ $[\bs01\bs04\bs0^\dag]$ &      & $-0\ob3\ot0$ & & & & & & \\
\hline
$\phantom{\Bigl|}$ $[\bs02\bs02\bs01\bs0]$             & $221$ & $0\ot12\ob0$ & $21\ot0\ob0$  &  $0\ot11\ot1$ & $2\ot1\ob0$ & $2\ot0\ob1$ & $0\ot1\ot11$ & $0\ot1\ot0\ob0\ob0$  \\
$\phantom{\Bigl|}$ $[\bs02\bs02\bs0^\dag1\bs0^\dag]$     &   &         & $-21\ot0\ob0$ & $0\ot11\ot1$  & $2\ot1\ob0$ & $2\ot0\ob1$ & $-0\ot1\ot11$ & $-0\ot1\ot0\ob0\ob0$ \\
$\phantom{\Bigl|}$ $[\bs02\bs0^\dag2\bs01\bs0]$         &   &              &               & $-0\ot11\ot1$ &             &            & $0\ot1\ot11$  & $0\ot1\ot0\ob0\ob0$  \\
$\phantom{\Bigl|}$ $[\bs02\bs0^\dag2\bs0^\dag1\bs0^\dag]$&   & $-0\ot12\ob0$&               & $-0\ot11\ot1$ &             &            & $-0\ot1\ot11$ & $-0\ot1\ot0\ob0\ob0$  \\
\hline
$\phantom{\Bigl|}$$\qquad~\vdots$ & & & & & & & & \\
\hline
\end{tabular}
\end{center}
\end{table}
\end{landscape}

\begin{landscape}
\begin{table}
\caption{The remaining non-zero signature coefficients appearing
  in the expansion of the \emph{P\"oppe polynomial} $\pi_5$ in Example~\ref{ex:fifthorder}.
  The first set of columns appear in Table~\ref{table:NLS5a}.
  The final column represents the coefficient on the right-hand side of the equation $\pi_5=[\bs05\bs0]$.}
\label{table:NLS5b}
\begin{center}
\begin{tabular}{|l|cccccccc|c|} 
\hline
$\phantom{\biggl|}$ basis & $[\bs1]\,[\bs2]\,[\bs0]$&$[\bs1]\,[\bs0]\,[\bs2]$& $[\bs0]^{2}\,[\bs3]$&$[\bs0]^{3}\,[\bs1]\,[\bs0]$&$[\bs1]^{3}$&$[\bs1]\,[\bs0]^{4}$
& $[\bs0]^{2}\,[\bs1]\,[\bs0]^{2}$&$[\bs0]^{4}\,[\bs1]$& $B$ \\
\hline
$\phantom{\Bigl|}$ $[\bs05\bs0]$      & & & & & & & & & $1$ \\
$\phantom{\Bigl|}$ $[\bs05\bs0^\dag]$   & & & & & & & & & \\
\hline
$\phantom{\Bigl|}$ $[\bs04\bs01\bs0]$       & & & & & & & & & \\
$\phantom{\Bigl|}$ $[\bs04\bs0^\dag1\bs0^\dag]$   & & & & & & & & & \\
\hline
$\phantom{\Bigl|}$ $[\bs03\bs02\bs0]$          & & & & & & & & & \\
$\phantom{\Bigl|}$ $[\bs03\bs02\bs0^\dag]$      & & & & & & & & &  \\
$\phantom{\Bigl|}$ $[\bs03\bs0^\dag2\bs0]$      & & & & & & & & &  \\
$\phantom{\Bigl|}$ $[\bs03\bs0^\dag2\bs0^\dag]$ & & & & & & & & & \\
\hline
$\phantom{\Bigl|}$ $[\bs02\bs03\bs0]$           & $1\ot2\ot0$ & $1\ot0\ot2$  & $0\ob0\ot3$  & $2\cdot(0\ob0\ot0\ot1\ot0)$ & & & & & \\
$\phantom{\Bigl|}$ $[\bs02\bs03\bs0^\dag]$      & $-1\ot2\ot0$ & $1\ot0\ot2$  &              & $-2\cdot(0\ob0\ot0\ot1\ot0)$ & & & & & \\
$\phantom{\Bigl|}$ $[\bs02\bs0^\dag3\bs0]$      & $1\ot2\ot0$  & $-1\ot0\ot2$ &              & $-2\cdot(0\ob0\ot0\ot1\ot0)$ & & & & & \\
$\phantom{\Bigl|}$ $[\bs02\bs0^\dag3\bs0^\dag]$  & $-1\ot2\ot0$  & $-1\ot0\ot2$  & $0\ob0\ot3$ & $2\cdot(0\ob0\ot0\ot1\ot0)$ & & & & & \\
\hline
$\phantom{\Bigl|}$ $[\bs01\bs04\bs0]$      & $2\cdot(1\ot2\ot0)$  & & $0\ob0\ot3$ & & & & & &\\
$\phantom{\Bigl|}$ $[\bs01\bs04\bs0^\dag]$  & $-2\cdot(1\ot2\ot0)$ & & $0\ob0\ot3$ & & & & & &\\
\hline
$\phantom{\Bigl|}$ $[\bs02\bs02\bs01\bs0]$              & $1\ot2\ob0$ & $1\ot0\ot11$ & $0\ob0\ot21$ &  $0\ob0\ot0\ot1\ob0$  & $1\ot1\ot1$ & $1\ot0\ob0\ot0\ob0$ & $0\ob0\ot1\ot0\ob0$  & $0\ob0\ot0\ob0\ot1$ & \\
$\phantom{\Bigl|}$ $[\bs02\bs02\bs0^\dag1\bs0^\dag]$      &            & $-1\ot0\ot11$ &              &  $0\ob0\ot0\ot1\ob0$ & $1\ot1\ot1$ & $-1\ot0\ob0\ot0\ob0$ & $-0\ob0\ot1\ot0\ob0$ & $0\ob0\ot0\ob0\ot1$ & \\
$\phantom{\Bigl|}$ $[\bs02\bs0^\dag2\bs01\bs0]$          &             & $-1\ot0\ot11$ &              &  $-0\ob0\ot0\ot1\ob0$ & $1\ot1\ot1$ & $-1\ot0\ob0\ot0\ob0$ & $0\ob0\ot1\ot0\ob0$ & $-0\ob0\ot0\ob0\ot1$ & \\
$\phantom{\Bigl|}$  $[\bs02\bs0^\dag2\bs0^\dag1\bs0^\dag]$& $1\ot2\ob0$ & $1\ot0\ot11$ & $-0\ob0\ot21$ &  $-0\ob0\ot0\ot1\ob0$ & $1\ot1\ot1$ & $1\ot0\ob0\ot0\ob0$ & $-0\ob0\ot1\ot0\ob0$ & $-0\ob0\ot0\ob0\ot1$ & \\
\hline
$\phantom{\Bigl|}$ $\qquad~\vdots$ &&&&&&&&&\\
\hline
\end{tabular}
\end{center}
\end{table}
\end{landscape}

We now consider the general order $n\geqslant0$ case. Our goal is to establish the following. 
\begin{theorem}[Main result: existence and uniqueness]\label{thm:main}
For every $n\in\mathbb N\cup\{0\}$, there exists a unique P\"oppe polynomial $\pi_n=\pi_n([\bs0],[\bs1],\ldots,[\bs n])$
in $\mathbb C[\mathbb Z_{\bs0}]$ such that $\mathfrak e_n([\bs 0])=\pi_n$.
\end{theorem}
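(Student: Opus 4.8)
The plan is to recast $\mathfrak e_n([\bs0])=\pi_n$ as a finite linear algebraic system inside the skew-P\"oppe algebra $\mathbb C[\mathbb Z_{\bs0}]$ and show it is uniquely solvable. Write the unknown P\"oppe polynomial as $\pi_n=\sum_\alpha c_\alpha M_\alpha$, with $M_\alpha=[\bs a_1]\,[\bs a_2]\cdots[\bs a_k]$ running over the admissible odd-degree monomials of Definition~\ref{def:Poppepoly} ($k$ odd, $a_1+\cdots+a_k=n-(k-1)$). Using the signature expansions of Definition~\ref{def:signatureexpansion} and the skew and symmetric P\"oppe products of Lemma~\ref{lemma:skewandsymmPoppeproducts}, expand each $M_\alpha$ as a finite $\mathbb C$-linear combination $\sum_\beta A_{\beta\alpha}\,[w_\beta\times\bs\varphi_\beta]$ of skew-form basis elements, the $w_\beta$ being compositions of $n$; every coefficient $A_{\beta\alpha}$ is a $\chi$-value, i.e. a product of binomial coefficients carrying the extra powers of $2$ recorded by the $\ob$-convention. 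Since $\mathfrak e_n([\bs0])$ is the single basis element $[\bs0 n\bs0]$ (for $n$ odd) or $[\bs0 n\bs0^\dag]$ (for $n$ even), matching the coefficient of every basis element gives an over-determined system $Ac=B$. Tables~\ref{table:NLS3}--\ref{table:NLS5b} are the prototypes for $n=3,4,5$; one must show $Ac=B$ has exactly one solution and that this solution automatically satisfies the redundant rows.

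The structure that makes this work is a block upper triangular form with respect to the natural ordering of Definition~\ref{def:naturalordering}: basis elements are grouped by their composition $w$ in descent order (fewest parts first, then lexicographically), and within a group by the binary order on the $\{\bs0,\bs0^\dag\}$-word $\bs\varphi$. The first key lemma records, for each admissible monomial $M_\alpha$, the largest basis element occurring in its expansion --- obtained from $[\bs a_1]\cdots[\bs a_k]$ by the greedy leftward merge that fuses successive leading parts --- together with the fact that the accompanying $\chi$-value is nonzero. This is proved by induction on $k$, tracking at each step which term of the products in Lemma~\ref{lemma:skewandsymmPoppeproducts} dominates in descent order and checking that the relevant binomial products never vanish. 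Two companion combinatorial facts are needed: the set of $\{\bs0,\bs0^\dag\}$-patterns that can appear for a given composition is constrained (notably, a part produced by a quasi term $2\cdot ua\bs01\bs0bv$ is always flanked by matching symbols), so that for each composition $w$ realised as a leading composition the number of basis elements of composition $w$ equals the number of monomials whose leading composition is $w$; and a row indexed by composition $w$ only involves columns whose leading composition is $\geq w$. Together these give: ordering rows and columns by composition, $A$ is block upper triangular, each diagonal pivot block being one of the universal matrices $A_0,A_1,A_2,A_3^\prime,\dots$ appearing in Examples~\ref{ex:fourthorder} and \ref{ex:fifthorder}.

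Given this, existence and uniqueness follow in the standard triangular way provided each pivot block is invertible, so the second key lemma is that each $A_j$ (and $A_3^\prime$, and their higher-order analogues) is nonsingular: their entries are $\pm1,\pm2,\pm4,\dots$ arranged with a recursive Kronecker structure mirroring the $\ob$-notation, whence each determinant is a nonzero power of $2$. Scanning compositions from largest to smallest and substituting the already-determined coefficients, each pivot block then pins down a batch of new coefficients uniquely; since the total number of admissible monomials equals the total number of pivots, this determines all of $c$, and uniqueness of the P\"oppe polynomial $\pi_n$ with $\mathfrak e_n([\bs0])=\pi_n$ is immediate, as its coefficients must solve the pivot equations, which have a unique solution. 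It remains to verify the non-pivot (redundant) rows --- those indexed by compositions that are not leading compositions, e.g. ones beginning with a part $1$ or with too many parts. This is the step the examples defer; I would carry it out by induction on $n$, exhibiting each non-pivot equation at order $n$ as the image, under a contraction/deletion map on words, of a pivot equation at order $n-1$ which holds by the induction hypothesis together with Corollary~\ref{cor:Laxhierarchyiteration}, which expresses $\mathfrak e_n([\bs0])$ through $\pi_{n-1}$; equivalently, once the induction hypothesis and Corollary~\ref{cor:Laxhierarchyiteration} guarantee that $\mathfrak e_n([\bs0])$ lies in the image of the P\"oppe-product expansion map, uniqueness of the pivot solution forces that preimage to be the P\"oppe polynomial just constructed, so every row holds.

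I expect the main obstacle to be the pair of combinatorial lemmas underpinning the block structure: identifying precisely the dominant basis element of a chain $[\bs a_1]\cdots[\bs a_k]$ --- both its composition and its $\{\bs0,\bs0^\dag\}$-pattern --- and proving that the attached product of binomial coefficients is nonzero, and then recognising the resulting fixed block matrices, independent of $n$, as invertible. The consistency of the redundant rows (the analogue of ``Step~8'' in the examples) is the other delicate point, since it must be transferred cleanly from order $n-1$; everything else is then bookkeeping organised by the ordering.
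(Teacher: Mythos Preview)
Your overall plan---reduce $\mathfrak e_n([\bs0])=\pi_n$ to a block-triangular linear system $Ac=B$ indexed by compositions in descent order, show the diagonal blocks are square and invertible, then solve block by block---matches the paper's strategy exactly, and your sketch of the ``two combinatorial lemmas'' corresponds to what the paper carries out in Steps~3--7. The paper's proof of block invertibility is not via a Kronecker-determinant computation but via showing that the three ``standard triple actions'' $[\bs a][\bs b](\cdot)$, $[\bs a][\bs0](\cdot)$, $[\bs0][\bs b](\cdot)$ preserve linear independence on the $\mathbb R\langle\mathbb B\rangle$-components (Lemma~\ref{lemma:indeptaction}); and the equality of row-count and column-count at each composition is obtained by a careful enumeration (Lemmas~\ref{lemma:magicalmatch1}, \ref{lemma:magicalmatch2}, \ref{lemma:gentot}, \ref{lemma:basiselementoverallcount}) that singles out the ``pivot'' compositions as precisely those avoiding~$1$, or ending in~$1$ but avoiding it elsewhere. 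Your description of the leading basis element as coming from a ``greedy leftward merge'' is too vague to pin down this characterisation, but these points could plausibly be filled in.

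The genuine gap is your consistency argument for the redundant rows. You propose to invoke Corollary~\ref{cor:Laxhierarchyiteration} together with the induction hypothesis to deduce that $\mathfrak e_n([\bs0])$ lies in the image of the expansion map---i.e.\ that some P\"oppe polynomial equals $\mathfrak e_n([\bs0])$. But Corollary~\ref{cor:Laxhierarchyiteration} only yields $\mathfrak e_n([\bs0])=(\mfd-S)\pi_{n-1}$ (or $(\mfd-A)\pi_{n-1}$), and the operators $A,S$ contain a factor $\mfd^{-1}$. Concretely, $S\pi_{n-1}=\tfrac14\,\mathrm{sd}_{[\bs0]}\,\mfd^{-1}\,\mathrm{sd}_{[\bs0]}\pi_{n-1}$, and there is no reason for $\mathrm{sd}_{[\bs0]}\pi_{n-1}$, viewed as a polynomial expression in the $[\bs k]$'s, to be a $\mfd$-derivative of another such expression. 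The identity $\mathrm{sd}_{[\bs0]}[\bs0 n\bs0]=2\,\mfd\{\bs0 n\bs0\}$ from Lemma~\ref{lemma:naturaliteration} only holds at the level of basis elements; it produces the symmetric form $\{\bs0 n\bs0\}$, which is not a P\"oppe polynomial in the sense of Definition~\ref{def:Poppepoly}. So $(\mfd-S)\pi_{n-1}$ is not, without further work, a P\"oppe polynomial---and showing that it is one is exactly the existence statement you are trying to prove. The paper does not use the Lax iteration here at all; instead it handles consistency directly in Step~8 by exploiting the near-homogeneity of the system: after isolating the single nonhomogeneous equation, the remaining homogeneous system has one more unknown than equations, and a careful Gaussian elimination along a ``sub-diagonal'' (with explicit column swaps to guarantee nonzero pivots for the two-part composition blocks) renders the rows indexed by non-pivot compositions identically zero. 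Your alternative suggestion of a ``contraction/deletion map'' relating non-pivot equations at order $n$ to pivot equations at order $n-1$ is not developed enough to assess, and no such map appears in the paper.
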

We prove this result through a sequence of steps. It requires some preparation and the 
rest of this section is devoted to outlining the notation, strategy, ideas
and intermediate results we require to carry through the proof in Steps~1--7 before giving the
overall proof in Step~8. We have the following immediate corollary of Theorem~\ref{thm:main}.
\begin{corollary}[The non-commutative Lax hierarchy is unique]\label{cor:Laxhierarchyuniqueness}
  For any integer $n\geqslant0$, the Lax hierarchy generated by the iteration indicated in
  Corollary~\ref{cor:Laxhierarchyiteration} is unique. The non-commutative nonlinear equation at order $n$ in the Lax hierarchy
  is simply that represented by $\mathfrak e_n([\bs 0])=\pi_n$ in Theorem~\ref{thm:main}.
\end{corollary}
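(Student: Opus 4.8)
The plan is to prove existence and uniqueness by induction on $n$, using the natural iteration of Corollary~\ref{cor:Laxhierarchyiteration} (equivalently Lemma~\ref{lemma:naturaliteration}) as the engine for existence, and a linear-algebraic argument --- the one illustrated concretely in Examples~\ref{ex:fourthorder} and \ref{ex:fifthorder} --- for uniqueness. The base cases $n=0,1,2,3,4$ are already exhibited in Examples~\ref{ex:ODE}--\ref{ex:fourthorder}. For the inductive step, suppose $\mathfrak e_n([\bs0])=\pi_n$ holds for a P\"oppe polynomial $\pi_n\in\mathbb C[\mathbb Z_{\bs0}]$. Applying $\mfd-S$ (if $n$ is odd) or $\mfd-A$ (if $n$ is even) to both sides, Lemma~\ref{lemma:naturaliteration} gives $\mathfrak e_{n+1}([\bs0])=(\mfd-S)\pi_n$ or $(\mfd-A)\pi_n$; one then checks that the right-hand side is again a P\"oppe polynomial of order $n+1$ in the sense of Definition~\ref{def:Poppepoly}, i.e.\ an odd-degree combination of signature expansions $[\bs a_1]\cdots[\bs a_k]$ with $a_1+\cdots+a_k = (n+1)-(k-1)$. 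This last point requires showing that $\mfd$ raises total order by one and preserves the odd-polynomial structure (true from Definition~\ref{def:derivation}, since $\mfd\bs n\mapsto\bs{n+1}$ on signature expansions and $\mfd$ is a derivation for the P\"oppe product up to the quasi-term bookkeeping), and that $A$ and $S$ do likewise --- here $\mfd^{-1}$ acts on a symmetric form $\mathrm{ad}_{[\bs0]}\pi_n$ or $\mathrm{sd}_{[\bs0]}\pi_n$ whose lowest signature index is $\geq 1$, so $\mfd^{-1}$ is well-defined as a formal inverse.

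\textbf{The linear system.} The heart of the argument, corresponding to Steps~1--7 promised in the text, is to set up the over-determined linear system $AC=B$ for the unknown coefficients $c_{a_1\cdots a_k}$ of $\pi_n$. One fixes an ordering on the basis elements $[w\times\bs\varphi]$ of $\mathbb C[\mathbb Z_{\bs0}]$ with $w\in\mathcal C(n)$ --- descent order on $w$, natural binary order on $\bs\varphi$, normalised so that $\bs\varphi$ begins with $\bs0$ (Remark~\ref{rmk:basiselements}, Definition~\ref{def:naturalordering}) --- and expands each monomial $[\bs a_1]\cdots[\bs a_k]$ in $\pi_n$ via the skew/symmetric P\"oppe products of Lemma~\ref{lemma:skewandsymmPoppeproducts}, reading off the $\chi$-image coefficients. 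The key structural observation, visible in the nested block structure of Tables~\ref{table:NLS4}, \ref{table:NLS5a}, \ref{table:NLS5b}, is that $A$ is \emph{block lower-triangular} with respect to a grouping of basis elements by their composition $w$ (ordered by descent), and that each diagonal block $A_j$ solving for a fresh batch of coefficients is square and invertible --- indeed the same small matrices $A_0,A_1,A_2,\dots$ recur at every order because the local combinatorics of a P\"oppe product near a given $\bs0^\dag$ pattern is order-independent. Solving block by block then yields a unique $C$, proving existence of the coefficients (consistent with what the iteration already guarantees) and uniqueness simultaneously. One must also verify the rows not used in the block solve are automatically consistent; as the text hints in Step~8, this follows \emph{a priori} because Corollary~\ref{cor:Laxhierarchyiteration} already supplies a solution, so once uniqueness of the solution to the triangular subsystem is established, the full system has exactly that solution.

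\textbf{The main obstacle.} The genuinely delicate part is Steps~1--7: making precise the claim that $A$ is block-triangular with invertible diagonal blocks. Concretely one needs (i) a lemma identifying, for each target basis element $[w\times\bs\varphi]$, exactly which monomials $[\bs a_1]\cdots[\bs a_k]$ in $\pi_n$ can produce it under P\"oppe multiplication, and showing that the "new" monomial contributing to the highest-descent block not yet solved appears with an invertible coefficient sub-block; and (ii) control of the $\bs0^\dag$-placement sub-ordering so that within a block the $2\times2$ (or $4\times4$, etc.) systems like $A_1=\begin{pmatrix}2&2\\-2&2\end{pmatrix}$ or $A_3^\prime$ genuinely have the stated form. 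I expect this to reduce to a careful but essentially bookkeeping analysis of how the four product rules of Lemma~\ref{lemma:skewandsymmPoppeproducts} act on the leading terms: the term $u(a+1)\bs0 bv$ strictly increases a composition entry (moving up in descent order), the term $ua\bs0(b+1)v$ does likewise on the right, and the quasi-term $2\cdot ua\bs01\bs0bv$ lengthens the composition (also higher in descent order), so that the \emph{only} way to hit a minimal-descent basis element in a given block is through a specific monomial, pinning down the triangular structure. Assembling these observations into the block decomposition, then invoking invertibility of each $A_j$ (a finite explicit check, since the $A_j$ stabilise), completes Step~8 and hence the theorem; Corollary~\ref{cor:Laxhierarchyuniqueness} is then immediate.
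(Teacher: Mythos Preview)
Your proposal conflates two distinct things. The paper's proof of Corollary~\ref{cor:Laxhierarchyuniqueness} is three lines: it takes Theorem~\ref{thm:main} as an already-established black box, observes that by Corollary~\ref{cor:Laxhierarchyiteration} the Lax iterate $(\mfd-A)\pi_n$ (or $(\mfd-S)\pi_n$) equals $\mathfrak e_{n+1}([\bs0])$, and then invokes the \emph{uniqueness} part of Theorem~\ref{thm:main} at order $n+1$ to conclude that this iterate must coincide with $\pi_{n+1}$. What you have written is instead an attempted proof of Theorem~\ref{thm:main} itself, with the Corollary tagged on at the end.

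Viewed as a proof of Theorem~\ref{thm:main}, your argument has a genuine gap in the existence/consistency step. You propose to get existence of $\pi_{n+1}$ by applying $\mfd-S$ (or $\mfd-A$) to $\pi_n$ and ``checking that the right-hand side is again a P\"oppe polynomial''. But $S$ and $A$ contain $\mfd^{-1}$, and $\mfd^{-1}$ applied to a P\"oppe product of signature expansions does \emph{not} in general return a polynomial in the $[\bs k]$'s --- at the kernel level this is $\pa_x^{-1}$ of a matrix product, which is an honest integral. Your justification (``lowest signature index $\geq1$, so $\mfd^{-1}$ is well-defined as a formal inverse'') shows only that $\mfd^{-1}\mathrm{sd}_{[\bs0]}\pi_n$ exists as an element of $\mathbb C\la\mathbb Z_{\bs0}\ra$ (indeed the proof of Lemma~\ref{lemma:naturaliteration} gives it as $2\{\bs0n\bs0\}$); it does not show that $S\pi_n$, or the combination $(\mfd-S)\pi_n$, is expressible as an odd polynomial in $[\bs0],[\bs1],\ldots$. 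That the non-polynomial pieces cancel is precisely the content of Theorem~\ref{thm:main}, so your induction is circular. The same circularity infects your consistency argument for the over-determined rows: you appeal to Corollary~\ref{cor:Laxhierarchyiteration} ``supplying a solution'', but that solution is only known to be a P\"oppe polynomial once Theorem~\ref{thm:main} is established. The paper breaks this circle by proving consistency \emph{directly} in Step~8 --- a Gaussian-elimination argument along a carefully chosen sub-diagonal (with the column swaps described there), entirely internal to the linear system and making no reference to the Lax iteration. Your sketch of the block-triangular structure with recurring invertible blocks $A_0,A_1,A_2,\ldots$ is broadly on the right track for uniqueness, but it cannot substitute for that direct consistency proof.
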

\begin{proof}
When $n$ is even, we know from Corollary~\ref{cor:Laxhierarchyiteration}
that $\mathfrak e_{n+1}([\bs 0])=(\pa-A)\pi_n$. Now from Theorem~\ref{thm:main} we know
$\mathfrak e_{n+1}([\bs 0])=\pi_{n+1}$ with $\pi_{n+1}=\pi_{n+1}([\bs0],[\bs1],\ldots,[\bs n+1])$
a unique polynomial in its arguments. We thus deduce the result of the corollary when $n$ is even.
An exactly analogous argument follows for the case when $n$ is odd.\qed
\end{proof}
Let us now return to the proof of Theorem~\ref{thm:main}.
Roughly, in the proof of Theorem~\ref{thm:main}, we construct a `table' for the coefficients of $\pi_n$,
much like we did for the cases $n=3,4,5$ in Tables~\ref{table:NLS3}---\ref{table:NLS5b}.
Indeed, Tables~\ref{table:NLS5a} and \ref{table:NLS5b} act as a useful reference.
We proceed systematically, considering basis elements $[w\times\bs\varphi]$ in descending order with respect to the composition
$\mathcal C$-component $w$ and using a natural binary order for the $\mathbb Z_2^\ast$-component $\bs\varphi$.
Both these orders are implicit in Tables~\ref{table:NLS3}---\ref{table:NLS5b}.
\medskip

\emph{Step~1: Descent and binary order.}
We introduce an order, the descent order, on the set of basis elements. We also define a new respresentation for the
basis elements, the composition-binary representation, that we use hereafter.
The descent order for compositions is given in Malham~\cite{Malham:KdVhierarchy}; we present it here for completeness. 
\begin{definition}[Descent ordering of compositions]\label{def:naturalordering}
A composition $u\in\Cb$ precedes another composition $v\in\Cb$ if the length of the compostion $u$, i.e.\/ the number of digits it contains,
is strictly less than the length of $v$. If $u$ and $v$ have the same length, say $k$, so $u=u_1u_2\cdots u_k$ and $v=v_1v_2\cdots v_k$,
then $u$ precedes $v$ if for some $\ell\in\{1,2,\dots,k\}$ we have $u_1=v_1$, $u_2=v_2$, \ldots, $u_{\ell-1}=v_{\ell-1}$ and $u_\ell<v_\ell$. 
Otherwise, $v$ precedes $u$. The resulting ordering induced on $\Cb$, is the \emph{descent ordering}.
\end{definition}
The $\mathbb Z_2^\ast$-component $\bs\varphi$ in the basis element $[w\times\bs\varphi]$ is any $(|w|+1)$-tuple constructed from
$\{\bs0,\bs0^\dag\}\cong\mathbb Z_2$. Recall from Remark~\ref{rmk:basiselements},
we can always arrange for the first component of $\bs\varphi$ to be `$\bs0$'; see Tables~\ref{table:NLS3}---\ref{table:NLS5b}.
Thus for any given composition $w\in\mathcal C$, the component $\bs\varphi\in\mathbb Z_2^\ast$ in a basis element $[w\times\bs\varphi]$
is one of the following forms,
\begin{align*}
  &\bs\phi_1\coloneqq\bs0\bs0\cdots\bs0\bs0\bs0\bs0,~
  \bs\phi_2\coloneqq\bs0\bs0\cdots\bs0\bs0\bs0\bs0^\dag,~
  \bs\phi_3\coloneqq\bs0\bs0\cdots\bs0\bs0\bs0^\dag\bs0,~
  \bs\phi_4\coloneqq\bs0\bs0\cdots\bs0\bs0\bs0^\dag\bs0^\dag,\\
  &\bs\phi_5\coloneqq\bs0\bs0\cdots\bs0\bs0^\dag\bs0\bs0,~
  \bs\phi_6\coloneqq\bs0\bs0\cdots\bs0\bs0^\dag\bs0\bs0^\dag,~
  \bs\phi_7\coloneqq\bs0\bs0\cdots\bs0\bs0^\dag\bs0^\dag\bs0,
\end{align*}
and so forth, all the way up to $\bs\phi_{2^{|w|}}\coloneqq\bs0\bs0^\dag\cdots\bs0^\dag\bs0^\dag\bs0^\dag\bs0^\dag$.
This is the \emph{natural binary ordering} of $\mathbb Z_2^\ast$ we refer to just above.
In this and the next sections, we mainly use a modified encoding of the basis elements $[w\times\bs\varphi]$, as follows.
We replace the free monoid $\mathbb Z_2^\ast$ of all forms $\bs\varphi$ that can be constructed from $\{\bs0,\bs0^\dag\}\cong\mathbb Z_2$
by the vector space $\mathbb R\la\mathbb B\ra$ representing the span over all the elements $\mathbb B\coloneqq\{\bs\phi_i\}_{i\geqslant1}$.
We thus express any $\bs\varphi\in\mathbb Z_2^\ast$ in the form $\bs\varphi=\beta_1\bs\phi_1+\beta_2\bs\phi_2+\beta_3\bs\phi_3+\cdots$, where the $\beta_i$
for integer $i\geqslant1$, represent the coefficients of the basis element components $\bs\phi_i$. Henceforth we
represent any element $\bs\varphi\in\mathbb Z_2^\ast$ by a $2^{|w|}$-tuple $\bs\beta\coloneqq(\beta_1,\beta_2,\ldots,\beta_{2^{|w|}})\in\mathbb R\la\mathbb B\ra$.
Thus we replace,
\begin{equation*}
[w\times\bs\varphi]\rightsquigarrow[w]\times\bs\beta.
\end{equation*}
\begin{example}
   Some examples matching the old notation with the new are as follows: $[\bs0a\bs0]=[a]\times(1,0)$, $[\bs0a\bs0^\dag]=[a]\times(0,1)$, and then also,
  \begin{align*}
    [\bs0a_1\bs0a_2\bs0]&=[a_1a_2]\times(1,0,0,0),\\
    [\bs0a_1\bs0a_2\bs0^\dag]&=[a_1a_2]\times(0,1,0,0),\\
    [\bs0a_1\bs0^\dag a_2\bs0]&=[a_1a_2]\times(0,0,1,0),\\
    [\bs0a_1\bs0^\dag a_2\bs0^\dag]&=[a_1a_2]\times(0,0,0,1),\\
    [\bs0a_1\bs0a_2\bs0a_3\bs0]&=[a_1a_2a_3]\times(1,0,0,0,0,0,0,0),\\
    [\bs0a_1\bs0a_2\bs0a_3\bs0^\dag]&=[a_1a_2a_3]\times(0,1,0,0,0,0,0,0),\\
    \vdots &\\
    [\bs0a_1\bs0^\dag a_2\bs0^\dag a_3\bs0^\dag]&=[a_1a_2a_3]\times(0,0,0,0,0,0,0,1),
  \end{align*}
  and so forth. Naturally, as constructed, for linear combinations, we have for example that for any real scalar constants $\beta_1$ and $\beta_2$,
\begin{equation*}
  \beta_1\cdot[\bs0a_1\bs0a_2\bs0]+\beta_2\cdot[\bs0a_1\bs0^\dag a_2\bs0]=[a_1a_2]\times(\beta_1,0,\beta_2,0),
\end{equation*}
and so forth. There is a natural basis for elements of $\mathbb R\la\mathbb B\ra$ of a given length $2^n$. Such a basis is
given by the elements of length $2^n$ of the form $\bs\beta_i\coloneqq(0,\ldots,0,1,0,\ldots)$, where the `$1$' is in the $i$th position. 
\end{example}
\begin{definition}[Composition-binary representation]\label{def:composition-binary}
  We call the representation $[w]\times\bs\beta$ where $w\in\mathcal C$ and $\bs\beta\in\mathbb R\la\mathbb B\ra$
  the \emph{composition-binary representation} of the basis elements. We call $[w]$ the composition component and
  $\bs\beta$ the $\mathbb R\la\mathbb B\ra$-component.
\end{definition}
\medskip

\emph{Step~2: Triple product action.}
All the P\"oppe polynomials $\pi_n$ are polynomials in the skew-P\"oppe algebra $\mathbb C[\mathbb Z_{\bs 0}]$
and thus necessarily of odd degree. As such all the monomials therein can be constructed from triple products
of signature expansions $[\bs n]$, as we have seen in Examples~\ref{ex:secondorder}---\ref{ex:fifthorder}.
In particular we characterise the following triple product action which is established straightforwardly using
the P\"oppe product rules in Lemma~\ref{lemma:skewandsymmPoppeproducts}.
\begin{lemma}\label{lemma:tripleproductaction}
  Consider the two signature expansions $[\bs a]$ and $[\bs b]$ and a basis element $[cw\times\bs\varphi]\in\mathbb C[\mathbb Z_{\bs 0}]$,
  with $c\in\mathbb Z$ the first letter in the composition `$cw$'. Let $\hat{\bs\varphi}$ denote the element of $\mathbb Z_2^\ast$
  given by $\bs\varphi$ with its first letter `$\,\bs0$' removed.
  Then at leading order the triple product action $[\bs a]\,[\bs b]\,\bigl([cw\times\bs\varphi]\bigr)$ is given by,
  \begin{align*}
    \chi(a\ot b\ot cw)
    \cdot\Bigl(&[\bs 0(a+1)\bs0(b+1)\bs0 c(w\times\hat{\bs\varphi})]+[\bs 0(a+1)\bs0^\dag(b+1)\bs0 c(w\times\hat{\bs\varphi})]\\
    +&[\bs 0(a+1)\bs0(b+1)\bs0^\dag c(w\times\hat{\bs\varphi})^\dag]+[\bs 0(a+1)\bs0^\dag(b+1)\bs0^\dag c(w\times\hat{\bs\varphi})^\dag]\\
    +&[\bs 0(a+1)\bs0 b\bs0 (c+1)(w\times\hat{\bs\varphi})]+[\bs 0(a+1)\bs0^\dag b\bs0^\dag (c+1)(w\times\hat{\bs\varphi})]\\
    +&[\bs 0(a+1)\bs0 b\bs0 (c+1)(w\times\hat{\bs\varphi})^\dag]+[\bs 0(a+1)\bs0^\dag b\bs0^\dag (c+1)(w\times\hat{\bs\varphi})^\dag]\\
    +&2\cdot[\bs 0a\bs0(b+2)\bs0 c(w\times\hat{\bs\varphi})]+2\cdot[\bs 0a\bs0(b+2)\bs0^\dag c(w\times\hat{\bs\varphi})^\dag]\\
    +&[\bs 0a\bs0(b+1)\bs0(c+1)(w\times\hat{\bs\varphi})]+[\bs 0a\bs0(b+1)\bs0^\dag(c+1)(w\times\hat{\bs\varphi})]\\
    +&[\bs 0a\bs0(b+1)\bs0(c+1)(w\times\hat{\bs\varphi})^\dag]+[\bs 0a\bs0(b+1)\bs0^\dag(c+1)(w\times\hat{\bs\varphi})^\dag]\Bigr)+\cdots.
  \end{align*}
  Here by leading order, we mean, we do not retain terms generated by lower (descent) order terms
  in the signature expansions of $[\bs a]$ and $[\bs b]$, nor do we retain terms generated with a quasi-product term---i.e.\/
  generated using any of the final terms with real factor `$\,2$' in the P\"oppe products in Lemma~\ref{lemma:skewandsymmPoppeproducts}.
  We use the notation `$+\cdots$' do denote these missing terms.
\end{lemma}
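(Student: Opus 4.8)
\textbf{Proof plan for Lemma~\ref{lemma:tripleproductaction}.}

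The plan is to compute the triple product $[\bs a]\,[\bs b]\,\bigl([cw\times\bs\varphi]\bigr)$ by applying the skew/symmetric P\"oppe products of Lemma~\ref{lemma:skewandsymmPoppeproducts} twice, working from right to left as advocated in Remark~\ref{rmk:minimalproductset}, and then discarding everything not generated by the leading (top descent order) term of each signature expansion. First I would recall from Definition~\ref{def:signatureexpansion} that the leading term of $[\bs a]$, in descent order, is $\chi(a)\cdot[\bs0 a\bs0]$ (the single-part composition), and likewise $\chi(b)\cdot[\bs0 b\bs0]$ for $[\bs b]$; all other parts of these expansions have strictly longer composition components and hence contribute only to the `$+\cdots$' terms. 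So it suffices to compute $\bigl(\chi(a)[\bs0 a\bs0]\bigr)\,\bigl(\chi(b)[\bs0 b\bs0]\bigr)\,\bigl([cw\times\bs\varphi]\bigr)$ and read off the homomorphic coefficient $\chi(a\ot b\ot cw)=\chi(a)\chi(b)\chi(w)$ via Remark~\ref{rmk:homomorphicsigchar}, noting that $[cw\times\bs\varphi]$ already carries the $\chi(cw)$ implicitly when it sits inside a signature expansion; here we simply track $\chi(a\ot b\ot cw)$ as stated.

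Next I would perform the inner product $[\bs0 b\bs0]\,[cw\times\bs\varphi]$. Writing $cw\times\bs\varphi = \bs0 c\,(w\times\hat{\bs\varphi})$ (using that the first binary letter of $\bs\varphi$ is $\bs0$ by Remark~\ref{rmk:basiselements}), Lemma~\ref{lemma:skewandsymmPoppeproducts} applied to $[ua\bs0]\,[\bs0 bv]$ with $ua\bs0=\bs0 b\bs0$ and $\bs0 bv=\bs0 c(w\times\hat{\bs\varphi})$ gives a symmetric form: a $(b+1)$-shift term $\{\bs0(b+1)[\bs0 c(w\times\hat{\bs\varphi})]\}$, a $(c+1)$-shift term $\{\bs0 b\bs0[(c+1)(w\times\hat{\bs\varphi})]\}$, and the quasi term $2\cdot\{\bs0 b\bs01\bs0 c(w\times\hat{\bs\varphi})\}$; the last is explicitly excluded from the `leading order' claim. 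Then I would left-multiply by $[\bs0 a\bs0]$, using the $[\,\cdot\,]\,\{\,\cdot\,\}$ rule of Lemma~\ref{lemma:skewandsymmPoppeproducts}: each of the two retained symmetric forms gets an $(a+1)$-shift term and a shift of its own first interior letter, plus again a quasi term with factor $2$ to be dropped. Expanding the symmetric/skew brackets via Definition~\ref{def:skewsymmparts}—so $\{\bs0 dv\}=\bs0 dv+(\bs0 dv)^\dag$ and the dagger flips all binary letters and negates all integer letters by Definition~\ref{rmk:convention}—produces exactly the $\bs0$ versus $\bs0^\dag$ pairs and the $(w\times\hat{\bs\varphi})$ versus $(w\times\hat{\bs\varphi})^\dag$ pairs displayed in the statement. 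Careful bookkeeping of which interior letter is incremented in each of the four combinations ($(a{+}1)(b{+}1)$, $(a{+}1),(c{+}1)$ from the $b$-slot staying, the doubled $(b{+}2)$ term coming from composing the $(b+1)$-shift inside $[\bs0 a\bs0]\{\cdot\}$ with another $+1$, and $a,(b{+}1),(c{+}1)$) reproduces the twelve listed terms.

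The main obstacle—really the only delicate point—is the correct orchestration of the two nested applications of the product rules and, in particular, making sure the term with coefficient $2$ (the $[\bs0 a\bs0(b+2)\bs0\cdots]$ line) is accounted for: it does \emph{not} arise from a quasi term but from the genuine shift structure, namely from the $(b+1)$-shift in the inner product being hit again by the outer $(a+1)$-or-shift mechanism on the letter $(b+1)$, yielding $(b+2)$ with multiplicity two because it appears from both the left-shift of $[\bs0 a\bs0]$ acting on $(b+1)$ and the residual structure. I would verify this multiplicity by cross-checking against the explicit expansion of $[\bs 1]\,[\bs 2]\,[\bs 0]$ computed in Example~\ref{ex:fifthorder}, where precisely such a $(b+2)$-with-factor-$2$ term appears (the $\bigl[\bs01\bs0\{4[\bs0]\}\bigr]$ contribution). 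Once the leading-order expansion matches that worked example and the bookkeeping of daggers is confirmed symmetric, the lemma follows; all omitted contributions—lower descent-order parts of $[\bs a],[\bs b]$ and every quasi-product term—are collected into `$+\cdots$' exactly as the statement permits.
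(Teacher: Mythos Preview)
Your approach is essentially the paper's: the paper simply says the lemma is ``established straightforwardly using the P\"oppe product rules in Lemma~\ref{lemma:skewandsymmPoppeproducts}'', and your right-to-left two-stage application of the $[\,\cdot\,]\,[\,\cdot\,]$ then $[\,\cdot\,]\,\{\,\cdot\,\}$ rules, retaining only the leading composition terms $[\bs0 a\bs0]$ and $[\bs0 b\bs0]$ and dropping quasi terms, is exactly that computation.

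One correction to your bookkeeping, though: the coefficient $2$ on the $[\bs0 a\bs0(b{+}2)\bs0\cdots]$ terms does \emph{not} come from ``multiplicity two from both the left-shift and the residual structure''. There is only one source for the composition $a(b{+}2)c$, namely the second Leibniz term when $[\bs0 a\bs0]$ hits $\{\bs0(b{+}1)[\bs0 c(w\times\hat{\bs\varphi})]\}$, producing $[\bs0 a\bs0\{(b{+}2)[\bs0 c(w\times\hat{\bs\varphi})]\}]$. The factor $2$ arises algebraically from the nested bracket expansion: with $X=\bs0 c(w\times\hat{\bs\varphi})$ one has $\{(b{+}2)[X]\}=2\cdot(b{+}2)X-2\cdot(b{+}2)X^\dag$, because there is no intervening binary letter between the outer $\{\,\cdot\,\}$ and inner $[\,\cdot\,]$, so the symmetric and skew parts collapse in pairs rather than spreading over four distinct $\bs0/\bs0^\dag$ patterns as the other terms do. Your cross-check against $[\bs01\bs0\{4[\bs0]\}]$ in Example~\ref{ex:fifthorder} would reveal this, so the plan is sound; just be aware the mechanism is bracket-algebra doubling, not genuine multiplicity.
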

We also define the following auto-tensorial action on $\mathbb R\la\mathbb B\ra$.
\begin{definition}[Tensorial action]
  Given $\bs\beta\coloneqq(\beta_1,\beta_2,\ldots)$ and $\bs\gamma\coloneqq(\gamma_1,\gamma_2,\ldots)$
  in $\mathbb R\la\mathbb B\ra$ we define the (left) \emph{auto-tensorial action} of $\bs\beta$ on $\bs\gamma$,
  denoted $\bs\beta\lhd\bs\gamma$, to be,
  \begin{equation*}
  \bs\beta\lhd\bs\gamma\coloneqq \bigl(\beta_1\cdot\bs\gamma, \beta_2\cdot\bs\gamma,\ldots\bigr),
  \end{equation*}
  where for each $i=1,2,\ldots$, we note $\beta_i\cdot\bs\gamma=(\beta_i\gamma_1,\beta_i\gamma_2,\ldots)$.
\end{definition}
In the new notation, with the tensorial action on $\mathbb R\la\mathbb B\ra$ just defined,
the triple product action on $\mathbb C[\mathbb Z_{\bs 0}]$ given in Lemma~\ref{lemma:tripleproductaction}
can be expressed more succinctly as follows.
\begin{corollary}[Triple product action] \label{cor:tripleproductaction}
  Given two signature expansions $[\bs a]$ and $[\bs b]$ and a generic basis element
  $[cw]\times\bs\beta\in\mathbb C[\mathcal C]\times\mathbb R\la\mathbb B\ra\cong\mathbb C[\mathbb Z_{\bs0}]$,
  the triple product action $[\bs a]\,[\bs b]\,\bigl([cw]\times\bs\beta\bigr)$ on $\mathbb C[\mathcal C]\times\mathbb R\la\mathbb B\ra$
  is given at leading order by,
  \begin{align*}
  [\bs a]\,[\bs b]\,\bigl([cw]\times\bs\beta\bigr)\phantom{uuuu}&\\
  =\chi(a\ot b\ot cw)\cdot\Bigl(&[(a+1)(b+1)cw]\times\bigl((1,0,1,0)\lhd\bs\beta+(-1)^{|w|}\,(0,1,0,1)\lhd\bs\beta^\dag\bigr)\\
    +&[(a+1)b(c+1)w]\times\bigl((1,0,0,1)\lhd(\bs\beta+(-1)^{|w|}\,\bs\beta^\dag)\bigr)\\    
    +&[a(b+2)cw]\times\bigl((2,0,0,0)\lhd\bs\beta+(-1)^{|w|}\,(0,2,0,0)\lhd\bs\beta^\dag\bigr)\\
    +&[a(b+1)(c+1)w]\times \bigl((1,1,0,0)\lhd(\bs\beta+(-1)^{|w|}\,\bs\beta^\dag)\bigr)
    \Bigr)+\cdots.
  \end{align*}
  Here, if $\bs\beta=(\beta_1,\beta_2,\beta_3,\ldots,\beta_{2^n})$, then $\bs\beta^\dag=(\beta_{2^n},\beta_{2^n-1},\ldots,\beta_2,\beta_1)$.
\end{corollary}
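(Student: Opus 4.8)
The plan is to recognise Corollary~\ref{cor:tripleproductaction} as a faithful transcription of Lemma~\ref{lemma:tripleproductaction} into the composition-binary representation $[w]\times\bs\beta$ of Definition~\ref{def:composition-binary}, so that the proof is a bookkeeping argument rather than a fresh computation. First I would take the fourteen leading-order summands displayed in Lemma~\ref{lemma:tripleproductaction} and partition them into the four blocks determined by their integer (composition) content, namely $(a+1)(b+1)cw$, $(a+1)b(c+1)w$, $a(b+2)cw$ and $a(b+1)(c+1)w$; these are exactly the four composition components appearing on the right of the corollary, each carrying the common factor $\chi(a\ot b\ot cw)$ inherited verbatim via the homomorphic property of $\chi$ in Remark~\ref{rmk:homomorphicsigchar}.

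Next, within each block I would read off, for each summand, its string of $\bs0$/$\bs0^\dag$ symbols and encode it through the natural (big-endian) binary ordering of Step~1. The decisive structural observation is that in every summand the first two free binary slots — those flanking the two newly raised letters — are independent of $w$, while the remaining slots reproduce either $\hat{\bs\varphi}$ or its symbol-complement $\hat{\bs\varphi}^\dag$. Because the auto-tensorial action $\bs\beta\lhd\bs\gamma$ is defined precisely so as to splice a four-component ``head'' (the $2^2=4$ choices for the first two slots) onto the $2^{|w|+1}$-component ``tail'' $\bs\beta$, each block collapses to a single $\lhd$-expression once one checks that the head patterns occurring are exactly the tuples $(1,0,1,0),(0,1,0,1)$ for the first block, $(1,0,0,1)$ for the second, $(2,0,0,0),(0,2,0,0)$ for the third (the factor $2$ coming straight from the explicit ``$2\cdot$'' coefficients in Lemma~\ref{lemma:tripleproductaction}), and $(1,1,0,0)$ for the fourth. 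This is a finite verification: there are at most four summands per block, and matching a head pattern to its index in $\{1,2,3,4\}$ is immediate from the ordering examples in Step~1.

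The two genuine bookkeeping facts I would isolate and establish first are the ones governing the appearance of $\bs\beta^\dag$. The first is that complementing every binary symbol of a length-$m$ tail (sending each $\bs0\leftrightarrow\bs0^\dag$) reverses its index in the natural ordering, i.e.\ carries the $2^m$-tuple $\bs\beta$ to $\bs\beta^\dag=(\beta_{2^m},\ldots,\beta_1)$; this is the statement that bitwise complement equals $v\mapsto 2^m-1-v$ on indices, and it is where the tail $\hat{\bs\varphi}^\dag$ turns into $\bs\beta^\dag$. The second is that whenever a summand carries a daggered tail $(w\times\hat{\bs\varphi})^\dag$, re-expressing it over the positive composition $w$ produces the sign $(-1)^{|w|}$; this is exactly the relation $w^\dag=(-1)^{|w|}w$ together with $[(w\times\bs\varphi)^\dag]=-[w\times\bs\varphi]$ recorded in Definition~\ref{rmk:convention} and Definition~\ref{def:skewsymmparts}. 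Feeding these two facts into the block-by-block matching of the previous paragraph places each daggered-tail summand into the $(-1)^{|w|}(\cdots)\lhd\bs\beta^\dag$ sector, completing the identification.

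I expect the sign/adjoint bookkeeping to be the only real obstacle. Concretely, the delicate point is tracking how the $\dag$ on a tail simultaneously complements its binary symbols (giving the reversal $\bs\beta^\dag$) and negates its $|w|$ integer letters (giving, via Definition~\ref{rmk:convention}, the factor $(-1)^{|w|}$ while restoring the positive composition $w$), so that the net contribution lands in the correct sector with the correct sign. Everything else — the partition into four blocks, the extraction of the head tuples, and the recognition of the $\lhd$ structure — is routine once the ordering conventions of Step~1 and the definition of $\lhd$ are in hand.
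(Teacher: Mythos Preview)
Your proposal is correct and takes essentially the same approach as the paper: both recognise the corollary as a direct transcription of Lemma~\ref{lemma:tripleproductaction} into the composition-binary notation, with the key observation that complementing the binary tail $\hat{\bs\varphi}\mapsto\hat{\bs\varphi}^\dag$ reverses the index in the natural ordering (yielding $\bs\beta^\dag$). Your treatment is more explicit than the paper's---you spell out the partition into four composition blocks, the extraction of the four-component head tuples, and the origin of the sign $(-1)^{|w|}$ via $w^\dag=(-1)^{|w|}w$---whereas the paper's proof simply asserts the restatement and justifies only the reversal; note that the relation $[(w\times\bs\varphi)^\dag]=-[w\times\bs\varphi]$ you invoke is not actually needed for the sign, since $(w\times\hat{\bs\varphi})^\dag$ appears as a subword inside the outer skew bracket and the sign comes purely from $w^\dag=(-1)^{|w|}w$.
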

\begin{proof}
The result of the corollary is just a restatement of the triple product action in Lemma~\ref{lemma:tripleproductaction}.
That the adjoint of $\bs\beta$, denoted $\bs\beta^\dag$, corresponds to reversing the elements in $\bs\beta$ is explained as follows.
The entries in $\bs\beta$ correspond to the coefficients of the basis $\bs\phi_1$, $\bs\phi_2$, $\ldots$, $\hat{\bs\phi}_{2^n}$, for some $n\in\mathbb N$.
The triple product action in Lemma~\ref{lemma:tripleproductaction} involves the components $\hat{\bs\varphi}^\dag$ where
$\hat{\bs\varphi}$ corresponds to $\bs\varphi$ with its first letter `$\,\bs0$' removed. Let $\hat{\bs\phi}_1$, $\hat{\bs\phi}_2$, $\ldots$, $\hat{\bs\phi}_{2^n}$,
be the same sequence of basis elements each of which has the first letter `$\,\bs0$' removed. We observe,
$\bigl\{\hat{\bs\phi}_1^\dag,\hat{\bs\phi}_2^\dag,\ldots,\hat{\bs\phi}_{2^n}^\dag\bigr\}=\bigl\{\hat{\bs\phi}_{2^n},\hat{\bs\phi}_{2^n-1},\ldots,\hat{\bs\phi}_{1}\bigr\}$.
\qed
\end{proof}
\begin{example}\label{ex:generaltripleproduct}
  Consider computing the triple product $[\bs a]\,[\bs b]\,[\bs c]$ to leading order. In this case,
  to leading order $[\bs c]=[\bs0 c\bs 0]+\cdots$ and so in Corollary~\ref{cor:tripleproductaction} we have $w=\nu$, the empty word, with $|w|=0$. 
  We observe, $[\bs 0c\bs0]=[c]\times\bs\gamma$ with $\bs\gamma=(1,0)\in\mathbb R\la\mathbb B\ra$.
  Hence we have,
  \begin{align*}
    (1,0,1,0)\lhd\bs\gamma&=\bigl(1\cdot(1,0),\,0\cdot(1,0),\,1\cdot(1,0),\,0\cdot(1,0)\,\bigr)=(1,0,0,0,1,0,0,0),\\
    (0,1,0,1)\lhd\bs\gamma^\dag&=\bigl(0\cdot(0,1),\,1\cdot(0,1),\,0\cdot(0,1),\,1\cdot(0,1)\,\bigr)=(0,0,0,1,0,0,0,1),
  \end{align*}
  and so forth. Hence we observe from Corollary~\ref{cor:tripleproductaction} that at leading order,
  \begin{align*}
  [\bs a]\,[\bs b]\,[\bs c]
  =\chi(a\ot b\ot c)\cdot\bigl(&[(a+1)(b+1)c]\times(1,0,0,1,1,0,0,1)\\
    &\;+[(a+1)b(c+1)]\times(1,1,0,0,0,0,1,1)\\
    &\;+[a(b+2)c]\times(2,0,0,2,0,0,0,0)\\
    &\;+[a(b+1)(c+1)]\times(1,1,1,1,0,0,0,0)\bigr)+\cdots.
  \end{align*}
\end{example}
We extensively use such computations hereafter. 
\medskip

\emph{Step~3: Generators, a coarse-grain overview.}
For a given basis element $[w]\times\bs\beta$ and composition $w$ of $n\in\mathbb N$, it is
useful to identify the types of odd-degree monomials of signature expansions that might generate it.
\begin{definition}[Monomial generator]\label{def:generator}
  Given a basis element $[w]\times\bs\beta$ with a composition component $[w]$ and an $\mathbb R\la\mathbb B\ra$-component $\bs\beta$,
  where $w$ is a composition of $n\in\mathbb N$ and $\bs\beta$ has length $2^n$, we call any odd-degree
  monomial of signature expansions of the form $[\bs a_1]\,[\bs a_2]\,\cdots\,[\bs a_{2m+1}]$ that produces
  $[w]\times\bs\beta$ as one of the terms in its expansion, a \emph{monomial generator} or just \emph{generator} of $[w]\times\bs\beta$.
\end{definition}
At this stage and in this step, it is useful to give a brief coarse overview of our overall strategy,
which we implement in detail in the subsequent steps below. We show in this step how, for any given composition component,
we can identify, for the associated basis elements, specific collections of generators. We call the sets of
basis elements and corresponding collections of generators ``coefficient blocks'' or simply ``blocks''.
We show that such blocks are necessarily square.  
To start, consider any one-part composition $w$ of $n$, so the two corresponding basis elements
are $[\bs0 n\bs0]\rightsquigarrow[n]\times(1,0)$ and $[\bs0 n\bs0^\dag]\rightsquigarrow[n]\times(0,1)$.
The basis element $[n]\times(1,0)$ is generated by signature expansion $[\bs n]$, while
$[n]\times(0,1)$ is not. On the other hand, both basis elements are generated by $[\bs0]\,[\bs{n-2}]\,[\bs0]$.
This exhausts all the possible odd-degree monomials in $\pi_n$ that could generate
$[n]\times(1,0)$ and  $[n]\times(0,1)$. Thus for a one-part composition, the possible odd-degree generators
have the form,
\begin{equation*}
[\star],~[\bs0]\,[\star]\,([\bs0]),
\end{equation*}
where $[\star]$ represents the appropriate generic signature expansion, i.e.\/ in the first instance it is $[\bs n]$
and in the second instance, i.e.\/ for $[\bs0]\,[\star]\,([\bs0])$, the middle $[\star]$ factor is $[\bs{n-2}]$.
Note, we allow $[\star]=[\bs0]$.
The only other possibilities are $[\star]\,[\bs0]\,[\bs0]$ and $[\bs0]\,[\bs0]\,[\star]$.
However for $n\geqslant3$, we can rule these two possibilities out as $[\bs0]\,[\bs0]=2\cdot\{\bs01\bs0\}$ and any subsequent P\"oppe product
of this term with $[\bs0]$ would generate a basis element $[w]\times\bs\beta$ where the composition $w$ has two parts.
Note of course, the triple P\"oppe product $[\bs a]\,[\bs b]\,[\bs c]$ is naturally associative.

Now consider any two part composition $w=a_1a_2$ of $n$. We observe that basis elements with such a two-part  
composition component can in principle be generated by $[\star]$ and $[\bs0]\,[\star]\,[\bs0]$, which we have already come across just above.
However such basis elements can also be generated by any of the following four generators of the form,
\begin{equation*}
[\star]\,[\star]\,([\bs0]),~[\star]\,[\bs0]\,([\star]),~[\bs0]\,[\star]\,([\star]),~[\bs0]\,[\star]\,\bigl([\bs0]\,[\star]\,([\bs0])\bigr).
\end{equation*}
We observe that each possible generator above contains only two `$[\star]$' factors, consistent with the
two-part composition component of the basis elements we are aiming to generate. Further note that
we can also see that the four generators above can be constructed from the previous two generators
$[\star]$ and $[\bs0]\,[\star]\,([\bs0])$ corresponding to one-part compositions, by applying
one of the three actions $[\star]\,[\star]\,(\cdot)$, $[\star]\,[\bs0]\,(\cdot)$ or $[\bs0]\,[\star]\,(\cdot)$ to them.
For example, the first generator above is constucted by applying the action $[\star]\,[\star]\,(\cdot)$ to $[\star]=[\bs0]$,
where we must set the argument $[\star]=[\bs0]$ to preserve the two-part composition component of the basis element
we wish to generate. The next two generators above are constructed by applying the actions
$[\star]\,[\bs0]\,(\cdot)$ or $[\bs0]\,[\star]\,(\cdot)$ to $[\star]$. Now consider the final quintic generator above.
Applying the action $[\star]\,[\star]\,(\cdot)$ to $[\bs0]\,[\star]\,([\bs0])$ would produce a generator with too many `$[\star]$'
factors, while in principle, either of the actions $[\star]\,[\bs0]\,(\cdot)$ or $[\bs0]\,[\star]\,(\cdot)$ could be
applied to $[\bs0]\,[\star]\,([\bs0])$. However the action of $[\star]\,[\bs0]\,(\cdot)$ on $[\bs0]\,[\star]\,(\cdot)$
is nilpotent. We demonstrate this below in Lemma~\ref{lemma:nilpotentaction}. Hence the action
$[\star]\,[\bs0]\,(\cdot)$ rigorously applied to $[\bs0]\,[\star]\,([\bs0])$ produces zero. Thus the
only viable action is $[\bs0]\,[\star]\,(\cdot)$ on $[\bs0]\,[\star]\,([\bs0])$ producing the quintic generator shown.
From another perspective, for the quintic generator, in the case $n\geqslant5$, any other quintic arrangment with 
two $[\star]$- and three $[\bs0]$-factors, would necessitate a consecutive pair `$[\bs0]\,[\bs0]$' that would result
in generating a basis element whose composition component has more than two parts.

In the case of basis elements with a three-part composition component $w=a_1a_2a_3$,
the possible generators are, in principle, any of the generators we have already seen, as well as, 
the generators of the form,
\begin{align*}
  [\star]\,[\star]\,([\star]),~
  &[\star]\,[\star]\,\bigl([\bs0]\,[\star]\,([\bs0])\bigr),
  ~[\star]\,[\bs0]\,\bigl([\star]\,[\star]\,([\bs0])\bigr),
  ~[\star]\,[\bs0]\,\bigl([\star]\,[\bs0]\,([\star])\bigr),\\
  &[\bs0]\,[\star]\,\bigl([\star]\,[\star]\,([\bs0])\bigr),
  ~[\bs0]\,[\star]\,\bigl([\star]\,[\bs0]\,([\star])\bigr),
  ~[\bs0]\,[\star]\,\bigl([\bs0]\,[\star]\,([\star])\bigr),\\
  &[\bs0]\,[\star]\,\bigl([\bs0]\,[\star]\,\bigl([\bs0]\,[\star]\,([\bs0])\bigr)\bigr).
\end{align*}
We remark that each possible generator above contains only three `$[\star]$' factors.
We see that the first two generators are constructed by applying the action $[\star]\,[\star]\,(\cdot)$
to the generators for basis elements with one-part composition components.
The next set of generators are constructed by applying the action $[\star]\,[\bs0]\,(\cdot)$ to
the generators for basis elements with two-part composition components, taking into account 
the nilpotent action of $[\star]\,[\bs0]\,(\cdot)$ on $[\bs0]\,[\star]\,(\cdot)$.
This accounts for the next two generators. Then the final four generators are constructed
by applying the action $[\bs0]\,[\star]\,(\cdot)$ to all four of the generators 
for basis elements with two-part composition components.

For the case of basis elements with a four-part composition component $w=a_1a_2a_3a_4$,
the possible generators are, in principle, besides any of the generators we have already seen, 
generators of the form,
\begin{align*}
  &[\star]\,[\star]\,\bigl([\star]\,[\star]\,([\bs0])\bigr),~[\star]\,[\star]\,\bigl([\star]\,[\bs0]\,([\star])\bigr),
  ~[\star]\,[\star]\,\bigl([\bs0]\,[\star]\,([\star])\bigr),~[\star]\,[\bs0]\,\bigl([\star]\,[\star]\,([\star])\bigr),\\
  &[\bs0]\,[\star]\,\bigl([\star]\,[\star]\,([\star])\bigr),
  ~[\star]\,[\star]\,\bigl([\bs0]\,[\star]\,\bigl([\bs0]\,[\star]\,([\bs0])\bigr)\bigr),
  ~[\star]\,[\bs0]\,\bigl([\star]\,[\star]\,\bigl([\bs0]\,[\star]\,([\bs0])\bigr)\bigr),\\
  &[\star]\,[\bs0]\,\bigl([\star]\,[\bs0]\,\bigl([\star]\,[\star]\,([\bs0])\bigr)\bigr),
  ~[\star]\,[\bs0]\,\bigl([\star]\,[\bs0]\,\bigl([\star]\,[\bs0]\,([\star])\bigr)\bigr),
  ~[\bs0]\,[\star]\,\bigl([\star]\,[\star]\,\bigl([\bs0]\,[\star]\,([\bs0])\bigr)\bigr),\\
  &[\bs0]\,[\star]\,\bigl([\star]\,[\bs0]\,\bigl([\star]\,[\star]\,([\bs0])\bigr)\bigr),
  ~[\bs0]\,[\star]\,\bigl([\star]\,[\bs0]\,\bigl([\star]\,[\bs0]\,([\star])\bigr)\bigr),
  ~[\bs0]\,[\star]\,\bigl([\bs0]\,[\star]\,\bigl([\star]\,[\star]\,([\bs0])\bigr)\bigr),\\
  &[\bs0]\,[\star]\,\bigl([\bs0]\,[\star]\,\bigl([\star]\,[\bs0]\,([\star])\bigr)\bigr),
  ~[\bs0]\,[\star]\,\bigl([\bs0]\,[\star]\,\bigl([\bs0]\,[\star]\,([\star])\bigr)\bigr),\\
  &[\bs0]\,[\star]\,\bigl([\bs0]\,[\star]\,\bigl([\bs0]\,[\star]\,\bigl([\bs0]\,[\star]\,([\bs0])\bigr)\bigr)\bigr).
\end{align*}
Again, each possible generator above contains only four `$[\star]$' factors.
They are constructed by applying the action 
$[\star]\,[\star]\,(\cdot)$ to the generators for basis elements with two-part composition components,
applying the action $[\star]\,[\bs0]\,(\cdot)$ to
the generators for basis elements with three-part composition components, taking into account 
the nilpotent action of $[\star]\,[\bs0]\,(\cdot)$ on $[\bs0]\,[\star]\,(\cdot)$, and
then also applying the action $[\bs0]\,[\star]\,(\cdot)$ to all of the generators 
for basis elements with three-part composition components.

We have seen that for basis elements with a composition component with $k=1,2,3$ or $4$ parts,
the number of generators that might produce such a basis element is $2^{k}$.
We have not shown that corresponding to a given basis element, the generators
constructed in the manner indicated are unique at leading order. We demonstrate
this below in Steps $7$ and $8$. Assuming this is the case for the moment,
we have the following.
\begin{lemma}[Generator block size]\label{lemma:magicalmatch1}
  For a given basis element with composition component $w$, the number of
  monomial generators that can generate that basis element at leading order is $2^{|w|}$.
\end{lemma}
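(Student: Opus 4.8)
The plan is to prove Lemma~\ref{lemma:magicalmatch1} by induction on $|w|$, the number of parts of the composition component, following precisely the ``coarse-grain'' construction laid out in Step~3. The inductive claim is: for each $k\geqslant1$, the set of monomial generators whose triple-product expansion produces a basis element $[w]\times\bs\beta$ with $|w|=k$ has exactly $2^{k}$ elements, and moreover these generators are exactly those obtained from the two generators $[\star]$ and $[\bs0]\,[\star]\,([\bs0])$ of the one-part case by iterating the three elementary actions $[\star]\,[\star]\,(\cdot)$, $[\star]\,[\bs0]\,(\cdot)$ and $[\bs0]\,[\star]\,(\cdot)$ in all admissible ways.

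For the \textbf{base case} $k=1$, I would argue directly as in Step~3: a basis element $[n]\times(1,0)$ or $[n]\times(0,1)$ with a one-part composition component can only arise from a monomial of signature expansions whose total ``derivative weight'' across the non-$[\bs0]$ factors is $n$ and which contains at most one non-$[\bs0]$ factor producing the letter $n$ (or $n-2$, with two $[\bs0]$'s supplying the extra weight via the quasi-product terms in Lemma~\ref{lemma:skewandsymmPoppeproducts}). The monomials $[\star]\,[\bs0]\,[\bs0]$ and $[\bs0]\,[\bs0]\,[\star]$ are excluded because $[\bs0]\,[\bs0]=2\cdot\{\bs01\bs0\}$ is a symmetric form with a two-part composition component, and any further P\"oppe product with $[\bs0]$ only increases the number of parts; this uses Remark~\ref{rmk:zeroword} and the product rules in Lemma~\ref{lemma:skewandsymmPoppeproducts}. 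Hence exactly $2=2^{1}$ generators, namely $[\star]$ and $[\bs0]\,[\star]\,([\bs0])$.

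For the \textbf{inductive step}, assume the claim for all compositions with fewer than $k$ parts. I would show that every generator of a $k$-part basis element is obtained from a generator of a $(k-1)$-part or $(k-2)$-part basis element by one of the three elementary actions, and conversely. The key structural observations are: (i) each of the three actions adds exactly one $[\star]$ factor, so a $k$-part generator has exactly $k$ many $[\star]$ factors (each $[\star]$ contributes at least one part to the composition component, and a $[\bs0]$ on its own contributes none unless it participates in a quasi-product, which raises the part count); (ii) in any admissible generator, two consecutive $[\bs0]$ factors are forbidden (by the same symmetric-form argument as in the base case), which pins down the possible patterns of interleaving $[\star]$'s and $[\bs0]$'s; (iii) the composite action $[\star]\,[\bs0]\,(\cdot)$ applied to a generator beginning $[\bs0]\,[\star]\,(\cdot)$ is nilpotent---this is the content of the forthcoming Lemma~\ref{lemma:nilpotentaction}, which I would invoke---so the recursion tree of generators is pruned in exactly the way that makes the count double rather than triple at each stage. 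Counting the admissible patterns: the $2^{k-1}$ generators coming from $(k-1)$-part generators via $[\bs0]\,[\star]\,(\cdot)$, together with the generators coming from $(k-2)$-part generators via $[\star]\,[\star]\,(\cdot)$ and from $(k-1)$-part generators via $[\star]\,[\bs0]\,(\cdot)$ (with the nilpotent branch removed), assemble to exactly $2^{k}$. A clean way to see the arithmetic is to set up a bijection between admissible generators of a $k$-part basis element and binary strings of length $k$: reading the generator left to right, record a bit for each $[\star]$ factor according to whether it is immediately preceded by $[\bs0]$ or by $[\star]$ (with the leftmost factor treated as a fixed boundary case), and check that the nilpotency constraint and the no-consecutive-$[\bs0]$ constraint are exactly what make this map a bijection onto $\{0,1\}^{k}$.

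The \textbf{main obstacle} I anticipate is rigorously justifying observation (i)---that each elementary action contributes exactly one part and never accidentally more (via a stray quasi-product term) or fewer---and, relatedly, verifying that the enumeration of admissible interleaving patterns is genuinely exhaustive and non-redundant. This is where the associativity of the triple P\"oppe product (noted at the end of Step~3) and the precise leading-order form of the triple product action in Lemma~\ref{lemma:tripleproductaction} and Corollary~\ref{cor:tripleproductaction} do the work: at leading order, $[\bs a]\,[\bs b]\,[\star]$ never lowers the part count below that of $[\star]$ plus two, and the only subtlety is the coincidence of generators that a priori look different but produce the same basis element---which, as the statement of the lemma explicitly flags, is deferred to the uniqueness argument of Steps~7 and~8. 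So strictly speaking this proof establishes the \emph{upper bound} $2^{|w|}$ on the number of distinct generators unconditionally, and the matching \emph{lower bound} (that all $2^{|w|}$ constructed generators really are distinct at leading order) is precisely what Steps~7--8 supply; I would state this dependency explicitly rather than attempt to close it here.
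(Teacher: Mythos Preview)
Your proposal is correct and follows essentially the same inductive argument as the paper: both argue by induction on $|w|$, construct the $k$-part generators from lower-level ones via the three actions $[\star]\,[\star]\,(\cdot)$, $[\star]\,[\bs0]\,(\cdot)$ and $[\bs0]\,[\star]\,(\cdot)$, invoke the nilpotency of $[\star]\,[\bs0]\,(\cdot)$ on generators beginning with $[\bs0]\,[\star]$, and recover the count $2^{k-2}+2^{k-2}+2^{k-1}=2^{k}$ (the paper phrases the same identity one level up as $2^{k-1}+2^{k-1}+2^{k}=2^{k+1}$). Your bijection with binary strings of length $k$ is a pleasant extra that the paper does not include, and your explicit acknowledgment that distinctness of the constructed generators is deferred to Steps~7--8 matches the paper's own caveat verbatim.
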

\begin{proof}
  As observed, the result is true for $|w|=1,2,3,4$. Assume the result is
  true for $|w|=1,2,\ldots,k$ for some $k\in\mathbb N$.
  The set of generators for basis elements with composition components of $k+1$ parts
  are constructed by: (i) Applying the action  $[\star]\,[\star]\,(\cdot)$
  to the generators for basis elements with $(k-1)$-part composition components of which there are $2^{k-1}$ by assumption;
  (ii) Applying the action $[\star]\,[\bs0]\,(\cdot)$ to the generators for basis elements with $k$-part composition components, taking into account 
  the nilpotent action of $[\star]\,[\bs0]\,(\cdot)$ on $[\bs0]\,[\star]\,(\cdot)$. Since there are $2^k$ generators corresponding to any
  basis element with a composition of $k$-parts, and half of these start with the factor `$\,[\bs0]\,\,[\star]$', there are $2^{k-1}$
  generators constructed in this way; and then finally (iii) Applying the action $[\bs0]\,[\star]\,(\cdot)$ to all of the generators 
  for basis elements with $k$-part composition components, of which there are $2^k$. Adding these three contributions up, $2^{k-1}+2^{k-1}+2^k=2^{k+1}$,
  and the result follows by induction.
\qed
\end{proof}
We can also view this last result from another perspective as follows.
For each $k$-part composition, when $k$ is odd, the set of new generators are characterised 
as follows. First, we include the degree $k$ monomial $[\star]\,[\star]\,\cdots\,[\star]$,
of which there is only one choice. We also include the degree $k+2$ monomials which
contain two non-adjacent `$\,[\bs0]$' factors; there are $k+1$ choose $2$ possible monomials of this form.
Then we can also include degree $k+4$ monomials which contain four non-adjacent `$\,[\bs0]$' factors;
there are $k+1$ choose $4$ possible monomials of this form. And so forth until we reach the single
degree $2k+1$ monomial of the form $[\bs0]\,[\star]\,[\bs0]\,[\star]\,[\bs0]\,\cdots\,[\star]\,[\bs0]$.
Here we have implicitly used that the number of ways to place $r$ objects in non-adjacent slots whose
total number is $m$, is given by $m-r+1$ choose $r$. In the examples just presented, we
considered the number of possible ways of placing $2\ell$ factors of the form `$\,[\bs0]$' in
a monomial of degree $k+2\ell$, for $\ell=0,1,\ldots,(k+1)/2$. Hence the total number of
monomials of each degree outlined being $k+1$ choose $2\ell$. Thus with $k$ odd, the total number of
such new odd-degree monomials is given by the sum over $\ell=0,1,\ldots,(k+1)/2$ of $k+1$ choose $2\ell$,
i.e.\/ the sum on the left shown in Lemma~\ref{lemma:magicalmatch2}. 
Suppose now $k$ is even. The lowest degree monomials that might generate the corresponding basis element  
are those of degree $k+1$ with a single factor `$\,[\bs0]$'. There are $k+1$ such monomials.
We can also include degree $k+3$ monomials with three non-adjacent factors `$\,[\bs0]$'; there
are $k+1$ choose $3$ such possible monomials, and so forth. In the final highest degree monomial,
of degree $2k+1$ has the single form $[\bs0]\,[\star]\,[\bs0]\,[\star]\,[\bs0]\,\cdots\,[\star]\,[\bs0]$.
Thus with $k$ even, the total number of such new odd-degree monomials is given by the sum
over $\ell=0,1,\ldots,k/2$ of $k+1$ choose $2\ell+1$, i.e.\/ the sum on the right shown in Lemma~\ref{lemma:magicalmatch2}. 
In consequence we have the following important result.
\begin{lemma}\label{lemma:magicalmatch2}
  The aforementioned sums, in the respective $k$ is odd and then even cases, are equal to $2^k$.
  In other words, respectively, when $k$ is odd and then even, we have, 
  \begin{equation*}
    \sum_{\ell=0}^{(k+1)/2}\begin{pmatrix} k+1\\ 2\ell\end{pmatrix}=2^k\qquad\text{and}\qquad
    \sum_{\ell=0}^{k/2}\begin{pmatrix} k+1\\ 2\ell+1\end{pmatrix}=2^k.
  \end{equation*}
\end{lemma}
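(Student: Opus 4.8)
The claim reduces to a completely standard binomial identity, so the plan is to recognize it as such and give the one-line argument. First I would recall the binomial theorem applied to $(1+1)^{k+1}=2^{k+1}$ and to $(1-1)^{k+1}=0$, which give
\[
\sum_{j=0}^{k+1}\binom{k+1}{j}=2^{k+1}\qquad\text{and}\qquad\sum_{j=0}^{k+1}(-1)^j\binom{k+1}{j}=0.
\]
Adding these two identities, the odd-$j$ terms cancel and the even-$j$ terms double, yielding $\sum_{j\text{ even}}\binom{k+1}{j}=2^{k}$; subtracting them instead leaves $\sum_{j\text{ odd}}\binom{k+1}{j}=2^{k}$. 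Both right-hand sides are $2^{k}$, independently of the parity of $k+1$.

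The only remaining point is to check that the index ranges in the statement of Lemma~\ref{lemma:magicalmatch2} genuinely enumerate, respectively, all even $j$ and all odd $j$ in $\{0,1,\dots,k+1\}$. When $k$ is odd, $k+1$ is even, so the even values of $j$ in range are $j=2\ell$ for $\ell=0,1,\dots,(k+1)/2$, which is exactly the left-hand sum; hence it equals $2^{k}$. When $k$ is even, $k+1$ is odd, so the odd values of $j$ are $j=2\ell+1$ for $\ell=0,1,\dots,k/2$ (the largest being $2(k/2)+1=k+1$), which is exactly the right-hand sum; hence it too equals $2^{k}$. One should note $\binom{k+1}{j}=0$ by convention whenever $j>k+1$, but with the stated ranges this never occurs, so no boundary fudging is needed.

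There is essentially no obstacle here: the identity is elementary and the combinatorial interpretation of the two sums was already spelled out in the discussion preceding the lemma (counting placements of $2\ell$ non-adjacent $[\bs0]$-factors in a monomial, which forces an even or odd total according to the parity of $k$). In the write-up I would simply present the add/subtract-the-binomial-expansions argument and then match the two cases to the index sets, perhaps remarking that this is the same manipulation that shows a finite set has as many even-sized as odd-sized subsets. No induction or generating-function machinery is required, in contrast to the inductive counting arguments used for Lemmas~\ref{lemma:magicalmatch1} and~\ref{lemma:naturaliteration}.
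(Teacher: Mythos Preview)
Your proof is correct and is actually the cleaner route. The paper does something different: for $k$ odd it applies Pascal's identity $\binom{k+1}{2\ell}=\binom{k}{2\ell-1}+\binom{k}{2\ell}$ to each term, so the sum over even indices in row $k+1$ becomes the full sum $\sum_{j=0}^{k}\binom{k}{j}=2^{k}$ in row $k$; for $k$ even it runs the same idea in reverse, pairing consecutive entries of $(1+1)^{k}$ into $\binom{k+1}{2\ell+1}$. Your $(1+1)^{k+1}\pm(1-1)^{k+1}$ argument bypasses this reduction entirely and treats both parities uniformly, at the cost of invoking the alternating-sign identity rather than only the plain binomial expansion. Either way the content is the same elementary fact that a set has equally many even- and odd-sized subsets; your version is shorter and requires less index bookkeeping, while the paper's version stays slightly closer to the combinatorial picture of splitting each generator count into two smaller ones. (A minor aside: your closing remark mis-cites Lemma~\ref{lemma:naturaliteration}, which is an algebraic identity in the P\"oppe algebra rather than a counting argument; this does not affect your proof.)
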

\begin{proof}
  Suppose $k$ is odd. Then by direct computation, we observe,
  \begin{align*}
    \sum_{\ell=0}^{(k+1)/2}\!\frac{(k+1)!}{(k+1-2\ell)!(2\ell)!}
     =&\;2+\!\sum_{\ell=1}^{(k-1)/2}\!\frac{k!}{(k-2\ell)!(2\ell-1)!}\biggl(\frac{1}{k-2\ell+1}+\frac{1}{2\ell}\biggr)\\
     =&\;2+\!\sum_{\ell=1}^{(k-1)/2}\!\frac{k!}{(k+1-2\ell)!(2\ell-1)!}+\!\sum_{\ell=1}^{(k-1)/2}\!\frac{k!}{(k-2\ell)!(2\ell)!}\\
     =&\;1+\begin{pmatrix} k\\ 1 \end{pmatrix}+\begin{pmatrix} k\\ 2\end{pmatrix}+\begin{pmatrix} k\\ 3\end{pmatrix}+\cdots+\begin{pmatrix} k\\ k-1\end{pmatrix}+1,
  \end{align*}
  where we matched up respective pairs from the sums and then used that $2^k=(1+1)^k$. This gives the first result.
  When $k$ is even, we again use that $2^k=(1+1)^k$, and observe,
  \begin{equation*}
    2^k=\begin{pmatrix} k\\ 0 \end{pmatrix}+\begin{pmatrix} k\\ 1\end{pmatrix}+\begin{pmatrix} k\\ 2\end{pmatrix}+\cdots
    +\begin{pmatrix} k\\ k\end{pmatrix}=\sum_{\ell=0}^{k/2}\frac{(k+1)!}{(k-2\ell)!(2\ell+1)!},
  \end{equation*}
  where we paired up successive terms and parameterised the pairs by $\ell=0,1,\ldots,k/2$. 
  This gives the second result.
\qed
\end{proof}
The crucial observation from the result of Lemmas~\ref{lemma:magicalmatch1} and \ref{lemma:magicalmatch2} is the following.
\begin{corollary}[Generator-tuple dimension match]\label{corollary:dimmatch}
  For a given composition component $w$ of a block set of basis elements $[w]\times\bs\beta$ parameterised
  by the tuples $\bs\beta\in\mathbb R\la\mathbb B\ra$, 
  the number of new monomial generators equals the dimension of the tuple block, i.e. $2^{|w|}$.
\end{corollary}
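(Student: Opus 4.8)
The plan is to observe that Corollary~\ref{corollary:dimmatch} is a bookkeeping consequence of the two lemmas immediately preceding it, together with the composition-binary setup of Step~1; there is essentially nothing new to prove beyond matching two quantities that have already been computed. First I would pin down the dimension of the tuple block. For a composition $w$ of $n$ with $|w|$ parts, a basis element $[w\times\bs\varphi]$ has a $\mathbb Z_2^\ast$-component $\bs\varphi$ of length $|w|+1$: one binary letter from $\{\bs0,\bs0^\dag\}$ between each adjacent pair of integer letters of $w$, plus one at each end. By Remark~\ref{rmk:basiselements} we may always normalise the first letter of $\bs\varphi$ to $\bs0$, which leaves exactly $2^{|w|}$ admissible binary words $\bs\phi_1,\ldots,\bs\phi_{2^{|w|}}$. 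Hence, in the composition-binary representation $[w]\times\bs\beta$ of Definition~\ref{def:composition-binary}, the $\mathbb R\la\mathbb B\ra$-component $\bs\beta$ ranges over a space of dimension precisely $2^{|w|}$; this is the dimension of the tuple block.

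Next I would simply invoke Lemma~\ref{lemma:magicalmatch1}, which already asserts that the number of monomial generators producing a basis element with composition component $w$ is $2^{|w|}$. Its proof is the induction sketched in Step~3: passing from $k$ parts to $k+1$ parts, the new generators arise by applying the three actions $[\star]\,[\star]\,(\cdot)$, $[\star]\,[\bs0]\,(\cdot)$ and $[\bs0]\,[\star]\,(\cdot)$, with the nilpotency recorded in Lemma~\ref{lemma:nilpotentaction} removing exactly the overlap, so that $2^{k-1}+2^{k-1}+2^{k}=2^{k+1}$. As an independent check one can instead use Lemma~\ref{lemma:magicalmatch2}: grading the new generators by degree, for $k=|w|$ odd the degree-$(k+2\ell)$ generators are in bijection with the $\binom{k+1}{2\ell}$ ways of placing $2\ell$ non-adjacent $[\bs0]$-factors among $k+2\ell$ slots, and $\sum_{\ell}\binom{k+1}{2\ell}=2^k$; for $k$ even one gets $\sum_{\ell}\binom{k+1}{2\ell+1}=2^k$. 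Either route yields a generator count of $2^{|w|}$.

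Comparing the two values, the tuple block dimension and the new-generator count both equal $2^{|w|}$, which is exactly the claim of the corollary; so the corollary itself is a one-line comparison. The point that genuinely requires care — and which I would flag as the real obstacle, although it is discharged later in Steps~7 and~8 rather than here — is that the generators produced by the three actions must be honestly distinct and must each contribute a nontrivial, linearly independent leading-order term, so that this counting is not spoiled by coincidences or cancellations among generators. Granting that (as established in those later steps), the equality of the two dimensions follows immediately from Lemmas~\ref{lemma:magicalmatch1} and~\ref{lemma:magicalmatch2} and the normalisation of $\bs\varphi$.
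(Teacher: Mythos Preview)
Your proposal is correct and mirrors the paper's own treatment: the corollary is stated without a separate proof, as an immediate consequence of Lemmas~\ref{lemma:magicalmatch1} and~\ref{lemma:magicalmatch2} together with the composition-binary setup from Step~1 (in particular the normalisation of the first letter of $\bs\varphi$ to $\bs0$ that fixes the tuple-block dimension at $2^{|w|}$). Your caveat about the genuine distinctness of the generators being deferred to Steps~7 and~8 is also exactly how the paper handles it.
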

One of our main concerns now is to show that the resulting square block of signature coefficients has full rank.
The next three steps address this issue, making the analysis of this section more precise.\medskip

\emph{Step~4: The three standard triple actions.}
We have seen that the triple product action in Corollary~\ref{cor:tripleproductaction}, in the full form given therein,
as well as in the special forms $[\bs a]\,[\bs0]\,\bigl(\cdot\bigr)$ and $[\bs0]\,[\bs b]\,\bigl(\cdot\bigr)$,
are used to construct the generators corresponding to a given basis element. We call these three actions the
standard triple actions.
\begin{definition}[Standard triple actions]\label{def:standardtripleactions}
  We call the actions $[\bs a]\,[\bs b]\,\bigl(\cdot\bigr)$, $[\bs a]\,[\bs0]\,\bigl(\cdot\bigr)$
  and $[\bs0]\,[\bs b]\,\bigl(\cdot\bigr)$ the three \emph{standard} triple actions.
\end{definition}
The result of the action $[\bs a]\,[\bs b]\,\bigl(\cdot\bigr)$ is given in Corollary~\ref{cor:tripleproductaction}.
As we use them frequently hereafter, we record the result of the standard actions
$[\bs a]\,[\bs0]\,(\cdot)$ and $[\bs0]\,[\bs b]\,(\cdot)$ in the following Corollary.
They are just special cases which we call the \emph{special actions}. 
\begin{corollary}[Special actions]\label{cor:specialactions}
  The two special actions $[\bs a]\,[\bs0]\,\bigl(\cdot\bigr)$ and $[\bs0]\,[\bs b]\,\bigl(\cdot\bigr)$
  are given at leading order by,
  \begin{align*}
  [\bs a]\,[\bs0]\,\bigl([cw]\times\bs\beta\bigr)
  =&\;\chi(a\ot 0\ot cw)\!\cdot\![(a+1)(c+1)w]\times\bigl((1,-1)\lhd\bigl(\bs\beta+(-1)^{|w|}\bs\beta^\dag\bigr)\bigr)+\cdots,\\  
  [\bs0]\,[\bs b]\,\bigl([cw]\times\bs\beta\bigr)
  =&\;\chi(0\ot b\ot cw)\!\cdot\!\Bigl([(b+2)cw]\times\bigl((2,0)\lhd\bs\beta+(-1)^{|w|}(0,2)\lhd\bs\beta^\dag\bigr)\\
  &\;\qquad\qquad+[(b+1)(c+1)w]\times\bigl((1,1)\lhd\bigl(\bs\beta+(-1)^{|w|}\bs\beta^\dag\bigr)\bigr)\Bigr)+\cdots.
  \end{align*}
\end{corollary}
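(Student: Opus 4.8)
The plan is to derive both special actions by the same right-to-left evaluation used to establish the general triple product action in Corollary~\ref{cor:tripleproductaction}, but now exploiting that one of the two signature expansions is the bare element $[\bs0]$, which obeys the \emph{reduced} product identities of Remark~\ref{rmk:zeroword} rather than the full three-term rules of Lemma~\ref{lemma:skewandsymmPoppeproducts}. This is precisely why each special action carries fewer leading terms than the four-term general formula. Throughout I work at leading order in the sense of Lemma~\ref{lemma:tripleproductaction}: I use only the top signature representatives $[\bs a]=[\bs0 a\bs0]+\cdots$ and $[\bs b]=[\bs0 b\bs0]+\cdots$, discard the lower descent-order contributions, and discard quasi-product insertions, collecting the remainder into the `$+\cdots$'.

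For $[\bs a]\,[\bs0]\,\bigl([cw]\times\bs\beta\bigr)$, following Remark~\ref{rmk:minimalproductset} I first evaluate the bare $[\bs0]$ against the argument. Since the argument is a skew form with leading letter $c$, the first identity of Remark~\ref{rmk:zeroword} produces, at leading order, a single symmetric form with composition component $(c+1)w$. Applying $[\bs a]$ on the left via the second product rule of Lemma~\ref{lemma:skewandsymmPoppeproducts} then yields, as its highest descent-order term, the skew form with composition $(a+1)(c+1)w$. For $[\bs0]\,[\bs b]\,\bigl([cw]\times\bs\beta\bigr)$ the order is reversed: I first apply $[\bs b]$ via the third rule of Lemma~\ref{lemma:skewandsymmPoppeproducts}, obtaining two symmetric forms with composition components $(b+1)cw$ and $b(c+1)w$, and then apply the bare $[\bs0]$ on the left using the third identity of Remark~\ref{rmk:zeroword}, which increments the two leading letters to give the two displayed composition components $(b+2)cw$ and $(b+1)(c+1)w$.

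It then remains to pass to the composition-binary representation $[w]\times\bs\beta$ and read off the binary prefactors. Here I use the elementary identity $[\bs0\{Y\}]=(\bs0-\bs0^\dag)(Y+Y^\dag)$, together with the non-reversing dagger convention of Definition~\ref{rmk:convention} and the auto-tensorial action $\lhd$, to convert each wrapped symmetric form into a tuple: symmetrisation contributes the factor $\bs\beta+(-1)^{|w|}\bs\beta^\dag$ (the sign $(-1)^{|w|}$ arising from $w^\dag=(-1)^{|w|}w$ and from $\bs\beta^\dag$ being the reversal of $\bs\beta$), while the newly created $\bs0$-slots contribute the short prefactors $(1,-1)$ for $[\bs a]\,[\bs0]\,(\cdot)$ and the three prefactors $(2,0)$, $(0,2)$ and $(1,1)$ for $[\bs0]\,[\bs b]\,(\cdot)$. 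Matching these against the two target expressions completes the derivation.

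I expect the binary-tuple bookkeeping to be the main obstacle, exactly as in the proof of Corollary~\ref{cor:tripleproductaction}: one must track carefully how the skew and symmetric wrapping distribute over the newly inserted $\bs0$-slots, justify the weight-two prefactors $(2,0)$ and $(0,2)$ in the $[\bs0]\,[\bs b]\,(\cdot)$ case, and confirm the correct placement of the $(-1)^{|w|}$ reversal sign. A secondary point requiring care is the `leading order' truncation for $[\bs a]\,[\bs0]\,(\cdot)$, where the second rule of Lemma~\ref{lemma:skewandsymmPoppeproducts} also produces a same-length but strictly lower descent-order composition $a(c+2)w$; one must check that this term is genuinely subordinate and hence correctly relegated to `$+\cdots$', so that the single displayed term $(a+1)(c+1)w$ is indeed the leading one.
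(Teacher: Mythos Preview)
Your approach is correct and essentially the same as the paper's. The paper's proof is simply ``by direct computation using the P\"oppe product rules in Lemma~\ref{lemma:skewandsymmPoppeproducts}'', displaying the result first in the $[\bs0 a_1\bs0\cdots]$ notation and then asserting the conversion to the composition--binary representation; your plan organises that same computation right-to-left via Remark~\ref{rmk:minimalproductset} and invokes the reduced $[\bs0]$ rules of Remark~\ref{rmk:zeroword} explicitly, which is exactly how those rules arise from Lemma~\ref{lemma:skewandsymmPoppeproducts} in the first place. Your extra remark on the subordinate $a(c+2)w$ term is a valid piece of care that the paper silently absorbs into `$+\cdots$'.
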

\begin{proof}
  By direct computation using the P\"oppe product rules in Lemma~\ref{lemma:skewandsymmPoppeproducts}, we observe that
  $[\bs a]\,[\bs0]\,\bigl([cw\times\bs\varphi]\bigr)$ equals,
\begin{align*}
  \chi(a\ot 0\ot cw)\cdot\bigl(&\;[\bs0(a+1)\bs0(c+1)(w\times\hat{\bs\varphi})]+[\bs0(a+1)\bs0(c+1)(w\times\hat{\bs\varphi})^\dag]\\
  &\;-[\bs0(a+1)\bs0^\dag(c+1)(w\times\hat{\bs\varphi})]-[\bs0(a+1)\bs0^\dag(c+1)(w\times\hat{\bs\varphi})^\dag\bigr)+\cdots,
\end{align*}
at leading order, giving the first result.
Then, by direct computation for the other case, we observe that $[\bs0]\,[\bs b]\,\bigl([cw\times\bs\varphi]\bigr)$ equals,
\begin{align*}
  \chi(0\ot b\ot cw)\cdot\bigl(&\;2\cdot[\bs0(b+2)\bs0c)(w\times\hat{\bs\varphi})]+2\cdot[\bs0(b+2)\bs0^\dag c)(w\times\hat{\bs\varphi})^\dag]\\
  &\;+[\bs0(b+1)\bs0(c+1)(w\times\hat{\bs\varphi})]+[\bs0(b+1)\bs0^\dag(c+1)(w\times\hat{\bs\varphi})]\\
  &\;+[\bs0(b+1)\bs0(c+1)(w\times\hat{\bs\varphi})^\dag]+[\bs0(b+1)\bs0^\dag(c+1)(w\times\hat{\bs\varphi})^\dag\bigr)+\cdots,
\end{align*}
at leading order, giving the second result. \qed
\end{proof}
\begin{remark}
  Comparing the results of Corollary~\ref{cor:specialactions} with Corollary~\ref{cor:tripleproductaction} we emphasise
  two observations, that there is: (i) A natural contraction of the action forms due to the `$\,[\bs0]$' factors in the action;
  (ii) An apparent change of sign in the second term on the right in the first example. We can view both cases as
  the consequence of substituting $[\nu\times\bs0]$ for $[\bs b]$ in the first case and then $[\nu\times\bs0]$ for $[\bs a]$
  in the second case. The sign change, perhaps more easily observed from the corresponding result in Lemma~\ref{lemma:tripleproductaction},
  is a consquence of the fact that to make the appropriate substitution of $[\nu\times\bs0]$ for $[\bs b]$, we should
  convert the two terms with $\bs0^\dag b\bs0^\dag$ on the right, to $-\bs0^\dag b^\dag\bs0^\dag$ first.
\end{remark}
\medskip

\emph{Step~5: Generating blocks.} We now show precisely how, given a block of basis elements characterised
by a given composition component $w$ and parameterised by the corresponding $2^{|w|}$ basis elements $\bs\beta$ of $\mathbb R\la\mathbb B\ra$,
we can use the three standard actions to enumerate all the monomial generators that produce the basis elements
of that block at leading order, and also establish the corresponding signature coefficient associated with
each such basis element. Let us examine the three standard actions given in 
Corollaries~\ref{cor:tripleproductaction} and \ref{cor:specialactions} more closely.
If we examine the right-hand side of $[\bs a]\,[\bs b]\,\bigl([cw]\times\bs\beta\bigr)$
in Corollary~\ref{cor:tripleproductaction}, then we observe that in terms of descent order, the first
composition term `$[(a+1)(b+1)cw]$' on the right is highest, and thus we retain that term only.
In Corollary~\ref{cor:specialactions}, at leading order, the action $[\bs a]\,[\bs0]\,\bigl([cw]\times\bs\beta\bigr)$ in unique, while
the right-hand side of $[\bs0]\,[\bs b]\,\bigl([cw]\times\bs\beta\bigr)$ contains two terms, the
first of which is higher in terms of descent order, which is the one we retain.
Thus at leading order the three standard actions on $[cw]\times\bs\beta$ are:
\begin{align*}
[\bs a]\,[\bs b]\,(\cdot)&\!=\!\chi(cw)\!\cdot\![(a+1)(b+1)cw]\!\times\!\bigl((1,0,1,0)\lhd\bs\beta+(-1)^{|w|}\,(0,1,0,1)\lhd\bs\beta^\dag\bigr)+\cdots,\\
[\bs a]\,[\bs0]\,(\cdot)&\!=\!\chi(cw)\!\cdot\![(a+1)(c+1)w]\!\times\!\bigl((1,-1)\lhd\bs\beta+(-1)^{|w|}(1,-1)\lhd\bs\beta^\dag\bigr)+\cdots,\\
[\bs0]\,[\bs b]\,(\cdot)&\!=\!\chi(cw)\!\cdot\![(b+2)cw]\!\times\!\bigl((2,0)\lhd\bs\beta+(-1)^{|w|}(0,2)\lhd\bs\beta^\dag\bigr)+\cdots.
\end{align*}
Here we have used the homomorphic properties of $\chi$, in particular that $\chi(a\ot b\ot cw)=\chi(a\ot 0\ot cw)=\chi(0\ot b\ot cw)=\chi(cw)$.
Consider the following respective replacements in each of the three actions above:
(i) $a\to a-1$, $b\to b-1$, $c\to\nu$; (ii) $a\to a-1$, $c\to c-1$ and (iii) $b\to b-2$.
With these three choices, each of the actions generates the same composition $acw$---in the first case we relabel $b$ as $c$ and
in the third case we relabel $b$ as $a$. Recall from our coarse-grain overview in Step~3 that to enumerate the generators
corresponding to basis elements with composition components with $k\geqslant2$ parts, we apply the first action to the generators
at level $k-2$, and the two special actions to the generators at level $k-1$, taking into account the nilpotent action outlined
just below in Lemma~\ref{lemma:nilpotentaction}.
We note that, for any sequence $\hat u\in\mathbb R\la\mathbb B\ra$, with $|\hat u|=2^{k-2}$, we have,
\begin{align*}
  (1,0,1,0)\lhd\hat u&=(1,1)\lhd(1,0)\lhd\hat u=(1,1)\lhd(\hat u,0),\\
  (0,1,0,1)\lhd\hat u^\dag&=(1,1)\lhd(0,1)\lhd\hat u^\dag=(1,1)\lhd(0,{\hat u}^\dag),
\end{align*}
where $(\hat u,0)$ and $(0,\hat u^\dag)$ are of length $2^{k-1}$.
Putting these observations together, we have thus established the following lemma.
\begin{lemma}[Actions generating the same composition]\label{lemma:actionsamecomp}
At leading order, with the choices mentioned above, the following three standard actions generate the same composition with the
respective $\mathbb R\la\mathbb B\ra$ components indicated,
\begin{align*}
  [\bs{a-1}]\,[\bs{c-1}]\,\bigl([w]\!\times\!\bs\beta\bigr)&\!=\!\chi(w)\cdot[acw]\times\bigl((1,1)\lhd\bigl((\hat u,0)
  -(-1)^{|w|}(\hat u,0)^\dag\bigr)\bigr),\\ 
  [\bs{a-1}]\,[\bs0]\,\bigl([(c-1)w]\!\times\!\bs\beta\bigr)&\!=\!\chi((c-1)w)\cdot[acw]\times\bigl((1,-1)\lhd\bigl((\hat a,\hat b)
  +(-1)^{|w|}(\hat a,\hat b)^\dag\bigr)\bigr),\\
  [\bs0]\,[\bs{a-2}]\,\bigl([cw]\!\times\!\bs\beta\bigr)&\!=\!\chi(cw)\cdot[acw]\times\bigl((2,0)\lhd(\hat a,\hat b)
  +(-1)^{|w|}(0,2)\lhd(\hat a,\hat b)^\dag\bigr).
\end{align*}
Here, in the first case $\bs\beta=(\hat u,0)\in\mathbb R\la\mathbb B\ra$ with $\hat u$ arbitrary, and in the second and third cases
$\bs\beta=(\hat a,\hat b)\in\mathbb R\la\mathbb B\ra$ is arbitrary. Each such $\bs\beta$ is of length $2^{|acw|-1}$,
and $\hat a$ and $\hat b$ have the same length---matching that of $\hat u$.
\end{lemma}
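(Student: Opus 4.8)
The plan is to verify Lemma~\ref{lemma:actionsamecomp} by direct computation, simply carrying out the three substitutions announced in the paragraph preceding it and then algebraically massaging the $\mathbb R\la\mathbb B\ra$-components into a common normal form. First I would take the three ``leading order'' action formulas already isolated just before the lemma statement (the three displayed lines beginning $[\bs a]\,[\bs b]\,(\cdot)=\ldots$), and perform the relabellings: in the first line set $a\mapsto a-1$, $b\mapsto c-1$, $c\mapsto\nu$ (so $cw$ becomes just $w$, and the composition produced is $(a)(c)w=acw$); in the second set $a\mapsto a-1$, $c\mapsto c-1$ (producing $(a)(c)w=acw$ with the basis element being $[(c-1)w]\times\bs\beta$); in the third set $b\mapsto a-2$ (producing $(a)(c)w=acw$ after relabelling the dummy, with $\chi((a-2+2)\cdots)=\chi(acw)$ collapsing via the homomorphic $\chi$ to the stated prefactor $\chi(cw)$). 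This accounts for the $\chi$-prefactors $\chi(w)$, $\chi((c-1)w)$, $\chi(cw)$ appearing in the three lines of the lemma.

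Next I would handle the $\mathbb R\la\mathbb B\ra$-components. For the first action, after the substitution the component is $(1,0,1,0)\lhd\bs\beta+(-1)^{|w|}(0,1,0,1)\lhd\bs\beta^\dag$; here the relevant $w$ in Corollary~\ref{cor:tripleproductaction} is the post-substitution composition $cw$ (old notation), whose length differs from the lemma's $w$ by one, which is exactly why the sign becomes $-(-1)^{|w|}$ in the lemma's first line. I would then invoke the two identities recorded just before the lemma, namely $(1,0,1,0)\lhd\hat u=(1,1)\lhd(\hat u,0)$ and $(0,1,0,1)\lhd\hat u^\dag=(1,1)\lhd(0,\hat u^\dag)=(1,1)\lhd(\hat u,0)^\dag$, using $\bs\beta=(\hat u,0)$; combining gives $(1,1)\lhd\bigl((\hat u,0)-(-1)^{|w|}(\hat u,0)^\dag\bigr)$, matching the first line. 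For the second and third actions I would apply Corollary~\ref{cor:specialactions} directly with $\bs\beta=(\hat a,\hat b)$ and read off $(1,-1)\lhd(\bs\beta+(-1)^{|w|}\bs\beta^\dag)$ and $(2,0)\lhd\bs\beta+(-1)^{|w|}(0,2)\lhd\bs\beta^\dag$ respectively; since in these cases the composition $cw$ resp.\ $w$ in the corollary differs from the lemma's $w$ by the correct amount, the sign bookkeeping is immediate, and writing $\bs\beta=(\hat a,\hat b)$ is just a notational split. Finally I would note the length/compatibility claims: $|acw|=|w|+2$ so $\bs\beta$ has length $2^{|acw|-1}=2^{|w|+1}$, and in the first case $\hat u$ has length $2^{|w|}=2^{|acw|-2}$, consistent with $(\hat u,0)$ having length $2^{|w|+1}$; similarly $\hat a,\hat b$ each of length $2^{|w|}$ matching $\hat u$.

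The main obstacle here is not depth but sign- and index-discipline: one must be scrupulous about whose $|w|$ is whose, because the formulas in Corollaries~\ref{cor:tripleproductaction} and \ref{cor:specialactions} carry $(-1)^{|w|}$ with $w$ referring to the \emph{tail} composition of the element being acted on, whereas the lemma states everything relative to a single fixed $w$ whose length has changed by one or two under the substitutions. The cleanest way to avoid error is to track, at each substitution, exactly which composition plays the role of ``$cw$'' in the corollary and compute $(-1)^{|\cdot|}$ from that, then re-express in terms of the lemma's $w$. I would also double-check the $\chi$-collapse in the third action, where $\chi(0\ot(a-2)\ot cw)=\chi((a-2+2+\cdots))$ must be seen to equal $\chi(cw)$ by the homomorphic/Leibniz structure of $\chi$ (Definition~\ref{def:signaturecharacter}), since the leading letter $a$ contributes a binomial coefficient $\binom{a}{a}=1$. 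Once these bookkeeping points are pinned down, the lemma follows by inspection from the already-established corollaries, so the proof will be short.

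\begin{proof}
We establish each of the three identities by applying the leading-order forms of the standard triple actions, displayed immediately above, after the indicated relabellings, and then normalising the $\mathbb R\la\mathbb B\ra$-components.

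\emph{First identity.} In the leading-order form of $[\bs a]\,[\bs b]\,\bigl([cw]\times\bs\beta\bigr)$ from Corollary~\ref{cor:tripleproductaction}, make the substitution $a\to a-1$, $b\to c-1$, and take $c\to\nu$ (the empty letter), so that the basis element acted on is $[w]\times\bs\beta$ and the resulting composition is $acw$. By the homomorphic property of $\chi$, the prefactor is $\chi(w)$. Writing $\bs\beta=(\hat u,0)\in\mathbb R\la\mathbb B\ra$, the $\mathbb R\la\mathbb B\ra$-component produced is
\begin{equation*}
(1,0,1,0)\lhd\bs\beta+(-1)^{|w|}(0,1,0,1)\lhd\bs\beta^\dag
=(1,1)\lhd(\hat u,0)+(-1)^{|w|}(1,1)\lhd(0,\hat u^\dag),
\end{equation*}
using $(1,0,1,0)\lhd\hat u=(1,1)\lhd(\hat u,0)$ and $(0,1,0,1)\lhd\hat u^\dag=(1,1)\lhd(0,\hat u^\dag)$ as recorded above. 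Since $(0,\hat u^\dag)=(\hat u,0)^\dag$ up to the overall sign $(-1)$ carried by the reversal on a length-$2^{|w|+1}$ tuple relative to the tail of length $|w|+1$, the two terms combine to $(1,1)\lhd\bigl((\hat u,0)-(-1)^{|w|}(\hat u,0)^\dag\bigr)$, which is the first identity.

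\emph{Second identity.} In the leading-order form of $[\bs a]\,[\bs0]\,\bigl([cw]\times\bs\beta\bigr)$ from Corollary~\ref{cor:specialactions}, substitute $a\to a-1$ and $c\to c-1$, so the basis element acted on is $[(c-1)w]\times\bs\beta$ and the resulting composition is $acw$, with prefactor $\chi((c-1)w)$. Writing $\bs\beta=(\hat a,\hat b)$, the $\mathbb R\la\mathbb B\ra$-component is $(1,-1)\lhd\bigl(\bs\beta+(-1)^{|w|}\bs\beta^\dag\bigr)=(1,-1)\lhd\bigl((\hat a,\hat b)+(-1)^{|w|}(\hat a,\hat b)^\dag\bigr)$, which is the second identity.

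\emph{Third identity.} In the leading-order form of $[\bs0]\,[\bs b]\,\bigl([cw]\times\bs\beta\bigr)$ from Corollary~\ref{cor:specialactions}, retain the higher descent-order term and substitute $b\to a-2$, so the resulting composition is $acw$, with prefactor $\chi(cw)$ by the homomorphic property of $\chi$ and $\binom{a}{a}=1$. Writing $\bs\beta=(\hat a,\hat b)$, the $\mathbb R\la\mathbb B\ra$-component is $(2,0)\lhd(\hat a,\hat b)+(-1)^{|w|}(0,2)\lhd(\hat a,\hat b)^\dag$, which is the third identity.

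Finally, in all three cases $|acw|=|w|+2$, so each $\bs\beta$ has length $2^{|acw|-1}=2^{|w|+1}$; in the first case $\hat u$ has length $2^{|w|}=2^{|acw|-2}$, and in the second and third cases $\hat a$ and $\hat b$ each have length $2^{|w|}$, matching that of $\hat u$.
\qed
\end{proof}
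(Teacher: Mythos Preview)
Your approach is essentially identical to the paper's: the paper does not give a standalone proof but declares the lemma ``established'' by the preceding discussion (the substitutions and the two $(1,0,1,0)\lhd\hat u=(1,1)\lhd(\hat u,0)$ identities), with the sign flip in the first line explained in a remark immediately after the statement. Your write-up is a more explicit version of exactly that argument.

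One small point: in your first identity, the sentence ``Since $(0,\hat u^\dag)=(\hat u,0)^\dag$ up to the overall sign $(-1)$ carried by the reversal\ldots'' is muddled. In fact $(0,\hat u^\dag)=(\hat u,0)^\dag$ holds on the nose with no extra sign; the minus in front of $(-1)^{|w|}$ arises purely from the length bookkeeping you identified correctly in your planning paragraph (setting $c\to\nu$ shifts the relevant tail length by one, so the corollary's $(-1)^{|\cdot|}$ becomes $-(-1)^{|w|}$ in the lemma's notation). This is precisely what the paper's post-lemma remark says. Tightening that one sentence would make the proof clean.
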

\begin{remark}
  Note, in the statement of Lemma~\ref{lemma:actionsamecomp}, the case of the first action which corresponds to
  the action $[\bs a]\,[\bs b]\,(\cdot)$ applied to $[cw]\times\bs\beta$ in the discussion preceding the Lemma.
  In that discussion, when we set $c\to\nu$, we equivalently replaced $cw$ by $w$. This means that
  we should effectively consider the length of $w$ to be one less than would otherwise be the
  case. This explains why the sign in front of the term with the factor $(-1)^{|w|}$ in the first action case is negative
  in the statement of the Lemma.
\end{remark}
Some further clarifications on the statement of Lemma~\ref{lemma:actionsamecomp} are required. Note that,
\begin{equation*}
  [\bs0]\rightsquigarrow[\nu]\times(1),
\end{equation*}
where $\nu$ is the empty composition and $(1)$ is the element of $\mathbb R\la\mathbb B\ra$ corresponding to
compositions of zero parts. The special action $[\bs0]\,[\bs{a-2}]\,(\cdot)$  in Lemma~\ref{lemma:actionsamecomp}
still applies when the argument $[cw]\times\bs\beta=[\nu]\times(1)$ and thus when $(\hat a,\hat b)=(1)$.
The result is that at leading order we have,
\begin{equation*}
  [\bs0]\,[\bs{a-2}]\,\bigl([\nu]\times(1)\bigr)=\chi(\nu)\cdot[a]\times\bigl((2,0)\lhd(1)-(0,2)\lhd(1)^\dag\bigr)=[a]\times(2,-2).
\end{equation*}
Here, by convention, we take $\chi(\nu)\coloneqq1$. Since we have taken $cw\to\nu$,
we can think of the number of parts of $w$ to be `$-1$', explaining the sign
in front of the $\mathbb R\la\mathbb B\ra$-element $(0,2)$. This is consistent with
just computing $[\bs0]\,[\bs{a-2}]\,[\bs0]$. Further, the first two actions in Lemma~\ref{lemma:actionsamecomp}
don't make sense when $cw\to\nu$, though if $w\to\nu$, the special action $[\bs{a-1}]\,[\bs0]\,(\cdot)$
applies with the appropriate adaptations. And of course we can compute
$[\bs{a-1}]\,[\bs{c-1}]\,\bigl([\nu]\times(1)\bigr)=[\bs{a-1}]\,[\bs{c-1}]\,[\bs0]$.

Finally, we now also observe the following (aforementioned) nilpotency property.
\begin{lemma}[Nilpotent action]\label{lemma:nilpotentaction}
  At leading order, if we first apply the action $[\bs0]\,[\bs b]\,(\cdot)$ to an arbitrary $\mathbb R\la\mathbb B\ra$ component,
  then apply the action $[\bs a]\,[\bs 0]\,(\cdot)$ to the result, this generates the zero $\mathbb R\la\mathbb B\ra$ component.
  In other words at leading order we have,
  \begin{equation*}
    [\bs a]\,[\bs 0]\,\bigl([\bs0]\,[\bs b]\,(\cdot)\bigr)=0,
  \end{equation*}
  where the `$\,0$' on the right-hand side represents the zero $\mathbb R\la\mathbb B\ra$ component.
\end{lemma}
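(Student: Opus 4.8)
The plan is to compute directly, at leading order, the composition of the two special actions given in Corollary~\ref{cor:specialactions}, and to show that the resulting $\mathbb R\la\mathbb B\ra$-component vanishes identically. First I would take an arbitrary basis element $[cw]\times\bs\beta$ and apply $[\bs0]\,[\bs b]\,(\cdot)$ to it, retaining all leading-order terms; by Corollary~\ref{cor:specialactions} this produces a sum of two basis elements, one with composition component $[(b+2)cw]$ and $\mathbb R\la\mathbb B\ra$-component $(2,0)\lhd\bs\beta+(-1)^{|w|}(0,2)\lhd\bs\beta^\dag$, and one with composition component $[(b+1)(c+1)w]$ and $\mathbb R\la\mathbb B\ra$-component $(1,1)\lhd\bigl(\bs\beta+(-1)^{|w|}\bs\beta^\dag\bigr)$.

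Next I would apply $[\bs a]\,[\bs 0]\,(\cdot)$ to each of these two terms separately, again using Corollary~\ref{cor:specialactions}. The key structural point is that the $\mathbb R\la\mathbb B\ra$-component output of $[\bs a]\,[\bs0]\,(\cdot)$ on $[c'w']\times\bs\gamma$ is $(1,-1)\lhd\bigl(\bs\gamma+(-1)^{|w'|}\bs\gamma^\dag\bigr)$, which is symmetric under the $\dag$-involution up to the prefactor $(1,-1)$. When the input $\bs\gamma$ is itself of the skew-symmetric shape produced by the $[\bs0]\,[\bs b]\,(\cdot)$ step — i.e. $\bs\gamma$ satisfies $\bs\gamma+(-1)^{|w'|}\bs\gamma^\dag=0$ for the relevant word length $w'$ — the output is zero. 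So the computation reduces to checking that each of the two intermediate terms has exactly this annihilating symmetry relative to the word length that appears when the outer $[\bs a]\,[\bs0]\,(\cdot)$ acts. Concretely, one writes out $\bigl((2,0)\lhd\bs\beta+(-1)^{|w|}(0,2)\lhd\bs\beta^\dag\bigr)+(-1)^{|w|+1}\bigl((2,0)\lhd\bs\beta+(-1)^{|w|}(0,2)\lhd\bs\beta^\dag\bigr)^\dag$ and uses that $\bigl((2,0)\lhd\bs\beta\bigr)^\dag=(0,2)\lhd\bs\beta^\dag$ (reversal sends the block $(2,0)\lhd\bs\beta$ to $(0,2)\lhd\bs\beta^\dag$, since reversing a tensored tuple reverses both the outer and inner factors), whence the two contributions cancel. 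The same cancellation, using $\bigl((1,1)\lhd\bs\delta\bigr)^\dag=(1,1)\lhd\bs\delta^\dag$, handles the second intermediate term.

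I would then assemble these two sub-cancellations to conclude that $[\bs a]\,[\bs 0]\,\bigl([\bs0]\,[\bs b]\,([cw]\times\bs\beta)\bigr)=0$ at leading order, which is exactly the claimed identity (the argument being an arbitrary $\mathbb R\la\mathbb B\ra$-component). The one bookkeeping subtlety I would be careful with is the parity exponent: passing from $[c'w']$ to the word whose length controls the $(-1)^{|\cdot|}$ factor in the outer action involves incrementing the composition length by one each time a letter is prepended, so the sign that appears in the outer $[\bs a]\,[\bs0]\,(\cdot)$ step is $(-1)^{|cw|+1}$ (or $(-1)^{|w|}$ after the first prepend), and this is precisely the sign that flips the symmetric combination into the annihilating skew combination. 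I expect this parity accounting to be the main obstacle — it is routine but easy to get wrong — and I would verify it against the worked structure in Step~3 where $[\bs a]\,[\bs0]\,(\cdot)$ applied to a generator beginning $[\bs0]\,[\star]$ is asserted to vanish, and against the special case $[\bs0]\,[\bs{a-2}]\,[\bs0]$ computed explicitly in Step~5, both of which should be consistent with the general vanishing statement.
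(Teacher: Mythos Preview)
Your proposal is correct and follows essentially the same approach as the paper: both arguments show that the $\mathbb R\la\mathbb B\ra$-component produced by the inner $[\bs0]\,[\bs b]\,(\cdot)$ action satisfies the adjoint symmetry $\bs\gamma^\dag=(-1)^{|w'|}\bs\gamma$ that forces the outer $[\bs a]\,[\bs0]\,(\cdot)$ action to annihilate it. The paper's proof is slightly more economical because it interprets ``leading order'' in the sense of Step~5 (retaining only the highest descent-order term $[(b+2)cw]$ from Corollary~\ref{cor:specialactions}), whereas you also treat the lower-order term $[(b+1)(c+1)w]$; your extra computation is correct and harmless, but not needed for the application.
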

\begin{proof}
  We focus on the effect of the actions on the $\mathbb R\la\mathbb B\ra$ components only.
  The third (special) action applied to the input $(a,b)$ generates $2\cdot(a,b,\pm b^\dag,\pm a^\dag)$.
  Set $A,B\in\mathbb R\la\mathbb B\ra$ to be the sub-components $A\coloneqq(a,b)$ and $B\coloneqq\pm(b^\dag,a^\dag)$.
  With these identifications we note that
  $B=\pm A^\dag$. Ignoring the real factor $2$, apply the second (special) action to the input $(A,B)$.
  This is (note the sign of the second term of the action changes), $(1,-1)\lhd\bigl((A,B)\mp(A,B)^\dag\bigr)$,
  which equals, $(A\mp B^\dag,B\mp A^\dag,-A\pm B^\dag,-B\pm A^\dag)$.
  Since $B=\pm A^\dag$, this result is the zero $\mathbb R\la\mathbb B\ra$ component.
\qed
\end{proof}
We now explore, through a series of examples, how to construct the generators and coefficient blocks
associated with any given composition. In particular we consider the cases of compositions with
$1$, $2$ and $3$ parts, before exploring the case of any given composition. Compositions
containing a `$1$' need to be singled out, as explained below.
\begin{example}[One-part compositions]\label{ex:onepartcomps}
  We observe that there are two basis elements corresponding to the one-part composition $w=a$, namely,
  $[a]\times(1,0)$ and $[a]\times(0,1)$. We assume $n=a\geqslant2$. At leading we know from the corresponding
  signature expansion $[\bs a]=[a]\times(1,0)+\cdots$. From our discussion succeeding Lemma~\ref{lemma:actionsamecomp},
  we know the first two actions do not make sense when $cw\to\nu$, while the final special action
  does make sense. As we saw directly, at leading order we have $[\bs0]\,[\bs{a-2}]\,[\bs0]=[a]\times(2,-2)$.
  We have thus enumerated the generators corresponding to $[a]\times(1,0)$ and $[a]\times(0,1)$ and that
  the signature coefficient matrix is,
  \begin{equation*}
  A_0=\begin{pmatrix} 1 & 2\\ 0 & -2 \end{pmatrix}.
  \end{equation*}
\end{example}
\begin{example}[Two-part compositions]\label{ex:twopartcomps}
  Consider the basis elements with a two-part composition component $a_1a_2$, i.e.\/ basis
  elements of the form $[a_1a_2]\times\bs\beta_i$, where the $\bs\beta_i$ are the
  four basis elements of length $4$, which are zero apart from a `$1$' in the $i$th position.
  For the moment assume neither $a_1$ nor $a_2$ equal $1$; we consider each of these two special
  cases separately below. Using Lemma~\ref{lemma:actionsamecomp}, noting that for
  each of the standard actions our goal is to obtain the composition component $[a_1a_2]$
  on the right-hand side, we observe the following. For the first action, setting $w=\nu$,
  $a=a_1$ and $c=a_2$, we find that at leading order, we get,
  %
  %
  \begin{equation*}
    [\bs{a_1-1}]\,[\bs{a_2-1}]\,[\bs0]=[a_1a_2]\times\bigl((1,1)\lhd\bigl((1,0)-(0,1)\bigr)\bigr)=[a_1a_2]\times(1,-1,1,-1).
  \end{equation*}
  The first special action in Lemma~\ref{lemma:actionsamecomp}, with the same identifications gives, 
  \begin{equation*}
    [\bs{a_1-1}]\,[\bs0]\,\bigl([a_2-1]\times(\hat a,\hat b)\bigr)=[a_1a_2]\times\bigl((1,-1)\lhd\bigl((\hat a,\hat b)+(\hat b,\hat a)\bigr)\bigr).
  \end{equation*}
  We saw in Example~\ref{ex:onepartcomps}, the basis element $[a_2-1]\times(\hat a,\hat b)$
  can be generated both by the corresponding signature expansion $[\bs{a_2-1}]=[a_2-1]\times(1,0)+\cdots$,
  and by the generator $[\bs0]\,[\bs{a-2}]\,[\bs0]$. We discount the latter case due to the nilpotent action property.
  Hence using this expression for $[\bs{a_2-1}]$ and inserting $(\hat a,\hat b)=(0,1)$ into the expression above, we deduce, 
  \begin{equation*}
    [\bs{a_1-1}]\,[\bs0]\,\bigl([\bs{a_2-1}]\bigr)=[a_1a_2]\times(1,1,-1,-1),
  \end{equation*}
  to leading order. Now consider the second special action in Lemma~\ref{lemma:actionsamecomp}.
  Again with the same identifications for $a$, $c$ and $w$, we observe that to leading order,
  \begin{equation*}
    [\bs0]\,[\bs{a_1-2}]\,\bigl([a_2]\times(\hat a,\hat b)\bigr)=[a_1a_2]\times\bigl((2,0)\lhd(\hat a,\hat b)+(0,2)\lhd(\hat b,\hat a)\bigr).
  \end{equation*}
  We know from  Example~\ref{ex:onepartcomps}, the basis element $[a_2]\times(\hat a,\hat b)$
  can be generated either by the signature expansion $[\bs{a_2}]=[a_2]\times(1,0)+\cdots$,
  or by the generator $[\bs0]\,[\bs{a_2-2}]\,[\bs0]=[a_2]\times(2,-2)+\cdots$. Respectively substituting the expressions
  $[a_2]\times(1,0)$ and $[a_2]\times(2,-2)$ for $[a_2]\times(\hat a,\hat b)$ in the relation just above, we observe that to leading order,
  \begin{align*}
    [\bs0]\,[\bs{a_1-2}]\,\bigl([\bs{a_2}]\bigr)&=[a_1a_2]\times\bigl((2,0)\lhd(1,0)+(0,2)\lhd(0,1)\bigr)\\
    &=[a_1a_2]\times(2,0,0,2),\\
    [\bs0]\,[\bs{a_1-2}]\,\bigl([\bs0]\,[\bs{a_2-2}]\,[\bs0]\bigr)&=[a_1a_2]\times\bigl((2,0)\lhd(2,-2)+(0,2)\lhd(-2,2)\bigr)\\
    &=[a_1a_2]\times(4,-4,-4,4).
  \end{align*}
  We have thus enumerated the four generators corresponding to the four basis elements $[a_1a_2]\times(1,0,0,0)$,
  $[a_1a_2]\times(0,1,0,0)$, $[a_1a_2]\times(0,0,1,0)$ and $[a_1a_2]\times(0,0,0,1)$. They are
  $[\bs{a_1-1}]\,[\bs{a_2-1}]\,[\bs0]$, $[\bs{a_1-1}]\,[\bs0]\,[\bs{a_2-1}]$, $[\bs0]\,[\bs{a_1-2}]\,[\bs{a_2}]$
  and the quintic generator $[\bs0]\,[\bs{a_1-2}]\,[\bs0]\,[\bs{a_2-2}]\,[\bs0]$. The corresponding signature coefficient matrix is,
    \begin{equation*}
  A_2\coloneqq\begin{pmatrix}
    1 & 1 & 2 & 4 \\
   -1 & 1 & 0 & -4 \\
    1 & -1& 0 & -4 \\
   -1 & -1 & 2& 4
  \end{pmatrix}
  \end{equation*}
  which is the subsystem coefficient matrix $A_2$ in Examples~\ref{ex:fourthorder} and \ref{ex:fifthorder}
  respectively concerning the quartic and quintic non-commutative nonlinear Schr\"odinger equations.

  Let us now consider the case when $a_2=1$. If we substitute this value for $a_2$ into the generators above,
  we see that the first two generators coincide and are given by $[\bs{a_1-1}]\,[\bs0]\,[\bs{0}]=[a_11]\times(1,-1,1,-1)+\cdots$
  and $[\bs{a_1-1}]\,[\bs0]\,[\bs{0}]=[a_11]\times(1,1,-1,-1)+\cdots$. Since we can add them together under 
  the same coefficient $c_{(a_1-1)00}$, in this case we have the single generator,
  $[\bs{a_1-1}]\,[\bs0]\,[\bs{0}]=[a_11]\times(2,0,0,-2)+\cdots$.
  The third generator above becomes, $[\bs0]\,[\bs{a_1-2}]\,[\bs1]=[a_11]\times(2,0,0,2)+\cdots$.
  The final quintic generator cannot be a generator in this case if we insist on only including signature expansions  
  corresponding to non-negative integers. There are thus only two independent generators. Hence this this case,
  the corresponding signature coefficient matrix, ignoring the middle two rows, is
  \begin{equation*}
  A_1\coloneqq\begin{pmatrix} 2 & 2\\ -2 & 2 \end{pmatrix}.
  \end{equation*}
  See Examples~\ref{ex:fourthorder} and \ref{ex:fifthorder} and the equations
  for the coefficients $c_{(n-2)00}$ and $c_{0(n-3)1}$ in those cases for when $w=(n-1)1$, as
  well as with the coefficients in Tables~\ref{table:NLS4}--\ref{table:NLS5b}. 
  Note, when $a_1=a_2=1$, there is only one generator, $[\bs0]^3$, as we saw in Example~\ref{ex:secondorder}.
  We treat the more general case when $a_1=1$ at the end of this step..
\end{example}
\begin{example}[Three-part compositions]\label{ex:threepartcomps}
  Consider basis elements with a three-part composition component $a_1a_2a_3$, i.e.\/ basis  
  elements of the form $[a_1a_2a_3]\times\bs\beta_i$, where the $\bs\beta_i$ for $i=1,\ldots,8$,
  contain `$1$' in the $i$th position and zeros in the remaining seven positions.  
  For the moment assume neither $a_1$ nor $a_2$ nor $a_3$ are unity.
  Using Lemma~\ref{lemma:actionsamecomp}, the standard actions, setting $a=a_1$, $c=a_2$ and $w=a_3$ give to leading order,
  \begin{align*}
  [\bs{a_1-1}]\,[\bs{a_2-1}]\,\bigl([a_3]\times(\hat u,0)\bigr)&=[a_1a_2a_3]\times\bigl((1,1)\lhd(\hat u,\hat u^\dag)\bigr),\\ 
  [\bs{a_1-1}]\,[\bs0]\,\bigl([(a_2-1)a_3]\times(\hat a,\hat b)\bigr)&=\chi((a_2-1)a_3)\cdot[a_1a_2a_3]\\
  &\qquad\qquad\qquad\quad\times\bigl((1,-1)\lhd\bigl((\hat a,\hat b)
  -(\hat a,\hat b)^\dag\bigr)\bigr),\\
  [\bs0]\,[\bs{a_1-2}]\,\bigl([a_2a_3]\times(\hat a,\hat b)\bigr)&=\chi(a_2a_3)\cdot[a_1a_2a_3]\times(2\hat a,2\hat b,-2\hat b^\dag,-2\hat a^\dag).
  \end{align*}
  We observe, with these three relations, the task of finding the generators for any basis element
  with a three-part composition component, becomes the task of finding the generators for the basis element with the one-part
  composition component `$[a_3]$' in the first case, and then the generators for basis elements with the two-part  
  components `$[(a_2-1)a_3]$' and `$[a_2a_3]$' in the second and third cases. We can construct the
  generators in these cases via Examples~\ref{ex:onepartcomps} and \ref{ex:twopartcomps} just above.
  In the first case, from Example~\ref{ex:onepartcomps}, the two generators for $[a_3]\times(1,0)$ and $[a_3]\times(0,1)$
  are $[\bs a_3]=[a_3]\times(1,0)+\cdots$ and $[\bs0]\,[\bs{a_3-2}]\,[\bs0]=[a_3]\times(2,-2)+\cdots$.
  Hence if we substitute these expressions into the first case above, respectively setting $\hat u=(1,0)$ 
  and then $\hat u=(2,-2)$, we find,
  \begin{align*}
    [\bs{a_1-1}]\,[\bs{a_2-1}]\,\bigl([\bs{a_3}]\bigr)&=[a_1a_2a_3]\times\bigl((1,1)\lhd(1,0,0,1)\bigr)\\
                                                      &=[a_1a_2a_3]\times(1,0,0,1,1,0,0,1),\\
    [\bs{a_1-1}]\,[\bs{a_2-1}]\,\bigl([\bs0]\,[\bs{a_3-2}]\,[\bs0]\bigr)&=[a_1a_2a_3]\times\bigl((1,1)\lhd(2,-2,-2,2)\bigr)\\
                                                                        &=[a_1a_2a_3]\times(2,-2,-2,2,2,-2,-2,2).
  \end{align*}
  For the second case above with composition component `$[(a_2-1)a_3]$', we know from Example~\ref{ex:twopartcomps},
  there are four possible generators. However once we observe the nilpotent action property, we are
  left with two, namely, $[\bs{a_2-2}]\,[\bs{a_3-1}]\,[\bs0]=[(a_2-1)a_3]\times(1,-1,1,-1)+\cdots$ and
  $[\bs{a_2-2}]\,[\bs0]\,[\bs{a_3-1}]=[(a_2-1)a_3]\times(1,1,-1,-1)+\cdots$. Substituting these expressions
  into the second case above, respectively setting $(\hat a,\hat b)=(1,-1,1,-1)$ and then $(\hat a,\hat b)=(1,1,-1,-1)$, we find,
  \begin{align*}
    [\bs{a_1-1}]\,&[\bs0]\,\bigl([\bs{a_2-2}]\,[\bs{a_3-1}]\,[\bs0]\bigr)\\
    &=\chi((a_2-1)a_3)\cdot[a_1a_2a_3]\times\bigl((1,-1)\lhd\bigl((1,-1,1,-1)-(1,-1,1,-1)^\dag\bigr)\bigr)\\
    &=\chi((a_2-1)a_3)\cdot[a_1a_2a_3]\times(2,-2,2,-2,-2,2,-2,2),\\ 
    [\bs{a_1-1}]\,&[\bs0]\,\bigl([\bs{a_2-2}]\,[\bs0]\,[\bs{a_3-1}]\bigr)\\
    &=\chi((a_2-1)a_3)\cdot[a_1a_2a_3]\times\bigl((1,-1)\lhd\bigl((1,1,-1,-1)-(1,1,-1,-1)^\dag\bigr)\bigr)\\
    &=\chi((a_2-1)a_3)\cdot[a_1a_2a_3]\times(2,2,-2,-2,-2,-2,2,2).
  \end{align*}
  For the third case above with composition component `$[a_2a_3]$', again, we know from Example~\ref{ex:twopartcomps},
  there are four possible generators. These are all four of the generators shown in Example~\ref{ex:twopartcomps}
  once we replace $a_1$ and $a_2$ therein respectively by $a_2$ and $a_3$. If we substitute the corresponding
  four expressions with the replacements mentioned into the third case above, respectively setting  
  $(\hat a,\hat b)=(1,-1,1,-1)$, $(\hat a,\hat b)=(1,1,-1,-1)$, $(\hat a,\hat b)=(2,0,0,2)$ and then $(\hat a,\hat b)=(4,-4,-4,4)$,
  we find at leading order,
  \begin{align*}
    [\bs0]\,[\bs{a_1-2}]\,\bigl([\bs{a_2-1}]\,[\bs{a_3-1}]\,[\bs0]\bigr)
    &=\chi(a_2a_3)\cdot[a_1a_2a_3]\times(2,-2,2,-2,2,-2,2,-2),\\
    [\bs0]\,[\bs{a_1-2}]\,\bigl([\bs{a_2-1}]\,[\bs0]\,[\bs{a_3-1}]\bigr)
    &=\chi(a_2a_3)\cdot[a_1a_2a_3]\times(2,2,-2,-2,2,2,-2,-2),\\
    [\bs0]\,[\bs{a_1-2}]\,\bigl([\bs0]\,[\bs{a_2-2}]\,[\bs{a_3}]\bigr)
    &=\chi(a_2a_3)\cdot[a_1a_2a_3]\times(4,0,0,4,-4,0,0,-4),
    \end{align*}
  and finally,
  \begin{equation*}
    [\bs0]\,[\bs{a_1\!-\!2}]\,\bigl([\bs0]\,[\bs{a_2\!-\!2}]\,[\bs0]\,[\bs{a_3\!-\!2}]\,[\bs0]\bigr)
    \!=\!\chi(a_2a_3)\cdot[a_1a_2a_3]\times(8,-8,-8,8,-8,8,8,-8).
  \end{equation*}
  Hence, for the eight basis elements $[a_1a_2a_3]\times\bs\beta_i$, $i=1,\ldots,8$,
  with the columns corresponding to the generators above in descent order following by degree, 
  the corresponding signature coefficient matrix, is the full rank matrix,
  \begin{equation*}
  A_3\coloneqq\begin{pmatrix}
    1 &  2 &  2 & 2  & 2  & 2  & 4  & 8 \\   
    0 & -2 & -2 & 2  & -2 & 2  & 0  & -8 \\   
    0 & -2 &  2 & -2 & 2  & -2 & 0  & -8 \\
    1 & 2  & -2 & -2 & -2 & -2 & 4  & 8 \\
    1 & 2  & -2 & -2 &  2 & 2  & -4 & -8 \\
    0 & -2 &  2 & -2 & -2 & 2  & 0  & 8 \\
    0 & -2 & -2 &  2 &  2 & -2 & 0  & 8 \\
    1 & 2  & 2  & 2  & -2 & -2 & -4 & -8 
  \end{pmatrix},
  \end{equation*}
  where columns $3$ and $4$ should involve the factor $\chi((a_2-1)a_3)$, while
  columns $5$ through to $8$ should involve the factor $\chi(a_2a_3)$. The factors
  are omitted in $A_3$ for clarity.
\end{example}  
\begin{remark}
Examples~\ref{ex:onepartcomps}--\ref{ex:threepartcomps} precisely reflect the analysis we outlined in Step~3.
\end{remark}
We can now discern the pattern. Suppose we wish to construct all the generators corresponding to a full set of $2^k$
basis elements associated with a given composition component $[a_1a_2\cdots a_k]$. We preclude 
for the moment, that any of $a_1$ through to $a_k$ are equal to `$1$'. Using the standard actions
in Lemma~\ref{lemma:actionsamecomp}, we find,
\begin{align*}
  [\bs{a_1-1}]\,[\bs{a_2-1}]\,\bigl([a_3\cdots a_k]\times(\hat u,0)\bigr)&=[a_1\cdots a_k]\\
  &\quad\times\bigl((1,1)\lhd\bigl((\hat u,0)-(-1)^{k-2}(0,\hat u^\dag)\bigr)\bigr),\\ 
  [\bs{a_1-1}]\,[\bs0]\,\bigl([(a_2-1)a_3\cdots a_k]\times(\hat a,\hat b)\bigr)&=\chi((a_2-1)a_3)\cdot[a_1\cdots a_k]\\
  &\quad\times\bigl((1,-1)\lhd\bigl((\hat a,\hat b)+(-1)^{k-2}(\hat a,\hat b)^\dag\bigr)\bigr),\\
  [\bs0]\,[\bs{a_1-2}]\,\bigl([a_2\cdots a_k]\times(\hat a,\hat b)\bigr)&=\chi(a_2a_3)\cdot[a_1\cdots a_k]\\
  &\quad\times\bigl((2,0)\lhd(\hat a,\hat b)+(-1)^{k-2}(0,2)\lhd(\hat a,\hat b)^\dag\bigr),
\end{align*}
to leading order. We then act iteratively to substitute for the generators corresponding to the composition:
(i) $[a_3\cdots a_k]$ with $(k-2)$-parts; (ii) $[(a_2-1)a_3\cdots a_k]$ with $(k-1)$-parts, taking into account
the nilpotency action property; and (iii) $[a_2\cdots a_k]$ with $(k-1)$-parts. We know from Step~3,
for a given composition $a_1\cdots a_k$, we can construct $2^k$ unique generators in this way.
In Step~6 we demonstrate that the corresponding signature coefficient matrix generated in this way has full rank.
However, at this stage, we note that we have the following straightforward result.
\begin{lemma}[Generator sets unique to compositions]\label{lemma:uniquegenerators}
  If we use the procedure above to construct the $2^k$ generators associated with the $2^k$ basis elements
  with a given composition component $a_1\cdots a_k$, then each such set of generators is unique to the given composition $a_1\cdots a_k$,
  i.e.\/ each such set of generators corresponding to a given composition $a_1\cdots a_k$, does not appear elsewhere
  in generator sets for other compositions.
\end{lemma}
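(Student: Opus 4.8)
The plan is to prove Lemma~\ref{lemma:uniquegenerators} by exhibiting an injective map from generators (built by the recursive procedure of Step~5) back to the composition they were constructed for. First I would make precise what data a generator carries: a generator of a basis element with composition component $a_1a_2\cdots a_k$ is a monomial of signature expansions $[\bs a]$'s and $[\bs0]$'s, and by the construction in Lemma~\ref{lemma:actionsamecomp}, its non-$[\bs0]$ ``$[\star]$-slots'' are populated, in order, by signature expansions whose indices are $a_1-1$, $a_2-1$ (or $a_2-2$), \ldots, $a_k-1$ (or $a_k-2$), interspersed with $[\bs0]$ factors according to one of three standard action patterns at each recursive step. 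The key point is that the recursion is \emph{deterministic read off the composition}: the first action consumes $a_1$ and $a_2$ simultaneously and recurses on $a_3\cdots a_k$; the two special actions consume $a_1$ alone and recurse on $(a_2-1)a_3\cdots a_k$ or on $a_2\cdots a_k$. So from the shape of the monomial (the positions of the $[\bs0]$'s relative to the $[\star]$-slots) together with the indices appearing in the $[\star]$-slots, one can invert each recursive step.

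The main steps, in order, are: (1) Formalise the ``generator word'' of a monomial generator as the sequence of $[\star]$- and $[\bs0]$-symbols together with the integer index in each $[\star]$-slot, and observe that for any generator produced by the procedure, reading left to right, the very first structural decision (whether the monomial begins $[\bs a]\,[\bs b](\cdot)$, $[\bs a]\,[\bs0](\cdot)$, or $[\bs0]\,[\bs b](\cdot)$, accounting for the Lemma~\ref{lemma:nilpotentaction} nilpotency which removes the $[\bs a]\,[\bs0]$-after-$[\bs0]\,[\bs b]$ case) is read off uniquely from the leading two factors. (2) Show that from this leading decision and the indices of those leading factors, the first letter (and, in the first-action case, the first two letters) of the composition $a_1a_2\cdots a_k$ are recovered: in the first case $a_1=(\text{index of }[\bs a])+1$ and $a_2=(\text{index of }[\bs b])+1$; in the special cases $a_1=(\text{index of }[\bs a])+2$ or $a_1=(\text{index of }[\bs b])+2$ according to which side $[\bs0]$ sits, with the companion pattern recursing on the shorter composition. (3) Induct on $k=|w|$: the remainder of the generator after stripping the recognised leading factors is, by construction, a generator for the strictly shorter composition (of length $k-2$ or $k-1$), so by the inductive hypothesis it determines that shorter composition uniquely, and hence, prepending the recovered leading letter(s), the original composition is uniquely determined. (4) Conclude that no generator word can belong to the generator set of two different compositions, which is exactly the statement; the base cases $k=1,2,3$ are already laid out explicitly in Examples~\ref{ex:onepartcomps}--\ref{ex:threepartcomps}.

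I would also isolate the one genuinely delicate point, which is the interaction with compositions containing a digit equal to $1$. When $a_j=1$ the index $a_j-2 = -1$ is not an allowed (non-negative) signature expansion, so certain generators collapse or are omitted (as in Example~\ref{ex:twopartcomps} when $a_2=1$); one must check that this collapsing does not cause a generator word to become ambiguous between a composition with a $1$ in it and a different composition without one. The resolution is that the ``missing index'' case is itself a recognisable feature of the generator word --- the absence of a particular $[\bs0]$-flanked slot, equivalently two $[\bs0]$'s merging --- so the inversion map still succeeds; I would state this as a short sub-case in Step~3 rather than prove it in full generality here, since the pattern is exactly as in the worked examples.

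The hard part will be making the inductive strip-off rigorous rather than merely plausible: precisely, one must argue that the procedure in Step~5 produces \emph{only} generators that admit this unique left-to-right parse, i.e. that at every recursion stage the three standard actions (minus the nilpotent combination) produce monomials whose leading structure is unambiguous, with no two of the three actions ever yielding the same leading two-factor pattern with the same indices. This is essentially a bookkeeping verification against Corollaries~\ref{cor:tripleproductaction} and~\ref{cor:specialactions} --- the three actions respectively produce leading composition-heads $(a+1)(b+1)$, $(a+1)(c+1)$ with a flanking $[\bs0]$, and $(b+2)$ with a flanking $[\bs0]$ --- but it needs to be spelled out carefully because the genuine content of the lemma (and its use in Steps~7--8 for the full-rank/consistency argument) is precisely that these generator sets are \emph{disjoint across compositions}, so that the big coefficient system decouples into the square per-composition blocks $A_0,A_1,A_2,A_3,\ldots$ analysed separately.
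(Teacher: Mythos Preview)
Your approach is correct in spirit but considerably more elaborate than what the paper intends. The paper offers no proof at all for this lemma: it is stated as a ``straightforward result'' immediately after the recursive construction in Step~5, with no argument given. The implicit reason it is straightforward is this: by Lemma~\ref{lemma:actionsamecomp} and the surrounding discussion, each of the three standard actions produces, \emph{at leading order} (meaning at the descent-order-maximal composition component), precisely the target composition $a_1\cdots a_k$. Hence the descent-maximal composition component of any generator built by the procedure is, by construction, the composition it was built for. Since a monomial has a unique descent-maximal composition component, no generator can belong to two different generator sets. That is the whole proof.

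Your parsing argument attempts to compute this map explicitly by reading the monomial left to right, which works but requires more care than you indicate. Two points. First, a small slip: in your step~(2), for the special action $[\bs a]\,[\bs 0](\cdot)$ (action~2 in Lemma~\ref{lemma:actionsamecomp}) the recovered first letter is $a_1=(\text{index of }[\bs a])+1$, not $+2$; only action~3 gives the $+2$. Second, and more substantively, when $a_2=2$ the recursion via action~2 passes through the intermediate composition $(a_2-1)a_3\cdots=1\,a_3\cdots$, whose generator $[\bs0]\,[\bs{a_3-1}]\,[\bs0]$ has the same leading-two-factor pattern as an action-3 generator for the one-part composition $a_3+1$. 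Your left-to-right parse cannot distinguish these from the monomial alone; you must carry the flag ``we just applied action~2, so by nilpotency the remainder was not built via action~3 at its top level'' through the recursion to disambiguate. You gesture at this (``accounting for the nilpotency''), but the induction as you have written it---which takes as inductive hypothesis that \emph{any} generator monomial determines its composition---is not quite right, because the same monomial can be a generator for two different intermediate compositions (one avoiding~$1$, one not). The induction should instead be on the pair (generator, action-history-flag), or one should simply fall back on the descent-maximal-composition argument above, which sidesteps the issue entirely.
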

Lastly, apart from the case of the composition `$a_11$' in Example~\ref{ex:twopartcomps}, including the case `$11$',
our analysis above has precluded compositions $a_1\cdots a_k$ containing `$1$' in the composition sequence.
We saw at the end of Example~\ref{ex:twopartcomps}, that provided $a_1\neq1$ then the set of generators
for the basis elements with composition components $a_11$ reduces to two generators only, however, both generators
only generate the basis elements $[a_11]\times(1,0,0,0)$ and $[a_11]\times(0,0,0,1)$.
Consider the case of the composition $a_1a_21$, with both $a_1\neq1$ and $a_2\neq1$.
Using arguments analogous to those for the case `$a_11$' at the end of Example~\ref{ex:twopartcomps},
if we examine the eight generators listed in Example~\ref{ex:threepartcomps},
we observe that the second and last cannot be generators in the case when $a_3=1$,
while the third and fourth generators combine, and the fifth and sixth generators combine,
in much the same way as for the case of `$a_11$' in Example~\ref{ex:twopartcomps}.
The latter two correspond to adding the third and fourth, and also the fifth and sixth, columns
in the $8\times8$-matrix $A_3$ above. Thus for the case of the composition $a_1a_21$, the resulting coefficient matrix, ignoring the
second, third, sixth and seventh rows which are zero, corresponds to the coefficient matrix $A_3^\prime$
in Example~\ref{ex:fifthorder}. With these last two examples in hand, we deduce that for any composition
of the form $a_1a_2\cdots a_{k-1}1$, where we preclude any of $a_1$ through to $a_{k-1}$ to be `$1$',
we have a unique set of generators in the sense of Lemma~\ref{lemma:uniquegenerators}, albeit with
a signature coefficient matrix of size $2^{k-1}\times2^{k-1}$. Now consider the case when the
composition component is $[1a_2\cdots a_k]$, assume for the moment none of $a_2$ throught to $a_k$ equal `$1$'.
Looking at the standard actions in Lemma~\ref{lemma:actionsamecomp}, we observe that for basis elements with
such a composition component, the two valid actions are,
\begin{align*}
  [\bs0]\,[\bs{a_2-1}]\,\bigl([a_3\cdots a_k]\times(\hat u,0)\bigr)&=\chi(a_3\cdots a_k)\cdot[a_1\cdots a_k]\times(\star,\star),\\
  [\bs0]\,[\bs0]\,\bigl([(a_2-1)a_3\cdots a_k]\times(\hat a,\hat b)\bigr)&=\chi((a_2-1)a_3\cdots a_k)\cdot[a_1\cdots a_k]\times(\star,\star),
\end{align*}
where the two expressions $(\star,\star)$ are proxies for the appropriate $\mathbb R\la\mathbb B\ra$-components whose exact form is
not important at this stage. However, we now observe that if we apply the final action in Lemma~\ref{lemma:actionsamecomp} respectively
for the cases of the compositions $[(a_2+1)a_3\cdots a_k]$ and $[2(a_2-1)a_3\cdots a_k]$, we find,
\begin{align*}
  [\bs0]\,[\bs{a_2-1}]\,\bigl([a_3\cdots a_k]\times(\hat u,0)\bigr)&=\chi(a_3\cdots a_k)\cdot[(a_2+1)\cdots a_k]\times(\star,\star),\\
  [\bs0]\,[\bs0]\,\bigl([(a_2-1)a_3\cdots a_k]\times(\hat a,\hat b)\bigr)&=\chi((a_2-1)a_3\cdots a_k)\cdot[2(a_2-1)a_3\cdots a_k]\times(\star,\star).
\end{align*}
We observe that the first two respective factors of the generators and their arguments match the two cases corresponding to
the composition $[1a_2\cdots a_k]$. However the latter two cases generate basis elements with the respective composition components
$[(a_2+1)a_3\cdots a_k]$ and $[2(a_2-1)a_3\cdots a_k]$, both of which occur before the composition $[1a_2\cdots a_k]$ in descent order.
Thus these generators will not be new. A similar scenario could occur if one or more letters $a_2$ through to $a_k$ are equal to $1$.
In our main proof below in Step~8, we are able to discount any compositions $a_1\cdots a_k$ in which any one of the letters $a_1$ through to $a_{k-1}$ 
is equal to `$1$'.
\medskip

\emph{Step~6: Full rank blocks.}
Our goal in this step is to show that for a given composition of length $k$, for which in general there are $2^k$ different possible 
$\mathbb R\la\mathbb B\ra$ components, there are $2^k$ independent generators, generated by the first action
acting on generators at level $k-2$ and the second and third special actions on generators at level $k-1$.
The following results establish that this is indeed the case.
\begin{lemma}[Actions and independence]\label{lemma:indeptaction}
  We have the following, at leading order:

  (i) Given an independent set of input $\mathbb R\la\mathbb B\ra$ components of length $2^{k-1}$, 
  of the form $(u,0)$ for the first action, or of the form $(a,b)$ for the second and third actions,
  each individual action produces an independent set of $\mathbb R\la\mathbb B\ra$ components of length $2^k$;

  (ii) Given any arbitrary length $2^{k-1}$ non-zero inputs, of the form $(u,0)$ for the first action or of the form $(a,b)$ 
  for the second and third actions, the set of three actions generate independent $\mathbb R\la\mathbb B\ra$ components of length $2^k$.  
\end{lemma}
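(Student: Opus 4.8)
The plan is to prove both parts of Lemma~\ref{lemma:indeptaction} by working entirely at the level of the $\mathbb R\la\mathbb B\ra$ components, using the explicit formulas for the three standard actions recorded in Lemma~\ref{lemma:actionsamecomp}. The key structural observation is that each of the three actions, viewed as a linear map on $\mathbb R\la\mathbb B\ra$-components, has a transparent block/tensor form: the first action sends $(u,0)\mapsto(1,1)\lhd\bigl((u,0)\mp(0,u^\dag)\bigr)$, the second sends $(a,b)\mapsto(1,-1)\lhd\bigl((a,b)\pm(a,b)^\dag\bigr)$, and the third sends $(a,b)\mapsto(2,0)\lhd(a,b)\pm(0,2)\lhd(a,b)^\dag$. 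In each case the output of length $2^k$ is built by applying a fixed $2\times2$ ``outer'' pattern ($(1,1)$, $(1,-1)$, or $(2,0)$ versus $(0,2)$) tensored against a length-$2^{k-1}$ vector obtained from the input by an explicitly invertible operation. So the real content is: (a) each action, restricted to its relevant input subspace, is injective; and (b) the three image subspaces are in direct sum.

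First I would dispose of part~(i). For the second action, the map $(a,b)\mapsto(a,b)\pm(a,b)^\dag$ on vectors of length $2^{k-1}$ is injective because $(a,b)^\dag$ is a fixed-point-free involution composed with a sign, so $(\id\pm\dagger)$ has trivial kernel (one checks $v\pm v^\dag=0$ forces $v=\mp v^\dag$, and iterating gives $v=v$, but combined with the componentwise structure and the overall sign bookkeeping one gets $v=0$; more cleanly, $(\id+\dagger)(\id-\dagger)=\id-\dagger^2=0$ only on a proper subspace, and the relevant restriction is injective — I will phrase this via the eigenspace decomposition of the involution $\dagger$). Then prepending the outer pattern $(1,-1)\lhd(\cdot)$ is visibly injective since $(1,-1)$ has a nonzero entry. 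For the first action, the input is already constrained to the form $(u,0)$, and $(u,0)\mapsto(u,0)\mp(0,u^\dag)$ is injective simply because reading off the first half recovers $u$; again $(1,1)\lhd(\cdot)$ preserves injectivity. For the third action, $(a,b)\mapsto(2,0)\lhd(a,b)\pm(0,2)\lhd(a,b)^\dag$ is injective because the first $2^{k-1}$ coordinates of the output are exactly $2(a,b)$. Applying an injective linear map to an independent set yields an independent set, which is part~(i).

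For part~(ii) I would show the three images lie in complementary positions of the length-$2^k$ coordinate block. Here is the crucial bookkeeping: writing a length-$2^k$ vector as four consecutive blocks $(B_1,B_2,B_3,B_4)$ each of length $2^{k-2}$, the first action produces a vector of the shape $(u, \pm u^\dag, u, \pm u^\dag)$ — symmetric under the outer halving in the pattern dictated by $(1,1)$; the third action produces $(2a, 2b, \pm 2b^\dag, \pm 2a^\dag)$; and the second action produces $(1,-1)\lhd(\cdot)$ of a $\dagger$-symmetric vector, giving the shape $(c, d, -c, -d)$ with $(c,d)$ itself $\dagger$-symmetric. The point is that these three families are distinguished by their behaviour under the two natural ``halving reflections'' of the $2^k$-block: the first-action image is invariant under swapping the outer halves $(B_1 B_2)\leftrightarrow(B_3 B_4)$ and reversing, the second-action image is anti-invariant under that same swap, and the third-action image is characterised by $B_3,B_4$ being determined as $\pm$ reversals of $B_2,B_1$ — a condition orthogonal to the first two once one checks the single potential overlap (a vector simultaneously outer-symmetric and outer-antisymmetric is zero). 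I would verify that the only common solution across any two of the three families is the zero component, hence the span of all three images is a direct sum; combined with part~(i) and the dimension count $2^{k-2}+2^{k-1}+2^{k-1}$... wait — more carefully, the first action contributes $2^{k-2}$ independent components (its inputs $(u,0)$ range over a $2^{k-2}$-dimensional space after the $\dagger$ constraint is automatically handled), the second contributes $2^{k-1}$ (its inputs $(a,b)$ range freely, but the $\dagger$-symmetrization halves... no: the generators at level $k-1$ feeding the second action are already the $2^{k-1}$ basis-element generators, of which half start with $[\bs0][\star]$ and are killed by nilpotency, leaving $2^{k-2}$), and similarly the third contributes $2^{k-1}$. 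I will track these counts precisely against Lemma~\ref{lemma:magicalmatch1}: $2^{k-2}+2^{k-2}+2^{k-1}=2^k$, matching the target dimension.

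The main obstacle I anticipate is not the injectivity of individual actions — that is routine linear algebra on $2\times2$ tensor patterns — but rather the careful sign and $\dagger$-involution bookkeeping in the direct-sum argument, together with correctly accounting for which generators at levels $k-1$ and $k-2$ actually survive (the nilpotency of $[\bs a]\,[\bs0]\,\bigl([\bs0]\,[\bs b]\,(\cdot)\bigr)$ from Lemma~\ref{lemma:nilpotentaction} must be folded in, since otherwise the count $2^{k-2}+2^{k-1}+2^{k-1}$ overshoots). The delicate point is that the $(-1)^{|w|}$ signs in Lemma~\ref{lemma:actionsamecomp} flip the relevant symmetry type between the even- and odd-length cases, so the statement ``image of action~1 $\cap$ image of action~2 $=0$'' has to be checked to be sign-robust: in both parities, action~1 lands in the $+1$-eigenspace of a certain reflection-involution and action~2 in the $-1$-eigenspace (or vice versa, consistently), so their intersection is always trivial. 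Once that is pinned down, the result follows by combining injectivity (part~(i)) with triviality of pairwise intersections and the exact dimension count, so that the $2^k$ output components span a space of dimension $2^k$ and are therefore independent.

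\begin{proof}[Proof sketch]
Work throughout with the $\mathbb R\la\mathbb B\ra$-components, using the explicit formulas of Lemma~\ref{lemma:actionsamecomp}; write a length-$2^k$ vector in four consecutive blocks $(B_1,B_2,B_3,B_4)$ of length $2^{k-2}$.

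(i) Each action is the composition of an injective ``inner'' map on length-$2^{k-1}$ vectors with an injective ``outer'' pattern. For the first action, $(u,0)\mapsto(u,0)-(-1)^{k-2}(0,u^\dag)$ recovers $u$ from its first half, hence is injective, and $(1,1)\lhd(\cdot)$ is injective. For the third, the first half of $(2,0)\lhd(\hat a,\hat b)+(-1)^{k-2}(0,2)\lhd(\hat a,\hat b)^\dag$ equals $2(\hat a,\hat b)$, so it is injective. For the second, the relevant inputs are already $\dagger$-symmetrized level-$(k-1)$ generators, and $(1,-1)\lhd(\cdot)$ followed by $(\id+(-1)^{k-2}\dagger)$ restricted to that symmetry eigenspace is injective. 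An injective linear map carries an independent set to an independent set.

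(ii) We claim the three image subspaces are in direct sum. Let $R$ denote the reflection-involution on length-$2^k$ vectors swapping the outer halves and reversing, i.e. $(B_1,B_2,B_3,B_4)\mapsto(B_4^\dag,B_3^\dag,B_2^\dag,B_1^\dag)$ up to the sign $(-1)^{k}$. A direct check shows that in both parities the image of the first action lies in one eigenspace of $R$ and the image of the second action in the other, so their intersection is $\{0\}$; the image of the third action is cut out by the condition $(B_3,B_4)=\pm(B_2^\dag,B_1^\dag)$, which meets either eigenspace of $R$ only in $\{0\}$ (forcing $B_1=B_2=0$). Hence the three images intersect pairwise trivially. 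Counting dimensions using Lemma~\ref{lemma:nilpotentaction} to discard the $[\bs0][\star]$-initial level-$(k-1)$ generators under the second action, the three images contribute $2^{k-2}$, $2^{k-2}$ and $2^{k-1}$ independent components respectively, summing to $2^k$; by Lemmas~\ref{lemma:magicalmatch1} and~\ref{lemma:magicalmatch2} this matches the dimension of the ambient $\mathbb R\la\mathbb B\ra$-block. Therefore the combined set of $2^k$ output components spans a $2^k$-dimensional space and is independent.
\end{proof}
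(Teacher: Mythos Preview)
Your treatment of part~(i) is essentially the paper's: each action is an injective linear map on the relevant input space, so independent inputs go to independent outputs. That part is fine.

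Part~(ii), however, has a genuine gap. First, you are aiming at a stronger statement than the lemma asserts. The lemma takes \emph{one} input $(u,0)$ for the first action and \emph{one} input $(a,b)$ shared by the second and third actions, and claims the resulting three vectors are linearly independent. The paper proves exactly this by writing
\[
\kappa_1\cdot(u,u^\dag,u,u^\dag)+\kappa_2\cdot(a-b^\dag,b-a^\dag,b^\dag-a,a^\dag-b)+\kappa_3\cdot(a,b,-b^\dag,-a^\dag)=0
\]
and running through the cases on which $\kappa_i$ vanish; each case forces $u=0$ or $(a,b)=0$, contradicting the hypothesis. You instead try to show the three \emph{image subspaces} are in direct sum and then invoke a dimension count --- this is the content of Proposition~\ref{prop:fullrank}, not of this lemma.

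Second, your eigenspace mechanism breaks. With your reflection $R\colon(B_1,B_2,B_3,B_4)\mapsto(B_4^\dag,B_3^\dag,B_2^\dag,B_1^\dag)$, a direct check shows that the outputs of \emph{both} action~1 and action~2 lie in the $+1$-eigenspace (for action~2, note $(a^\dag-b)^\dag=a-b^\dag$), so they are not separated. If you instead use the plain outer swap $S\colon(B_1,B_2,B_3,B_4)\mapsto(B_3,B_4,B_1,B_2)$, then actions~1 and~2 do land in opposite eigenspaces --- but your claim that action~3's image meets each eigenspace only in $\{0\}$ is then false: $(a,b,-b^\dag,-a^\dag)\in E_+$ whenever $a=-b^\dag$, which is a nontrivial subspace. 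Finally, even granting pairwise trivial intersections, that together with the dimension count $2^{k-2}+2^{k-2}+2^{k-1}=2^k$ does \emph{not} imply the three subspaces are in direct sum (three distinct lines through the origin in the plane are a counterexample to that style of reasoning).

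The fix is to abandon the subspace/eigenspace framing for~(ii) and argue directly with the three specific output vectors, as the paper does: pair components with their adjoints to eliminate $\kappa_i$'s one at a time.
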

\begin{proof}
   We focus on the effect of the actions on the $\mathbb R\la\mathbb B\ra$ components only.
   In order, consider (i). It is sufficient to prove the result for two independent inputs as the general case
   follows suit. Consider the first action and suppose $u$ and $\hat u$ are two non-trivial independent
   $\mathbb R\la\mathbb B\ra$ components. Consider an arbitrary linear combination, with scalar coefficients $\kappa_1$ and $\kappa_2$,
   of the first action applied to the input $(u,0)$ and the first action applied to the input $(\hat u,0)$.
   Set the linear combination to zero. This gives,
   $\kappa_1\cdot(u,\pm u^\dag,u,\pm u^\dag)+\kappa_2\cdot(\hat u,\pm\hat u^\dag,\hat u,\pm\hat u^\dag)=0$,
   where the right-hand side represents the zero $\mathbb R\la\mathbb B\ra$ component of the appropriate length.
   Pairing up, we observe the equation above is equivalent to $\kappa_1\cdot u+\kappa_2\cdot\hat u=0$,
   with the other pairings generating the same equation. Since by assumption $u$ and $\hat u$ are two independent
   $\mathbb R\la\mathbb B\ra$ components, the result follows. Now consider the second action.
   Suppose $(a,b)$ and $(\hat a,\hat b)$ are two non-trivial independent $\mathbb R\la\mathbb B\ra$ components.
   As above, we construct the arbitrary linear combination,
   $\kappa_1\cdot(a\pm b^\dag,b\pm a^\dag,-a\mp b^\dag,-b\mp a^\dag)
   +\kappa_2\cdot(\hat a\pm\hat b^\dag,\hat b\pm\hat a^\dag,-\hat a\mp\hat b^\dag,-\hat b\mp\hat a^\dag)=0$,
   for arbitrary scalar coefficients $\kappa_1$ and $\kappa_2$. We assume $a\neq \pm b^\dag$ and
   $\hat a\neq\pm\hat b^\dag$---we observe from our proof of Lemma\ref{lemma:nilpotentaction} that the
   second action is trivial if and only if $a=\pm b^\dag$.
   Since in the last equation the final two components generate the same equation as the first two, the last equation is equivalent to,
   $\kappa_1\cdot(a\pm b^\dag,b\pm a^\dag)+\kappa_2\cdot(\hat a\pm b^\dag,\hat b\pm a^\dag)=0$.
   This reduces to $\kappa_1\cdot(a,b)+\kappa_2\cdot(\hat a,\hat b)=0$.
   Hence by our assumption on $(a,b)$ and $(\hat a,\hat b)$, the result follows.
   We now consider the third action. 
   Suppose $(a,b)$ and $(\hat a,\hat b)$ are two non-trivial independent $\mathbb R\la\mathbb B\ra$ components,
   As above, we construct the linear combination,
   $\kappa_1\cdot(a,b,\pm b^\dag,\pm a^\dag)+\kappa_2\cdot(\hat a,\hat b,\pm\hat b^\dag,\pm\hat a^\dag)=0$,
   for arbitrary scalar coefficients $\kappa_1$ and $\kappa_2$. This last equation is equivalent to
   $\kappa_1\cdot(a,b)+\kappa_2\cdot(\hat a,\hat b)=0$---the final two components
   generate the same equation as the first two. By our independence assumption on $(a,b)$ and $(\hat a,\hat b)$,
   the result follows.

   We now consider (ii). For arbitrary $(u,0)$ and $(a,b)$ in $\mathbb R\la\mathbb B\ra$,
   consider the following linear combination of the three actions, set equal to the zero $\mathbb R\la\mathbb B\ra$ component, namely:
   $\kappa_1\cdot(u,u^\dag,u,u^\dag)+\kappa_2\cdot(a-b^\dag,b-a^\dag,b^\dag-a,a^\dag-b)+\kappa_3\cdot(a,b,-b^\dag,-a^\dag)=0$,
   where $\kappa_1$, $\kappa_2$ and $\kappa_3$ are arbitrary scalar coefficients.
   Note we assume $u$ and $(a,b)$ are non-trivial.
   If $\kappa_1\neq0$ and $\kappa_2=\kappa_3=0$ then we observe that necessarily $u$
   must be the zero $\mathbb R\la\mathbb B\ra$ component, which contradicts our assumptions.
   Similarly if $\kappa_3\neq0$ and $\kappa_1=\kappa_2=0$ then necessarily $(a,b)$ is
   the zero $\mathbb R\la\mathbb B\ra$ component, which again contradicts our assumptions.
   If $\kappa_1\neq0$, $\kappa_2\neq0$ and $\kappa_3=0$, the first and second components above
   reveal that necessarily $\kappa_1\cdot u+\kappa_2\cdot(a-b^\dag)=0$ and $\kappa_1\cdot u^\dag+\kappa_2\cdot(b-a^\dag)=0$.
   Taking the adjoint of the second equation and adding the result to the first equation, implies $u=0$.
   The third and fourth components generate the same information. Thus we have a contradiction.
   Analogously, in the cases $\kappa_1\neq0$, $\kappa_3\neq0$ and $\kappa_2=0$, as well as
   $\kappa_2\neq0$, $\kappa_3\neq0$ and $\kappa_1=0$, it is straightforward to show that
   a necessary consequence is that $(a,b)$ is the zero $\mathbb R\la\mathbb B\ra$ component,
   and we have a contradiction. Now consider the case when all of $\kappa_1$, $\kappa_2$, $\kappa_3$
   are non-zero. Pairing up the first component from the linear combination above with the
   adjoint of the fourth component reveals that necessarily $a$ is the zero $\mathbb R\la\mathbb B\ra$ component.
   Pairing the second component and the adjoint of the third component reveals that necessarily $b$ 
   is the zero $\mathbb R\la\mathbb B\ra$ component. We thus reach another contradiction.
   The final case we have not considered is the case $\kappa_1=\kappa_3=0$ and $\kappa_2\neq0$. In this case we necessarily
   deduce $a=b^\dag$. As we have seen above, this is precisely the condition we need to rule
   out for the input when we apply the second action. The proof is complete.  \qed
\end{proof}
Putting the results of this and the previous steps together, we observe the following.
\begin{proposition}[Full rank linear system for all compositions]\label{prop:fullrank}
Suppose we are given a composition $a_1\cdots a_k\in\mathcal C$ of $k$-parts.
Assume that all of $a_1$ through to $a_{k-1}$ are not equal to `$1$'. 
Then associated with the $2^k$ set of basis elements with composition component $a_1\cdots a_k$,
are a unique set of $2^k$ generators, and the signature coefficient matrix has full rank.
If $a_k=1$, the statement still holds but instead with $2^{k-1}$ basis elements and $2^{k-1}$ generators.
\end{proposition}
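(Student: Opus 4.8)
\textbf{Proof strategy for Proposition~\ref{prop:fullrank}.} The plan is to assemble the pieces developed in Steps~1--6 into a clean inductive argument on the number of parts $k$ of the composition $a_1\cdots a_k$. The base cases $k=1$, $k=2$, $k=3$ are Examples~\ref{ex:onepartcomps}, \ref{ex:twopartcomps}, \ref{ex:threepartcomps}, where the coefficient matrices $A_0$, $A_1$, $A_2$ (and the full-rank $8\times8$ matrix $A_3$) are exhibited explicitly; their full rank is checked by direct determinant computation, and the statement about the reduced $2^{k-1}\times 2^{k-1}$ block when $a_k=1$ is read off from the collapsing of columns observed at the end of Example~\ref{ex:twopartcomps} and in the discussion of $a_1a_21$ at the end of Step~5. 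So the real content is the inductive step.

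For the inductive step, assume the result holds for all compositions of at most $k-1$ parts (with none of the first $\text{(length}-1)$ letters equal to~$1$). Given a $k$-part composition $a_1\cdots a_k$, the $2^k$ basis elements $[a_1\cdots a_k]\times\bs\beta_i$ are generated, according to the coarse-grain count of Step~3 and Lemma~\ref{lemma:magicalmatch1}, by exactly $2^k$ monomial generators: $2^{k-2}$ of them obtained by the first standard action $[\bs{a_1-1}]\,[\bs{a_2-1}]\,(\cdot)$ applied to the $2^{k-2}$ generators of $[a_3\cdots a_k]$; $2^{k-2}$ obtained by the special action $[\bs{a_1-1}]\,[\bs0]\,(\cdot)$ applied to the half of the $2^{k-1}$ generators of $[(a_2-1)a_3\cdots a_k]$ that do not begin with $[\bs0]\,[\bs\star]$ (the other half are killed by the nilpotency of Lemma~\ref{lemma:nilpotentaction}); and $2^{k-1}$ obtained by the special action $[\bs0]\,[\bs{a_1-2}]\,(\cdot)$ applied to all $2^{k-1}$ generators of $[a_2\cdots a_k]$. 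By the induction hypothesis these three input sets are each independent as collections of $\mathbb R\la\mathbb B\ra$-components (of lengths $2^{k-2}$, $2^{k-1}$, $2^{k-1}$ respectively). Now invoke Lemma~\ref{lemma:indeptaction}(i) three times to conclude that each of the three actions, applied to its independent input set, produces an independent family of length-$2^k$ outputs; and invoke Lemma~\ref{lemma:indeptaction}(ii) to conclude that outputs from different actions cannot be linearly tangled with one another. Combining, the full set of $2^k$ output $\mathbb R\la\mathbb B\ra$-components is independent, i.e.\ the $2^k\times 2^k$ signature coefficient matrix is invertible. Lemma~\ref{lemma:uniquegenerators} ensures these generators are not shared with any other composition, so the block is genuinely self-contained. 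The $a_k=1$ case is handled by the same argument with the column-pairing reductions of Step~5 bringing the count down uniformly to $2^{k-1}$ on both sides.

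The one point requiring care -- and the step I expect to be the main obstacle -- is the precise bookkeeping of which generators survive and how the ``taking into account the nilpotency'' clause interacts with the induction hypothesis. Lemma~\ref{lemma:indeptaction}(i) is stated for inputs ``of the form $(u,0)$'' for the first action and ``of the form $(a,b)$'' for the second and third; one must verify that the generators produced at level $k-1$ indeed always have the shape $(u,0)$ when they are to be fed into the first action at level $k+1$ (this is exactly the structural observation $(1,0,1,0)\lhd\hat u=(1,1)\lhd(\hat u,0)$ recorded just before Lemma~\ref{lemma:actionsamecomp}), and that the ones fed into $[\bs{a_1-1}]\,[\bs0]\,(\cdot)$ are precisely those avoiding the $a=\pm b^\dag$ degeneracy that trivialises the second action -- which is why the generators beginning $[\bs0]\,[\bs\star]$ must be excluded rather than just happening to vanish. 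Assembling this into a single coherent induction, rather than the hypothesis needing to be strengthened to track the $(u,0)$-shape explicitly, is where the argument is most delicate; everything else is an application of the lemmas already proved.

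\textbf{Proof.} We argue by induction on the number of parts $k$ of the composition. For $k=1,2,3$ the assertion is the content of Examples~\ref{ex:onepartcomps}, \ref{ex:twopartcomps} and \ref{ex:threepartcomps}: the coefficient matrices $A_0$, $A_2$ and $A_3$ displayed there are invertible by direct computation, and when the final letter equals~$1$ the column collapses described at the end of Examples~\ref{ex:twopartcomps} and in Step~5 reduce these to the invertible matrices $A_1$ and $A_3^\prime$ of size $2^{k-1}\times2^{k-1}$. Suppose the result holds for all compositions of fewer than $k$ parts whose first $(\text{length}-1)$ letters are all different from~$1$, and let $a_1\cdots a_k$ be such a composition of $k$ parts. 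By Corollary~\ref{corollary:dimmatch} there are exactly $2^k$ monomial generators producing the $2^k$ basis elements $[a_1\cdots a_k]\times\bs\beta_i$ at leading order, and by the construction of Step~5 they fall into three groups:
\begin{enumerate}
\item[(G1)] the $2^{k-2}$ generators obtained by applying $[\bs{a_1-1}]\,[\bs{a_2-1}]\,(\cdot)$ to the $2^{k-2}$ generators of $[a_3\cdots a_k]$;
\item[(G2)] the $2^{k-2}$ generators obtained by applying $[\bs{a_1-1}]\,[\bs0]\,(\cdot)$ to those generators of $[(a_2-1)a_3\cdots a_k]$ that do not begin with the factor $[\bs0]\,[\bs\star]$;
\item[(G3)] the $2^{k-1}$ generators obtained by applying $[\bs0]\,[\bs{a_1-2}]\,(\cdot)$ to the $2^{k-1}$ generators of $[a_2\cdots a_k]$.
\end{enumerate}
The count $2^{k-2}+2^{k-2}+2^{k-1}=2^k$ matches Lemma~\ref{lemma:magicalmatch1}, and the discarded half of (G2) produces the zero component by Lemma~\ref{lemma:nilpotentaction}. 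By the induction hypothesis the $\mathbb R\la\mathbb B\ra$-components of the generator sets for $[a_3\cdots a_k]$, $[(a_2-1)a_3\cdots a_k]$ and $[a_2\cdots a_k]$ are each linearly independent; moreover the generators of $[a_3\cdots a_k]$ all have the form $(u,0)$ (as is immediate from the shape of the standard actions in Lemma~\ref{lemma:actionsamecomp}, together with the fact that a generator of an $(\ell)$-part composition is a length-$2^\ell$ component whose structure, inductively, is built by prepending $(\hat u,0)$ or $(\hat a,\hat b)$ via the same actions), so they are valid inputs to the first action, and the surviving generators of $[(a_2-1)a_3\cdots a_k]$ in (G2) satisfy the non-degeneracy condition $a\neq\pm b^\dag$ by the proof of Lemma~\ref{lemma:nilpotentaction}. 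Hence Lemma~\ref{lemma:indeptaction}(i) applies to each of the three actions separately, showing that each of (G1), (G2), (G3) is a linearly independent family of length-$2^k$ components, and Lemma~\ref{lemma:indeptaction}(ii) shows that no non-trivial linear relation can mix components coming from different actions. Therefore the $2^k$ output components are linearly independent, i.e.\ the $2^k\times2^k$ signature coefficient matrix has full rank. By Lemma~\ref{lemma:uniquegenerators} this generator set is unique to the composition $a_1\cdots a_k$. Finally, if $a_k=1$, the column-pairing reductions exhibited at the end of Example~\ref{ex:twopartcomps} (for the composition $a_11$) and in the discussion of $a_1a_21$ in Step~5 apply verbatim at level $k$: each pair of generators that degenerated together is replaced by a single generator, the zero rows are discarded, and the argument above runs with $2^{k-1}$ in place of $2^k$ throughout, yielding a full-rank $2^{k-1}\times2^{k-1}$ coefficient matrix. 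This completes the induction. \qed
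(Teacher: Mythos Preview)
Your approach is exactly the paper's: the proposition is stated there without separate proof, merely as the summary of Steps~1--6, and you have correctly unpacked that summary into an explicit induction on $k$ with the three groups (G1)--(G3) and the counts $2^{k-2}+2^{k-2}+2^{k-1}=2^k$. The structure is right.

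There is, however, a genuine gap in your invocation of Lemma~\ref{lemma:indeptaction}(ii). That lemma, as stated and proved, shows only that \emph{one} output from each of the three actions (with the same $(a,b)$ fed to actions two and three, in fact) are linearly independent. It does \emph{not} immediately give that a linear relation among the full $2^k$ outputs cannot mix groups. What you need is stronger: that $\mathrm{Act}_1(U)+\mathrm{Act}_2(V)+\mathrm{Act}_3(W)=0$ forces $U=W=0$ and $V\in\ker(\mathrm{Act}_2)$, for arbitrary inputs $U,V,W$. This does follow from the component-by-component analysis in the paper's proof of Lemma~\ref{lemma:indeptaction}(ii) once you allow distinct inputs to each action and exploit linearity, but you must then also argue that $V=0$. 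The point is that $V=\sum\beta_j v_j$ could a priori land in $\ker(\mathrm{Act}_2)$ even though no individual $v_j$ does. To close this, observe that $\ker(\mathrm{Act}_2)$ has dimension $2^{k-2}$ (the degeneracy $a=\pm b^\dag$ is half-dimensional), that the $2^{k-2}$ \emph{discarded} generators---those beginning $[\bs0]\,[\bs\star]$---all lie in $\ker(\mathrm{Act}_2)$ by Lemma~\ref{lemma:nilpotentaction} and are independent by the induction hypothesis, hence span it; and that the surviving $v_j$ span a complement to this kernel inside the full independent set of $2^{k-1}$ level-$(k-1)$ generators. Thus $V\in\mathrm{span}(v_j)\cap\ker(\mathrm{Act}_2)=\{0\}$, whence all $\beta_j=0$. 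With this dimension-count patch inserted, your induction goes through cleanly.
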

\medskip

\emph{Step~7: Composition and generator counts.}
In light of Proposition~\ref{prop:fullrank}, we are interested in the following counts.
Given $n\in\mathbb N$, what are the total numbers of: (i) Generators; (ii) Basis elements
with composition components avoiding `$1$', i.e.\/ compositions $a_1\cdots a_k$ for which
none of the letters $a_1$ through to $a_k$ are `$1$';
(iii) Basis elements with compositions ending in `$1$' with rest of the composition avoiding `$1$',
i.e.\/ compositions of the form $a_1\cdots a_{k-1}1$ for which $a_1\cdots a_{k-1}$ avoids `$1$'. 
Such information will help us keep track of size of the linear system of equations for the unknown coeffcients $\{c_\star\}$ we solve
as part of the proof of Theorem~\ref{thm:main} below in Step~8.

Let us begin with the total number of generators, i.e.\/ item (i). Note that all monomial generators
are of odd-degree. For a given $n\in\mathbb N$, there is one generator of degree $1$, namely $[\bs{n}]$.
We then need to enumerate all the possible degree $3$ generators. Each P\"oppe product corresponds to
increasing the order of the compositions they generate by $1$, and there are two P\"oppe products in any
degree $3$ generator. Hence all the degree $3$ generators consist of all the possible compositions
of $(n-2)$ with $1$, $2$ and $3$ parts and all the possible ways to assort them into three factors
which can include ``packing factors'' of $0$. So for example, the only $1$-part compositions of $(n-2)$
assorted in this way are $[\bs{n-2}]\,[\bs0]\,[\bs0]$, $[\bs0]\,[\bs{n-2}]\,[\bs0]$ and $[\bs0]\,[\bs0]\,[\bs{n-2}]$.
The first set of generators of this form associated with the $2$-part compositions of $(n-2)$ are
$[\bs{n-3}]\,[\bs1]\,[\bs0]$, $[\bs{n-3}]\,[\bs0]\,[\bs1]$ and $[\bs0]\,[\bs{n-3}]\,[\bs1]$, and so forth.
These are just the \emph{weak compositions} of $(n-2)$ into three parts.
The next set of generators are those of degree $5$, and all the generators of this degree would
consist of the weak compositions of $(n-4)$ with $5$ parts, and so forth.
The number of weak compositions of $m$ into $k$ parts is $m+k-1$ choose $k-1$.
We can also think of this as the number of ways of distributing $m$ balls into $k$ slots, allowing empty slots.
From our discussion above, when $n$ is odd, we see that we are interested in, for $k=1,3,5,\ldots,n$,
the number of ways of distributing $m=n-k+1$ balls into $k$ slots, or in other words,
\begin{equation*}
\sum_{k=1\text{($k$ odd)}}^{n} \begin{pmatrix} n\\ k-1\end{pmatrix}=\sum_{\ell=0}^{\frac12(n-1)}\begin{pmatrix} n\\ 2\ell\end{pmatrix}=2^{n-1},
\end{equation*}
where we use the substitution $k=2\ell+1$ for the second sum. That the sum total on the right equals $2^{n-1}$ follows, with some care,
from the corresponding result in Lemma~\ref{lemma:magicalmatch2}. Similarly, when $n$ is even,
we are interested in, for $k=1,3,5,\ldots,n+1$, the number of ways of distributing $m=n-k+1$ balls into $k$ slots, or in other words,
\begin{equation*}
1+\sum_{k=1\text{($k$ odd)}}^{n-1} \begin{pmatrix} n\\ k-1\end{pmatrix}=1+\sum_{\ell=0}^{\frac12(n-2)}\begin{pmatrix} n\\ 2\ell\end{pmatrix}=2^{n-1}.
\end{equation*}
Again we used the substitution $k=2\ell+1$ for the second sum. Note that the initial `$1$' in the sum corresponds to the
case $k=n+1$, i.e.\/ corresponding to the monomial generator $[\bs0]^{n+1}$. That the sum total is $2^{n-1}$ again follows
from the first result in Lemma~\ref{lemma:magicalmatch2}. We have thus established the following.
\begin{lemma}[Total number of generators]\label{lemma:gentot}
Given $n\in\mathbb N$, the total number of monomial generators is equal to $2^{n-1}$.
\end{lemma}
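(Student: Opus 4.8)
The statement to be proved is that, for a given $n\in\mathbb N$, the total number of monomial generators is $2^{n-1}$. The essential content is already contained in the discussion preceding the lemma in the excerpt: every monomial generator is an odd-degree monomial in the signature expansions $[\bs a]$, and the generators of degree $2\ell+1$ are exactly the weak compositions of $n-2\ell$ into $2\ell+1$ parts (distributing $n-2\ell$ into $2\ell+1$ slots, empty slots allowed, since each of the $2\ell$ P\"oppe products raises the total order by $1$ and `packing' factors $[\bs0]$ correspond to empty slots). Hence the plan is simply to assemble this bookkeeping into a clean counting argument and then invoke the binomial identity of Lemma~\ref{lemma:magicalmatch2}.

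First I would fix the indexing precisely: for $k$ odd with $1\leqslant k\leqslant n$ (when $n$ is odd) or $1\leqslant k\leqslant n+1$ (when $n$ is even), the degree-$k$ generators are in bijection with weak compositions of $m=n-k+1$ into $k$ parts, of which there are $\binom{m+k-1}{k-1}=\binom{n}{k-1}$. Summing over all admissible odd $k$ and substituting $k=2\ell+1$ gives, in the odd-$n$ case, $\sum_{\ell=0}^{(n-1)/2}\binom{n}{2\ell}$, and in the even-$n$ case $1+\sum_{\ell=0}^{(n-2)/2}\binom{n}{2\ell}$, where the isolated $1$ is the $k=n+1$ term (the single generator $[\bs0]^{n+1}$, for which $m=0$). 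Both sums equal $2^{n-1}$: this is the statement of Lemma~\ref{lemma:magicalmatch2} after reindexing, since $\sum_{\text{$j$ even}}\binom{n}{j}=2^{n-1}$, and in the even-$n$ case the term $\binom{n}{n}=1$ is the boundary term picked up separately because $k=n+1$ is odd while $j=k-1=n$ is the top even index.

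The only point requiring care — and the one I would flag as the main obstacle — is not the counting identity (which is elementary) but justifying the \emph{bijection} between monomial generators and weak compositions: that every odd-degree monomial $[\bs a_1]\,[\bs a_2]\cdots[\bs a_k]$ with $a_1+\cdots+a_k=n-(k-1)$ genuinely arises as a generator of some basis element of order $n$ at leading order, and conversely that distinct such monomials are distinct generators (no two coincide or cancel). The forward direction follows from Step~3 and Corollary~\ref{cor:tripleproductaction}: repeatedly applying the standard triple actions shows that the leading-order output of such a monomial is a nonzero basis element with a composition component of $n$ formed from $a_1\cdots a_k$ by incrementing, so it is indeed a generator. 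Distinctness follows from Lemma~\ref{lemma:uniquegenerators} (generator sets are unique to compositions) together with the observation that within a fixed composition block the constructed generators are in one-to-one correspondence with the placement data. I would therefore structure the proof as: (1) record that generators have odd degree and that a degree-$2\ell+1$ generator of order $n$ is precisely a weak composition of $n-2\ell$ into $2\ell+1$ parts; (2) count these as $\binom{n}{2\ell}$; (3) sum over $\ell$ and invoke Lemma~\ref{lemma:magicalmatch2}, treating the $n$ odd and $n$ even cases separately and noting the extra $[\bs0]^{n+1}$ term in the even case.

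\begin{proof}[Sketch]
Every monomial generator is an odd-degree monomial $[\bs a_1]\,[\bs a_2]\cdots[\bs a_k]$ in the signature expansions, with $k=2\ell+1$ odd. Since each of the $2\ell$ P\"oppe products occurring in such a product raises the order of the generated composition by exactly $1$, a degree-$k$ monomial generates basis elements of order $n$ precisely when $a_1+a_2+\cdots+a_k=n-(k-1)=n-2\ell$, where each $a_i\geqslant0$ is allowed (the factors $[\bs0]$ acting as packing factors). Thus the degree-$(2\ell+1)$ generators of order $n$ are in bijection with the weak compositions of $n-2\ell$ into $2\ell+1$ parts, of which there are $\binom{(n-2\ell)+(2\ell+1)-1}{(2\ell+1)-1}=\binom{n}{2\ell}$. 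By Step~3 and Corollary~\ref{cor:tripleproductaction} each such monomial does occur as a generator at leading order, and by Lemma~\ref{lemma:uniquegenerators} distinct monomials give distinct generators, so there is no overcounting.

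When $n$ is odd, $k$ ranges over the odd integers $1,3,\ldots,n$, i.e.\ $\ell=0,1,\ldots,\tfrac12(n-1)$, and the total is
\begin{equation*}
\sum_{\ell=0}^{(n-1)/2}\binom{n}{2\ell}=2^{n-1},
\end{equation*}
which is the first identity of Lemma~\ref{lemma:magicalmatch2} after reindexing (the even-index binomial coefficients of $n$ sum to $2^{n-1}$). When $n$ is even, $k$ ranges over $1,3,\ldots,n+1$; the terms with $k\leqslant n-1$ contribute $\sum_{\ell=0}^{(n-2)/2}\binom{n}{2\ell}$, and the term $k=n+1$ contributes a single generator $[\bs0]^{n+1}$. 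Hence the total is
\begin{equation*}
1+\sum_{\ell=0}^{(n-2)/2}\binom{n}{2\ell}=2^{n-1},
\end{equation*}
again by Lemma~\ref{lemma:magicalmatch2}, the isolated $1$ being the top even term $\binom{n}{n}$ omitted from the truncated sum. In both cases the total number of monomial generators equals $2^{n-1}$.
\end{proof}
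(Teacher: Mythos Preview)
Your proposal is correct and follows essentially the same route as the paper: identifying degree-$k$ generators with weak compositions of $n-k+1$ into $k$ parts, counting these as $\binom{n}{k-1}$, summing over odd $k$ separately for $n$ odd and $n$ even, and appealing to Lemma~\ref{lemma:magicalmatch2} for the final evaluation $2^{n-1}$. Your discussion of the bijection (via Step~3 and Lemma~\ref{lemma:uniquegenerators}) is a little more explicit than the paper's own treatment, which takes this correspondence for granted from the preceding analysis, but the argument is otherwise the same.
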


The number of compositions of $n$ with $k$-parts is $n-1$ choose $k-1$, and accumulating
these coefficients from $k=1$ to $k=n$, the total number of compositions of $n$ is $2^{n-1}$.
Of course associated with each composition, there are one or more basis elements.
For example for a composition $a_1\cdots a_k$ with $k$-parts which avoids $1$, there are $2^{k}$ associated basis elements.
The number of compositions of $n$ into $k$ parts avoiding $1$ is given by,
\begin{equation*}
\begin{pmatrix} n-k-1\\k-1\end{pmatrix}.
\end{equation*}
To see this we observe the following---see Axenovich and Ueckerdt~\cite[p.~24]{AU} or Beck and Robbins~\cite{BR}.
There is a bijection between: (a) the arrangements of $n$ balls into $k$ slots with each slot containing two or more balls;
and (b) the arrangements of $n-k$ balls into $k$ slots with no empty slots. For the map from (a) to (b),
we simply remove one ball from each slot. For the map from (b) to (a) we just add one ball to each slot.
The count for (b) is $n-k-1$ choose $k-1$, giving the result above. Thus, given that for a given composition with $k$ parts that
avoids `$1$' there are $2^k$ corresponding basis elements, the total number of basis elements
with composition components that avoid `$1$' is given, with $\lambda=2$, respectively when $n$ is odd and then when $n$ is even, by,
\begin{equation*}
  p(n;\lambda)\coloneqq\sum_{k=1}^{(n-1)/2}\begin{pmatrix} n-k-1\\k-1\end{pmatrix}\lambda^k
    \qquad\text{and}\qquad
  p(n;\lambda)\coloneqq\sum_{k=1}^{n/2}\begin{pmatrix} n-k-1\\k-1\end{pmatrix}\lambda^k. 
\end{equation*}
By direct enumeration, as well as from Examples~\ref{ex:secondorder}--\ref{ex:fifthorder},
we know $p(2;2)=p(3;2)=2$, $p(4;2)=6$ and $p(5;2)=10$. 
In fact, in general, we have the following result.
\begin{lemma}[Weighted compositions avoiding `$1$' count]\label{lemma:wildcount}
  Given an integer $n\geqslant2$ and a real number $\lambda>0$,
  the weighted sum $p=p(n;\lambda)$ of the total number of basis elements
  with composition components which avoid `$1$' satisfies the weighted Fibonacci sequence satisfying,
  \begin{equation*}
      p(n;\lambda)=p(n-1;\lambda)+\lambda\,p(n-2;\lambda).
  \end{equation*}
  In particular, when $\lambda=2$, $p(2;2)=p(3;2)=2$ and $p(n;2)=\frac23(2^{n-1}+(-1)^n)$.
\end{lemma}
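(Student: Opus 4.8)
\textbf{Proof proposal for Lemma~\ref{lemma:wildcount}.} The plan is to establish the recurrence first by a direct combinatorial bijection on the defining sum, and then to solve the resulting linear recurrence with its initial conditions to obtain the closed form in the $\lambda=2$ case. Recall from the preceding discussion that $p(n;\lambda)$ is the weighted count $\sum_{k}\binom{n-k-1}{k-1}\lambda^k$, where the term $\binom{n-k-1}{k-1}$ counts compositions of $n$ into $k$ parts each of which is at least $2$ (equivalently, arrangements of $n-k$ balls into $k$ nonempty slots), and the weight $\lambda^k$ records the $2^k$ basis elements attached to a $k$-part composition when $\lambda=2$.

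First I would prove the recurrence $p(n;\lambda)=p(n-1;\lambda)+\lambda\,p(n-2;\lambda)$ by classifying each composition of $n$ avoiding $1$ according to the value of its \emph{last} part. If the last part is exactly $2$, then removing it leaves a composition of $n-2$ avoiding $1$ (with one fewer part), and the weight factor $\lambda^k$ splits as $\lambda\cdot\lambda^{k-1}$; this accounts for the term $\lambda\,p(n-2;\lambda)$. If the last part is $\geqslant3$, then subtracting $1$ from the last part leaves a composition of $n-1$ avoiding $1$ with the same number of parts, hence the same weight $\lambda^k$; this accounts for $p(n-1;\lambda)$. These two operations are mutually inverse bijections onto their images and exhaust all cases, so the recurrence follows. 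One could alternatively verify it purely algebraically via the Pascal identity $\binom{n-k-1}{k-1}=\binom{n-k-2}{k-1}+\binom{n-k-2}{k-2}$ applied termwise and a reindexing, but the bijective argument is cleaner and makes the role of the weight transparent.

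Next I would pin down the initial data: by direct enumeration $p(2;\lambda)=\lambda$ (the single composition $2$) and $p(3;\lambda)=\lambda$ (the single composition $3$; note $1{+}1{+}1$ and $1{+}2$ and $2{+}1$ are excluded), consistent with $p(2;2)=p(3;2)=2$, and the values $p(4;2)=6$, $p(5;2)=10$ then follow from the recurrence, matching Examples~\ref{ex:secondorder}--\ref{ex:fifthorder}. Finally, for $\lambda=2$ the recurrence reads $p(n;2)=p(n-1;2)+2\,p(n-2;2)$, with characteristic equation $r^2-r-2=(r-2)(r+1)=0$, so $p(n;2)=A\cdot2^{n}+B\cdot(-1)^{n}$; solving $A\cdot4+B=2$ and $A\cdot8-B=2$ gives $A=\tfrac13$, $B=\tfrac23$, whence $p(n;2)=\tfrac13\bigl(2^{n}+2(-1)^{n}\bigr)=\tfrac23\bigl(2^{n-1}+(-1)^{n}\bigr)$, as claimed.

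There is no genuine obstacle here; the only point requiring a little care is the \textbf{boundary behaviour of the binomial sum} — the upper limit of summation ($\lfloor(n-1)/2\rfloor$ or $\lfloor n/2\rfloor$) shifts with the parity of $n$, and one must check that the terms which formally appear or disappear when passing between $p(n)$, $p(n-1)$, $p(n-2)$ are exactly the vanishing binomial coefficients (those with top index smaller than bottom index, or negative), so that the termwise Pascal manipulation — or equivalently the claim that the last-part classification exhausts all compositions — is valid at the extreme values of $k$. This is routine but should be stated explicitly rather than glossed over. Everything else is the standard solution of a second-order constant-coefficient linear recurrence.
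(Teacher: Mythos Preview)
Your proof is correct, and the route differs from the paper's. The paper establishes the recurrence by a direct algebraic manipulation of the binomial sum: it writes out $p(n-1;\lambda)+\lambda\,p(n-2;\lambda)$, reindexes the second sum by $\ell=k+1$, and then invokes Pascal's identity $\binom{n-k-2}{k-1}+\binom{n-k-2}{k-2}=\binom{n-k-1}{k-1}$ to recover $p(n;\lambda)$, treating the two parities of $n$ separately so as to track the upper summation limits and check the boundary terms explicitly. Your bijective argument---classify by whether the last part equals $2$ or is at least $3$---proves exactly the same identity but sidesteps the parity split entirely, since the bijection is parity-agnostic and the weight bookkeeping is transparent. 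The paper's approach has the minor advantage that it works literally with the defining formula and so requires no interpretation of the edge cases $p(0;\lambda)$ or $p(1;\lambda)$; your approach is cleaner and more conceptual but, as you note, one should record that the two cases exhaust all compositions and that the maps are inverse to one another. For the closed form at $\lambda=2$, both you and the paper solve the characteristic equation $r^2=r+2$ from the stated initial values; your computation is correct.
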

\begin{proof}
  By direct computation, for $n$ odd, we observe, $p(n-1;\lambda)+\lambda\,p(n-2;\lambda)$ equals,
  \begin{align*}
   \sum_{k=1}^{(n-1)/2}&\begin{pmatrix} n-k-2\\k-1\end{pmatrix}\lambda^k+\sum_{k=1}^{(n-3)/2}\begin{pmatrix} n-k-3\\k-1\end{pmatrix}\lambda^{k+1}\\
       &=\begin{pmatrix} n-3\\0\end{pmatrix}\lambda+\sum_{k=2}^{(n-1)/2}\Biggl(\begin{pmatrix} n-k-2\\k-1\end{pmatrix}
         +\begin{pmatrix} n-k-2\\k-2\end{pmatrix}\Biggr)\lambda^k,  
  \end{align*}
  which equals $p(n;\lambda)$ once we combine the two terms in the coefficient of $\lambda^k$ shown and observe that
  the coefficient of the $\lambda$ term is one. Note that in the
  first step we made the change of variables $\ell=k+1$ in the second sum, before relabelling $\ell$ as $k$.
  When $n$ is even, we similarly observe that $p(n-1;\lambda)+\lambda\,p(n-2;\lambda)$ equals,
  \begin{align*}
   \sum_{k=1}^{(n-2)/2}&\begin{pmatrix} n-k-2\\k-1\end{pmatrix}\lambda^k+\sum_{k=1}^{(n-2)/2}\begin{pmatrix} n-k-3\\k-1\end{pmatrix}\lambda^{k+1}\\
       &=\begin{pmatrix} n-3\\0\end{pmatrix}\lambda+\sum_{k=2}^{(n-2)/2}\Biggl(\begin{pmatrix} n-k-2\\k-1\end{pmatrix}
         +\begin{pmatrix} n-k-2\\k-2\end{pmatrix}\Biggr)\lambda^k+\begin{pmatrix} n/2-2\\ n/2-2\end{pmatrix}\lambda^{n/2},  
  \end{align*}
  which equals $p(n;\lambda)$ once we combine the terms in the coefficient of $\lambda^k$, and
  observe that the coefficients of the $\lambda$ and $\lambda^{n/2}$ terms are one.
  We also used the same change of variables in the first step. 
  The final statement specific to $\lambda=2$, follows directly by solving the difference
  equation for $p=p(n;2)$ for the initial conditions indicated.
\qed
\end{proof}
The result of Lemma~\ref{lemma:wildcount} provides an answer to item (ii), stated at the beginning of this step.
Item (iii) is now straightforward. The total number of compositions with $k$-parts of the form $a_1\cdots a_{k-1}1$
for which $a_1\cdots a_{k-1}$ avoids `$1$' is simply $n-k-1$ choose $k-2$.
This is because here we require $n-1$ balls to fit into $k-1$ slots with each slot containing two or more balls.
Hence the total number of basis elements associated with composition components which end in `$1$', but avoid `$1$' elsewhere,
when $n$ is odd so $n-1$ is even, is given by,
\begin{equation*}
  \sum_{k=2}^{(n+1)/2}\begin{pmatrix} n-k-1\\k-2\end{pmatrix}\lambda^{k-1}=\sum_{k=1}^{(n-1)/2}\begin{pmatrix} n-k-2\\k-1\end{pmatrix}\lambda^{k},
\end{equation*}
which equals $p(n-1;\lambda)$. When $n$ is even and thus $n-1$ is odd, the same count is,
\begin{equation*}
  \sum_{k=2}^{n/2}\begin{pmatrix} n-k-1\\k-2\end{pmatrix}\lambda^{k-1}=\sum_{k=1}^{(n-2)/2}\begin{pmatrix} n-k-2\\k-1\end{pmatrix}\lambda^{k},
\end{equation*}
which equals $p(n-1;\lambda)$. Finally we observe the following.
\begin{lemma}[Basis element count]\label{lemma:basiselementoverallcount}
  The total number of basis elements with composition components which avoid `$1$', or end in `$1$' and avoid `$1$' elsewhere,
  equals $2^{n-1}$.
\end{lemma}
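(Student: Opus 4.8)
\textbf{Proof plan for Lemma~\ref{lemma:basiselementoverallcount}.}
The plan is to simply add the two counts just established in Step~7 and match the result against Lemma~\ref{lemma:gentot}. By Lemma~\ref{lemma:wildcount}, the total number of basis elements with composition components avoiding `$1$' is $p(n;2)$, and the immediately preceding computation (item (iii)) shows the total number of basis elements with composition components ending in `$1$' but avoiding `$1$' elsewhere equals $p(n-1;2)$. Hence the grand total in question is
\begin{equation*}
  p(n;2)+p(n-1;2).
\end{equation*}
First I would invoke the weighted Fibonacci recursion from Lemma~\ref{lemma:wildcount} with $\lambda=2$, namely $p(n+1;2)=p(n;2)+2\,p(n-1;2)$, so that
\begin{equation*}
  p(n;2)+p(n-1;2)=p(n+1;2)-p(n-1;2).
\end{equation*}
Then using the closed form $p(m;2)=\tfrac23\bigl(2^{m-1}+(-1)^m\bigr)$ from Lemma~\ref{lemma:wildcount}, one computes directly that $p(n+1;2)-p(n-1;2)=\tfrac23\bigl(2^{n}-2^{n-2}\bigr)=\tfrac23\cdot 3\cdot 2^{n-2}=2^{n-1}$, since the $(-1)^{n+1}$ and $-(-1)^{n-1}$ contributions cancel. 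This gives the claimed value $2^{n-1}$. Alternatively, and perhaps more transparently, I would observe that $p(n;2)+p(n-1;2)$ already telescopes: by the recursion $p(n;2)=p(n-1;2)+2p(n-2;2)$, so $p(n;2)+p(n-1;2)=2p(n-1;2)+2p(n-2;2)=2\bigl(p(n-1;2)+p(n-2;2)\bigr)$, and iterating down to the base cases $p(2;2)=p(3;2)=2$ (so $p(2;2)+p(1;2)$ needs the convention $p(1;2)=0$, or one simply starts the induction at $n=3$ where $p(3;2)+p(2;2)=4=2^{2}$) yields $2^{n-1}$ by an immediate induction.

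As a consistency check, which I would include as a one-line remark rather than a separate argument, this total $2^{n-1}$ exactly matches the total number of monomial generators computed in Lemma~\ref{lemma:gentot}. This is the crux of why the global linear system $AC=B$ assembled in Step~8 is square: the basis elements indexing its rows (restricted to those composition types not discounted, i.e.\ avoiding `$1$' internally) number $2^{n-1}$, and the generators indexing its columns also number $2^{n-1}$, so together with the block-diagonal full-rank structure from Proposition~\ref{prop:fullrank} one gets a unique solution.

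I do not anticipate a genuine obstacle here: the only care needed is bookkeeping of the index ranges and the parity split between $n$ odd and $n$ even in the two sums, together with handling the small base cases of the recursion correctly (in particular the slight awkwardness that $p(1;2)$ is not directly defined by the summation formula, which is why it is cleanest to anchor the induction at $n=2,3$). The substantive content — that the ``end-in-$1$'' count is exactly $p(n-1;2)$ and the ``avoid-$1$'' count is exactly $p(n;2)$ — has already been done in the two displayed computations preceding the statement, so this lemma is purely the act of combining them via the Fibonacci-type identity from Lemma~\ref{lemma:wildcount}.
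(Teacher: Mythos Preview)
Your proposal is correct and takes essentially the same approach as the paper: identify the count as $p(n;2)+p(n-1;2)$ and evaluate it via the closed form from Lemma~\ref{lemma:wildcount}. The paper's proof is slightly more direct, simply summing $\tfrac23(2^{n-1}+(-1)^n)+\tfrac23(2^{n-2}+(-1)^{n-1})=2^{n-1}$ without the detour through $p(n+1;2)-p(n-1;2)$; your telescoping alternative is also fine but unnecessary given the explicit formula.
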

\begin{proof}
  The count in question is $p(n;2)+p(n-1;2)$. Using the explicit soluton for $p(n;2)$ given in Lemma~\ref{lemma:wildcount},
  the result follows.
\qed
\end{proof}
Combining the results of Lemmas~\ref{lemma:gentot} and \ref{lemma:basiselementoverallcount}, we deduce the
rather remarkable fact:
\begin{quote}
  For any given $n\in\mathbb N$, the total number of basis elements whose composition component avoids `$1$', or ends in `$1$' but avoids `$1$' elsewhere,
  exactly equals the total number of generators.
\end{quote}
Naturally this result is important in our proof of Theorem~\ref{thm:main} just below. We remark that the inclusion of
basis elements whose composition components end in `$1$' but avoid `$1$' elsewhere, rather than composition components
containing `$1$' at some other position with avoidance elsewhere, is merely a consequence of the ordering we have imposed,
namely the descent order. 
\medskip

\emph{Step~8: Proof of Theorem~\ref{thm:main}.}
In this last step we provide the overall proof of our main theorem. We combine together the knowledge
we gained in Steps~1--7. This final stage of the argument, though significantly adapted, is analogous to that outlined for
the non-commutative potential Korteweg--de Vries hierarchy in Malham~\cite{Malham:KdVhierarchy}.
\begin{proof}[of Theorem~\ref{thm:main}]
  To complete the proof, we essentially construct a table of signature coefficients for the arbitrary order $n$ case,
  much like Tables~~\ref{table:NLS4}--\ref{table:NLS5b} for the $n=4$ and $n=5$ cases.
  Indeed we refer to these example tables to demonstrate examples of the general procedure.
  We already know from Steps~1-7 that we can construct a linear algebraic equations of the form $AC=B$,
  where the vector $C$ of length $2^{n-1}$ lists the unknown coefficients of the generator monomials
  in the P\"oppe polynomial $\pi_n=\pi_n([\bs0],[\bs1],\ldots,[\bs n])$. The vector $B$, whose length exceeds $2^{n-1}$,
  is a vector of zeros apart from a single non-zero value `$1$' in the first position when $n$ is odd and
  in the second position when $n$ is even. This is due to the ordering we impose which we outline briefly now,
  and in some more detail just below. 
  The coefficients of $C$, are ordered according to the descent order and blocks as outlined in Steps~3 and 5.
  The signature coefficient matrix $A$ has a lower triangular block form.
  It has $2^{n-1}$ columns and its total number of rows exceeds $2^{n-1}$, though equals the length of $B$.
  The columns of $A$ are parametrised by the blocks of monomial generators mentioned, or equivalently the order of the coeffcients in $C$.
  The rows of $A$ are parametrised by the basis elements, characterised by the composition components of the basis elements listed in descent order,
  and within individual composition $w$, the basis elements are listed according to the binary order of the $\mathbb R\la\mathbb B\ra$-components.
  The number of such $\mathbb R\la\mathbb B\ra$-components is $2^{|w|-\upsilon(w)}$, where $\upsilon(w)$ counts the number of $1$'s in the
  composition $w$.

  Let us outline the forms of $C$ and $A$ in some more detail. We can be brief as much of the procedure has already been outlined in Steps~1--7. 
  Corresponding to the pair of basis elements with a one-part composition component, namely $[n]\times(1,0)$ and $[n]\times(0,1)$,
  the first pair of coefficients in $C$ are $c_n$ and $c_{0(n-2)0}$, corresponding to the generators $[\bs n]$ and $[\bs0]\,[\bs{n-2}]\,[\bs0]$.
  The corresponding $2\times2$ top-left block in $A$ is the matrix $A_0$. All the remaining entries in the first two rows of $A$ to the right
  of this block are zero. We move onto basis elements whose composition components have $k=2$ parts. In descent order the first set of
  basis elements consists of $[(n-1)1]\times(1,0,0,0)$ and $[(n-1)1]\times(0,0,0,1)$. We know from Example~\ref{ex:twopartcomps} in Step~5,
  these are the only two relevant basis elements for corresponding to compositions ending in `$1$'. The corresponding coefficients in $C$
  are $c_{(n-2)00}$ and $c_{0(n-3)1}$ associated with the generators $[\bs{n-2}]\,[\bs0]\,[\bs0]$ and $[\bs0]\,[\bs{n-3}]\,[\bs1]$.
  The corresponding coefficient matrix is the block $A_1$ taking up rows and columns $3$ and $4$ in $A$, with the remaining entries
  in rows $3$ and $4$ to the right of this block being zero. This first set of blocks is important once we start addressing the issue
  of the uniqueness of the solution $C$ to the linear system, or equivalently the consistency of the overall linear system.
  Next we consider the blocks of basis elements parametrised by two-part composition components of the form $a_1a_2$
  in descent order, with neither $a_1$ nor $a_2$ equal to zero. Associated with any such composition, there are four
  basis elements $[a_1a_2]\times\bs\beta_i$, $i=1,2,3,4$, and the corresponding coefficients in $C$
  are $c_{(a_1-1)(a_2-1)0}$, $c_{(a_1-1)0(a_2-1)}$, $c_{0(a_1-2)a_2}$ and $c_{0(a_1-2)0(a_2-2)0}$,
  corresponding to the monomial generators outlined in Example~\ref{ex:twopartcomps}.
  For each block of four basis elements with such a composition component, the corresponding signature coefficient matrix is $A_2$.
  As we run through the compositions $a_1a_2$ which avoid `$1$' in descent order,
  the coefficient matrix $A_2$ occupies rows and columns $4(n-a_1-1)+i$ for $i=1,2,3,4$ in the signature coefficient matrix $A$.
  All the entries in $A$ in the these rows to the right of these blocks are zero.
  We know from Step~5, that the basis elements with the two-part composition component $1(n-1)$ does not occupy any new
  columns in $A$, but instead, the two rows corresponding to the two basis elements concerned only contain
  non-zero entries in the columns parametrised by $4(n-a_1-1)+i$ for $i=1,2,3,4$ for $a_1\neq1$.

  We move onto blocks of basis elements corresponding to composition components with $k=3$ parts. We know from
  our arguments in Step~5 that we can focus on composition components of the form $a_1a_2a_3$ which avoid `$1$'
  or end in `$1$' and avoid `$1$' elsewhere. Basis elements with composition components lying in the complement of
  this set do not generate ``new'' columns, or equivalently, the non-zero entries in the rows in $A$ corresponding
  to these basis elements only occupy columns we have already encountered/parametrised. Let us examine the
  blocks we generate as we run through the compositions $a_1a_2a_3$ which avoid `$1$' or end in `$1$' and avoid `$1$' elsewhere.
  We know from Step~5, that for each of the former compositions we generate the block matrix $A_3$,
  modified to include the column factors mentioned in Step~5, associated with the $8$ corresponding basis elements
  and the $8$ ``new'' monomial generator columns. For each of the latter compositions we generate a block matrix $A_3^\prime$ 
  associated with the $4$ corresponding basis elements and the $4$ ``new'' monomial generator columns. The block matrices
  $A_3$ and $A_3^\prime$ are diagonal blocks in $A$, with all entries in the corresponding rows they occupy to the right
  of these blocks equal to zero.

  We know from our analysis in Step~5 and in particular Lemma~\ref{lemma:uniquegenerators} and the discussion immediately following this
  lemma, that we have the following. For every composition $w$ of $n$ that avoids `$1$', or ends in `$1$' but avoids `$1$' elsewhere, we can construct
  a $2^{|w|}\times2^{|w|}$ block matrix $A_{w}$ in the former case, or a $2^{|w|-1}\times2^{|w|-1}$ block matrix $A_w^\prime$ in the latter case,
  which occupies a distinct diagonal block in $A$. Here by ``distinct'', we mean that its rows and columns do not coincide with the rows
  and columns of any of the analogous block matrices corresponding to basis elements with such composition components. All the entries
  in $A$ in the rows occupied by these blocks and to the right of them, are zero. Thus for general $n$, the signature coefficient matrix $A$
  is indeed a lower block triangular matrix. Further, from our results in Step~6, we know that each of the block matrices $A_w$ and $A_w^\prime$
  has full rank. Even further, from our results in Step~7, we know that the total number of basis elements corresponding to such
  composition components and generating the rows of such diagonal blocks, exactly equals the total number of monomial generators
  generating the columns of such diagonal blocks. This means that if we ignore the basis elements with composition components
  in the complementary set for the moment, then we can proceed block by block, starting with $A_0$, and solve for the corresponding
  set of P\"oppe polynomial coefficients in $C$, until we precisely exhaust the blocks and uniquely recover $C$.

  \begin{table}
  \caption{The top left block entries in the signature coefficient matrix $A$
    at any order $n$, depending on whether $n$ is odd (top) or $n$ is even (bottom). 
    The coefficients are the $\chi$-images of the signature entries shown.
    The forms shown for the case of when $n$ is odd or even, are used to prove the
    consistency of the overdetermined linear system of algebraic equations for
    the P\"oppe polynomial coefficients. The first rows are not shown.}
    \label{table:consistency}
  \begin{center}
  \begin{tabular}{|l|cccc|}
  \hline\hline
   $n$ odd $\phantom{\biggl|}$  \hspace{-0.35cm}                        & $[\bs n]$ & $[\bs{n-2}]\,[\bs0]^2$               & $[\bs0]\,[\bs{n-3}]\,[\bs1]$     & $\cdots$ \\ 
  \hline
  $[\bs0(n-1)\bs0 1\bs0]$ $\phantom{\Bigl|}$ \hspace{-0.35cm}           & $(n-1)1$  & $2\cdot\bigl((n-2)\ot0\ot0\bigr)$   & $2\cdot\bigl(0\ot(n-3)\ot1\bigr)$ & \\
  $[\bs0(n-1)\bs0^\dag 1\bs0^\dag]$ $\phantom{\Bigl|}$ \hspace{-0.35cm}  &           & $-2\cdot\bigl((n-2)\ot0\ot0\bigr)$  & $2\cdot\bigl(0\ot(n-3)\ot1\bigr)$ & \\  
  \hline\hline
  $~~\vdots$ $\phantom{\Bigl|}$  \hspace{-0.35cm} &&&&\\
  \hline\hline
  $n$ even  $\phantom{\biggl|}$  \hspace{-0.35cm}  & $[\bs n]$ & $[\bs0]\,[\bs{n-2}]\,[\bs0]$    & $[\bs{n-2}]\,[\bs0]^2$ & $[\bs0]\,[\bs{n-3}]\,[\bs1]$ \\
  \hline
  $[\bs0 n\bs0]$  $\phantom{\Bigl|}$  \hspace{-0.35cm}  & $n$       & $2\cdot\bigl(0\ot(n-2)\ot0\bigr)$ & & \\
  $[\bs0(n-1)\bs0 1\bs0]$  $\phantom{\Bigl|}$ \hspace{-0.35cm} & $(n-1)1$ & $2\cdot\bigl(0\ot(n-2)\ot0\bigr)$ & $2\cdot\bigl((n-2)\ot0\ot0\bigr)$ & $2\cdot\bigl(0\ot(n-3)\ot1\bigr)$\\
  $[\bs0(n-1)\bs0^\dag 1\bs0^\dag]$ $\phantom{\Bigl|}$ \hspace{-0.35cm} && $2\cdot\bigl(0\ot(n-2)\ot0\bigr)$ & $-2\cdot\bigl((n-2)\ot0\ot0\bigr)$& $2\cdot\bigl(0\ot(n-3)\ot1\bigr)$\\  
  \hline\hline
  \end{tabular}
  \end{center}
  \end{table}

  The rest of the proof is now concerned with demonstrating the consistency of the remaining rows/linear equations for the coefficients
  in $C$ associated with those basis elements with composition components that contain a `$1$', not including the instances of
  compositions ending in `$1$', but avoiding it elsewhere. We heavily rely on the fact that the linear system of algebraic equations for $C$
  is almost homogeneous apart from the single unit entry in the first position when $n$ is odd and in the second position when $n$ is even.
  The first phase of this section of the proof focuses on the blocks associated with basis elements with $1$ and $2$-part composition components.
  Assume for the moment that $n$ is odd. In this instance the first two linear equations for the coefficients $c_n$ and $c_{0(n-2)0}$
  are $c_n+2\cdot c_{0(n-2)0}=1$ and $-2\cdot c_{0(n-2)0}=0$. Thus when $n$ is odd, we always have $c_{0(n-2)0}=0$. This means that all
  the entries in the second column of the signature coefficient matrix $A$ are not relevant to the linear system of equations
  for $C$ and we thus eliminate this column in $A$ when $n$ is odd. Implementing this, and ignoring the first row in $A$ corresponding
  to the basis element $[\bs0 n\bs0]$, the top left corner of $A$ has the form shown in the top part of Table~\ref{table:consistency}.
  Now assume that $n$ is even. The first two linear for the coefficients $c_n$ and $c_{0(n-2)0}$ in this instance are
  are $c_n+2\cdot c_{0(n-2)0}=0$ and $-2\cdot c_{0(n-2)0}=1$. When $n$ is even we swap over the first two rows in the 
  signature coefficient matrix $A$ which is equivalent to swapping the order of the first two linear equations shown. 
  If we ignore the \emph{new} first row of $A$ corresponding to the nonhomogeneous equation $-2\cdot c_{0(n-2)0}=1$,
  then the top left corner of $A$ has the form shown in the bottom part of Table~\ref{table:consistency}. 
  All the remaining rows and columns in the signature coefficient matrix $A$, whether $n$ is odd or even, remain
  the same. However, with the first rows ignored, the system of linear equations that remains is homogeneous.
  And, of course, the top left corners in the case that $n$ is odd or even have the forms shown in Table~\ref{table:consistency}. 
  In both cases in Table~\ref{table:consistency} we trace a diagonal starting from the top left non-zero  
  coefficient, which is $\chi\bigl((n-1)1\bigr)=n$ when $n$ is odd, and $\chi(n)=1$ when $n$ is even.
  When $n$ is even the next two entries along this diagonal are $\chi\bigl(2\cdot(0\ot(n-2)\ot0)\bigr)=2$ and 
  $\chi\bigl(-2\cdot((n-2)\ot0\ot0)\bigr)=-2$. When $n$ is odd the next diagonal entry is $\chi\bigl(-2\cdot((n-2)\ot0\ot0)\bigr)=-2$.
  Thereafter the diagonal entries for the cases of $n$ being even or odd are the same.
  In Table~\ref{table:NLS4}, when $n$ is even, after swapping the first two rows, though not ignoring the new first row yet,
  we can view the diagonal we have identified as the diagonal just below the leading diagonal. Similarly in Tables~\ref{table:NLS5a}
  and \ref{table:NLS5b}, when $n$ is odd, after eliminating the second column but still retaining the first row, we can again view
  the diagonal we have identified as that just below the leading diagonal in those Tables.
  In either Table~\ref{table:NLS4} or in Tables~\ref{table:NLS5a} and \ref{table:NLS5b}, let us call the diagonal just
  below the leading diagonal the `sub-diagonal'. Consider for example Tables~\ref{table:NLS5a} and \ref{table:NLS5b}.
  If we follow the sub-diagonal with the view of retaining non-zero terms along it, we observe we meet a obstruction
  in the first $4\times4$ block with matrix $A_2$ characterised by the composition component $32$. The problem is
  that while the sub-diagonal of $A_2$ has non-zero entries, the next term along the diagonal that lies in the last column
  of that $A_2$ block, but beneath the entire block, and in fact the row corresponding to the basis element $[23]\times(1,0,0,0)$
  is zero. However there is a quick fix to this obstruction. That is to simply swap the two columns in the signature coefficient matrix
  $A$ corresponding to the final two columns of the $A_2$ block characterised by the composition component $23$.
  Such a swap simply corresponds to changing the order of the monomial generators.
  We can see from Tables~\ref{table:NLS5a} and \ref{table:NLS5b} that this column-swap procedure would guarantee
  that the next entry in the sub-diagonal would be non-zero. We can then continue to consider the sub-diagonal
  in the $A_2$ block corresponding to the basis elements $[23]\times\bs\beta_i$ for $i=1,2,3,4$. However we observe
  a similar obstruction necessitating an anologous swap of the columns of $A$ corresponding to the final two
  columns of this second $A_4$ block. This is again enacted to ensure that the term in the final column
  immediately below this $A_4$ block is non-zero. The term in question corresponds to the signature coefficient
  $\chi(0\ob0\ot3)$ in the row corresponding to $[14]\times(1,0,0,0)$. We have thus established, for all the
  basis elements with $1$ and $2$-part composition components, a complete diagonal with all entries non-zero, and which can act as pivots.
  Since all the linear equations corresponding to the rows we are considering (we are ignoring the top row) are homogeneous
  we can use Gaussian elimination to render all the entries in the columns below the sub-diagonal to be zero.
  
  The procedure for the case of general $n$ proceeds in exactly the same manner as the $n=5$ case we have just
  outlined, except that now we need to establish that when we swap the columns over as just outlined, we are
  guaranteed a non-zero entry in the corresponding sub-diagonal entry. We also need to guarantee that the
  entry immediately below the final column of the $2\times2$ block $A_1$ corresponding to the rows $[(n-1)1]\times(1,0,0,0)$
  and $[(n-1)1]\times(0,0,0,1)$ is also non-zero. This case corresponds to the column given by $[\bs0]\,[\bs{n-3}]\,[\bs{1}]$.  
  In the other cases of the $A_4$ blocks corresponding to the basis elements $[(n-m)m]\times\bs\beta_i$ for $i=1,2,3,4$,
  the column in question is the third column in the $A_4$ block corresponding to the column given by $[\bs0]\,[\bs{n-m-2}]\,[\bs{m}]$.
  In particular this means we can treat the $m=1$ and $m=2,\ldots,n-2$ cases simultaneously. Indeed, using Corollary~\ref{cor:specialactions}
  in Step~4, we observe that at leading order we have,
  \begin{equation*}
    [\bs0]\,[\bs{n-m-2}]\,[\bs{m}]=[(n-m)m]\times(2,0,0,2)+[(n-m-1)(m+1)]\times(1,1,1,1)+\cdots,
  \end{equation*}
  where we have used that at leading order $[\bs{m}]=[m]\times(1,0)+\cdots$. The first term on the
  right is the term we expect at leading order for this generator, while the second term on the right 
  is the column corresponding to the rows in the next block down---we see that $(n-m-1)(m+1)$ is obtained
  from $(n-m)m$ by replacing $m$ by $m+1$, which is one composition further down in descent order.
  Thus indeed we are guaranteed that the next entry in the sub-diagonal is non-zero when we enact the column swap. 
  Further we observe that in the case of the final $A_4$ block corresponding to the composition $2(n-2)$
  for which $m=n-2$, the corresponding generator is $[\bs0]\,[\bs0]\,[\bs{n-2}]$ while the corresponding
  row containing the sub-diagonal entry of interest is $[1(n-1)]\times(1,0,0,0)$. At leading order we have,
  \begin{equation*}
    [\bs0]\,[\bs0]\,[\bs{n-2}]=[2(n-2)]\times(2,0,0,2)+[1(n-1)]\times(2,2,0,0)+\cdots,
  \end{equation*}
  which thus guarantees a final sub-diagonal non-zero entry. We are thus in the exact same situation
  as described for the $n=5$ case just above, and we can use the sub-diagonal entries as pivots
  to render all the entries in $A$, in all the sub-diagonal columns, below the sub-diagonal to be zero.
   
  The second phase of this section of the proof now focuses on all the blocks associated with 
  basis elements with composition components with $k$-parts with $k\geqslant3$. This phase is
  more straightforward. Let us focus on the $3$-part composition cases to begin with.
  The first $3$-part composition in descent order is $(n-2)11$, and we know from Step~5 that there
  are no ``new'' generators associated with with any such composition that lies in the set of
  compositions complementary to those avoiding `$1$' or ending in `$1$', but avoiding it elsehwere.
  Hence we can use Gaussian elimination, using the pivots from the
  sub-diagonal outlined for the $1$ and $2$-part composition cases just outlined,
  to render the entries in for the two rows/basis elements concerned here, 
  $[(n-2)11]\times\bs\beta_1$ and $[(n-2)11]\times\bs\beta_8$, equal to zero.
  Next we consider the block of rows/basis elements corresponding to the composition $(n-3)21$.
  As outlined in Step~5 this block is associated with $4$ generators and the block matrix $A_3^\prime$. 
  We know this has full rank and we can thus use the leading diagonal as pivots to render
  all entries in the corresponding columns of $A$ below this diagonal to be zero.
  The next block is associated with the composition $(n-3)12$, which with a `$1$' in the
  middle is not associated with any new generators, and from our Gaussian elimination processes
  thus far has all row entries rendered zero. The next blocks are associated with the compositions $(n-4)31$,
  $(n-4)22$, $(n-4)14$. The first two of these compositions are associated with separate copies of the $8\times8$ matrix $A_3$ (with
  the columns mentioned in Step~5 suitably scaled) and a total of $16$ generators (one set of $8$ each). The leading diagonals 
  of both copies of $A_3$ can again be used as pivots to render all the entries, below this diagonal in the columns of $A$ associated
  with these two copies, equal to zero. The entries in the rows associated with the third composition $(n-4)13$ will have been
  rendered zero in the Gaussian elimination process just outlined for the other two compositions. And so forth, we can see
  that we can proceed in descent order through the blocks associated with $3$-part compositions, either in the case of
  compositions that avoid `$1$' or end in `$1$' but avoid it elsewhere, using the diagonals of the blocks associated
  with $A_3$ or $A_3^\prime$, to render the corresponding entries in $A$ below these diagonals to be zero, or recognising
  for the blocks associated with the complementary set of compositions, the entries in the rows of those block will
  already be rendered zero. The procedure for all further blocks associated with compositions of $4$ or more parts
  proceeds exactly analogously. Naturally that the corresponding diagonals with non-zero entries exist for all
  blocks associated with compositions that avoid `$1$' or end in `$1$' but avoid it elsewhere, is guaranteed
  by the results in Step~6, in particular Proposition~\ref{prop:fullrank}.

  Hence we have thus rendered all the entries in all the rows corresponding to basis elements with
  composition components which lie in the set complementary to those that avoid `$1$' or end in `$1$' but avoid it elsewhere, 
  equal to zero. Briefly returning to the rows/blocks associated with the $1$ and $2$ part compositions, still
  ignoring the top row as indicated in Table~\ref{table:consistency}. A quick count reveals that when $n$ is even,
  we have $3+4(n-1)$ rows, i.e.\/ homogneous linear equations, in $4+4(n-1)$ unknowns, while when $n$ is odd,
  we have $2+4(n-1)$ homogeneous linear equations, in $3+4(n-1)$ unknowns. In either case when $n$ is even or odd,
  proceeding through all the other blocks associated with compositions of three or more parts, the remaining
  number of homogeneous linear equations equals the remaining number of unknowns (as outlined in the first section of this proof).
  Hence, in either case when $n$ is even or odd, we can solve the entire system of linear homogeneous equations,
  with one less equation then the total number of unknowns, to find expressions for all the unknowns in terms of only one of them.
  In the case that $n$ is odd, we solve for all of them in terms of $c_n$. In the case that $n$ is even, we solve for
  all of them in terms of $c_{0(n-2)0}$.
  We now re-introduce the very first row we ignored at the beginning of this second, ``consistency'', section of the proof.
  When $n$ is odd, that first equation is $c_n+2\cdot c_{0(n-2)0}=1$. Since we have an expression for $c_{0(n-2)0}$ in terms of $c_n$
  from the homogeneous set of linear equations, we can substitute that expression into this non-homogeneous linear equation
  and determine $c_n$. When $n$ is even, the first equation is $-2\cdot c_{0(n-2)0}=1$ or equivalently $c_{0(n-2)0}=-1/2$.
  Since in this case we have expressions for all the other unknowns in terms of $c_{0(n-2)0}$, this fixes the values of
  all the other unknowns. In either case, whether $n$ is odd or even, we have established a unique solution $C$, and
  the proof is complete.
\qed
\end{proof}
\begin{remark}
  As mentioned, the overall proof in Step~8 just above is analogous to that for
  the non-commutative potential Korteweg--de Vries hierarchy in Malham~\cite{Malham:KdVhierarchy}.
  Therein we proceed by considering compositions with $k=1$, $k=2$, and so forth, parts.
  In that case there are no blocks as there are no $\mathbb R\la\mathbb B\ra$ components.
  The basis elements are just compositions of $n$, there are no skew forms.
  Also, as it is the potential equation, the generators are just monomials of
  the signature expansions $\bs n$, with $n\in\mathbb N$---in particular there are
  no generators corresponding to $[\bs0]$. Further, since there are no skew forms,
  the generators can be of even or odd degree. See in particular Section~6 in Malham~\cite{Malham:KdVhierarchy}.
  We observed at the end of Step~7 above that the total number of generators equalled
  the total number of basis elements with composition components that avoided `$1$' together
  with those that ended in `$1$' but avoided it elsewhere. It would be natural to
  wonder whether a similar situation occurs for the case of the non-commutative potential Korteweg--de Vries hierarchy,
  and indeed retrospectively, we can establish the exact same result for that hierarchy.
  In fact we can show, since there are no blocks and no generators akin to `$[\bs0]$', that
  for each set of compositions with $k$ parts, the number of monomial generators with $k$ factors
  equals the number of compositions (the basis elements here) that avoid `$1$' together
  with those that ended in `$1$' but avoided it elsewhere. Again, that we single out those
  compositions ending in `$1$' is just an artefact of the descent order we impose.
  To see this fact, we observe from Section~6 in Malham~\cite{Malham:KdVhierarchy},
  that the number of generators with $k$ factors, say of the form $(\bs n_1)(\bs n_2)\cdots(\bs n_k)$
  with the P\"oppe product, is given by $n-k$ choose $k-1$. This is because each P\"oppe product
  adds a `$1$' to one of the composition parts in the eventual expansion in compositions.
  The complete set of such monomial generators is exhausted by those with $k=1,2,\ldots,\frac12(n+1)$ parts.
  We already know from Step~7 above, that the number of compositions of $n$ with $k$ parts that
  avoid `$1$' equals $n-k-1$ choose $k-1$ for $k=1,2,\ldots,\frac12(n-1)$, and the number of
  compositions that end in `$1$' but avoid it elsewhere equals $n-k-1$ choose $k-2$ for $k=2,\ldots,\frac12(n+1)$.
  If we restrict ourselves to $k=2,\ldots,\frac12(n-1)$, we observe that the number of compositions
  satisfying either property is given by,
  \begin{equation*}
     \begin{pmatrix} n-k-1\\ k-1\end{pmatrix}+\begin{pmatrix} n-k-1\\ k-2\end{pmatrix}=\begin{pmatrix} n-k\\ k-1\end{pmatrix},
  \end{equation*}
  with equality following by simply adding the two relevant fractions on the left.
  The cases $k=1$ and $k=\frac12(n+1)$, for which the number of such compositions and generators is singular,
  just follows by inspection.
\end{remark}

\section{Conclusion}\label{sec:conclu}
There are many open directions of research we intend to pursue based on the combinatorial algebraic approach we introduced herein.
One direction we have not directly addressed herein is that of alternative formulations of the modified Korteweg--de Vries hierarchy members
of orders $3$, $5$ and higher. See for example Liu and Athorne~\cite{LA}, Olver and Sokolov~\cite{OS}, Oevel and Rogers~\cite{OR} and Gerdjikov~\cite{Gerdjikov}. 
For example, the alternative non-commutative modified Korteweg--De Vries equation has the form,
\begin{equation*}
\pa_t g=\pa^3g+3\bigl(g(\pa^2g)-(\pa^2g)g\bigr)-6g(\pa g)g.
\end{equation*}
Note that the polynomial partial differential field includes even degree terms. Such alternative forms can be obtained
from non-commutative modified Korteweg--De Vries equation via a suitable gauge transformation as, for example,
outlined in detail in Carillo and Schiebold~\cite{CSIII}. The combinatorial algebraic structure we have developed
would seem a natural context to investigate such alternative hierarchy forms further. Closely related is the
\emph{Miura transformation}. This is particularly simple in our context. Assuming the order $n=2m+1$ with $m\in\mathbb N$
is odd, then since $\mathcal I^2=\id$, the base dispersion equation for $P$ is,
\begin{equation*}
\pa_t P=(-1)^{m+1}\pa^{2m+1}P. 
\end{equation*}
We can assume this to be the base equation for the non-commutative potential Korteweg--de Vries hierarchy considered in Malham~\cite{Malham:KdVhierarchy}. 
In that case the solution $G^{\mathrm{pKdV}}$ is given by $G^{\mathrm{pKdV}}=P(\id-P)^{-1}$.
For the non-commutative modified Korteweg--de Vries hierarchy, we observe that when $n$ is odd we can
assume the solution $G^{\mathrm{mKdV}}$ to have the form $G^{\mathrm{mKdV}}=2P(\id+P)^{-1}(\id-P)^{-1}$,
i.e.\/ replacing the `$\mathrm{i}P$' everywhere simply by $P$, and all our results in Section~\ref{sec:ncNLS} and thereafter follow through. 
This is because we carried through the quantity `$\mathrm{i}P$' throughout our computations in Section~\ref{sec:ncNLS}
and, in particular, into our abstract encoding. For example, our computation for $\pa_t[V]$ preceding Definition~\ref{def:timederiv}
carries through with this replacement with $V\coloneqq(\id-P)^{-1}$, $P^\dag=-P$ and $V^\dag=(\id+P)^{-1}$.
For convenience we set $U^{\mathrm{pKdV}}\coloneqq(\id-P)^{-1}$ and $U^{\mathrm{mKdV}}\coloneqq(\id+P)^{-1}(\id-P)^{-1}$.
Note that by operator partial fractions we have $U^{\mathrm{pKdV}}=\id+PU^{\mathrm{pKdV}}$ so $\pa G^{\mathrm{pKdV}}=\pa U^{\mathrm{pKdV}}$.
Then as in Doikou \textit{et al.} \cite[Cor.~3.15]{DMS} we observe that since $U^{\mathrm{pKdV}}=(\id+P)U^{\mathrm{mKdV}}$ we have,
\begin{align*}
               &&\pa U^{\mathrm{pKdV}}&=\pa\bigl(PU^{\mathrm{mKdV}}\bigr)+\pa U^{\mathrm{mKdV}}\\
\Leftrightarrow&&\pa U^{\mathrm{pKdV}}&=\pa\bigl(PU^{\mathrm{mKdV}}\bigr)+U^{\mathrm{mKdV}}\pa(P^2)U^{\mathrm{mKdV}}\\
\Rightarrow    &&\pa\lb G^{\mathrm{pKdV}}\rb&=\pa\lb G^{\mathrm{mKdV}}\rb+\lb G^{\mathrm{mKdV}}\rb^2.
\end{align*}
In the last step we used the P\"oppe product rule. This represents the Miura transformation giving the connection
between the non-commutative potential, and modified, Korteweg--de Vries hierarchies. A natural question is what
the translation (likely non-trivial) of this result is at even orders?

A natural formulation for Hankel and Toeplitz operators is the $L^2$ Hardy spaces $\mathbb H_{\pm}$, corresponding to the
upper and lower half complex plane; see for example Peller~\cite{Peller}. This can be thought of as the Fourier transform
representation of the formulation we gave in Section~\ref{sec:Hankelandlinear}. Recently this context has been used to
prove interesting integrability results/connections for the cubic Szeg\"o equation, see Pocovnicu~\cite{Pocovnicu},
Grellier and Gerard~\cite{Gerard} and Gerard and Pushnitski~\cite{GP},
and to extend regularity results for the Korteweg--de Vries equation, see Grudsky and Rybkin~\cite{GRI,GRII,GRIII}.
There is a natural decomposition $L^2(\R)=\mathbb H_+\oplus \mathbb H_-$ and thus an immediate direction to pursue
would be to consider our Marchenko equation and Fredholm Grassmannian flow in this context and establish a
connection to the results of, for example, Grellier and Gerard~\cite{Gerard} and Grudsky and Rybkin~\cite{GRIII}.

At the abstract algebra level, for the skew-P\"oppe algebra
$\mathbb C[\mathbb Z_{\bs0}]\cong\mathbb C[\mathcal C]\times\mathbb R\la\mathbb B\ra$,          
there are many open questions as follows:
(i) The skew-P\"oppe algebra $\mathbb C[\mathbb Z_{\bs0}]$, endowed with the triple product in Lemma~\ref{lemma:tripleproductaction},
constitutes a \emph{triple system} or \emph{ternary algebra}. See for example Meyberg~\cite[p.~21]{Meyberg} or Ricciardo~\cite[p.~23]{Ricciardo}.
Exploring this context is very much of interest.
(ii) The P\"oppe products in Lemma~\ref{lemma:skewandsymmPoppeproducts} are quasi-Leibniz products in which the `quasi' label refers to the term additional to the
two expected Leibniz terms which essentially involves inserting a `$1$' between the two terms in the product (as well as a factor `$2$').
A natural question is, is it possible to establish an isomorphism between between this skew-P\"oppe algebra and the corresponding
skew-P\"oppe algebra endowed with the triple product based on the P\"oppe products in Lemma~\ref{lemma:skewandsymmPoppeproducts}
without the `quasi' terms? This will necessarily require a fix of the non-quasi product for low order terms, for example those
involving products with `$[\bs0]$' and so forth.
The analogy is the isomorphism between the shuffle algebra and the quasi-shuffle algebra proved by Hoffman~\cite{Hoffman}.
Establishing such an isomorphism would significantly simplify the proofs of the results herein and would help to
establish (iii) and (iv) just below. 
(iii) We observe that in our main result we sought P\"oppe polynomial expansions $\pi_n=\pi_n([\bs0],[\bs1],\ldots,[\bs n])$
for the endomorphisms $[\bs 0n\bs0]$ when $n$ is odd, and $[\bs0 n\bs0^\dag]$ when $n$ is even. 
However more generally we might ask the question of whether there exists P\"oppe polynomial expansions for any of the basis elements
in $\mathbb C[\mathbb Z_{\bs0}]$? In other words can we express any basis element in $\mathbb C[\mathbb Z_{\bs0}]$ as a linear combination
of monomials of the form $[\bs{n_1}]\,[\bs{n_2}]\,\cdots\,[\bs{n_k}]$?
(iv) A directly related broader question then is, does there exist an isomorphism between the algebra of
odd-degree monomial forms $[\bs{n_1}]\,[\bs{n_2}]\,\cdots\,[\bs{n_k}]$ with $n_i\in\mathbb N\cup\{0\}$ endowed with the concatenation product,
and the skew-P\"oppe algebra? The connection is provided by the signature expansions. The odd-degree monomial form parametrising factors $n_1n_2\cdots n_k$
are actually weak compositions.
(v) Can we establish a \emph{co-algebra} associated with the skew-P\"oppe algebra $\mathbb C[\mathcal C]\times\mathbb R\la\mathbb B\ra$?
This was achieved for P\"oppe algebra in Malham~\cite[Sec.~5]{Malham:KdVhierarchy}.
Here we have to deal with the $\mathbb R\la\mathbb B\ra$-components. Indeed, we have already started in this direction.
(vi) Establishing such a co-algebra, or at least a refined de-P\"oppe co-product $\Delta_n$ associated with $\mathbb C[\mathcal C]\times\mathbb R\la\mathbb B\ra$,
would be useful. Consider the odd-degree monomial $[\bs{n_1}]\,[\bs{n_2}]\,\cdots\,[\bs{n_k}]$ with $n_1n_2\cdots n_k$ a weak composition of $n$.
Using the signature expansions for each of the factors, $[\bs{n_1}]\,[\bs{n_2}]\,\cdots\,[\bs{n_k}]$ can be expressed in the form,
\begin{equation*}
  \sum \chi(w_1\ot w_2\ot\cdots\ot w_k)\cdot\bigl([w_1]\times\bs\beta_1(|w_1|)\bigr)\,\bigl([w_2]\times\bs\beta_1(|w_2|)\bigr)\,\cdots\,\bigl([w_k]\times\bs\beta_1(|w_k|)\bigr),
\end{equation*}
where the sum is over all basis elements $[w_i]\times\bs\beta_1(|w_i|)$ for $i=1,\ldots,k$, with $w_i\in\mathcal C(n_i)$ and
$\bs\beta_1(|w_i|)\in\mathbb R\la\mathbb B\ra$ of length $2^{|w_i|}$ with the first component equal to `$1$' as the only non-zero component. 
If we compute all the odd-degree P\"oppe products on the right, we generate the following form, 
\begin{equation*}
  \sum_{w\in\mathcal C(n)}\sum_{i=1}^{2^{|w|}}\chi_{\bs\beta_i}\Bigl(\bigl(\Delta_k([w]\times\bs\beta_i\bigr)\Bigr)\cdot\bigl([w]\times\bs\beta_i\bigr).
\end{equation*}
In this expression, the $\bs\beta_i$ are the natural basis elements of $\mathbb R\la\mathbb B\ra$ of length $2^{|w|}$,
containing a `$1$' in the $i$th position and zeros elsewhere. The combined pair of sums correspond to a sum over all possible basis elements,
for example over all the left-most column elemnts in Tables~\ref{table:NLS4} and \ref{table:NLS5a}.
The co-product $\Delta_k$ generates all forms $w_1\ot w_2\ot\cdots\ot w_k$ such that the odd-degree P\"oppe product
$\bigl([w_1]\times\bs\beta_1(|w_1|)\bigr)\,\cdots\,\bigl([w_k]\times\bs\beta_1(|w_k|)\bigr)$ generates $[w]\times\bs\beta_i$.
The homomorphic map $\chi_{\bs\beta_i}$ records the signature coefficient $\chi(w_1\ot w_2\ot\cdots\ot w_k)$ together
with the factor in the $2^{|w|}\times2^{|w|}$ block associated with the composition $[w]$ as outlined in Section~\ref{sec:hierarchycoding}.
See for example the matrix $A_3$ from that section. It records the factor associated with the $\bs\beta_i$ row and the
$[\bs{\sigma(w_1)}]\,[\bs{\sigma(w_1)}]\,\cdots\,[\bs{\sigma(w_1)}]$ column, where $\sigma(w_i)$ represents the sum of
all the factors in the composition $w_i$. Any terms resulting from the `quasi' term in the P\"oppe product are included.
Recall we can systematically generate all the blocks and all such factors
using the appropriate three standard actions---see Step~5 in Section~\ref{sec:hierarchycoding}.
Let $\mathcal C^{\ast}(n)$ denote the set of all odd-length weak compositions $v$ of $n$ such that $\sigma(v)+|v|-1=n$.
Assuming we have established such a co-product $\Delta_k$, then we observe, we can express any P\"oppe polynomial,
or even an arbitrary sum of P\"oppe polynomials, in the form,
\begin{align*}
  \sum_{n\geqslant1}\pi_n&=\sum_{w\in\mathcal C}\Pi\bigl([w]\bigr)\cdot\bigl([w]\times\bs\beta_i\bigr),
  \intertext{where}
  \Pi\bigl([w]\bigr)&=\sum_{i=1}^{2^{|w|}}\sum_{v\in\mathcal C^\ast(\sigma(w))} c_v\,\chi_{\bs\beta_i}\bigl(\Delta_{|v|}([w]\times\bs\beta_i)\bigr).
\end{align*}
Thus, in principle, we can express the whole hierarchy as the co-algebra sum $\Pi$.

\begin{acknowledgement}
  SJAM would like to thank the EPSRC for the Mathematical Sciences Small Grant EP/X018784/1. 
  It is also a pleasure to acknowledge very interesting discussions with Alexander Pushnitski and Alexei Rybkin in connection with our work herein.
\end{acknowledgement}

\section{Declarations}

\subsection{Funding and/or Conflicts of interests/Competing interests}
SJAM received funding from the EPSRC for the Mathematical Sciences Small Grant EP/X018784/1.
There are no conflicts of interests or competing interests. 

\subsection{Data availability statement}
No data was used in this work.


\begin{thebibliography}{99}

\bibitem{AMusslimani}  Ablowitz MJ, Musslimani ZH 2017
Integrable nonlocal nonlinear equations, \textit{Stud. Appl. Math.} \textbf{139}(1).

\bibitem{AMshift} Ablowitz MJ, Musslimani ZH 2021
Integrable space-time shifted nonlocal nonlinear equations, \textit{Physics Letters A} \textbf{409}, 127516.

\bibitem{APT} Ablowitz MJ, Prinari B, Trubatch D 2004
\textit{Discrete and Continuous Nonlinear Schr\"odinger Systems}, Cambridge University Press.
  
\bibitem{ARSII} Ablowitz MJ, Ramani A, Segur H 1980
A connection between nonlinear evolution equations and ordinary
differential equations of P-type. II, 
\textit{Journal of Mathematical Physics} \textbf{21}, 1006--1015.

\bibitem{AP} Adamopoulou P, Papamikos G 2020
Drinfel'd--Sokolov construction and exact solutions of vector modified KdV hierarchy,
\textit{Nuclear Physics B} \textbf{952}, 114933.  

\bibitem{AC1995} Aden H, Carl B 1995 On realizations of solutions of the KdV equation by determinants on operator ideals,
  \textit{J. Math. Phys.} \textbf{37}(4), 1833--1857.

\bibitem{AHH} Arthamonov A, Harnad J, Hurtibise, J 2022 Tau functions, infinite Grassmannians and
  lattice recurrences, arXiv:2207.08054v2.

\bibitem{AU} Axenovich M, Ueckerdt T 2017 Lecture Notes: Combinatorics,

  \texttt{https://www.math.kit.edu/iag6/lehre/combinatorics2017s/media/script.pdf}.
 
\bibitem{BP-ZS} Bauhardt W, P\"oppe Ch. 1993 
The Zakharov--Shabat inverse spectral problem for operators,
\textit{J. Math, Phys.} \textbf{34}(7), 3073--3086.

\bibitem{BM} Beck M, Malham SJA 2015 Computing the Maslov
index for large systems, \textit{PAMS} \textbf{143}, 2159--2173.

\bibitem{BDMS1} Beck M, Doikou A, Malham SJA, Stylianidis I 2018
Grassmannian flows and applications to nonlinear partial differential equations,
\textit{Proc. Abel Symposium}, accepted.

\bibitem{BDMS2} Beck M, Doikou A, Malham SJA, Stylianidis I 2018
Partial differential systems with non-local non-linearities: Generation and solutions, 
\textit{Phil. Trans. R. Soc. A} \textbf{376}, 2117, 195.

\bibitem{BR} Beck M, Robbins N 2015
  Variations on a generating-function theme: Enumerating compositions with parts
  avoiding an arithmetic sequence, \textit{The American Mathematical Monthly} \textbf{122}, 256--263.

\bibitem{BD} Blower G, Doust I 2023 Linear systems, Hankel products and the sinh-Gordon equation,
  \textit{Journal of Mathematical Analysis and its Applications}.
  
\bibitem{BN} Blower G, Newsham S 2020
On tau functions associated with linear systems,
Operator theory advances and applications: IWOTA Lisbon 2019.
ed.\/ Amelia Bastos; Luis Castro; Alexei Karlovich.
Springer Birkhäuser. (International Workshop on Operator Theory and Applications).

\bibitem{Bornemann} Bornemann F 2010 On the numerical evaluation of Fredholm determinants,
  \textit{Math. Comp.} \textbf{79}(270), 871--915.

\bibitem{BuryakRossi} Buryak A, Rossi P 2021
Quadratic double ramification integrals and the noncommutative KdV hierarchy,
arXix:1909.11617v3.  

\bibitem{Cafasso} Cafasso M 2008 Block Toeplitz determinants, constrained KP and Gelfand--Dickey hierarchies,
  \textit{Mathematical Physics, Analysis and Geometry} \textbf{11}, 11--51.

\bibitem{CW} Cafasso M, Wu C-Z 2019 Borodin--Okounkov formula, string equation and topological solutions of Drinfield--Sokolov hierarchies,
  arXiv:1505.00556v3. 

\bibitem{CSI} Carillo S, Schiebold C 2009
Noncommutative Korteweg--de Vries and modified Korteweg--de Vries hierarchies
via recursion methods, \textit{J. Math. Phys.\/} \textbf{50}, 073510.

\bibitem{CSIa} Carillo S, Schiebold C 2011
Matrix Korteweg--de Vries and modified Korteweg--de Vries hierarchies:
Noncommutative soliton solutions, \textit{J. Math. Phys.\/} \textbf{52}, 053507.

\bibitem{CSII} Carillo S, Schiebold C 2012
  Nonlinear evolution equations: B\"acklund transformations and B\"acklund charts,
  \textit{Acta Appl. Math.} \textbf{122}, 93--106.

\bibitem{CSIII} Carillo S, Schiebold C 2021 B\"acklund transformations: a tool to study Abelian and non-Abelian nonlinear evolution equations,
  arXiv:2101.09245v1.

\bibitem{DL2} Degasperis A, Lombardo S,
Multicomponent integrable wave equations: II. Soliton solutions,
\textit{J. Phys. A: Math. Theor.} \textbf{42}(38) (2009).

\bibitem{DM-H2005} Dimakis A, M\"uller--Hoissen F 2005
An algebaric scheme associated with the noncommutative KP hierarchy and some of its extensions,
\textit{J. Phys. A: Math. Gen.\/} \textbf{38}, 5453--5505.  

\bibitem{DM-H2008} Dimakis A, M\"uller--Hoissen F 2008
Weakly non-associative algebras, Riccati and KP hierarchies,
in \textit{Generalized Lie Theory in Mathematics, Physics and Beyond},
edited by S. Silvestrov, E. Paal, V. Abramov and A. Stolin, Springer--Verlag, 9--27.

\bibitem{DM-H2009} Dimakis A, M\"uller--Hoissen F 2009 Bidifferential graded algebras and integrable systems,
Discrete and Continuous Dynamical Systems (DCDS) Supplements 2009, 208-219 (Proceedings AIMS Conference, Arlington 2008).

\bibitem{DM-H2010} Dimakis A, M\"uller--Hoissen F 2010 Solutions of matrix NLS systems and their discretizations: A unified treatment,
\textit{Inverse Problems} \textbf{26} 095007.
  
\bibitem {DEGM} Dodd RK, Eilbeck JC, Gibbon JD, Morris HC 1982
\emph{Solitons and non-linear wave equations}, London, Academic Press.

\bibitem{DMS} Doikou A, Malham SJA, Stylianidis I 2021
Grassmannian flows and applications to non-commutative non-local and local integrable systems,
\textit{Physica D} \textbf{415}, 132744

\bibitem{DMSW:AGFintegrable} Doikou A, Malham SJA, Stylianidis I, Wiese A 2022
Applications of Grassmannian flows to integrable systems, submitted.

\bibitem{DMSW:AGFcoagulation} Doikou A, Malham SJA, Stylianidis I, Wiese A 2022
Applications of Grassmannian and graph flows to coagulation processes, submitted.

\bibitem{DJ} Drazin PG, Johnson RS 1989
\emph{Solitons: an introduction}, Cambridge Texts in Applied Mathematics,
Cambridge University Press.

\bibitem{DGKM} Dubard P, Gaillard P, Klein C, Matveev VB 2010
On multi-rogue wave solutions of the NLS equation and positon solutions of the KdV equation,
\textit{Eur. Phys. J. Special Topics} \textbf{185}, 247--258.

\bibitem{DG} Dubrovsky VG, Gramolin AV 2009 Gauge-invariant description of several $(2+1)$-dimensional integrable nonlinear evolution equations,
  \textit{Theoretical and Mathematical Physics} \textbf{160}(1), 905--916.

\bibitem{DGP2006} Dupr\'e MJ, Glazebrook JF, Previato E 2006
A Banach algebra version of the Sato Grassmannian and commutative rings of differential operators,
\textit{Acta Appl. Math.\/} \textbf{92}, 241--267.  

\bibitem{DGP2007} Dupr\'e MJ, Glazebrook JF, Previato E 2007
On Banach bundles and operator-valued Baker functions,
\textit{Institut Mittag--Leffler} Report No.\/ 12.  

\bibitem{DGP2013} Dupr\'e MJ, Glazebrook JF, Previato E 2013
Differential algebras with Banach-algebra coefficients II: The operator cross-ratio tau-function
and the Schwarzian derivative, \textit{Complex Analysis and Operator Theory} \textbf{7}, 1713--1734.

\bibitem{Dyson} Dyson FJ 1976 Fredholm determinants and 
inverse scattering problems, \textit{Commun. Math. Phys.\/ }
\textbf{47}, 171--183.

\bibitem{EFMKLMW} Ebrahimi--Fard K, Lundervold A, Malham SJA, Munthe--Kaas H, Wiese A 2012
Algebraic structure of stochastic expansions and efficient simulation,
\textit{Proc. R. Soc. A} \textbf{468}, 2361--2382. (doi:10.1098/rspa.2012.0024)

\bibitem{EM} Ercolani N, McKean HP,
  Geometry of KDV (4): Abel sums, Jacobi variety, and theta function in the scattering case, \textit{Invent. Math.} \textbf{99}, 483–-544 (1990).

\bibitem{Fokas} Fokas AS 2016
Integrable multidimensional versions of the nonlocal nonlinear Schr\"odinger equation, \textit{Nonlinearity} \textbf{29}, 319--324.

\bibitem{FA} Fokas AS, Ablowitz MJ 1981 Linearization of the Korteweg de Vries and Painlev\'e II equations, \textit{Phys. Rev. Lett.} \textbf{47}, 1096.

\bibitem{FP} Fokas AS, Pelloni B 2014 Unified transform for boundary value problems: Applications and Advances, in: Society for Industrial and Applied Mathematics.
  
\bibitem{FK} Fordy AP, Kulish PP 1983 Nonlinear Schr\"odinger equations and simple Lie algebras,
\textit{Commun. Math. Phys.\/} \textbf{89}, 427--443.  

\bibitem{Fu} Fu W 2018 Direct linearisation of discrete and continuous integrable systems:
The KP hierarchy and its reductions, PhD Thesis, University of Leeds.

\bibitem{FNI} Fu W, Nijhoff FW 2021 Elementary introduction to direct linearisation of integrable systems,
Chapter in the book ``Nonlinear Systems and Their Remarkable Mathematical Structures'', Chapman and Hall/CRC.

\bibitem{Gerdjikov} Gerdjikov VS 2018 On Kaup--Kupershmidt type equations and their soliton solutions, arXiv:1703.05850.

\bibitem{GP} Gerard P, Pushnitskii A 2022 Unbounded Hankel operators and the flow of the cubic Szeg\"o equation, \textit{Inventiones Mathematicae}.
  
\bibitem{GMS} Grahovski GG, Mohammed AJ, Susanto H 2017 Nonlocal Reductions of the Ablowitz-Ladik Equation,
\textit{Theor. Math. Phys.} \textbf{197}, 1412--1429 (2018).

\bibitem{Gerard} Grellier S, Gerard P 2017 The cubic Szeg\"o equation and Hankel operators,
\textit{Ast\'erisque} \textbf{389}, Soci\'et\'e Math\'ematique de France, Paris.

\bibitem{GRI} Grudsky S, Rybkin A 2015 Soliton theory and Hankel operators,
\textit{SIAM J. Math. Anal.} \textbf{47(3)}, 2283--2323.
  
\bibitem{GRII} Grudsky S, Rybkin A 2020 On classical solutions of the KdV equation,
\textit{Proc. London Math. Soc.} \textbf{121}, 354--371.

\bibitem{GRIII} Grudsky S, Rybkin A 2022 The inverse scattering transform for weak Wigner--von Neumann potentials,
  \textit{Nonlinearity} \textbf{35}(5), 2175--2191.

\bibitem{GP2018} G\"urses M, Pekcan A 2018 Nonlocal nonlinear Schrodinger equations and their soliton solutions,
\textit{J. Math. Phys.} \textbf{59}, 051501.

\bibitem{GP2019b} G\"urses M, Pekcan A 2019 Nonlocal modified KdV equations and their soliton solutions,
\textit{Commun. Nonlinear Sci. Numer. Simul.} \textbf{67}, 427--448.  

\bibitem{GP2020} G\"urses M, Pekcan A 2020 Nonlocal KdV equations, arXiv:2004.07144.

\bibitem{GP2022} G\"urses M, Pekcan A 2022 Multi-component AKNS systems, arXiv:2208.01086.

\bibitem{HT} Hamanaka M, Toda K 2004 Towards noncommutative integrable equations,
\textit{Proceedings of the Institute of Mathematics of NAS of Ukraine} \textbf{50}(1), 404--411.

\bibitem{HP} Hille E, Phillips RS 1996 \emph{Functional analysis and semigroups}, AMS, Colloquium Publications, Volume 31.

\bibitem{Hoffman} Hoffman ME 2000 Quasi-shuffle products, \textit{Journal of Algebraic Combinatorics} \textbf{11}, 49-–68. 
  
\bibitem{KM} Karambal I, Malham SJA 2015 Evans function and Fredholm determinants, \textit{Proc. R. Soc. A} \textbf{471}(2174). 
DOI: 10.1098/rspa.2014.0597

\bibitem{Kasman1995} Kasman A 1995 Bispectral KP solutions and linearization of Calogero--Moser particle systems,
\textit{Comm. Math.Phys.} \textbf{172}, 427--448.
  
\bibitem{Kasman1998} Kasman A 1999 Grassmannians, nonlinear wave equations and generalized Schur functions,
in Contemporary Mathematics: Geometry and Topology in Dynamics, Volume 246, AMS.

\bibitem{Klausner} Klausner LD 2011
  Coalgebras, Hopf algebras and combinatorics, Diplomarbeit, Technische Universit\"at Wien.

\bibitem{LA} Liu QP, Athorne C 1991 Comment on `Matrix generalization of the modified Korteweg--de Vries equation',
  \textit{Inverse Problems} \textbf{7}, 783--785.
  
\bibitem{Ma} Ma W-X 2023 Sasa--Satsuma type matrix integrable hierarchies and their Riemann--Hilbert problems and soliton solutions,
  \textit{Physica D} \textbf{446}, 133672.
  
\bibitem{Malham:quinticNLS} Malham SJA 2020 Integrability of local and nonlocal non-commutative fourth order quintic nonlinear Schr\"odinger equations,
arXiv:2009.14253.

\bibitem{Malham:KdVhierarchy} Malham SJA 2021, The non-commutative Korteweg--de Vries hierarchy and combinatorial P\"oppe algebra, arXiv:2108.04514.  

\bibitem{MW} Malham SJA, Wiese A 2009 Stochastic expansions and Hopf algebras, \textit{Proc. R. Soc. A} \textbf{465}, 3729--3749.
(doi:10.1098/rspa.2009.0203)

\bibitem{MatveevSmirnov} Matveev VB, Smirnov AO 2018 AKNS and NLS hierarchies, MRW solutions, $P_n$ breathers, and beyond,
\textit{Journal of Mathematical Physics} \textbf{59}, 091419.  
  
\bibitem{McKean} McKean HP 2011 Fredholm determinants,
\textit{Cent. Eur. J. Math.\/ } \textbf{9}(2), 205--243.

\bibitem{Meyberg} Meyberg K 1972 \emph{Lectures on algebras and triple systems}, Lecture notes, University of Virginia.
  
\bibitem{Miura} Miura RM 1976 The Korteweg--De Vries equation: A survey of results,
\textit{SIAM Review} \textbf{18}(3), 412--459.

\bibitem{MJD} Miwa T, Jimbo M, Date E 2000 \emph{Solitons: Differential equations, symmetries and infinite dimensional algebras},
Cambridge University Press.

\bibitem{Mulase} Mulase M 1990 Category of vector bundles on algebraic curves and infinite dimensional Grassmannians,
  \textit{International Journal of Mathematics} \textbf{1}(3), 293--342.

\bibitem{Mumford} Mumford D 1984 Tata lectures on Theta II, Modern Birkhauser Classics.
  
\bibitem{NQVDLCII} Nijhoff FW, Quispel GRW, Capel HW 1983 Direct linearization of nonlinear difference-difference equations,
\textit{Physics Letters} \textbf{97A}(4), 125--128.  

\bibitem{NQVDLCI} Nijhoff FW, Quispel GRW, Van Der Linden J, Capel HW 1983
On some linear integral equations generating solutions of nonlinear
partial differential equations, \textit{Physica} \textbf{119A}, 101--142.

\bibitem{NSZ} Nijhoff FW, Sun Y-Y, Zhang D-J 2022 Elliptic solutions of Boussinesq type lattice equations
  and the elliptic Nth root of unity, \textit{CMP} (DOI) https://doi.org/10.1007/s00220-022-04567-8. 

\bibitem{OR} Oevel W, Rogers C 1992 Gauge transformations and reciprocal links, \textit{Rev. Math. Phys.} \textbf{5}(2), 299--330.

\bibitem{OS} Olver PJ, Sokolov VV 1998 Integrable evolution equations on associative algebras,
  \textit{Commun. Math. Phys.\/} \textbf{193}, 245--268.

\bibitem{Pelinovsky} Pelinovsky DE, Stepanyants YA 2018 Helical solitons in vector modified Korteweg-de Vries equations,
\textit{Physics Letters A} \textbf{382}, 3165--3171.

\bibitem{Peller} Peller VV 2003 \emph{Hankel operators and their applications}, Springer Monographs in Mathematics, Springer--Verlag, New York.

\bibitem{Pocovnicu} Pocovnicu O 2012 Soliton interaction with small Toeplitz potentials for the Szeg\"o equation on $\R$,
  \textit{Dynamics of PDE} \textbf{9}(1), 1--27.

\bibitem{P-SG} P\"oppe Ch. 1983 Construction of solutions of the sine-Gordon equation by means of Fredholm determinants,
\textit{Physica D} \textbf{9}, 103--139.

\bibitem{P-KdV} P\"oppe Ch. 1984 The Fredholm determinant method for the KdV equations,
\textit{Physica D} \textbf{13}, 137--160.

\bibitem{P-KP} P\"oppe Ch. 1984 General determinants and the $\tau$ function for the Kadomtsev--Petviashvili hierarchy,
\textit{Inverse Problems} \textbf{5}, 613--630.

\bibitem{PS-KP} P\"oppe Ch., Sattinger DH 1988 Fredholm determinants and the $\tau$ function for the Kadomtsev--Petviashvili hierarchy,
\textit{Publ. RIMS, Kyoto Univ.\/} \textbf{24}, 505--538.

\bibitem{PS} Pressley A, Segal G 1986 \emph{Loop groups}, Oxford Mathematical Monographs,
Clarendon Press, Oxford.

\bibitem{Reutenauer} Reutenauer Ch 1993 \textit{Free Lie Algebras}, 
London Mathematical Society Monographs, Clarendon Press, Oxford.

\bibitem{Ricciardo} Ricciardo A 2014 \emph{Lie algebras and triple systems},
Tesi di Laurea in Algebra, Universit\`a di Bologna.
  
\bibitem{SatoI} Sato M 1981 Soliton equations as dynamical systems
on a infinite dimensional Grassmann manifolds. \textit{RIMS} \textbf{439}, 30--46.

\bibitem{SatoII} Sato M 1989, The KP hierarchy and infinite dimensional
Grassmann manifolds, \textit{Proceedings of Symposia in Pure Mathematics} 
\textbf{49} Part 1, 51--66.

\bibitem{SchieboldncSG} Schiebold C. 2009 Noncommutative AKNS systems and multisoliton solutions to the matrix sine-Gordon equation,
\textit{Discrete and Continuous Dynamical Systems}: Suppl., 678--690.

\bibitem{SW} Segal G, Wilson G 1985 Loop groups and equations
of KdV type, \textit{Inst. Hautes Etudes Sci. Publ. Math.}
\textbf{61}, 5-–65.

\bibitem{Simon:Traces} Simon B 2005 \emph{Trace ideals and their applications}, 
2nd edn. Mathematical Surveys and Monographs, vol. 120. Providence, RI: AMS.

\bibitem{Sooman} Sooman CM 2009 Soliton solutions of noncommutative integrable systems (PhD Thesis), University of Glasgow. 

\bibitem{Stylianidis} Stylianidis I 2019 
\emph{Grassmannian flows: applications to PDEs with local and nonlocal nonlinearities}, 
PhD Thesis, in preparation. 

\bibitem{TW} Tracy CA, Widom H 1996 Fredholm determinants and the mKdV/Sinh-Gordon hierarchies,
\textit{Commun. Math. Phys.\/ } \textbf{179}, 1--10.

\bibitem{TracyWidom} Tracy CA, Widom H, {\it Introduction to random matrices}, {\tt arXiv:hepth/9210073}.

\bibitem{TI} Treves F 2007 Noncommutative KdV hierarchy,
\textit{Revi. Math. Phys.} \textbf{19}, 677--724.
  
\bibitem{TII} Treves F 2009 Multidimensional soliton integrodifferential systems,
in Advances in Phase Space Analysis of Partial Differential Equations,
Progress in Nonlinear Differential Equations and Their Applications, Volume 78,
Eds. A. Bove, D. Del Santo, M.K. Venkatesha Murthy, Birkh\"auser. 

\bibitem{W} Wilson G 1985 Infinite-dimensional Lie groups and algebraic geometry in soliton theory,
\textit{Trans. R. Soc. London A} \textbf{315} (1533), 393--404.

\bibitem{Witten1990} Witten E 1990 On the structure of the topological phase of two-dimensional gravity,
  \textit{Nuclear Phys. B} \textbf{340}(2--3), 281--332.

\bibitem{Witten1991} Witten E 1991 Two-dimensional gravity and intersection theory on moduli space,
  \textit{Surveys in Differential Geometry} \textbf{1}, 243--310.
  
\bibitem{ZS} Zakharov VE, Shabat AB 1974 
A scheme for integrating the non-linear equation of mathematical physics 
by the method of the inverse scattering problem I, 
\textit{Funct. Anal. Appl.}\/ \textbf{8}, 226. 

\bibitem{ZS2} Zakharov VE, Shabat AB 1979
Integration of nonlinear equations of mathematical physics by the method of inverse scattering II,
\textit{Funct. Anal. Appl.} \textbf{13}(3), 166–-174.

\end{thebibliography}
\end{document}